\providecommand{\tabularnewline}{\\}
\numberwithin{equation}{section}
\numberwithin{figure}{section}
\newtheorem{theorem}{Theorem}
\newtheorem{lemma}[theorem]{Lemma}
\newtheorem{definition}[theorem]{Definition}
\DeclareMathOperator*{\argmaxTex}{arg\,max}
\DeclareMathOperator*{\argminTex}{arg\,min}
\DeclareMathOperator*{\signTex}{sign}
\DeclareMathOperator*{\rankTex}{rank}
\DeclareMathOperator*{\diagTex}{diag}
\DeclareMathOperator*{\imTex}{im}
\renewcommand{\varepsilon}{\epsilon}
\begin{document}

\global\long\def\R{\mathbb{R}}
 \global\long\def\Rn{\mathbb{R}^{n}}
 \global\long\def\Rm{\mathbb{R}^{m}}
 \global\long\def\Rmn{\mathbb{R}^{m \times n}}
 \global\long\def\Rnm{\mathbb{R}^{n \times m}}
 \global\long\def\Rmm{\mathbb{R}^{m \times m}}
 \global\long\def\Rnn{\mathbb{R}^{n \times n}}
 \global\long\def\Z{\mathbb{Z}}
 \global\long\def\rPos{\R_{> 0}}
 \global\long\def\rNonNeg{\R_{\geq0}}


\global\long\def\ellOne{\ell_{1}}
 \global\long\def\ellTwo{\ell_{2}}
 \global\long\def\ellInf{\ell_{\infty}}
 \global\long\def\ellP{\ell_{p}}

\global\long\def\otilde{\widetilde{O}}

\global\long\def\argmax{\argmaxTex}

\global\long\def\argmin{\argminTex}

\global\long\def\sign{\signTex}

\global\long\def\rank{\rankTex}

\global\long\def\diag{\diagTex}

\global\long\def\im{\imTex}

\global\long\def\enspace{\quad}

\global\long\def\boldVar#1{\mathbf{#1}}

\global\long\def\mvar#1{\boldVar{#1}}

\global\long\def\vvar#1{\vec{#1}}



\global\long\def\defeq{\stackrel{\mathrm{{\scriptscriptstyle def}}}{=}}

\global\long\def\diag{\mathrm{{diag}}}

\global\long\def\mDiag{\mvar{diag}}
 \global\long\def\ceil#1{\left\lceil #1 \right\rceil }

\global\long\def\E{\mathbb{E}}
 \global\long\def\abs#1{\left|#1\right|}

\global\long\def\gradient{\nabla}
\global\long\def\grad{\gradient}
 \global\long\def\hessian{\gradient^{2}}
 \global\long\def\hess{\hessian}
 \global\long\def\jacobian{\mvar J}
 \global\long\def\gradIvec#1{\vvar{f_{#1}}}
 \global\long\def\gradIval#1{f_{#1}}


\global\long\def\onesVec{\vec{\mathbb{1}}}
 \global\long\def\setVec#1{\onesVec_{#1}}
 \global\long\def\indicVec#1{\onesVec_{#1}}
\global\long\def\indic{1}


\global\long\def\pseudo#1{{#1}^{\dagger}}

\global\long\def\ith#1#2{{#1}^{(#2)}}

\global\long\def\specGeq{\succeq}
 \global\long\def\specLeq{\preceq}
 \global\long\def\specGt{\succ}
 \global\long\def\specLt{\prec}

\global\long\def\dualVec#1{{#1}^{\#}}
 \global\long\def\dualVecOne#1{{#1}^{\#_{1}}}
 \global\long\def\dualVecInf#1{{#1}^{\#_{\infty}}}
 \global\long\def\dualVecP#1{{#1}^{\#_{p}}}
 \global\long\def\dualVecGen#1#2{{#1}^{\#_{#2}}}
 \global\long\def\dualVecKFull#1{\left({#1}\right)^{\#_{(k)}}}
 \global\long\def\sharpv#1{\dualVec{#1}}
 \global\long\def\sharpgrad#1{\dualVec{\gradient#1}}

\global\long\def\va{\vvar a}
 \global\long\def\vb{\vvar b}
 \global\long\def\vc{\vvar c}
 \global\long\def\vd{\vvar d}
 \global\long\def\ve{\vvar e}
 \global\long\def\vf{\vvar f}
 \global\long\def\vg{\vvar g}
 \global\long\def\vh{\vvar h}
 \global\long\def\vl{\vvar l}
 \global\long\def\vm{\vvar m}
 \global\long\def\vn{\vvar n}
 \global\long\def\vo{\vvar o}
 \global\long\def\vp{\vvar p}
 \global\long\def\vs{\vvar s}
 \global\long\def\vu{\vvar u}
 \global\long\def\vv{\vvar v}
 \global\long\def\vw{\vvar w}
 \global\long\def\vx{\vvar x}
 \global\long\def\vy{\vvar y}
 \global\long\def\vz{\vvar z}
 \global\long\def\vxi{\vvar{\xi}}
 \global\long\def\valpha{\vvar{\alpha}}
 \global\long\def\veta{\vvar{\eta}}
 \global\long\def\vlambda{\vvar{\lambda}}
 \global\long\def\vDelta{\vvar{\Delta}}
\global\long\def\vphi{\vvar{\phi}}
\global\long\def\vpsi{\vvar{\psi}}
 \global\long\def\vsigma{\vvar{\sigma}}
 \global\long\def\vgamma{\vvar{\gamma}}
 \global\long\def\vzero{\vvar 0}
 \global\long\def\vones{\vvar 1}
\global\long\def\vq{\vvar q}
\global\long\def\vr{\vvar r}

\global\long\def\ma{\mvar A}
 \global\long\def\mb{\mvar B}
 \global\long\def\mc{\mvar C}
 \global\long\def\md{\mvar D}
 \global\long\def\mf{\mvar F}
 \global\long\def\mg{\mvar G}
 \global\long\def\mh{\mvar H}
 \global\long\def\mj{\mvar J}
 \global\long\def\mk{\mvar K}
 \global\long\def\mm{\mvar M}
 \global\long\def\mn{\mvar N}
 \global\long\def\mq{\mvar Q}
 \global\long\def\mr{\mvar R}
 \global\long\def\ms{\mvar S}
 \global\long\def\mt{\mvar T}
 \global\long\def\mU{\mvar U}
 \global\long\def\mv{\mvar V}
 \global\long\def\mw{\mvar W}
 \global\long\def\mx{\mvar X}
 \global\long\def\my{\mvar Y}
 \global\long\def\mz{\mvar Z}
 \global\long\def\mSigma{\mvar{\Sigma}}
 \global\long\def\mLambda{\mvar{\Lambda}}
\global\long\def\mPhi{\mvar{\Phi}}
 \global\long\def\mZero{\mvar 0}
 \global\long\def\iMatrix{\mvar I}
\global\long\def\mDelta{\mvar{\Delta}}

\global\long\def\paramLengths{\vvar l}
 \global\long\def\paramCapacity{\vvar{\mu}}
 \global\long\def\paramWeights{\vvar w}

\global\long\def\oracle{\mathcal{O}}
 \global\long\def\moracle{\mvar O}
 \global\long\def\oracleOf#1{\oracle\left(#1\right)}

\global\long\def\timeNearlyOp{\tilde{\mathcal{O}}}
 \global\long\def\timeNearlyLinear{\timeNearlyOp}
 \global\long\def\runtime{\mathcal{T}}
 \global\long\def\timeOf#1{\runtime\left(#1\right)}

\global\long\def\dInterior{S^{0}}
 \global\long\def\dSlack{\rPos^{m}}
 \global\long\def\dWeights{\rPos^{m}}
 \global\long\def\dFull{\{\dInterior\times\dWeights\}}

\global\long\def\shurProd{\circ}
 \global\long\def\shurSquared#1{{#1}^{(2)}}

\global\long\def\weight{w}
 \global\long\def\vWeight{\vvar{\weight}}
 \global\long\def\mWeight{\mvar W}

\global\long\def\slack{s}
 \global\long\def\vSlack{\vs}
 \global\long\def\vSlackX{\vs(\vx)}
 \global\long\def\mSlack{\ms}
 \global\long\def\mSlackX{\mSlack_{x}}
 \global\long\def\slackXi{s(\vx)_{i}}

\global\long\def\mProj{\mvar P}
 \global\long\def\mProjStandard{\mProj_{\ma_{\vx}}(\vw)}

\global\long\def\mResist{\mr}
 \global\long\def\vLever{\vsigma}
 \global\long\def\mLever{\mSigma}
 \global\long\def\mLapProj{\mvar{\Lambda}}

\global\long\def\boundary{\partial}
 \global\long\def\boundaryX{\vx_{\boundary}}
 \global\long\def\boundaryW{\vw_{\boundary}}

\global\long\def\penalizedObjective{f_{t}}
 \global\long\def\penalizedObjectiveWeight{\hat{f}}

\global\long\def\fvWeight{\vg}
 \global\long\def\fvWeightCurr{\fvWeight(\vSlackCurr)}
 \global\long\def\fvWeightNext{\fvWeight(\vSlackNext)}
 \global\long\def\fmWeight{\mg}
 \global\long\def\fmWeightCurr{\fmWeight(\vxCurr)}
 \global\long\def\fmWeightNext{\fmWeight(\vxNext)}
 \global\long\def\fvWeightSimple{\vg_{s}}
 \global\long\def\fvWeightFull{\vg}
 \global\long\def\fmWeightFull{\mg}

\global\long\def\gradW{\grad_{\vWeight}}
 \global\long\def\gradX{\grad_{\vx}}
 \global\long\def\hessWW{\hess_{\vWeight\vWeight}}
 \global\long\def\hessWX{\hess_{\vWeight\vx}}
 \global\long\def\hessXW{\hess_{\vx\vWeight}}
 \global\long\def\hessXX{\hess_{\vx\vx}}

\global\long\def\vNewtonStep{\vh}

\global\long\def\stepSizeX{\alpha_{\vx}}
 \global\long\def\stepSizeW{\beta_{\vw}}
 \global\long\def\stepRatio{r}

\global\long\def\oldText{old}
 \global\long\def\newText{new}
 \global\long\def\finText{apx}
 \global\long\def\curr#1{\ith{#1}{\oldText}}
 \global\long\def\next#1{\ith{#1}{\newText}}
 \global\long\def\old#1{\ith{#1}{\oldText}}
 \global\long\def\new#1{\ith{#1}{\newText}}
 \global\long\def\vaOld{\curr{\va}}
 \global\long\def\vaNew{\new{\va}}
 \global\long\def\vbOld{\curr{\vb}}
 \global\long\def\vbNew{\new{\vb}}
 \global\long\def\maOld{\ma_{\newText}}
 \global\long\def\maNew{\ma_{\oldText}}
 \global\long\def\mbOld{\mb_{\oldText}}
 \global\long\def\mbNew{\mb_{\newText}}
 \global\long\def\vxCurr{\curr{\vx}}
 \global\long\def\vxNext{\next{\vx}}
 \global\long\def\mxCurr{\mx_{(\oldText)}}
 \global\long\def\mxNext{\mx_{(\oldText)}}
 \global\long\def\weightCurr{\curr{\weight}}
 \global\long\def\weightNext{\next{\weight}}
 \global\long\def\vWeightCurr{\curr{\vWeight}}
 \global\long\def\vWeightNext{\next{\vWeight}}
 \global\long\def\vWeightFin{\ith{\vWeight}{\finText}}
 \global\long\def\slackCurr{\curr{\slack}}
 \global\long\def\slackNext{\next{\slack}}
 \global\long\def\vSlackCurr{\curr{\vSlack}}
 \global\long\def\vSlackNext{\next{\vSlack}}
 \global\long\def\mSlackCurr{\mSlack_{(old)}}
 \global\long\def\mSlackNext{\mSlack_{(new)}}
 \global\long\def\mWeightCurr{\mWeight_{(\oldText)}}
 \global\long\def\mWeightNext{\mWeight_{(\newText)}}
 \global\long\def\vGradCurr{\curr{\vg}}
 \global\long\def\vGradNext{\next{\vg}}
 \global\long\def\mHessCurr{\mh_{(\oldText)}}
 \global\long\def\mHessNext{\mh_{(\newText)}}

\global\long\def\innerProduct#1#2{\left\langle #1 , #2 \right\rangle }
 \global\long\def\norm#1{\big\|#1\big\|}
 \global\long\def\normFull#1{\left\Vert #1\right\Vert }
 \global\long\def\normA#1{\norm{#1}_{\ma}}
 \global\long\def\normFullInf#1{\normFull{#1}_{\infty}}
 \global\long\def\normInf#1{\norm{#1}_{\infty}}
 \global\long\def\normOne#1{\norm{#1}_{1}}
 \global\long\def\normTwo#1{\norm{#1}_{2}}
 \global\long\def\normCoord#1#2{\norm{#1}_{(#2)}}
 \global\long\def\normDual#1{{\norm{#1}^{*}}}
 \global\long\def\normSpecial#1{\norm{#1}_{\infty, \mg}}
 \global\long\def\normFullSpecial#1{\normFull{#1}_{\infty, \mg}}
 \global\long\def\normLeverage#1{\norm{#1}_{\mSigma}}
 \global\long\def\normWeight#1{\norm{#1}_{\fmWeight}}

\global\long\def\normCurrFun#1{\norm{#1}_{\fmWeight(\vSlackCurr)+\infty}}
 \global\long\def\normCurrWeight#1{\norm{#1}_{\mWeightCurr+\infty}}
 \global\long\def\normNextFun#1{\norm{#1}_{\fmWeight(\vSlackNext)+ \infty}}
 \global\long\def\normNextWeight#1{\norm{#1}_{\vWeightNext+\infty}}

\global\long\def\cWeightSize{c_{1}}
 \global\long\def\cWeightStab{c_{\energyStab}}
 \global\long\def\cWeightCons{c_{r}}

\global\long\def\proj{\mathrm{proj}}

\global\long\def\vWeightError{\vvar{\Psi}}
\global\long\def\proximity{\delta}

\global\long\def\energyW{E_{w}}
 \global\long\def\energyWCurr{{\energyW^{\oldText}}}
 \global\long\def\energyWNext{{\energyW^{\newText}}}
 \global\long\def\energyStab{\gamma}

\global\long\def\fBarrier#1{\phi(#1)}
\global\long\def\fBarrierStandard#1{\phi_{l}(#1)}
\global\long\def\fBarrierVolumetric#1{\phi_{v}(#1)}

\global\long\def\TODO#1{{\color{red}TODO:\text{#1}}}

\global\long\def\runningTimeBest{\otilde((\mathrm{nnz}(\ma)+\left(\rank(\ma)\right)^{\omega})\sqrt{\rank(\ma)}L)}

\global\long\def\runningTimePrev{O(m^{1.5}nL)}

\global\long\def\nnz{\mathrm{nnz}}
\global\long\def\code#1{\texttt{#1}}
\global\long\def\updateStep{\mathrm{\code{step}}}
\global\long\def\centralityCurr{\curr{\delta_{t}}}
 \global\long\def\centralityNext{\next{\delta_{t}}}
\global\long\def\centeringExact{\mathrm{\code{centeringExact}}}
\global\long\def\centeringInexact{\mathrm{\code{centeringInexact}}}

\title{Path Finding I :\\
Solving Linear Programs with $\otilde(\sqrt{rank})$ Linear System
Solves }

\author{Yin Tat Lee\\
MIT\\
yintat@mit.edu\and  Aaron Sidford\\
MIT\\
sidford@mit.edu}

\date{}
\maketitle
\begin{abstract}
In this paper we present a new algorithm for solving linear programs
that requires only $\otilde(\sqrt{\rank(\ma)}L)$ iterations to solve
a linear program with $m$ constraints, $n$ variables, and constraint
matrix $\ma$, and bit complexity $L$. Each iteration of our method
consists of solving $\otilde(1)$ linear systems and additional nearly
linear time computation. 

Our method improves upon the previous best iteration bound by factor
of $\tilde{\Omega}((m/\rank(\ma))^{1/4})$ for methods with polynomial
time computable iterations and by $\tilde{\Omega}((m/\rank(\ma))^{1/2})$
for methods which solve at most $\otilde(1)$ linear systems in each
iteration. Our method is parallelizable and amenable to linear algebraic
techniques for accelerating the linear system solver. As such, up
to polylogarithmic factors we either match or improve upon the best
previous running times for solving linear programs in both depth and
work for different ratios of $m$ and $\rank(\ma)$. 

Moreover, our method matches up to polylogarithmic factors a theoretical
limit established by Nesterov and Nemirovski in 1994 regarding the
use of a ``universal barrier'' for interior point methods, thereby
resolving a long-standing open question regarding the running time
of polynomial time interior point methods for linear programming.
\end{abstract}

\section{Introduction}

Given a matrix, $\ma\in\R^{m\times n}$, and vectors, $\vb\in\Rm$
and $\vc\in\Rn$, solving the linear program%
\footnote{This expression is the dual of a linear program written in standard
form. It is well known that all linear programs can be written as
\eqref{eq:intro:problem}. Note that this notation of $m$ and $n$
differs from that in some papers. Here $m$ denotes the number of
constraints and $n$ denotes the number of variables. To avoid confusion
we state many of our results in terms of $\sqrt{\rank(\ma)}$ instead
of $\sqrt{n}$ . %
}
\begin{equation}
\min_{\vx\in\Rn~:~\ma\vx\geq\vb}\vc^{T}\vx\label{eq:intro:problem}
\end{equation}
is a core algorithmic task for both the theory and practice of computer
science. 

Since Karmarkar's breakthrough result in 1984, proving that interior
point methods can solve linear programs in polynomial time for a relatively
small polynomial, interior point methods have been an incredibly active
area of research with over 1200 papers written just as of 1994 \cite{Nesterov1994}.
Currently, the fastest asymptotic running times for solving \eqref{eq:intro:problem}
in many regimes are interior point methods. Previously, state of the
art interior point methods for solving \eqref{eq:intro:problem} require
either $O(\sqrt{m}L)$ iterations of solving linear systems \cite{renegar1988polynomial}
or $O((m\rank(\ma))^{1/4}L)$ iterations of a more complicated but
still polynomial time operation \cite{vaidya89convexSet,vaidya90parallel,vaidya1993technique,anstreicher96}.%
\footnote{Here and in the rest of the paper $L$ denotes the standard ``bit
complexity'' of the linear program. The parameter $L$ is at most
the number of bits needed to represent \eqref{eq:intro:problem}.
For integral $\ma$, $\vb$, and $\vc$ the quantity $L$ is often
defined to be the potentially smaller quantity $L=\log(m)+\log(1+d_{max})+\log(1+\max\{\normInf{\vc},\normInf{\vb}\})$
where $d_{max}$ is the largest absolute value of the determinant
of a square sub-matrix of $\ma$ \cite{karmarkar1984new}.%
}

However, in a breakthrough result of Nesterov and Nemirovski in 1994,
they showed that there exists a \emph{universal barrier} function
that if computable would allow \eqref{eq:intro:problem} to be solved
in $O(\sqrt{\rank(\ma)}L)$ iterations \cite{nesterov1997self}. Unfortunately,
this barrier is more difficult to compute than the solutions to \eqref{eq:intro:problem}
and despite this existential result, the $O((m\rank(\ma))^{1/4}L)$
iteration bound for polynomial time linear programming methods has
not been improved in over 20 years.

In this paper we present a new interior point method that solves general
linear programs in $\otilde(\sqrt{\rank(\ma)}L)$ iterations thereby
matching the theoretical limit proved by Nesterov and Nemirovski up
to polylogarithmic factors.%
\footnote{Here and in the remainder of the paper we use $\otilde(\cdot)$ to
hide $\polylog(n,m)$ factors.%
} Furthermore, we show how to achieve this convergence rate while only
solving $\otilde(1)$ linear systems and performing additional $\otilde(\nnz(\ma))$
work in each iteration.%
\footnote{We assume that $\ma$ has no rows or columns that are all zero as
these can be remedied by trivially removing constraints or variables
respectively or immediately solving the linear program. Therefore
$\nnz(\ma)\geq\min\{m,n\}$.%
} Our algorithm is parallelizable and we achieve the first $\otilde(\sqrt{\rank(\ma)}L)$
depth polynomial work method for solving linear programs. Furthermore,
using one of the regression algorithms in \cite{nelson2012osnap,li2012iterative},
our linear programming algorithm has a running time of $\runningTimeBest$
where $\omega<2.3729$ is the matrix multiplication constant \cite{williams2012matrixmult}.
This is the first polynomial time algorithm for linear programming
to achieve a nearly linear dependence on $\nnz(\ma)$ for fixed $n$.
Furthermore, we show how to use acceleration techniques as in \cite{vaidya1989speeding}
to decrease the amortized per-iteration costs of solving the requisite
linear system and thereby achieve a linear programming algorithm with
running time faster than the previous fastest running time of $\runningTimePrev$
whenever $m=\tilde{\Omega}\left(n\right)$. This is the first provable
improvement on both running time and the number of iterations for
general interior point methods in over 20 years.

We achieve our results through an extension of standard path following
techniques for linear programming \cite{renegar1988polynomial,gonzaga1992path}
that we call \emph{weighted path finding}. We study what we call the
\emph{weighted central path}, an idea of adding weights to the standard
logarithmic barrier function \cite{todd1994scaling,freund_weighted,megiddo_weighted}
that was recently used implicitly by Mądry to make an important breakthrough
improvement on the running time for solving unit-capacity instances
of the maximum flow problem \cite{madryFlow}. We provide a general
analysis of properties of the weighted central path, discuss tools
for manipulating points along the path, and ultimately produce an
efficiently computable path that converges in $\otilde(\sqrt{\rank(\ma)}L)$
steps. We hope that these results may be of independent interest and
serve as tools for further improving the running time of interior
point methods in general. While the analysis in this paper is quite
technical, our linear programming method is straightforward and we
hope that these techniques may prove useful in practice.

\subsection{Previous Work}

Linear programming is an extremely well studied problem with a long
history. There are numerous algorithmic frameworks for solving linear
programming problems, e.g. simplex methods \cite{dantzig1951maximization},
ellipsoid methods \cite{khachiyan1980polynomial}, and interior point
methods \cite{karmarkar1984new}. Each method has a rich history and
an impressive body of work analyzing the practical and theoretical
guarantees of the methods. We couldn't possibly cover the long line
of beautiful work on this important problem in full, and we make no
attempt. Instead, here we present the major improvements on the number
of iterations required to solve \eqref{eq:intro:problem} and discuss
the asymptotic running times of these methods. For a more comprehensive
history of polynomial time algorithms for linear programming and interior
point we refer the reader to one of the many excellent references
on the subject, e.g. \cite{Nesterov1994,ye2011interior}.

In 1984 Karmarkar \cite{karmarkar1984new} provided the first proof
of an interior point method running in polynomial time. This method
required $O(mL)$ iterations where the running time of each iteration
was dominated by the time needed to solve a linear system of the form
$\left(\ma^{T}\md\ma\right)\vx=\mbox{\ensuremath{\vy}}$ for some
positive diagonal matrix $\md\in\Rmm$ and some $\vy\in\Rn$. Using
low rank matrix updates and preconditioning Karmarkar achieved a running
time of $O(m^{3.5}L)$ for solving \eqref{eq:intro:problem} inspiring
a long line of research into interior point methods.%
\footnote{Here and in the remainder of the paper when we provide asymptotic
running times for linear programming algorithms, for simplicity we
hide additional dependencies on $L$ that may arise from the need
to carry out arithmetic operations to precision $L$. %
}

Karmarkar's result sparked interest in a particular type of interior
point methods, known as \emph{path following methods}. These methods
solve \eqref{eq:intro:problem} by minimizing a penalized objective
function $f_{t}(\vx)$,
\[
\min_{\vx\in\Rn}f_{t}(\vx)\enspace\text{ where }\enspace f_{t}(\vx)\defeq t\cdot\vc^{T}\vx+\phi(\vx)
\]
where $\phi:\Rn\rightarrow\R$ is a \emph{barrier function }such that
$\phi(\vx)\rightarrow\infty$ as $\vx$ tends to boundary of the polytope
and $t$ is a parameter. Usually, the standard\emph{ log barrier}
$\phi(\vx)\defeq-\sum_{i\in[m]}\log([\ma\vx-\vb]_{i})$ is used. Path
following methods first approximately minimize $f_{t}$ for small
$t$, then use this minimizer as an initial point to minimize $f_{\left(1+c\right)t}$
for some constant $c$, and then repeat until the minimizer is close
to the optimal solution of \eqref{eq:intro:problem}. 

Using this approach Renegar provided the first polynomial time interior
point method which solves \eqref{eq:intro:problem} in $O\left(\sqrt{m}L\right)$
iterations \cite{renegar1988polynomial}. As with Karmarkar's result
the running time of each iteration of this method was dominated by
the time needed to solve a linear system of the form $\left(\ma^{T}\md\ma\right)\vx=\vy$.
Using a combination of techniques involving low rank updates, preconditioning
and fast matrix multiplication, the amortized complexity of each iteration
was improved \cite{vaidya1987speeding,gonzaga1992path,Nesterov1994}.
The previously fastest running time achieved by such techniques was
$\runningTimePrev$ \cite{vaidya1989speeding}.

In a seminal work of Nesterov and Nemirovski \cite{Nesterov1994},
they showed that path-following methods can in principle be applied
to minimize any linear cost function over any convex set by using
a suitable barrier function. Using this technique they showed how
various problems such as semidefinite programming, finding extremal
ellipsoids, and more can all be solved in polynomial time via path
following. In this general setting, the number of iterations required
depended on the square root of a quantity associated with the barrier
called \emph{self-concordance}. They showed that for any convex set
in $\Rn$, there exists a barrier function, called the \emph{universal
barrier} function, with self-concordance $O(n)$. Therefore, in theory
any such convex optimization problems with $n$ variables can be solved
in $O\left(\sqrt{n}L\right)$ iterations. However, this result is
generally considered to be only of theoretical interest as the universal
barrier function is defined as the volume of certain polytopes, a
problem which in full generality is NP-hard and its derivatives can
only approximated by solving $O(n^{c})$ linear programs for some
large constant $c$ \cite{lovaszV06}.

Providing a barrier that enjoys a fast convergence rate and is easy
minimize approximately is an important theoretical question with numerous
implications. Renegar's path-following method effectively reduces
solving a linear program to solving $O(\sqrt{m}L)$ linear systems.
Exploiting the structure of these systems yields the fastest known
algorithms for combinatorial problems such as minimum cost flow \cite{daitch2008faster}
and multicommodity flow \cite{vaidya1989speeding}. Given recent breakthroughs
in solving two broad class of linear systems, symmetric diagonally
dominant linear systems \cite{spielman2004nearly,KMP11,Kelner2013,lee2013ACDM}
and overdetermined system of linear equations \cite{clarkson2013low,nelson2012osnap,li2012iterative}
improving the convergence rate of barrier methods while maintaining
easy to compute iterations could have far reaching implications%
\footnote{Indeed, in Part II \cite{lsMaxflow}we show how ideas in this paper
can be used to yield the first general improvement to the running
time of solving the maximum flow problem on capacitated directed graphs
since 1998 \cite{GoldbergRao}.%
}

In 1989, Vaidya \cite{vaidya1993technique} made an important breakthrough
in this direction. He proposed two barrier functions related to the
volume of certain ellipsoids which were shown to yield $O(\left(m\rank(\ma)\right)^{1/4}L)$
and $O(\rank(\ma)L)$ iteration linear programming algorithms \cite{vaidya90parallel,vaidya1993technique,vaidya89convexSet}.
Unfortunately each iteration of these methods required explicit computation
of the projection matrix $\md^{1/2}\ma(\ma^{T}\md\ma)^{-1}\ma^{T}\md^{1/2}$
for a positive diagonal matrix $\md\in\Rmm$. This was slightly improved
by Anstreicher \cite{anstreicher96} who showed it sufficed to compute
the diagonal of this projection matrix. Unfortunately both these methods
do not yield faster running times than \cite{vaidya1989speeding}
unless $m\gg n$ and neither are immediately amenable to take full
advantage of improvements in solving structured linear system solvers.

\begin{center}
\begin{tabular}{|c|l|c|c|}
\hline 
Year  & Author  & Number of Iterations & Nature of iterations\tabularnewline
\hline 
\hline 
1984  & Karmarkar \cite{karmarkar1984new}  & $O(mL)$  & Linear system solve\tabularnewline
\hline 
1986  & Renegar \cite{renegar1988polynomial}  & $O(\sqrt{m}L)$  & Linear system solve\tabularnewline
\hline 
1989  & Vaidya \cite{vaidya1996new}  & $O(\left(m\rank(\ma)\right)^{1/4}L)$  & Expensive linear algebra\tabularnewline
\hline 
1994  & Nesterov and Nemirovskii \cite{Nesterov1994} & $O(\sqrt{\rank(\ma)}L)$  & Volume computation\tabularnewline
\hline 
2013  & This paper  & $\tilde{O}(\sqrt{\rank(\ma)}L)$  & $\otilde(1)$ Linear system solves\tabularnewline
\hline 
\end{tabular}
\par\end{center}

These results seem to suggest that you can solve linear programs closer
to the $\tilde{O}(\sqrt{\rank(\ma)}L)$ bound achieved by the universal
barrier only if you pay more in each iteration. In this paper we show
that this is not the case. Up to polylogarithmic factors we achieve
the convergence rate of the universal barrier function while only
having iterations of cost comparable to that of Karmarkar's and Renegar's
algorithms.

\subsection{Our Approach}

\label{sec:our_approach}In this paper our central goal is to produce
an algorithm to solve \eqref{eq:intro:problem} in $\otilde(\sqrt{\rank(\ma)}L)$
iterations where each iteration solves $\otilde(1)$ linear systems
of the form $\left(\ma^{T}\md\ma\right)\vx=\vy$. To achieve our goal
ideally we would produce a barrier function $\phi$ such that standard
path following yields a $\otilde(\sqrt{\rank(\ma)}L)$ iteration algorithm
with low iterations costs. Unfortunately, we are unaware of a barrier
function that both yields a fast convergence rate and has a gradient
that can be computed with high accuracy using $\otilde(1)$ linear
system solves. Instead, we consider manipulating a barrier that we
can easily compute the gradient of, the standard logarithmic barrier,
$\phi(\vx)=-\sum_{i\in[m]}\log[\ma\vx-\vb]_{i}$.

Note that the behavior of the logarithmic barrier is highly dependent
on the representation of \eqref{eq:intro:problem}. Just duplicating
a constraint, i.e. a row of $\ma$ and the corresponding entry in
$\vb$, corresponds to doubling the contribution of some log barrier
term $-\log[\ma\vx-\vb]_{i}$ to $\phi$. It is not hard to see that
repeating a constraint many times can actually slow down the convergence
of standard path following methods. In other words, there is no intrinsic
reason to weight all the $-\log[\ma\vx-\vb]_{i}$ the same and the
running time of path following methods do depend on the weighting
of the $-\log[\ma\vx-\vb]_{i}$. Recently, Mut and Terklaky proved
that by duplicating constraints on Klee-Minty cubes carefully, the
standard logarithmic barrier really requires $O(\sqrt{m}\log(1/\varepsilon))$
iterations \cite{mut2013tight}.

To alleviate this issue, we add weights to the log barrier that we
change during the course of the algorithm. We show that by carefully
manipulating these weights we can achieve a convergence rate that
depends on the dimension of the polytope, $\rank(\ma)$, rather than
the number of constrains $m$. In Section~\ref{sec:weighted_path},
we study this \emph{weighted log barrier function }given by
\[
\phi(\vx)=-\sum_{i\in[m]}g_{i}(\ma\vx-\vb)\cdot\log([\ma\vx-\vb]_{i})
\]
where $\vg:\rPos^{m}\rightarrow\rPos^{m}$ is a \emph{weight function}
of the current point and we investigate what properties of $\vg(\vx)$
yield a faster convergence rate. 

To illustrate the properties of the weighted logarithmic barrier,
suppose for simplicity that we normalize $\ma$ and $\vb$ so that
$\ma\vx-\vb=\onesVec$ and let $\vg\defeq\vg(\vones)$. Under these
assumptions, we show that the rate of convergence of path following
depends on $\norm{\vg}_{1}$ and
\begin{equation}
\max_{i\in[m]}\indicVec i^{T}\ma\left(\ma^{T}\mDiag\left(\vg\right)\ma\right)^{-1}\ma^{T}\indicVec i.\label{eq:introduction:1}
\end{equation}
To improve the convergence rate we would like to keep both these quantities
small. For a general matrix $\ma$, the quantity \eqref{eq:introduction:1}
is related to the leverage scores of the rows of $\ma$, a commonly
used measure for the importance of rows in a linear system \cite{mahoney11survey}. 

For illustration purposes, if we assume that $\ma$ is the incidence
matrix of a certain graph and put a resistor of resistance $1/g_{i}$
on the edge $i$. Then, $\indicVec i^{T}\ma\left(\ma^{T}\mDiag\left(\vg\right)\ma\right)^{-1}\ma^{T}\indicVec i$
is the effective resistance of the edge $i$ \cite{spielmanS08sparsRes}.
Hence, we wish to to find $g$ to minimize the maximum effective resistance
of the graph while keeping $\norm{\vg}_{1}$ small. Thus, if it exists,
an optimal $\vg$ would simply make all effective resistances the
same.

This \emph{electric network inverse problem} is well studied \cite{strang1991inverse}
and motivates us to considering the following weight function
\begin{equation}
\vg(\vs)\defeq\argmax_{\vWeight\in\Rm}-\onesVec^{T}\vWeight+\frac{1}{\alpha}\log\det\left(\ma^{T}\ms^{-1}\mWeight^{\alpha}\ms^{-1}\ma\right)+\beta\sum_{i\in[m]}\log w_{i}.\label{eq:intro:saddle_point_barrier}
\end{equation}
for carefully chosen constants $\alpha,\beta$ where $\ms\defeq\mDiag(\vSlack(\vx))$
and $\mw=\mDiag(\vWeight)$. The optimality conditions of this optimization
problem imply that the effective resistances are small, the total
weight is small, no weight is too small, and every term in the logarithmic
barrier is sufficiently penalized. This barrier is related to the
volumetric barrier function used by Vaidya \cite{vaidya1996new} and
can be viewed as searching for the best function in a family of volumetric
barrier function. This formulation with some careful analysis can
be made to yield an $\tilde{O}(\sqrt{n}L)$ iteration path-following
algorithm by solving the following minimax problem
\begin{equation}
\min_{\vx\in\Rn}\max_{\vWeight\in\Rm}t\vc^{T}\vx-\onesVec^{T}\vWeight+\frac{1}{\alpha}\log\det\left(\ma^{T}\ms^{-1}\mWeight^{\alpha}\ms^{-1}\ma\right)+\beta\sum_{i\in[m]}\log w_{i}\label{eq:minimax_problem}
\end{equation}
where $\vs(\vx)\defeq\ma\vx-\vb$, $\ms\defeq\mDiag(\vSlack(\vx))$
and $\mw=\mDiag(\vWeight)$.

Unfortunately, computing the derivative of the minimax formula still
requires computing the diagonal of the projection matrix as in Vaidya
and Anstreicher's work \cite{vaidya1989speeding,anstreicher96} and
is therefore too inefficient for our purposes. In Section~\ref{sec:approx_path}
we show how to compute $\vWeight$ approximately up to certain multiplicative
coordinate-wise error using dimension reduction techniques. However,
this error is still too much for path following to handle the directly
as multiplicatively changing weights can hurt our measures of centrality
too much.

Therefore, rather than using the weighted log barrier
\[
\phi(\vx)=-\sum_{i\in[m]}g_{i}(\vx)\log(\slackXi)
\]
where the weights $\vg(\vx)$ depends on the $\vx$ directly, we maintain
separate weights $\vWeight$ and current point $\vx$ and use the
barrier
\[
\phi(\vx,\vWeight)=-\sum_{i\in[m]}w_{i}\log(\slackXi).
\]
We then maintain two invariants, (1) $\vx$ is centered, i.e. $\vx$
close to the minimum point of $t\cdot\vc^{T}\vx+\phi(\vx,\vw)$ and
(2) $\vw$ close to $\vg(\vx)$ multiplicatively.

We separate the problem of maintaining these invariants into two steps.
First, we design a step for changing $\vx$ and $\vw$ simultaneously
that improves centrality without moving $\vw$ too far away from $\vg(\vx)$.
We do this by decomposing a standard Newton step into a change in
$\vx$ and a change in $\vw$ with a ratio chosen using properties
of the particular weight function. Second, we show that given a multiplicative
approximation to $\vg(\vx)$ and bounds for how much $\vg(\vx)$ may
have changed, we can maintain the invariant that $\vg(\vx)$ is close
to $\vw$ multiplicatively without moving $\vw$ too much. We formulate
this as a general two player game and prove that there is an efficient
strategy to maintain our desired invariants. Combining these and standard
techniques in path-following methods, we obtain an $\tilde{O}(\sqrt{\rank(\ma)}L)$
iterations path-following algorithm where each iterations consists
of $\tilde{O}(1)$ linear system solves.

We remark that a key component of our result is a better understanding
of the effects of weighting the logarithmic barrier and note that
recently Mądry \cite{madryFlow} has shown another way of using weighted
barrier functions to achieve a $\tilde{O}(m^{10/7})$ time path-following
method for the maximum flow problem on unweighted graphs. We hope
this provides further evidence of the utility of the weighted central
path discussed in later sections.

\subsection{Geometric Interpretation of the Barrier}

While to the best of our knowledge the specific weighted barrier,
\eqref{eq:intro:saddle_point_barrier}, presented in the previous
section is new, the minimax problem, \eqref{eq:minimax_problem},
induced by the weight function is closely related to fundamental problems
in convex geometry. In particular, if we set $\alpha=1$, $t=0$,
and consider the limit as $\beta\rightarrow0$ in \eqref{eq:minimax_problem}
then we obtain the following minimax problem
\begin{equation}
\min_{\vx\in\Rn}\max_{\vWeight\geq0}-\onesVec^{T}\vWeight+\log\det\left(\ma^{T}\ms^{-1}\mWeight\ms^{-1}\ma\right).\label{eq:geom_interp}
\end{equation}
The maximization problem inside \eqref{eq:geom_interp} is often referred
to as \emph{$D$-optimal design} and is directly related to computing
the John Ellipsoid of the polytope $\left\{ \vy\in\Rn:\left|\left[\ma\left(\vy-\vx\right)\right]_{i}\right|\leq\slackXi\right\} $
\cite{khachiyan1996rounding}. In particular, \eqref{eq:geom_interp}
is directly computing the John Ellipsoid of the polytope $\left\{ \vx\in\Rn:\ma\vx\geq\vb\right\} $
and hence, one can view our linear programming algorithm as using
approximate John Ellipsoids to improve the convergence rate of interior
point methods.

Our algorithm is not the first instance of using John Ellipsoids in
convex optimization or linear programming. In a seminal work of Tarasov,
Khachiyan and Erlikh in 1988 \cite{khachiyan1988method}, they showed
that a general convex problem can be solved in $O(n)$ steps of computing
John Ellipsoid and querying a separating hyperplane oracle. Furthermore,
in 2008 Nesterov \cite{Nesterov:2008:RCS:1451525.1451532} also demonstrated
how to use a John ellipsoid to compute approximate solutions for certain
classes of linear programs in $O(\sqrt{n}/\epsilon)$ iterations and
$\tilde{O}(n^{2}m+n^{1.5}m/\epsilon)$ time. 

From this geometric perspective, there are two major contributions
of this paper. First, we show that the logarithmic volume of an approximate
John Ellipsoid is an almost optimal barrier function for linear programming
and second, that computing approximate John Ellipsoids can be streamlined
such that the cost of these operations is comparable to pert-iteration
cost of using the standard logarithmic barrier function.

\subsection{Overview}

The rest of the paper is structured as follows. In Section \ref{sec:Notation}
we provide details regarding the mathematical notation we use throughout
the paper. In Section \ref{sec:preliminaries} we provide some preliminary
information on linear programming and interior point methods. In Section
\ref{sec:weighted_path} we formally introduce the weighted path and
analyze this path assuming access to weight function. In Section \ref{sec:weights_full}
we present our weight function. In Section \ref{sec:approx_path}
we showed approximate weights suffices and in Section \ref{sec:algorithm}
we put everything together to present a $\otilde(\sqrt{\rank(\ma)}L)$
iteration algorithm for linear programming where in each iteration
we solve $\otilde(1)$ linear systems. Finally, in the Appendix we
provide some additional mathematical tools we use throughout the paper.
Note that throughout this paper we make little attempt to reduce polylogarithmic
factors in our running time.

\section{Notation\label{sec:Notation}}

Here we introduce various notation that we will use throughout the
paper. This section should be used primarily for reference as we reintroduce
notation as needed later in the paper. (For a summary of linear programming
specific notation we use, see Appendix~\ref{sec:glossary}.)

\medskip{}

\textbf{Variables:} We use the vector symbol, e.g. $\vx$, to denote
a vector and we omit the symbol when we denote the vectors entries,
e.g. $\vx=(x_{1},x_{2},\ldots)$. We use bold, e.g. $\ma$, to denote
a matrix. For integers $z\in\Z$ we use $[z]\subseteq\Z$ to denote
the set of integers from 1 to $z$. We let $\indicVec i$ denote the
vector that has value $1$ in coordinate $i$ and is $0$ elsewhere.

\medskip{}

\textbf{Vector Operations:} We frequently apply scalar operations
to vectors with the interpretation that these operations should be
applied coordinate-wise. For example, for vectors $\vx,\vy\in\Rn$
we let $\vx/\vy\in\Rn$ with $[\vx/\vy]_{i}\defeq(x_{i}/y_{i})$ and
$\log(\vx)\in\Rn$ with $[\log(\vx)]_{i}=\log(x_{i})$ for all $i\in[n]$
. 

\medskip{}

\textbf{Matrix Operations:} We call a symmetric matrix $\ma\in\Rnn$
positive semidefinite (PSD) if $\vx^{T}\ma\vx\geq0$ for all $\vx\in\Rn$
and we call $\ma$ positive definite (PD) if $\vx^{T}\ma\vx>0$ for
all $\vx\in\Rn$. For a positive definite matrix $\ma\in\Rnn$ we
denote let $\|\cdot\|_{\ma}:\R^{n}\rightarrow\R$ denote the norm
such that for all $\vx\in\Rn$ we have $\|\vx\|_{\ma}\defeq\sqrt{\vx^{T}\ma\vx}$.
For symmetric matrices $\ma,\mb\in\Rnn$ we write $\ma\specLeq\mb$
to indicate that $\mb-\ma$ is PSD (i.e. $\vx^{T}\ma\vx\leq\vx^{T}\mb\vx$
for all $\vx\in\Rn$) and we write $\ma\specLt\mb$ to indicate that
$\mb-\ma$ is PD (i.e. that $\vx^{T}\ma\vx<\vx^{T}\mb\vx$ for all
$\vx\in\Rn$). We define $\specGt$ and $\specGeq$ analogously. For
$\ma,\mb\in\R^{n\times m}$, we let $\ma\shurProd\mb$ denote the
Schur product, i.e. $[\ma\shurProd\mb]_{ij}\defeq\ma_{ij}\cdot\mb_{ij}$
for all $i\in[n]$ and $j\in[m]$, and we let $\shurSquared{\ma}\defeq\ma\shurProd\ma$.
We use $\nnz(\ma)$ to denote the number of nonzero entries in $\ma$.

\medskip{}

\textbf{Diagonal Matrices:} For $\ma\in\R^{n\times n}$ we let $\diag(\ma)\in\R^{n}$
denote the vector such that $\diag(\ma)_{i}=\ma_{ii}$ for all $i\in[n]$.
For $\vx\in\R^{n}$ we let $\mDiag(\vx)\in\R^{n\times n}$ be the
diagonal matrix such that $\diag(\mDiag(\vx))=\vx$. For $\ma\in\R^{n\times n}$
we let $\mDiag(\ma)$ be the diagonal matrix such that $\diag(\mDiag(\ma))=\diag(\ma)$.
For a vector $\vx\in\Rn$ when the meaning is clear from context we
use $\mx\in\Rnn$ to denote $\mx\defeq\mDiag(\vx)$.

\medskip{}

\textbf{Multiplicative Approximations:} Frequently in this paper we
need to convey that two vectors $\vx$ and $\vy$ are close multiplicatively.
We often write $\normInf{\mx^{-1}(\vy-\vx)}\leq\epsilon$ to convey
the equivalent facts that $y_{i}\in[(1-\epsilon)x_{i},(1+\epsilon)x_{i}]$
for all $i$ or $(1-\epsilon)\mx\specLeq\my\specLeq(1+\epsilon)\mx$.
At times we find it more convenient to write $\norm{\log\vx-\log\vy}_{\infty}\leq\epsilon$
which is approximately equivalent for small $\epsilon$. In Lemma~\ref{lem:appendix:log_helper},
we bound the quality of this approximation.

\medskip{}

\textbf{Matrices:} We use $\rPos^{m}$ to denote the vectors in $\Rm$
where each coordinate is positive and for a matrix $\ma\in\Rmn$ and
vector $\vx\in\rPos^{m}$ we define the following matrices and vectors 
\begin{itemize}
\item Projection matrix $\mProj_{\ma}(\vx)\in\Rmm$: $\mProj_{\ma}(\vx)\defeq\mx^{1/2}\ma(\ma^{T}\mx\ma)^{-1}\ma^{T}\mx^{1/2}$.
\item Leverage scores $\vLever_{\ma}(\vx)\in\Rm$: $\vLever_{\ma}(\vx)\defeq\diag(\mProj_{\ma}(\vx))$.
\item Leverage matrix $\mLever_{\ma}(\vx)\in\Rmm$: $\mLever_{\ma}(\vx)\defeq\mDiag(\mProj_{\ma}(\vx))$.
\item Projection Laplacian $\mLapProj_{\ma}(\vx)\in\Rmm$: $\mLapProj_{\ma}(\vx)\defeq\mLever_{\ma}(\vx)-\shurSquared{\mProj_{\ma}(\vx)}$. 
\end{itemize}
The definitions of projection matrix and leverage scores are standard
when the rows of $\ma$ are reweighed by the values in vector $\vx$.

\medskip{}

\textbf{Convex Sets:} We call a set $U\subseteq\R^{k}$ \emph{convex}
if for all $\vx,\vy\in\R^{k}$ and all $t\in[0,1]$ it holds that
$t\cdot\vx+(1-t)\cdot\vy\in U$. We call $U$ \emph{symmetric} if
$\vx\in\R^{k}\Leftrightarrow-\vx\in\R^{k}$. For any $\alpha>0$ and
convex set $U\subseteq\R^{k}$ we let $\alpha U\defeq\{\vx\in\R^{k}|\alpha^{-1}\vx\in U\}$.
For any $p\in[1,\infty]$ and $r\in\rNonNeg$ the \emph{$\ellP$ ball
of radius $r$} is given by $\{\vx\in\R^{k}|\norm{\vx}_{p}\leq r\}$.

\medskip{}

\textbf{Calculus:} For a function $f:\Rn\rightarrow\R$ differentiable
at $x\in\Rn$, we denote the gradient of $f$ at $\vx$ by $\grad f(\vx)\in\Rn$
where we have $[\grad f(\vx)]_{i}=\frac{\partial}{\partial x_{i}}f(\vx)$
for all $i\in[n]$. If $f\in\mathbb{R}^{n}\rightarrow\R$ is twice
differentiable at $x\in\Rn$, we denote the Hessian of $f$ at $x$
by$\hess f(\vx)$ where we have $[\grad f(\vx)]_{ij}=\frac{\partial^{2}}{\partial x_{i}\partial x_{j}}f(\vx)$
for all $i,j\in[n]$. Often we will consider functions of two vectors,
$g:\R^{n_{1}\times n_{2}}\rightarrow\R$, and wish to compute the
gradient and Hessian of $g$ restricted to one of the two vectors.
For $\vx\in\Rn$ and $\vy\in\Rm$ then we let $\grad_{\vx}\vg(\va,\vb)\in\R^{n_{1}}$
denote the gradient of $\vg$ for fixed $\vy$ at point $\{\va,\vb\}\in\R^{n_{1}\times n_{2}}$.
We define $\grad_{\vy}$, $\hess_{\vx\vx}$, and $\hess_{\vy\vy}$
similarly. Furthermore for $h:\R^{n}\rightarrow\R^{m}$ differentiable
at $\vx\in\Rn$ we let $\mj(\vh(\vx))\in\R^{m\times n}$ denote the
Jacobian of $\vh$ at $\vx$ where for all $i\in[m]$ and $j\in[n]$
we let $[\mj(\vh(\vx))]_{ij}\defeq\frac{\partial}{\partial x_{j}}h(\vx)_{i}$.
For functions of multiple vectors we use subscripts, e.g. $\mj_{\vx}$,
to denote the Jacobian of the function restricted to the $\vx$ variable.

\section{Preliminaries\label{sec:preliminaries}}

Here we provide a brief introduction to path following methods for
linear programming. The purpose of this section is to formally introduce
interior point terminology and methodology that we build upon to obtain
$\otilde(\sqrt{{\rank(\ma)}}L)$ iteration solver. The algorithm and
the analysis discussed in this section can be viewed as a special
case of the framework presented in Section~\ref{sec:weighted_path}.
The reader well versed in path following methods can likely skip this
section and to the more curious reader we encourage them to consider
some of the many wonderful expositions on this subject \cite{nesterov1997self,ye2011interior,gonzaga1992path}
for further reading.

\subsection{The Setup\label{sec:preliminaries:setup}}

Given a matrix, $\ma\in\R^{m\times n}$, and vectors, $\vb\in\Rm$
and $\vc\in\Rn$, the central goal of this paper is to efficiently
compute a solution to the following linear program
\begin{equation}
\min_{\vx\in\Rn~:~\ma\vx\geq\vb}\vc^{T}\vx\label{eq:preliminaries:lp_the_problem}
\end{equation}
It is well known that this is the dual of the \emph{standard form}
of a linear program and hence all linear programs can be expressed
by \eqref{eq:preliminaries:lp_the_problem}. We call a vector $\vx\in\Rm$
\emph{feasible} if $\ma\vx\geq\vb$, we call $\vc^{T}\vx$ the \emph{cost}
of such a vector. therefore our goal is to either compute a minimum
cost feasible vector or determine that none exists.

We assume that $\ma$ is full rank, i.e. $\rank(\ma)=n$, and that
$m\geq n$. Nevertheless, we still write many of our results using
$\rank(\ma)$ rather than $n$ for two reasons. First, this notation
makes clear that $\rank(\ma)$ is referring to the smaller of the
two quantities $m$ and $n$. Second, if $\rank(\ma)<n$, then we
can reduce the number of variables to $\rank(\ma)$ by a change of
basis.%
\footnote{In general, computing this change of basis may be computationally
expensive. However, this cost can be diminished by using a subspace
embedding~\cite{nelson2012osnap} to replace $\vx$ with $\mvar{\Pi}\vy$
for subspace embedding $\mvar{\Pi}$ and $\otilde(\rank(\ma))$ dimensional
$\vy$. Then using the reduction in Appendix~\ref{sec:app:bit_complexity}
we only need to work with an $\otilde(\rank(\ma))$ rank matrix.%
} Hence, we only need to solve linear programs in the full rank version.

\subsection{Path Following Interior Point\label{sec:preliminaries:central-path}}

Interior point methods\emph{ }solve \eqref{eq:preliminaries:lp_the_problem}
by maintaining a point $\vx$ that is in the \emph{interior} of the
feasible region, i.e. $\vx\in\dInterior$ where
\[
\dInterior\defeq\{\vx\in\Rn~:~\ma\vx>\vb\}.
\]
These methods attempt to iteratively decrease the cost of $\vx$ while
maintaining strict feasibility. This is often done by considering
some measurement of the distance to feasibility such as $\vSlackX\defeq\ma\vx-\vb$,
called the \emph{slacks}, and creating some penalty for these distances
approaching 0. Since $\vSlackX>0$ if and only if $\vx\in\dInterior$
by carefully balancing penalties for small $\vSlack(\vx)$ and penalties
for large $\vc^{T}\vx$ these methods eventually compute a point close
enough to the optimum solution that it can be computed exactly.

Path following methods fix ratios between the the penalty for large
$\vc^{T}\vx$ and the penalty for small $\vSlackX$ and alternate
between steps of optimizing with respect to this ratio and changing
the ratio. These methods typically encode the penalties through a
\emph{barrier function} $\phi~:~\dSlack\rightarrow\R$ such that $\phi(\vSlack(\vx))\rightarrow\infty$
as $\slack(\vx)_{i}\rightarrow0$ for any $i\in[m]$ and they encode
the ratio through some parameter $t>0$. Formally, they attempt to
solve optimization problems of the following form for increasing values
of $t$
\begin{equation}
\min_{\vx\in\Rm}\penalizedObjective(\vx)\enspace\text{ where }\enspace\penalizedObjective(\vx)\defeq t\cdot\vc^{T}\vx+\phi(\vSlack(\vx))\label{eq:preliminaries:standard-central-path}
\end{equation}
Since $\phi(\vSlack(\vx))\rightarrow\infty$ as $\slack(\vx)_{i}\rightarrow0$
the minimizer of $\penalizedObjective(\vx)$, denoted $\vx^{*}(t)$,
is in $S^{0}$ for all $t$. As $t$ increases the effect of the cost
vector on $\vx^{*}(t)$ increases and the distance from the boundary
of the feasible region as measured by $\vSlack(\vx)$ decreases. One
can think of the points $\{\vx^{*}(t)~|~t>0\}$ as a path in $\Rn$,
called \emph{the central path}, where $\vx^{*}(t)$ approaches a solution
to \eqref{eq:preliminaries:lp_the_problem} as $t\rightarrow\infty$.
A standard choice of barrier is \emph{the standard log barrier, $\phi(\vSlack(\vx))\defeq-\sum_{\{i\}}\log(\slack(\vx)_{i})$
}and for this choice of barrier we refer to $\{\vx^{*}(t)~|~t>0\}$
as the \emph{standard central path}.

Path following methods typically follow the following framework:
\begin{description}
\item [{(1)}] \emph{Compute Initial Point:} Compute an approximation $\vx^{*}(t)$
for some $t$.
\item [{(2)}] \emph{Follow the central path}: Repeatedly increase $t$
and compute an approximation to $\vx^{*}(t)$.
\item [{(3)}] \emph{Round to optimal solution}: Use the approximation to
$\vx^{*}(t)$ to compute the solution to \eqref{eq:preliminaries:lp_the_problem}.
\end{description}
Steps (1) and (3) are typically carried out by standard interior point
techniques. These techniques are fairly general and covered briefly
in Section \ref{sec:algorithm} and Appendix \ref{sec:app:bit_complexity}.
However, the manner in which (2) is performed varies greatly from
method to method. In the following subsection we provide a simple
technique for performing (2) that yields reasonable running times
and serves as the foundation for the algorithms considered in the
remainder of the paper.

\subsection{Following the Path\label{sec:preliminaries:following-the-path}}

There are numerous techniques to \emph{follow the central path}, i.e.
approximately compute $\vx^{*}(t)$ for increasing values of $t$.
Even with the barrier fixed there are numerous schemes to balance
maintaining a point close to a central path point, advancing to a
further central path point, and performing the numerical linear algebra
needed for these operations \cite{vaidya1989speeding,gonzaga1992path,mizuno1993adaptive,Nesterov1994}.

In this section we present a simple and common method whereby we simply
alternate between improving our distance to $\vx^{*}(t)$ for some
fixed $t$, and increasing $t$ by some fixed multiplicative factor.
This method reduces the analysis of path following to bounding the
computational complexity of \emph{centering}, i.e. improve the distance
to $\vx^{*}(t)$, and bounding how much increasing\emph{ $t$ }hurts
\emph{centrality}, i.e. increases the distance to $\vx^{*}(t)$. In
the remainder of this section we show how to perform this analysis
for the standard central path, \emph{$\phi(\vx)\defeq-\sum_{i\in[m]}\log(\slack(\vx)_{i})$.}

Typically path following methods center, i.e. minimize $\penalizedObjective(\vx)$,
using Newton's method or some variant thereof. While for an arbitrary
current point $\vx\in\dInterior$ and $t>0$ the function $\penalizedObjective(\vx)$
can be ill-behaved, in a region near $\vx^{*}(t)$ the Hessian of
$\penalizedObjective(\vx)$ given by $\hessian f_{t}(\vx)=\ma^{T}\ms^{-2}\ma$
for $\ms\defeq\mDiag(\vSlack(\vx))$ changes fairly slowly. More precisely,
if one considers the second order approximation of $f_{t}(\vz)$ around
some point $\vx\in\dInterior$ ``close enough'' to $\vx^{*}(t)$
, 
\[
f_{t}(\vz)\approx f_{t}(\vx)+\innerProduct{\grad f_{t}(\vx)}{\vz-\vx}+\frac{1}{2}\left(\vz-\vx\right)^{T}(\hessian f_{t}(\vx))\left(\vz-\vx\right)\,,
\]
and applies one step of Newton's method, i.e. minimizes this quadratic
approximation to compute
\begin{align*}
\vxNext & :=\vx-(\hessian f_{t}(\vx))^{-1}\grad f_{t}(\vx)\\
 & =\vx-(\ma^{T}\ms^{-2}\ma)^{-1}(t\vc-\ma^{T}\vs)
\end{align*}
for $\vs\defeq\vs(\vx)$ then this procedure rapidly converges to
$\vx^{*}(t)$.

To quantify this, we measure \emph{centrality}, i.e. how close the
current point $\vx\in\dInterior$ is to $\vx^{*}(t)$, by the size
of this \emph{Newton step} in the Hessian induced norm. For $\vx\in\dInterior$
and Newton step $\vh_{t}(\vx)\defeq(\hessian f_{t}(\vx))^{-1}\grad f_{t}(\vx)$
we denote centrality by $\delta_{t}(\vx)\defeq\norm{\vh_{t}(\vx)}_{\hessian f_{t}(\vx)}$.
Standard analysis of Newton's method shows that if $\delta_{t}(\vx)$
is less than some constant then for $\vxNext:=\vx-\vNewtonStep(\vx)$
we have $\delta_{t}(\vxNext)=O(\delta_{t}(\vx)^{2})$ (See Lemma~\ref{lem:weighted_path:stab_update_step}).
Furthermore, under these conditions it is not hard to show that for
$t'=t(1+(m)^{-1/2})$ we have $\delta_{t'}(\vxNext)\leq O\left(\delta_{t}(\vx)\right)$
(See Lemma \ref{lem:weighted_path:t_step}). 

Combining these facts yields that in $O(\sqrt{m})$ iterations we
can double $t$ while maintaining a \emph{nearly centered} $\vx$,
i.e. $\delta_{t}(\vx)$ at most a constant. With some additional work
discussed briefly in Section \ref{sec:algorithm} it can be shown
that by maintaining a nearly centered $\vx$ and changing $t$ by
a constant factor at most $\otilde(L)$ times one can compute a solution
to \eqref{eq:preliminaries:lp_the_problem}. Therefore, this method
solves \eqref{eq:preliminaries:lp_the_problem} in $O(\sqrt{m}L)$
iterations where the cost of each iteration is $O(\nnz(\ma))$ plus
the time need to solve a linear system in the matrix $\ma^{T}\ms^{-2}\ma$.

\section{Weighted Path Finding}

\label{sec:weighted_path}

In this section we introduce the optimization framework we use to
solve the linear programs, the \emph{weighted central path}. After
formally defining the path (Section~\ref{sec:weighted_path:path}),
we prove properties of the path (Section~\ref{sec:weighted_path:properties})
and show how to center along the path (Section \ref{sec:weighted_path:centering}).
We show that the performance of path following methods using a weighted
central path depends crucially on how the weights are computed and
in Section~\ref{sec:weighted_path:function} we characterize the
properties we require of such a weight function in order to ensure
that our weighted path following scheme converges efficiently. In
Section~\ref{sec:weighted_path:properties} we analyze the convergence
rate of our weighted path following scheme assuming the ability to
compute these weights exactly. In the following section we then show
how it suffices to compute the weights approximately (Section~\ref{sec:approx_path}),
we show how to compute these weights efficiently (Section~\ref{sec:weights_full}),
and we show how this yields an efficient linear program solver (Section
\ref{sec:algorithm}).

\subsection{The Weighted Path\label{sec:weighted_path:path}}

Our weighted path following method is a generalization of the path
following scheme presented in Section \ref{sec:preliminaries:central-path}.
Rather than keeping the barrier function $\phi(\vx)=-\sum_{i\in[m]}\log\slackXi$
fixed we allow for greater flexibility in how we penalize slack variables
and adaptively modify the barrier function in order to take larger
steps. In addition to maintaining a feasible point $\vx$ and a path
parameter $t$ we maintain a set of positive weights $\vWeight\in\dWeights$
and attempt to minimize the \emph{penalized objective function} $\penalizedObjective:\dInterior\times\dWeights\rightarrow\R$
given for all $\vx\in\dInterior$ and $\vWeight\in\dWeights$ by
\begin{equation}
\penalizedObjective(\vx,\vWeight)\defeq t\cdot\vc^{T}\vx-\sum_{i\in[m]}\weight_{i}\log\slackXi.\label{eq:penalized_object_function}
\end{equation}
We maintain a feasible point $\{\vx,\vWeight\}\in\dFull$ and our
goal is to compute a sequence of feasible points for increasing $t$
and changing $\vWeight$ such that $\penalizedObjective(\vx,\vWeight)$
is nearly minimized with respect to $\vx$.

Note that trivially any $\vx\in\dInterior$ can be expressed as $\argmin_{\vy\in\Rn}\penalizedObjective(\vy,\vWeight)$
for some $\vWeight\in\dWeights$ and therefore, every $\vx\in\dInterior$
is a \emph{weighted central path poin}t for some choice of weights.
However, in order to to convert a weighted central path point $\{\vx,\vWeight\}\in\dFull$
to a solution for (\ref{eq:intro:problem}) we will need to have $t$
large and $\normOne{\vw}$ small which precludes this trivial choice
of $t$ and $\vWeight$. 

In the remainder of the paper, we show that by careful updating $\vx$,
$\vWeight$, and $t$ we can stay close to the weighted central path
while making large increases in $t$ and maintaining $\norm{\vWeight}_{1}$
small. Ultimately, this will allow us to solve linear programs in
$\otilde(\sqrt{\rank(\ma)}L)$ iterations while only solving $\otilde(1)$
linear systems in each iteration.

\subsection{Properties of the Weighted Path\label{sec:weighted_path:properties}}

As in Section \ref{sec:preliminaries:following-the-path} for a feasible
$\{\vx,\vw\}\in\dFull$ we measure the \emph{centrality} of this point
by the size of the Newton step on $\vx$ in the Hessian norm, denoted
by $\proximity_{t}(\vx,\vw)$ and we call $\{\vx,\vw\}$ a central
path point if $\delta_{t}(\vx,\vw)=0$. For the penalized objective
function $\penalizedObjective$, we see that the Newton step, $\vNewtonStep_{t}(\vx,\vWeight)$,
is given by
\begin{align}
\vNewtonStep_{t}(\vx,\vWeight) & =(\hessXX\penalizedObjective(\vx,\vWeight))^{-1}\gradX\penalizedObjective(\vx,\vWeight)\nonumber \\
 & =(\ma^{T}\ms^{-1}\mWeight\ms^{-1}\ma)^{-1}(t\vc-\ma^{T}\ms^{-1}\vWeight)\label{eq:weighted_path:newton_step}
\end{align}
and the centrality, $\delta_{t}(\vx,\vw)$, is given by for all $\{\vx,\vWeight\}\in\dFull$
by 
\begin{align}
\delta_{t}(\vx,\vw) & \defeq\norm{\vNewtonStep_{t}(\vx,\vWeight)}_{\hessXX\penalizedObjective(\vx,\vWeight)}=\norm{t\vc-\ma^{T}\ms^{-1}\vWeight}_{\left(\ma^{T}\ms^{-1}\mWeight\ms^{-1}\ma\right)^{-1}}\label{eq:weighted_path:energy_x}
\end{align}

Whereas in the standard central path we saw that the centrality increased
at a rate of $\sqrt{m}$ as $t$ increased, here we show that in this
more general case, the $m$ is replaced by the total weight $\normOne{\vWeight}=\sum_{i\in[m]}\weight_{i}$.

\begin{lemma}[Weighted Path Step] \label{lem:weighted_path:t_step}
For all $\{\vx,\vWeight\}\in\dFull$ and $t,\alpha\geq0$, we have
\[
\proximity_{(1+\alpha)t}(\vx,\vWeight)\leq(1+\alpha)\proximity_{t}(\vx,\vWeight)+\alpha\sqrt{\normOne{\vWeight}}
\]
\end{lemma}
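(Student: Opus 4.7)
The plan is to reduce the lemma to the triangle inequality in the local Hessian norm, together with a single bound of the form $\|\ma^{T}\ms^{-1}\vw\|_{\mh^{-1}}\le\sqrt{\|\vw\|_1}$, where $\mh\defeq\hessXX\penalizedObjective(\vx,\vw)=\ma^{T}\ms^{-1}\mw\ms^{-1}\ma$. Notice that the Hessian $\mh$ does not depend on $t$; only the gradient $\gradX\penalizedObjective(\vx,\vw)=t\vc-\ma^{T}\ms^{-1}\vw$ does. So the whole $t$-dependence of $\delta_t$ sits inside a single vector whose norm we want to control after perturbing $t\mapsto(1+\alpha)t$.

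First I would write, using the definition \eqref{eq:weighted_path:energy_x},
\[
(1+\alpha)t\,\vc-\ma^{T}\ms^{-1}\vw=(1+\alpha)\bigl(t\vc-\ma^{T}\ms^{-1}\vw\bigr)+\alpha\,\ma^{T}\ms^{-1}\vw,
\]
and then apply the triangle inequality with respect to $\|\cdot\|_{\mh^{-1}}$ to obtain
\[
\proximity_{(1+\alpha)t}(\vx,\vw)\le(1+\alpha)\proximity_{t}(\vx,\vw)+\alpha\bigl\|\ma^{T}\ms^{-1}\vw\bigr\|_{\mh^{-1}}.
\]
This reduces the lemma to showing $\|\ma^{T}\ms^{-1}\vw\|_{\mh^{-1}}\le\sqrt{\|\vw\|_1}$.

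For the remaining bound, the key observation is the factorization $\mh=\mb^{T}\mb$ where $\mb\defeq\mw^{1/2}\ms^{-1}\ma$, and correspondingly $\ma^{T}\ms^{-1}\vw=\mb^{T}\mw^{1/2}\vones$. Hence
\[
\bigl\|\ma^{T}\ms^{-1}\vw\bigr\|_{\mh^{-1}}^{2}=\vones^{T}\mw^{1/2}\,\mb(\mb^{T}\mb)^{-1}\mb^{T}\,\mw^{1/2}\vones.
\]
The middle factor $\mb(\mb^{T}\mb)^{-1}\mb^{T}$ is the orthogonal projection onto the column space of $\mb$ (and is well-defined since $\ma$ has full column rank, making $\mh$ positive definite). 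Thus it is PSD and bounded above by $\iMatrix$, so the quadratic form is at most $\vones^{T}\mw\vones=\|\vw\|_{1}$. Taking square roots and combining with the triangle-inequality step yields the claimed estimate.

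There is no real obstacle here: the proof is an almost mechanical triangle inequality followed by a projection bound. The only thing worth flagging is that the constant in front of $\sqrt{\|\vw\|_1}$ is exactly $\alpha$ (not $(1+\alpha)$), which comes out naturally from the decomposition above rather than from a looser splitting like $(1+\alpha)t\vc-\ma^{T}\ms^{-1}\vw=(t\vc-\ma^{T}\ms^{-1}\vw)+\alpha t\vc$, which would have required separately controlling $\|t\vc\|_{\mh^{-1}}$ and would not give the stated shape.
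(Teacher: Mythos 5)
Your proof is correct and follows the same route as the paper: split $(1+\alpha)t\vc - \ma^{T}\ms^{-1}\vw$ into $(1+\alpha)(t\vc - \ma^{T}\ms^{-1}\vw) + \alpha\ma^{T}\ms^{-1}\vw$, apply the triangle inequality in $\|\cdot\|_{\mh^{-1}}$, and bound the remaining term by recognizing $\mw^{1/2}\ms^{-1}\ma(\ma^{T}\ms^{-1}\mw\ms^{-1}\ma)^{-1}\ma^{T}\ms^{-1}\mw^{1/2}$ as an orthogonal projection $\specLeq \iMatrix$, giving $\|\ma^{T}\ms^{-1}\vw\|_{\mh^{-1}} \le \|\mw^{-1/2}\vw\|_{2} = \sqrt{\|\vw\|_1}$. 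Your closing remark about why the $\alpha t\vc$ splitting would fail is a nice sanity check, but otherwise this is the paper's argument verbatim.
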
\begin{proof} Let $\vs\defeq\vs(\vx)$. By (\ref{eq:weighted_path:energy_x})
we have
\[
\proximity_{(1+\alpha)t}(\vx,\vWeight)=\norm{(1+\alpha)t\vc-\ma^{T}\ms^{-1}\vWeight}_{(\ma^{T}\ms^{-1}\mWeight\ms^{-1}\ma)^{-1}}.
\]
Now, $\norm{\cdot}_{(\ma^{T}\ms^{-1}\mWeight\ms^{-1}\ma)^{-1}}$ is
a norm and therefore by the triangle inequality and the definition
of $\proximity_{t}(\vx,\vw)$ yields
\begin{equation}
\proximity_{(1+\alpha)t}(\vx,\vWeight)\leq(1+\alpha)\proximity_{t}(\vx,\vWeight)+\alpha\norm{\ma^{T}\ms^{-1}\vWeight}_{(\ma^{T}\ms^{-1}\mWeight\ms^{-1}\ma)^{-1}}.\label{eq:lem:weighted_path:t_step:1}
\end{equation}
Recall that $\mProj_{\ms^{-1}\ma}\left(\vWeight\right)=\mw^{1/2}\ms^{-1}\ma(\ma^{T}\ms^{-1}\mw\ms^{-1}\ma)^{-1}\ma^{T}\ms^{-1}\mw^{1/2}$
is a projection matrix. Consequently $\mProj_{\ms^{-1}\ma}\left(\vWeight\right)\specLeq\iMatrix$
and we have
\begin{equation}
\norm{\ma^{T}\ms^{-1}\vWeight}_{(\ma^{T}\ms^{-1}\mWeight\ms^{-1}\ma)^{-1}}=\norm{\mWeight^{-1/2}\vWeight}_{\mProj_{\ms^{-1}\ma}\left(\vWeight\right)}\leq\norm{\mWeight^{-1/2}\vWeight}_{2}=\sqrt{\sum_{i\in[m]}\weight_{i}}.\label{eq:lem:weighted_path:t_step:2}
\end{equation}
Combining (\ref{eq:lem:weighted_path:t_step:1}) and (\ref{eq:lem:weighted_path:t_step:2})
yields the result. \end{proof}

Now to see how well a Newton step on $\vx$ can center, i.e. decrease
$\proximity_{t}(\vx,\vw)$, we need to bound how fast the second order
approximation of $\penalizedObjective(\vx,\vw)$ can change, i.e.
how much the Hessian, $\hessXX\penalizedObjective(\vx,\vw)$, changes
as we change $\vx$. We do this by bounding how much the slacks can
change as we change $\vx$. As $\hessXX\penalizedObjective(\vx,\vw)=\ma^{T}\ms^{-1}\mWeight\ms^{-1}\ma$
this immediately bounds how much the Hessian can change as we change
$\vx$. The following lemma is motivated by similar results in \cite{vaidya1996new,anstreicher96}.

\begin{lemma}[Relative Change of Slacks] \label{lem:weighted_path:relative_change_slacks}
Let $\vxNext=\vx+\vDelta$ for some $\vx\in\dInterior$ and $\vDelta\in\R^{n}$.
Let $\vSlackNext$ and $\vs$ denote the slacks associated with $\vxNext$
and $\vx$ respectively. If $\normInf{\mSlack^{-1}\ma\vDelta}<1$
then $\vxNext\in\dInterior$ and
\begin{equation}
\normInf{\mSlack^{-1}\ma\vDelta}\leq\norm{\vDelta}_{\ma^{T}\mSlack^{-1}\mWeight\mSlack^{-1}\ma}\cdot\max_{i\in[m]}\norm{\mWeight^{-1/2}\indicVec i}_{\mProj_{\mSlack^{-1}\ma}\left(\vWeight\right)}.\label{eq:weighted_path:change_slacks}
\end{equation}
In particular, choosing $\vDelta=-\vNewtonStep_{t}(\vx,\vWeight)$
yields
\[
\normInf{\mSlack^{-1}\ma\vDelta}\leq\delta_{t}(\vs,\vWeight)\cdot\max_{i\in[m]}\norm{\mWeight^{-1/2}\indicVec i}_{\mProj_{\mSlack^{-1}\ma}\left(\vWeight\right)}.
\]
\end{lemma}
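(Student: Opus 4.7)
The plan is to split the statement into two parts: (i) the feasibility claim $\vxNext \in \dInterior$, and (ii) the inequality bounding $\normInf{\mSlack^{-1}\ma\vDelta}$. The feasibility part is almost immediate from the definition of slacks, while the inequality will come from a short Cauchy--Schwarz argument that exploits the fact that the orthogonal projection $\mProj_{\mSlack^{-1}\ma}(\vWeight)$ acts as the identity on vectors in the column space of $\mWeight^{1/2}\mSlack^{-1}\ma$.

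First I would handle feasibility. Since $\vSlackNext = \ma\vxNext - \vb = \vs + \ma\vDelta$, we can write coordinate-wise $\slackNext_i = s_i(1 + [\mSlack^{-1}\ma\vDelta]_i)$. The hypothesis $\normInf{\mSlack^{-1}\ma\vDelta} < 1$ then forces each factor on the right to be strictly positive, so $\vSlackNext > 0$ and $\vxNext \in \dInterior$.

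For the main inequality, fix $i \in [m]$ and set $\mb \defeq \mSlack^{-1}\ma$ and $\mProj \defeq \mProj_{\mb}(\vWeight) = \mWeight^{1/2}\mb(\mb^T\mWeight\mb)^{-1}\mb^T\mWeight^{1/2}$. The key observation is that $\mWeight^{1/2}\mb\vDelta$ lies in the column space of $\mWeight^{1/2}\mb$, and so $\mProj(\mWeight^{1/2}\mb\vDelta) = \mWeight^{1/2}\mb\vDelta$. Writing $\indicVec i^{T} = (\mWeight^{-1/2}\indicVec i)^T \mWeight^{1/2}$ and inserting $\mProj = \mProj\mProj$ (since $\mProj$ is an orthogonal projector), we get
\[
\indicVec i^{T}\mb\vDelta \;=\; (\mWeight^{-1/2}\indicVec i)^T \mProj\mProj (\mWeight^{1/2}\mb\vDelta) \;=\; \left\langle \mProj(\mWeight^{-1/2}\indicVec i),\, \mProj(\mWeight^{1/2}\mb\vDelta)\right\rangle.
\]
Applying Cauchy--Schwarz in the standard Euclidean inner product and translating back into the $\mProj$-seminorm (using $\mProj^T\mProj = \mProj$) yields
\[
\abs{\indicVec i^{T}\mb\vDelta} \;\leq\; \norm{\mWeight^{-1/2}\indicVec i}_{\mProj} \cdot \norm{\mWeight^{1/2}\mb\vDelta}_{\mProj}.
\]
Finally, since $\mProj$ fixes $\mWeight^{1/2}\mb\vDelta$, the second factor equals $\norm{\mWeight^{1/2}\mb\vDelta}_{2} = \norm{\vDelta}_{\ma^T\mSlack^{-1}\mWeight\mSlack^{-1}\ma}$. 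Taking the maximum over $i$ gives \eqref{eq:weighted_path:change_slacks}.

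For the ``in particular'' clause, I would just substitute $\vDelta = -\vNewtonStep_t(\vx,\vWeight)$ and observe from \eqref{eq:weighted_path:energy_x} that $\norm{\vDelta}_{\ma^T\mSlack^{-1}\mWeight\mSlack^{-1}\ma} = \norm{\vNewtonStep_t(\vx,\vWeight)}_{\hessXX\penalizedObjective(\vx,\vWeight)} = \delta_t(\vx,\vWeight)$. The main (and really only) obstacle is spotting the right factorization that lets Cauchy--Schwarz be applied with the projection $\mProj$ acting as the identity; once the decomposition $\indicVec i^T\mb\vDelta = \langle \mProj\mWeight^{-1/2}\indicVec i,\mProj\mWeight^{1/2}\mb\vDelta\rangle$ is in place the rest is routine.
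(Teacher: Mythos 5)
Your proof is correct, and the core idea is the same Cauchy--Schwarz argument the paper uses, but you package the decomposition differently. The paper inserts $\left(\ma^{T}\mSlack^{-1}\mWeight\mSlack^{-1}\ma\right)^{1/2}\left(\ma^{T}\mSlack^{-1}\mWeight\mSlack^{-1}\ma\right)^{-1/2}$ between $\vDelta$ and $\ma^{T}\mSlack^{-1}\indicVec i$ and applies Cauchy--Schwarz in $\Rn$, only at the very end rewriting the resulting bound as a projection norm. You instead work entirely in $\Rm$: you insert $\mProj = \mProj^2$, use that $\mProj$ is a symmetric idempotent fixing $\mWeight^{1/2}\mSlack^{-1}\ma\vDelta$, and apply Cauchy--Schwarz directly in the $\mProj$-seminorm, so the projection norm appears immediately rather than via an identity at the end. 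The two routes are algebraically equivalent here (the key bridging fact is $\norm{\ma^{T}\mSlack^{-1}\indicVec i}_{(\ma^{T}\mSlack^{-1}\mWeight\mSlack^{-1}\ma)^{-1}} = \norm{\mWeight^{-1/2}\indicVec i}_{\mProj_{\mSlack^{-1}\ma}(\vWeight)}$), but yours is marginally more self-contained since it never needs the Hessian square root and would in principle survive dropping the full-rank assumption on $\ma$ (replacing the inverse by a pseudoinverse), while the paper's $\mathbf{H}^{-1/2}$ insertion presupposes nondegeneracy. In the setting of this paper $\ma$ is assumed full rank, so the distinction is cosmetic; both proofs are fine.
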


\begin{proof}

Clearly $\vSlackNext=\vs+\ma\vDelta$ and therefore the multiplicative
change in slacks is given by $\normInf{\mSlack^{-1}(\vSlackNext-\vs)}=\normInf{\mSlack^{-1}\ma\vDelta}$.
Consequently $\vxNext\in\dInterior$ if and only if $\normInf{\mSlack^{-1}\ma\vDelta}<1$. 

To prove (\ref{eq:weighted_path:change_slacks}) we note that by definition
of $\normInf{\cdot}$ 
\[
\normInf{\mSlack^{-1}\ma\vDelta}=\max_{i\in[m]}\left|\left\langle \mSlack^{-1}\ma\vDelta,\indicVec i\right\rangle \right|_{i}.
\]
Using that $\ma$ is full rank and therefore $\ma^{T}\mSlack^{-1}\mWeight\mSlack^{-1}\ma\specGt\mZero$
then yields
\[
\normInf{\mSlack^{-1}\ma\vDelta}=\max_{i\in[m[}\left|\left\langle \left(\ma^{T}\mSlack^{-1}\mWeight\mSlack^{-1}\ma\right)^{1/2}\vDelta,\left(\ma^{T}\mSlack^{-1}\mWeight\mSlack^{-1}\ma\right)^{-1/2}\ma^{T}\mSlack^{-1}\indicVec i\right\rangle \right|.
\]
Applying Cauchy Schwarz we have
\[
\normInf{\mSlack^{-1}\ma\vDelta}\leq\norm{\vDelta}_{\ma^{T}\mSlack^{-1}\mWeight\mSlack^{-1}\ma}\cdot\max_{i\in[m[}\norm{\ma^{T}\mSlack^{-1}\indicVec i}_{\left(\ma^{T}\mSlack^{-1}\mWeight\mSlack^{-1}\ma\right)^{-1}}.
\]
Recalling the definition $\mProj_{\mSlack^{-1}\ma}\left(\vWeight\right)=\mWeight^{1/2}\mSlack^{-1}\ma\left(\ma\mSlack^{-1}\mWeight\mSlack^{-1}\ma\right)^{-1}\ma^{T}\mSlack^{-1}\mWeight^{1/2}$
yields the result.\end{proof}

Lemma \ref{lem:weighted_path:relative_change_slacks} implies that
as $\norm{\mWeight^{-1/2}\indicVec i}_{\mProj_{\mSlack^{-1}\ma}\left(\vWeight\right)}$
decreases, the region over which Newton steps do not change the Hessian
too much increases. We call this quantity, $\norm{\mWeight^{-1/2}\indicVec i}_{\mProj_{\mSlack^{-1}\ma}\left(\vWeight\right)}$,
the \emph{slack sensitivity} as it measures how much slack changes
during a Newton step.

\begin{definition}[Slack Sensitivity] For $\vSlack,\vWeight\in\dSlack$
the \emph{slack sensitivity}%
\footnote{In the previous version in ArXiv, we called it weighted condition
number which is confusing. We are indebted to an anonymous reviewer
for suggesting this name.%
}, $\energyStab(\vSlack,\vWeight)$ is given by
\[
\energyStab(\vSlack,\vWeight)\defeq\max_{i\in[m]}\norm{\mWeight^{-1/2}\indicVec i}_{\mProj_{\mSlack^{-1}\ma}\left(\vWeight\right)}.
\]
\end{definition}

Geometrically, slack sensitivity indicates how much a relative slack
can change during a Newton step, equivalently, how small is the Newton
step region compared to the original polytope. From Lemmas \ref{lem:weighted_path:t_step}
and \ref{lem:weighted_path:relative_change_slacks} our goal in using
the weighted central path is clear. We wish to keep $\normOne{\vWeight}$
small so that we can make large increases to $t$ without increasing
centrality and we wish to keep $\energyStab(\vSlackX,\vWeight)$ small
so that over a large region we can improve centrality quickly. Unfortunately,
while it is not too difficult to produce weights that meet these criterion,
changing the weights can also increase $\proximity_{t}$. Therefore,
we also need to choose weights in such a way that they do not change
too drastically as we take Newton steps. In the next subsection we
introduce the step that we use to improve centrality and account for
possible changes in the weights.

\subsection{Centering Steps\label{sec:weighted_path:centering}}

Here we define the centering step we use to decrease $\delta_{t}(\vx,\vWeight)$.
There are two ways to decrease $\delta_{t}$, one is to perform a
Newton step on $\vx$ which corresponds to move $\vx$ closer to the
central path., one is to set $\vWeight$ such that $\delta_{t}(\vx,\vWeight)=0$
which corresponds to move the path itself to closer to $\vx$. By
mixing two steps, we can slow down progress along a specific weighted
path as much as we want but still obtaining the guarantee of Newton
method. We call this $r$-step where $r$ controls the ratio of how
much we change $\vWeight$ and $\vx$. Setting $r=0$ corresponds
to a standard Newton step on $\vx$ where the weights are not updated.
Setting $r=\infty$ coresponds to changing $\vWeight$ to make $\vx$
completely centered. There are two reasons we do thisinstead of a
standard Newton step:
\begin{enumerate}
\item When we change $\vx$, we need to change the weights $\vWeight$ accordingly
to maintain the the properties we want. However, when we change the
weights $\vWeight$, we need to update $\vx$ again, and so on. For
the weight function we consider in Section~\ref{sec:weights_full}
the change of $\vWeight$ required is large. Consequently, after updating
the weights we need to move $\vx$ even more and it is not clear how
to maintain good weights and good centrality at the same time if we
neglect the direction in which the weights change. However, the weights
we use actual change in a direction which partial helps improve centrality.
Considering a $r$-step helps us account for this progress directly. 
\item We cannot compute the weights we want to use exactly. Instead we only
know how to compute them approximately up to $1/\polylog(m)$ multiplicative
error using Johnson–Lindenstrauss. . Therefore, if we take a full
Newton step on $\vx$ and update the weights using the weight function,
the error in our approximation is possibly so large that the step
in full would not help centrality. To control this error and center
when we cannot compute the weights exactly, we exploit that the $r$-step
gives us part of the change in the weights that we can compute precisely.
\end{enumerate}
\begin{definition}[$r$-step]\label{def:weighted_path:r_step} Given
a feasible point $\{\vxCurr,\vWeightCurr\}\in\dFull$, a path parameter
$t$, and a $r$-step
\[
\{\vxNext,\vWeightNext\}=\updateStep_{t}(\vxCurr,\vWeightCurr,r)
\]
 is defined as follows
\begin{align*}
\vxNext & \defeq\vxCurr-\frac{1}{1+\stepRatio}\vNewtonStep_{t}(\vxCurr,\vWeightCurr),\\
\vWeightNext & \defeq\vWeightCurr+\frac{\stepRatio}{1+\stepRatio}\mWeightCurr\mSlackCurr^{-1}\ma\vNewtonStep_{t}(\vxCurr,\vWeightCurr)
\end{align*}
where we recall that
\[
\vNewtonStep_{t}(\vxCurr,\vWeightCurr)\defeq(\ma^{T}\mSlackCurr^{-1}\mWeightCurr\mSlackCurr^{-1}\ma)^{-1}(t\vc-\ma^{T}\mSlackCurr^{-1}\vWeightCurr)
\]
and we let $\vSlackCurr$ and $\vSlackNext$ denote the slacks with
$\vxCurr$ and $\vxNext$ respectively.\end{definition}

Note that for a $r$-step we have
\begin{equation}
\vSlackNext=\vSlackCurr-\frac{1}{1+r}\ma\vNewtonStep(\vxCurr,\vWeightCurr)\label{eq:weighted_path:rstep_slack}
\end{equation}
and therefore
\begin{equation}
\mWeightCurr^{-1}(\vWeightNext-\vWeightCurr)=-\stepRatio\mSlackCurr^{-1}(\vSlackNext-\vSlackCurr).\label{eq:weighted_path:rstep_weight_vs_slack}
\end{equation}
In other words, a $\stepRatio$-step performs a multiplicative update
on the weights that is exactly $\stepRatio$ times larger than the
update on the slacks.

Using Lemma \ref{lem:weighted_path:relative_change_slacks} we now
show that so long as $\delta_{t}(\vxCurr,\vWeightCurr)$ is reasonably
small with respect to the slack sensitivity, any $\stepRatio$-step
produces a feasible $\{\vxNext,\next{\vWeight}\}$ and does not change
the Hessian too much.

\begin{lemma}[Stability of $r$-step] \label{lem:weighted_path:stab_update_step}
Let $\{\vxNext,\vWeightNext\}=\updateStep_{t}(\vSlackCurr,\vWeightCurr,r)$
where
\[
\energyStab\defeq\energyStab(\vxCurr,\vWeightCurr)\enspace\text{ and }\enspace\delta_{t}\defeq\delta_{t}(\vxCurr,\vWeightCurr)\leq\frac{1}{8\energyStab}.
\]
Under these conditions we have
\begin{align}
\norm{\mSlackCurr^{-1}(\vSlackNext-\vSlackCurr)}_{\mWeightCurr} & \leq\frac{1}{1+\stepRatio}\cdot\delta_{t},\label{eq:lem:weighted_path:stab_update_step:0}\\
\normInf{\mSlackCurr^{-1}(\vSlackNext-\vSlackCurr)} & \leq\frac{1}{1+\stepRatio}\cdot\delta_{t}\cdot\energyStab,\label{eq:lem:weighted_path:stab_update_step:1}\\
\normInf{\mWeightCurr^{-1}(\vWeightNext-\vWeightCurr)} & \leq\frac{\stepRatio}{1+\stepRatio}\cdot\delta_{t}\cdot\energyStab.\label{eq:lem:weighted_path:stab_update_step:2}
\end{align}
Consequently $\{\vxNext,\vWeightNext\}$ is feasible and 
\begin{align}
(1-3\delta_{t}\energyStab)\hessXX\penalizedObjective(\vxCurr,\vWeightCurr)\specLeq\hessXX\penalizedObjective(\vxNext,\vWeightNext) & \specLeq(1+3\delta_{t}\energyStab)\hessXX\penalizedObjective(\vxCurr,\vWeightCurr)\label{eq:weighted_path:rstep_hessian}
\end{align}
\end{lemma}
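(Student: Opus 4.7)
The plan is to establish the three bounds (\ref{eq:lem:weighted_path:stab_update_step:0})--(\ref{eq:lem:weighted_path:stab_update_step:2}) first, then deduce feasibility, and finally derive the Hessian estimate (\ref{eq:weighted_path:rstep_hessian}) entrywise from those bounds.

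First, I would prove (\ref{eq:lem:weighted_path:stab_update_step:0}) directly from the definition of the $r$-step. Writing $\vNewtonStep \defeq \vNewtonStep_t(\vxCurr, \vWeightCurr)$, the update rule gives $\vSlackNext - \vSlackCurr = -\frac{1}{1+r} \ma \vNewtonStep$, so
\[
\norm{\mSlackCurr^{-1}(\vSlackNext - \vSlackCurr)}_{\mWeightCurr}^2 = \frac{1}{(1+r)^2} \vNewtonStep^T \bigl(\ma^T \mSlackCurr^{-1} \mWeightCurr \mSlackCurr^{-1} \ma \bigr) \vNewtonStep = \frac{\delta_t^2}{(1+r)^2},
\]
using the definition of $\delta_t$ in (\ref{eq:weighted_path:energy_x}). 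For (\ref{eq:lem:weighted_path:stab_update_step:1}), I would apply Lemma~\ref{lem:weighted_path:relative_change_slacks} with $\vDelta = -\frac{1}{1+r}\vNewtonStep$, which yields exactly $\normInf{\mSlackCurr^{-1}\ma\vDelta} \leq \frac{1}{1+r}\delta_t \energyStab$ by the definition of slack sensitivity. Then (\ref{eq:lem:weighted_path:stab_update_step:2}) follows immediately from the identity (\ref{eq:weighted_path:rstep_weight_vs_slack}), which says $\mWeightCurr^{-1}(\vWeightNext - \vWeightCurr) = -r\, \mSlackCurr^{-1}(\vSlackNext - \vSlackCurr)$, multiplying the $\infty$-norm bound by $r$.

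Feasibility of $\{\vxNext,\vWeightNext\}$ is then automatic: since $\delta_t\energyStab \leq 1/8$, the bounds (\ref{eq:lem:weighted_path:stab_update_step:1}) and (\ref{eq:lem:weighted_path:stab_update_step:2}) are both strictly less than $1$, and hence all entries of $\vSlackNext$ and $\vWeightNext$ are positive (the $\vSlackNext$ positivity was also stated in Lemma~\ref{lem:weighted_path:relative_change_slacks}).

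For the Hessian bound (\ref{eq:weighted_path:rstep_hessian}), since $\hessXX \penalizedObjective(\vx,\vWeight) = \ma^T \mSlack^{-1} \mWeight \mSlack^{-1} \ma$, it suffices to show the diagonal matrix $\mSlack^{-1}\mWeight\mSlack^{-1}$ changes by at most a factor $1 \pm 3\delta_t\energyStab$ entrywise, i.e.\ that
\[
(1-3\delta_t\energyStab)\,\frac{\weightCurr_i}{(\slackCurr_i)^2} \;\leq\; \frac{\weightNext_i}{(\slackNext_i)^2} \;\leq\; (1+3\delta_t\energyStab)\,\frac{\weightCurr_i}{(\slackCurr_i)^2}
\]
for every $i \in [m]$. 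Writing $\epsilon_s \defeq \normInf{\mSlackCurr^{-1}(\vSlackNext-\vSlackCurr)}$ and $\epsilon_w \defeq \normInf{\mWeightCurr^{-1}(\vWeightNext-\vWeightCurr)}$, the previous bounds give $\epsilon_s + 2\epsilon_w \le \text{something}$; more relevantly, $2\epsilon_s + \epsilon_w \le \frac{2+r}{1+r}\delta_t\energyStab \le 2\delta_t\energyStab$. Then the ratio above lies in $\bigl[\tfrac{1-\epsilon_w}{(1+\epsilon_s)^2},\tfrac{1+\epsilon_w}{(1-\epsilon_s)^2}\bigr]$, and the main obstacle is to carefully expand these expressions using $\epsilon_s,\epsilon_w\le 1/8$ to absorb the quadratic terms into the factor of $3$ (as opposed to $2$). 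A direct estimate like $(1-\epsilon_s)^{-2}\le 1+2\epsilon_s+\tfrac{3\epsilon_s^2}{1-\epsilon_s}$ combined with $\epsilon_s\le 1/8$ will give a cross term bounded by $\delta_t\energyStab$, finishing the proof. This last expansion is the only delicate piece; the rest is essentially bookkeeping built on the two preceding lemmas.
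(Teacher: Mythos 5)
Your proposal is correct and follows essentially the same path as the paper's proof: (\ref{eq:lem:weighted_path:stab_update_step:0}) from the definition of $\delta_t$ and the $r$-step slack identity, (\ref{eq:lem:weighted_path:stab_update_step:1}) from Lemma~\ref{lem:weighted_path:relative_change_slacks}, (\ref{eq:lem:weighted_path:stab_update_step:2}) from the identity $\mWeightCurr^{-1}(\vWeightNext-\vWeightCurr)=-r\,\mSlackCurr^{-1}(\vSlackNext-\vSlackCurr)$, and then (\ref{eq:weighted_path:rstep_hessian}) by multiplying the entrywise (equivalently, spectral, since the matrices are diagonal) bounds $\left(1-\frac{r}{1+r}\delta_t\energyStab\right)\mWeightCurr\specLeq\mWeightNext\specLeq\left(1+\frac{r}{1+r}\delta_t\energyStab\right)\mWeightCurr$ and the analogous bound on $\mSlackNext$, and absorbing quadratic terms using $\delta_t\energyStab\le 1/8$. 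Your observation that the linear terms contribute at most $2\epsilon_s+\epsilon_w=\epsilon_s+\delta_t\energyStab\le 2\delta_t\energyStab$ is the same reduction the paper performs via the ratio $\bigl(1+\tfrac{r}{1+r}\delta_t\energyStab\bigr)\big/\bigl(1-\tfrac{1}{1+r}\delta_t\energyStab\bigr)^2\le 1+3\delta_t\energyStab$, so there is no substantive difference to report.
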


\begin{proof} Equation (\ref{eq:lem:weighted_path:stab_update_step:0})
follows from the definition of $\delta_{t}$ and (\ref{eq:weighted_path:rstep_slack}).
Equations (\ref{eq:lem:weighted_path:stab_update_step:1}) and (\ref{eq:lem:weighted_path:stab_update_step:2})
follow from Lemma \ref{lem:weighted_path:relative_change_slacks},
the definition of $\energyStab$, (\ref{eq:weighted_path:rstep_slack}),
and (\ref{eq:weighted_path:rstep_weight_vs_slack}). Since $\delta_{t}\leq\frac{1}{8\gamma}$
this implies that slack or weight changes by more than a multiplicative
factor of $\frac{1}{8}$ and therefore clearly $\{\vSlackNext,\vWeightNext\}\in\dFull$. 

To prove (\ref{eq:weighted_path:rstep_hessian}) note that (\ref{eq:lem:weighted_path:stab_update_step:1})
and (\ref{eq:lem:weighted_path:stab_update_step:2}) imply that
\begin{eqnarray*}
 & \left(1-\frac{r}{1+r}\delta_{t}\energyStab\right)\mWeightCurr\specLeq\mWeightNext\specLeq\left(1+\frac{r}{1+r}\delta_{t}\energyStab\right)\mWeightCurr,\\
 & \left(1-\frac{1}{1+r}\delta_{t}\energyStab\right)\mSlackCurr\specLeq\mSlackNext\specLeq\left(1+\frac{1}{1+r}\delta_{t}\energyStab\right)\mSlackCurr.
\end{eqnarray*}
Since $\hessXX\penalizedObjective(\vx,\vWeight)=\ma^{T}\ms^{-1}\mWeight\ms^{-1}\ma$
for ${\vx,\vWeight}\in\dFull$ we have that
\[
\frac{\left(1-\frac{r}{1+r}\delta_{t}\energyStab\right)}{\left(1+\frac{1}{1+r}\delta_{t}\energyStab\right)^{2}}\hessXX\penalizedObjective(\vxCurr,\vWeightCurr)\specLeq\hessXX\penalizedObjective(\vxNext,\vWeightNext)\specLeq\frac{\left(1+\frac{r}{1+r}\delta_{t}\energyStab\right)}{\left(1-\frac{1}{1+r}\delta_{t}\energyStab\right)^{2}}\hessXX\penalizedObjective(\vxCurr,\vWeightCurr).
\]
Using that $0\leq\delta_{t}\energyStab\leq\frac{1}{8}$ and computing
the Taylor series expansions%
\footnote{Throughout this paper, when we use taylor series expansions we may
use more than just the second order approximation to the function.%
} yields that
\[
\frac{\left(1+\frac{r}{1+r}\delta_{t}\energyStab\right)}{\left(1-\frac{1}{1+r}\delta_{t}\energyStab\right)^{2}}\leq1+3\delta_{t}\energyStab\enspace\text{and}\enspace\frac{\left(1-\frac{r}{1+r}\delta_{t}\energyStab\right)}{\left(1+\frac{1}{1+r}\delta_{t}\energyStab\right)^{2}}\geq1-3\delta_{t}\gamma.
\]

\end{proof}

Using Lemma \ref{lem:weighted_path:stab_update_step} we now bound
how much a $\stepRatio$-step improves centrality.

\begin{lemma}[Centrality Improvement of $r$-step] \label{lem:weighted_path:x_progress}
Let $\{\vxNext,\vWeightNext\}=\updateStep_{t}(\vxCurr,\vWeightCurr,r)$
where
\[
\energyStab\defeq\energyStab(\vxCurr,\vWeightCurr)\enspace\text{ and }\enspace\delta_{t}\defeq\delta_{t}(\vxCurr,\vWeightCurr)\leq\frac{1}{8\energyStab}.
\]
We have the following bound on the change in centrality
\[
\delta_{t}(\vxNext,\vWeightNext)\leq\frac{2}{1+r}\cdot\energyStab\cdot\delta_{t}^{2}.
\]
\end{lemma}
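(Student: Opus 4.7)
The plan is to compute $\gradX\penalizedObjective(\vxNext,\vWeightNext)$ exactly and exploit the fact that the $\stepRatio$-step couples the updates to $\vx$ and $\vWeight$ so as to cancel the first-order change in $\mSlack^{-1}\vWeight$, leaving only a quadratic residual — which is what will ultimately produce the $\delta_{t}^{2}$ rate.

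Set $\vNewtonStep\defeq\vNewtonStep_{t}(\vxCurr,\vWeightCurr)$, $\va\defeq\mSlackCurr^{-1}\ma\vNewtonStep$, and $\vu\defeq\va/(1+\stepRatio)$. Then coordinate-wise $\slackNext_{i}=\slackCurr_{i}(1-u_{i})$ and $\weightNext_{i}=\weightCurr_{i}(1+\stepRatio u_{i})$, so using $(1+\stepRatio)u_{i}=a_{i}$,
\[
\frac{\weightNext_{i}}{\slackNext_{i}} \;=\; \frac{\weightCurr_{i}}{\slackCurr_{i}}\cdot\frac{1+\stepRatio u_{i}}{1-u_{i}} \;=\; \frac{\weightCurr_{i}}{\slackCurr_{i}}\Bigl(1+a_{i}+\frac{a_{i}u_{i}}{1-u_{i}}\Bigr).
\]
Since $\ma^{T}\mSlackCurr^{-1}\mWeightCurr\va=\ma^{T}\mSlackCurr^{-1}\mWeightCurr\mSlackCurr^{-1}\ma\vNewtonStep=\hessXX\penalizedObjective(\vxCurr,\vWeightCurr)\,\vNewtonStep=t\vc-\ma^{T}\mSlackCurr^{-1}\vWeightCurr$, the linear-in-$a_{i}$ part cancels and we obtain
\[
\gradX\penalizedObjective(\vxNext,\vWeightNext) \;=\; t\vc-\ma^{T}\mSlackNext^{-1}\vWeightNext \;=\; -\ma^{T}\mvar{E},\qquad E_{i}\;\defeq\;\frac{\weightCurr_{i}}{\slackCurr_{i}}\cdot\frac{a_{i}^{2}/(1+\stepRatio)}{1-u_{i}}.
\]

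Next I would bound $\|\ma^{T}\mvar{E}\|$ first in the old Hessian norm. Writing $\mvar{E}=\mSlackCurr^{-1}\mWeightCurr^{1/2}\vq$ with $q_{i}=(\weightCurr_{i})^{1/2}a_{i}^{2}/\bigl((1+\stepRatio)(1-u_{i})\bigr)$,
\[
\normFull{\ma^{T}\mvar{E}}_{(\hessXX\penalizedObjective(\vxCurr,\vWeightCurr))^{-1}}^{2} \;=\; \vq^{T}\mProj_{\mSlackCurr^{-1}\ma}(\vWeightCurr)\vq \;\leq\; \normTwo{\vq}^{2},
\]
because $\mProj_{\mSlackCurr^{-1}\ma}(\vWeightCurr)\specLeq\iMatrix$. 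Lemma~\ref{lem:weighted_path:relative_change_slacks} applied with $\vDelta=-\vNewtonStep$ gives $\normInf{\va}\leq\delta_{t}\energyStab\leq 1/8$, so $1/(1-u_{i})\leq 8/7$, and by definition $\sum_{i}\weightCurr_{i}a_{i}^{2}=\vNewtonStep^{T}\hessXX\penalizedObjective(\vxCurr,\vWeightCurr)\vNewtonStep=\delta_{t}^{2}$. Hence
\[
\normTwo{\vq}^{2} \;\leq\; \frac{(8/7)^{2}}{(1+\stepRatio)^{2}}\,\normInf{\va}^{2}\sum_{i\in[m]}\weightCurr_{i}a_{i}^{2} \;\leq\; \frac{(8/7)^{2}\,\energyStab^{2}\delta_{t}^{4}}{(1+\stepRatio)^{2}}.
\]

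Finally I would transfer to the new Hessian norm using the stability bound \eqref{eq:weighted_path:rstep_hessian} from Lemma~\ref{lem:weighted_path:stab_update_step}, which gives $(\hessXX\penalizedObjective(\vxNext,\vWeightNext))^{-1}\specLeq(1-3\delta_{t}\energyStab)^{-1}(\hessXX\penalizedObjective(\vxCurr,\vWeightCurr))^{-1}$ with $(1-3\delta_{t}\energyStab)^{-1}\leq 8/5$ under the hypothesis. Combining constants, $\sqrt{8/5}\cdot(8/7)<2$, yielding $\delta_{t}(\vxNext,\vWeightNext)\leq 2\energyStab\delta_{t}^{2}/(1+\stepRatio)$ as claimed. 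The only delicate step is the algebraic cancellation in the first paragraph: one must notice that the coupling $\mWeightCurr^{-1}(\vWeightNext-\vWeightCurr)=-\stepRatio\mSlackCurr^{-1}(\vSlackNext-\vSlackCurr)$ baked into the $\stepRatio$-step is precisely what makes $(1+\stepRatio)u_{i}=a_{i}$, so that the Newton-step identity annihilates the linear term and the residual is genuinely quadratic in $\vNewtonStep$; once that is seen, the rest reduces to the projection bound $\mProj\specLeq\iMatrix$ and the slack-sensitivity control already set up in Lemmas~\ref{lem:weighted_path:relative_change_slacks} and~\ref{lem:weighted_path:stab_update_step}.
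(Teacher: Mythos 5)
Your proof is correct and follows essentially the same route as the paper's: compute $\gradX\penalizedObjective(\vxNext,\vWeightNext)$, observe that the coupling built into the $r$-step makes the Newton identity annihilate the linear term leaving a residual quadratic in $\vNewtonStep$, bound that residual in the old Hessian norm via $\mProj\specLeq\iMatrix$ and the slack-sensitivity $\ellInf$ control from Lemma~\ref{lem:weighted_path:relative_change_slacks}, and transfer to the new Hessian norm via Lemma~\ref{lem:weighted_path:stab_update_step}. The only difference is cosmetic (you parameterize by $\va,\vu$ rather than $\vDelta=\mSlackCurr^{-1}(\vSlackNext-\vSlackCurr)$), and your constant tracking $\sqrt{8/5}\cdot(8/7)<2$ matches the paper's.
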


\begin{proof} Let $\vNewtonStep_{t}\defeq\vNewtonStep_{t}(\vxCurr,\vWeightCurr)$
and let $\vDelta\defeq\mSlackCurr^{-1}(\vSlackNext-\vSlackCurr)=\frac{-1}{1+\stepRatio}\mSlackCurr^{-1}\ma\vh_{t}$.
Recalling the definition of $\updateStep_{t}$, we see that
\begin{align}
\frac{\vWeightNext_{i}}{\vSlackNext_{i}} & =\frac{\vWeightCurr_{i}-\stepRatio\vWeightCurr_{i}\vDelta_{i}}{\vSlackCurr_{i}+\vSlackCurr_{i}\vDelta_{i}}=\left(\frac{\vWeightCurr_{i}}{\vSlackCurr_{i}}\right)\cdot\left(\frac{1-\stepRatio\vDelta_{i}}{1+\vDelta_{i}}\right)\nonumber \\
 & =\left(\frac{\vWeightCurr_{i}}{\vSlackCurr_{i}}\right)\left(1-\frac{(1+\stepRatio)\vDelta_{i}}{1+\vDelta_{i}}\right)\label{eq:weighted_path:x_progress:1}
\end{align}
Using the definition of $\vh_{t}$ we have
\begin{align*}
\grad_{x}\penalizedObjective(\vxCurr,\vWeightCurr) & =t\vc-\ma^{T}\mSlackCurr^{-1}\vWeightCurr=\left(\ma^{T}\mSlackCurr^{-1}\mWeightCurr\mSlackCurr^{-1}\ma\right)\vh_{t}\\
 & =-(1+r)\ma^{T}\mSlackCurr^{-1}\mWeightCurr\vDelta
\end{align*}
and therefore
\begin{equation}
t\vc=\ma^{T}\mSlackCurr^{-1}\mWeightCurr\left(\onesVec-(1+r)\vDelta\right).\label{eq:weighted_path:x_progress2}
\end{equation}
Combining (\ref{eq:weighted_path:x_progress:1}) and (\ref{eq:weighted_path:x_progress2})
and using the definition of $\vDelta$ then yields
\begin{align}
\grad_{x}\penalizedObjective(\vxNext,\vWeightNext) & =t\vc-\ma^{T}\mSlackNext^{-1}\vWeightNext\nonumber \\
 & =\ma^{T}\mSlackCurr^{-1}\mWeightCurr\left(\onesVec-(1+r)\vDelta-\onesVec+\frac{(1+r)\vDelta}{\onesVec+\vDelta}\right)\nonumber \\
 & =-(1+r)\ma^{T}\mSlackCurr^{-1}\mWeightCurr\frac{\vDelta^{2}}{\onesVec+\vDelta}\nonumber \\
 & =\ma^{T}\mSlackCurr^{-1}\mWeightCurr\mSlackCurr^{-1}\mDiag(\vDelta)(\iMatrix+\mDiag(\vDelta))^{-1}\ma\vh_{t}\label{eq:sec_weighted_path:x_progress3}
\end{align}
Now by Lemma~\ref{lem:weighted_path:stab_update_step} we know that
\[
\ma^{T}\mSlackNext^{-1}\mWeightNext\mSlackNext^{-1}\ma\specGeq(1-3\delta_{t}\energyStab)\ma^{T}\mSlackCurr^{-1}\mWeightCurr\mSlackCurr^{-1}\ma.
\]
Therefore by (\ref{eq:sec_weighted_path:x_progress3}) and the fact
that
\[
\mProj_{\mSlackCurr^{-1}\ma}\left(\vWeightCurr\right)=\mWeightCurr^{1/2}\mSlackCurr^{-1}\ma\left(\ma^{T}\mSlackCurr^{-1}\mWeight\mSlackCurr^{-1}\ma\right)^{-1}\ma^{T}\mSlackCurr^{-1}\mWeightCurr^{1/2}\specLeq\iMatrix,
\]
we have
\begin{align*}
\delta_{t}(\vxNext,\vWeightNext) & =\norm{\grad_{x}\penalizedObjective(\vxNext,\vWeightNext)}_{\left(\ma^{T}\mSlackNext^{-1}\mWeightNext\mSlackNext^{-1}\ma\right)^{-1}}\\
 & \leq(1-3\delta_{t}\energyStab)^{-1/2}\norm{\mDiag(\vDelta)(\iMatrix+\mDiag(\vDelta))^{-1}\mWeightCurr^{1/2}\mSlackCurr^{-1}\ma\vNewtonStep}_{\mProj_{\mSlackCurr^{-1}\ma}\left(\vWeightCurr\right)}\\
 & \leq(1-3\delta_{t}\energyStab)^{-1/2}\norm{\mDiag(\vDelta)(\iMatrix+\mDiag(\vDelta))^{-1}\mWeightCurr^{1/2}\mSlackCurr^{-1}\ma\vNewtonStep}_{2}\\
 & \leq(1-3\delta_{t}\energyStab)^{-1/2}\frac{\normInf{\vDelta}}{1-\normInf{\vDelta}}\norm{\mWeightCurr^{1/2}\mSlackCurr^{-1}\ma\vNewtonStep}_{2}\\
 & =\left(1-3\delta_{t}\gamma\right)^{-1/2}\cdot\left(\frac{\normInf{\vDelta}}{1-\normInf{\vDelta}}\delta_{t}\right)\leq\frac{2}{1+r}\gamma\delta_{t}^{2}
\end{align*}
where in the last step we use that by Lemma~\ref{lem:weighted_path:stab_update_step},
$\normInf{\vDelta}\leq\frac{1}{1+\stepRatio}\delta_{t}\gamma$ and
that $\delta\leq\frac{1}{8\gamma}$ by assumption. \end{proof}

\subsection{Weight Functions\label{sec:weighted_path:function}}

In Sections \ref{sec:weighted_path:path}, \ref{sec:weighted_path:properties},
and \ref{sec:weighted_path:centering} we saw that to make our weighted
path following schemes to converge quickly we need to maintain weights
such that $\normOne{\vw}$, $\energyStab(\vSlack,\vw)$, and $\delta_{t}(\vx,\vw)$
are small. Rather than showing how to do this directly, here we assume
we have access to some fixed differentiable function for computing
the weights and we characterize when such a weight function yields
an efficient weighted path following scheme. This allows us to decouple
the problems of using weights effectively and computing these weights
efficiently.

For the remainder of this paper we assume that we have a fixed differentiable
\emph{weight function} $\fvWeight~:~\dSlack\rightarrow\dWeights$
from slacks to positive weights (see Section \ref{sec:weights_full}
for a description of the function we use). For slacks $\vSlack\in\dSlack$
we let $\fmWeight(\vSlack)\defeq\mDiag(\vg(\vSlack))$ denote the
diagonal matrix associated with the slacks and we let $\fmWeight'(\vSlack)\defeq\jacobian_{\vSlack}(\fvWeight(\vSlack))$
denote the Jacobian of the weight function with respect to the slacks. 

For the weight function to be useful, in addition to yielding weights
of small \emph{size}, i.e. $\normOne{\fvWeight(\vSlack)}$ bounded,
and good \emph{slack sensitivity}, i.e. $\energyStab(\vx,\fvWeight(\vs(\vx)))$
small, we need to ensure that the weights do not change too much as
we change $\vx$. For this, we use the operator norm of $\iMatrix+r^{-1}\fmWeight(\vSlack)^{-1}\fmWeight'(\vSlack)\mSlack$
to measure for how much the weight function can diverge from the change
in weights induced by a $\stepRatio$-step, i.e. how \emph{consistent}
$\fvWeight$ is to the central path. Lastly, to simplify the analysis
we make a uniformity assumption that none of the weights are two big,
i.e. $\normInf{\vg(\vs)}$ is bounded. Formally we define a weight
function as follows. \begin{definition}[Weight Function] \label{def:sec_weighted_path:weight_function}
A \emph{weight function} is a differentiable function from $\fvWeight:\rPos^{m}\rightarrow\rPos^{m}$
such that for constants $\cWeightSize(\vg)$, $\cWeightStab(\vg)$,
and $\cWeightCons(\vg)$, we have the following for all $\vs\in\dSlack$:
\begin{itemize}
\item \emph{Size }: The size \emph{$\cWeightSize(\fvWeight)$} satisfies
$\cWeightSize(\fvWeight)\geq\normOne{\fvWeight(\vSlack)}$
\item \emph{Slack Sensitivity}: The slack sensitivity $\cWeightStab(\fvWeight)$\emph{
}satisfies \emph{$\cWeightStab(\fvWeight)\geq1$} and $\energyStab(\vSlack,\fvWeight(\vSlack))\leq\cWeightStab(\fvWeight)$.
\item \emph{Step Consistency }:\emph{ }The step consistency\emph{ $\cWeightCons(\fvWeight)$}
satisfies $\cWeightCons(\fvWeight)\geq1$ and $\forall r\geq\cWeightCons(\fvWeight)$
and $\forall\vy\in\Rm$
\[
\norm{\iMatrix+r^{-1}\fmWeight(\vSlack)^{-1}\fmWeight'(\vSlack)\mSlack}_{\fmWeight(\vSlack)}\leq1\quad\text{{and}\quad}\norm{\left(\iMatrix+r^{-1}\fmWeight(\vSlack)^{-1}\fmWeight'(\vSlack)\mSlack\right)\vy}_{\infty}\leq\normInf{\vy}+\cWeightCons\norm{\vy}_{\fmWeight(\vSlack)}.
\]

\item \emph{Uniformity }: The weight function satisfies $\normInf{\fvWeight(\vSlack)}\leq2$
\end{itemize}
\end{definition}When the weight function $\vg$ is clear from context
we often write $\cWeightSize$, $\cWeightStab$, and $\cWeightCons$.

To get a sense of the magnitude of these parameters, in Theorem \ref{thm:weights_full:weight_properties}
we prove that there is a weight function with size $O(\sqrt{\rank\ma})$,
slack sensitivity $O(1)$ and step consistency $O\left(\log\left(\frac{m}{\rank\ma}\right)\right)$;
hence lemmas with polynomial dependence of slack sensitivity and step
consistency suffice for our purposes. However, for the remainder of
this section and Section \ref{sec:approx_path} we let the weight
function be fixed but arbitrary. 

Ideally, in our weighted path following schemes we would just set
$\vw=\vg(\vs)$ for any slacks $\vs$ we compute. However, actually
computing $\vg(\vs)$ may be expensive to compute exactly and therefore
we analyze schemes that maintain separate weights, $\vWeight\in\dWeights$
with the invariant that $\vw$ is close to $\fvWeight(\vs)$ multiplicatively.
Formally, we define $\vWeightError(\vs,\vw)$ for all $\vs,\vw\in\dWeights$
by 
\begin{equation}
\vWeightError(\vs,\vw)\defeq\log(\fvWeight(\vs))-\log(\vw)\label{eq:distance_of_sw}
\end{equation}
and attempt to keep $\normInf{\vWeightError(\vs,\vw)}$ small despite
changes that occur due to $\stepRatio$-steps.

Now we wish to show that a $\stepRatio$-step does not increase $\vWeightError(\vs,\vw)$
by too much. To do this, we first prove the following helper lemma.

\begin{lemma} \label{lem:weighted_path:weight_step_helper} For a
weight function $\fvWeight$ and $\vSlack_{0},\vSlack_{1}\in\dInterior$
such that 
\[
\epsilon_{\infty}\defeq\norm{\mSlack_{0}^{-1}(\vs_{1}-\vs_{0})}_{\infty}\leq\frac{1}{32\cWeightCons}\enspace\text{ and }\enspace\epsilon_{g}\defeq\norm{\mSlack_{0}^{-1}(\vs_{1}-\vs_{0})}_{\fmWeight(\vSlack_{0})}\leq\frac{\epsilon_{\infty}}{\cWeightCons}.
\]
we have 
\[
\normFull{\log\left(\frac{\vSlack_{1}}{\vSlack_{0}}\right)+\frac{1}{\cWeightCons}\log\left(\frac{\fvWeight(\vSlack_{1})}{\fvWeight(\vSlack_{0})}\right)}_{\infty}\leq3\epsilon_{\infty}\enspace\text{and}\enspace\normFull{\log\left(\frac{\vSlack_{1}}{\vSlack_{0}}\right)+\frac{1}{\cWeightCons}\log\left(\frac{\fvWeight(\vSlack_{1})}{\fvWeight(\vSlack_{0})}\right)}_{\fmWeight(\vSlack_{0})}\leq\left(1+6\cWeightCons\epsilon_{\infty}\right)\epsilon_{g}.
\]
\end{lemma}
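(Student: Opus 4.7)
The plan is to parametrise the straight segment $\vSlack(\tau)\defeq\vSlack_{0}+\tau\vd$ with $\vd\defeq\vSlack_{1}-\vSlack_{0}$ and $\tau\in[0,1]$, introduce the combined map $\vpsi(\vSlack)\defeq\log\vSlack+\cWeightCons^{-1}\log\fvWeight(\vSlack)$ so that the quantity to bound in each norm is exactly $\vpsi(\vSlack_{1})-\vpsi(\vSlack_{0})$, and apply the chain rule. Writing $\vr(\tau)\defeq\mSlack(\tau)^{-1}\vd$ and $\mb(\tau)\defeq\iMatrix+\cWeightCons^{-1}\fmWeight(\vSlack(\tau))^{-1}\fmWeight'(\vSlack(\tau))\mSlack(\tau)$, a direct calculation gives $\frac{d}{d\tau}\vpsi(\vSlack(\tau))=\mb(\tau)\vr(\tau)$ and hence the integral representation
\[
\vpsi(\vSlack_{1})-\vpsi(\vSlack_{0})=\int_{0}^{1}\mb(\tau)\vr(\tau)\,d\tau.
\]
The operator $\mb(\tau)$ is precisely the one that step consistency (Definition~\ref{def:sec_weighted_path:weight_function} at $r=\cWeightCons$) controls, yielding $\norm{\mb(\tau)\vy}_{\fmWeight(\vSlack(\tau))}\le\norm{\vy}_{\fmWeight(\vSlack(\tau))}$ and $\norm{\mb(\tau)\vy}_{\infty}\le\norm{\vy}_{\infty}+\cWeightCons\norm{\vy}_{\fmWeight(\vSlack(\tau))}$.

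From the hypothesis $\normInf{\mSlack_{0}^{-1}\vd}=\epsilon_{\infty}\le 1/(32\cWeightCons)$, coordinate-wise $\vSlack(\tau)_{i}\in[(1-\epsilon_{\infty})s_{0,i},(1+\epsilon_{\infty})s_{0,i}]$, from which $\normInf{\vr(\tau)}\le\epsilon_{\infty}/(1-\epsilon_{\infty})$ and $\norm{\vr(\tau)}_{\fmWeight(\vSlack_{0})}\le\epsilon_{g}/(1-\epsilon_{\infty})$ drop out immediately by comparing coordinates. What is still missing, and is the main obstacle, is a bound on $\vr(\tau)$ in the \emph{moving} norm $\norm{\cdot}_{\fmWeight(\vSlack(\tau))}$ that step consistency is phrased in; equivalently, a multiplicative comparison of $\fmWeight(\vSlack(\tau))$ with $\fmWeight(\vSlack_{0})$ along the path.

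I would resolve this with a bootstrap/continuation argument. Define $\vphi(\tau)\defeq\log\fvWeight(\vSlack(\tau))$; the chain rule gives $\vphi'(\tau)=\cWeightCons(\mb(\tau)-\iMatrix)\vr(\tau)$, so the $\ell_{\infty}$ part of step consistency yields
\[
\normInf{\vphi'(\tau)}\le 2\cWeightCons\normInf{\vr(\tau)}+\cWeightCons^{2}\norm{\vr(\tau)}_{\fmWeight(\vSlack(\tau))}.
\]
Let $T\in[0,1]$ be the largest time with $\normInf{\vphi(\tau)-\vphi(0)}\le\tfrac{1}{2}$. On $[0,T]$, the two weighted norms differ by at most $e^{1/4}$, so combining with the previous paragraph's estimates and the hypothesis $\cWeightCons\epsilon_{g}\le\epsilon_{\infty}$ yields $\normInf{\vphi'(\tau)}=O(\cWeightCons\epsilon_{\infty})$; integrating gives $\normInf{\vphi(\tau)-\vphi(0)}\le O(\cWeightCons\epsilon_{\infty})\ll\tfrac{1}{2}$ thanks to $\epsilon_{\infty}\le 1/(32\cWeightCons)$, which contradicts maximality of $T$ unless $T=1$. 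Thus $\fmWeight(\vSlack(\tau))$ and $\fmWeight(\vSlack_{0})$ agree up to a multiplicative factor $1+O(\cWeightCons\epsilon_{\infty})$ throughout $[0,1]$.

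With the bootstrap in hand both conclusions follow by integration. For the $\ell_{\infty}$ bound, $\normInf{\vpsi(\vSlack_{1})-\vpsi(\vSlack_{0})}\le\int_{0}^{1}(\normInf{\vr(\tau)}+\cWeightCons\norm{\vr(\tau)}_{\fmWeight(\vSlack(\tau))})\,d\tau$; the second term is at most $\cWeightCons(1+O(\cWeightCons\epsilon_{\infty}))\epsilon_{g}/(1-\epsilon_{\infty})$, which under $\cWeightCons\epsilon_{g}\le\epsilon_{\infty}$ and $\epsilon_{\infty}\le 1/(32\cWeightCons)$ combines with the first term to give $\le 3\epsilon_{\infty}$. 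For the weighted bound, the norm-conversion inequality from the bootstrap gives $\norm{\mb(\tau)\vr(\tau)}_{\fmWeight(\vSlack_{0})}\le(1+O(\cWeightCons\epsilon_{\infty}))\norm{\mb(\tau)\vr(\tau)}_{\fmWeight(\vSlack(\tau))}\le(1+O(\cWeightCons\epsilon_{\infty}))\norm{\vr(\tau)}_{\fmWeight(\vSlack(\tau))}\le(1+O(\cWeightCons\epsilon_{\infty}))\epsilon_{g}$, where the middle step uses the contraction part of step consistency; integrating over $\tau\in[0,1]$ and absorbing the implicit constants into the displayed factor yields the claimed $(1+6\cWeightCons\epsilon_{\infty})\epsilon_{g}$.
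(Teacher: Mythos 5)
Your proposal is correct and follows essentially the same route as the paper's proof: both define the combined map $\vpsi(\vSlack)=\log\vSlack+\cWeightCons^{-1}\log\fvWeight(\vSlack)$, write $\vpsi(\vSlack_1)-\vpsi(\vSlack_0)$ as an integral of $\mb(\tau)\vr(\tau)$ along the line segment, recognize $\mb(\tau)$ as exactly the operator controlled by step consistency, and close a continuation/bootstrap argument to keep the moving weight $\fmWeight(\vSlack(\tau))$ within a controlled multiplicative band of $\fmWeight(\vSlack_0)$ (your threshold $1/2$ versus the paper's $3.5\cWeightCons\epsilon_\infty$ — both close because the a priori bound works out to roughly $3.5\cWeightCons\epsilon_\infty\le 0.11$). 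The only difference is presentational: you leave the final constants as $O(\cWeightCons\epsilon_\infty)$ rather than tracking them explicitly, but the arithmetic matches the paper's $(1+4\cWeightCons\epsilon_\infty)/(1-\epsilon_\infty)\le 1+6\cWeightCons\epsilon_\infty$.
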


\begin{proof} Let $\vp:\R^{m}\rightarrow\Rm$ be defined for all
$i\in[m]$ and $s\in\dSlack$ by
\[
\vp(\vSlack)_{i}\defeq\log(\vSlack_{i})+\frac{1}{\cWeightCons}\log(\fvWeight(\vSlack_{i})).
\]
Clearly $\jacobian_{\vSlack}[\vp(\vSlack)]=\mSlack^{-1}+\cWeightCons^{-1}\fmWeight^{-1}(\vSlack)\fmWeight'(\vSlack)$.
Therefore, letting $\vSlack_{t}\defeq\vSlack_{0}+t(\vSlack_{1}-\vSlack_{0})$
for all $t\in[0,1]$ we see that for all $i\in[0,1]$ ,
\[
\vp(\vSlack_{i})=\vp(\vSlack_{0})+\int_{0}^{i}\left[\mSlack_{t}^{-1}+\frac{1}{\cWeightCons}\fmWeight^{-1}(\vSlack_{t})\fmWeight'(\vSlack_{t})\right](\vSlack_{1}-\vSlack_{0})dt.
\]
Applying Jensen's inequality and the definition of $\vp$ then yields
that for all $i\in[0,1]$ and any norm $\norm{\cdot}$ ,
\begin{equation}
\normFull{\log\left(\frac{\vSlack_{i}}{\vSlack_{0}}\right)+\frac{1}{\cWeightCons}\log\left(\frac{\fvWeight(\vSlack_{i})}{\fvWeight(\vSlack_{0})}\right)}\leq\int_{0}^{i}\normFull{\left[\iMatrix+\frac{1}{\cWeightCons}\fmWeight^{-1}(\vSlack_{t})\fmWeight'(\vSlack_{t})\mSlack_{t}\right]\mSlack_{t}^{-1}(\vSlack_{1}-\vSlack_{0})}dt.\label{eq:weighted_path:eq1}
\end{equation}
Now for all $t\in[0,1]$ define $\va_{t}\in\dSlack$ by
\[
\va_{t}\defeq\log\left(\frac{\vSlack_{t}}{\vSlack_{0}}\right)-\frac{1}{\cWeightCons}\log\left(\frac{\fvWeight(\vSlack_{t})}{\fvWeight(\vSlack_{0})}\right)
\]
and let $M$ be the supremum over all $i\in[0,1]$ such that $\normInf{\log\fvWeight(\vSlack_{t})-\log\fvWeight(\vSlack_{0})}\leq3.5\cWeightCons\epsilon_{\infty}$
for all $t\in[0,i]$. By Lemma~\ref{lem:appendix:log_helper} and
the fact that $\epsilon_{\infty}\leq\frac{1}{32\cWeightCons}$ this
implies that $\normInf{\fmWeight(\vSlack_{i})^{-1}(\fvWeight(\vSlack_{i})-\fvWeight(\vSlack_{0}))}\leq4\cWeightCons\epsilon_{\infty}$
and $\normInf{\fmWeight(\vSlack_{0})^{-1}(\fvWeight(\vSlack_{0})-\fvWeight(\vSlack_{i}))}\leq4\cWeightCons\epsilon_{\infty}$
for all $i\in[0,M]$. Therefore, choosing $\norm{\cdot}_{\fmWeight(\vSlack_{0})}$
in (\ref{eq:weighted_path:eq1}) and applying Definition~\ref{def:sec_weighted_path:weight_function}
yields that $\forall i\in[0,M]$ ,
\[
\norm{\va_{i}}_{\fmWeight(\vSlack_{0})}<(1+4\cWeightCons\epsilon_{\infty})^{1/2}\int_{0}^{i}\norm{\mSlack_{t}^{-1}(\vSlack_{1}-\vSlack_{0})}_{\fmWeight(\vSlack_{t})}dt\leq\frac{(1+4\cWeightCons\epsilon_{\infty})}{1-\epsilon_{\infty}}\epsilon_{g}\leq\left(1+6\cWeightCons\epsilon_{\infty}\right)\epsilon_{g}.
\]
Similarly, by choosing $\normInf{\cdot}$ in (\ref{eq:weighted_path:eq1}),
we have $\forall i\in[0,M]$ that
\begin{align*}
\normInf{\va_{i}} & <\int_{0}^{i}\left(\norm{\mSlack_{t}^{-1}(\vSlack_{1}-\vSlack_{0})}_{\infty}+\cWeightCons\norm{\mSlack_{t}^{-1}(\vSlack_{1}-\vSlack_{0})}_{\fmWeight(\vSlack_{t})}\right)dt\\
 & <\frac{\epsilon_{\infty}}{1-\epsilon_{\infty}}+\frac{\sqrt{1+4\cWeightCons\epsilon_{\infty}}}{1-\epsilon_{\infty}}\cWeightCons\epsilon_{g}\leq2.2\epsilon_{\infty}
\end{align*}
By the definition of $\va_{i}$, the triangle inequality, and Lemma~\ref{lem:appendix:log_helper}
we then have that
\[
\normInf{\log(\fvWeight(\vSlack_{i}))-\log(\fvWeight(\vSlack_{0}))}<\cWeightCons\left(2.2\epsilon_{\infty}+\normInf{\log(\vSlack_{i})-\log(\vSlack_{0})}\right)<3.5\cWeightCons\epsilon_{\infty}.
\]
Since $\fvWeight$ is continuous we have that $M=1$ and the result
follows. \end{proof}

Using this lemma we bound on how much a $r$-step increases $\vWeightError$
as follows

\begin{lemma} \label{lem:weighted_path:weight_progress} Let $\{\vxNext,\vWeightNext\}=\updateStep_{t}(\vxCurr,\vWeightCurr,\cWeightCons)$
where
\[
\delta_{t}\defeq\delta_{t}(\vxCurr,\vWeightCurr)\leq\frac{1}{8\cWeightStab}\enspace\text{ and }\enspace\epsilon\defeq\normInf{\log(\fvWeight(\vSlackCurr))-\log(\vWeightCurr)}\leq\frac{1}{8}\enspace.
\]
Letting
\[
\vDelta\defeq\log\left(\frac{\fvWeightNext}{\fvWeightCurr}\right)-\log\left(\frac{\vWeightNext}{\vWeightCurr}\right)=\vWeightError(\vSlackNext,\vWeightNext)-\vWeightError(\vSlackCurr,\vWeightCurr),
\]
we have
\[
\norm{\vDelta}_{\infty}\leq4\cWeightStab\delta_{t}\enspace\text{ and }\enspace\norm{\vDelta}_{\mWeightNext}\leq\frac{e^{\epsilon}\cWeightCons}{1+\cWeightCons}\delta_{t}+13\cWeightStab\delta_{t}^{2}.
\]
\end{lemma}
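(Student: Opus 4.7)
The plan is to decompose $\vDelta$ into a piece that Lemma \ref{lem:weighted_path:weight_step_helper} directly controls plus a purely scalar Taylor residual in the slack increment. Set $\va\defeq\mSlackCurr^{-1}(\vSlackNext-\vSlackCurr)$; the $r$-step identities \eqref{eq:weighted_path:rstep_slack} and \eqref{eq:weighted_path:rstep_weight_vs_slack} with $r=\cWeightCons$ give $\vSlackNext/\vSlackCurr=\onesVec+\va$ and $\vWeightNext/\vWeightCurr=\onesVec-\cWeightCons\va$. Adding and subtracting $\cWeightCons\log(\vSlackNext/\vSlackCurr)$ inside the definition of $\vDelta$ yields
\[
\vDelta=\cWeightCons\left[\log\!\left(\frac{\vSlackNext}{\vSlackCurr}\right)+\frac{1}{\cWeightCons}\log\!\left(\frac{\fvWeightNext}{\fvWeightCurr}\right)\right]-\left[\log(\onesVec-\cWeightCons\va)+\cWeightCons\log(\onesVec+\va)\right].
\]
The first bracket is precisely the quantity bounded by Lemma \ref{lem:weighted_path:weight_step_helper}, and the second bracket is a pure scalar function of $\va$ that vanishes to first order and is at most $O(\cWeightCons^{2}\va^{2})$ coordinate-wise whenever $\normInf{\va}\ll 1/\cWeightCons$.

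Next, I would verify the preconditions of the helper lemma. Lemma \ref{lem:weighted_path:stab_update_step} applied with $r=\cWeightCons$ (noting that $\epsilon\leq 1/8$ together with multiplicative closeness of $\vWeightCurr$ to $\fvWeightCurr$ gives $\energyStab(\vSlackCurr,\vWeightCurr)=O(\cWeightStab)$) yields $\normInf{\va}=O(\cWeightStab\delta_{t}/(1+\cWeightCons))$ and $\norm{\va}_{\mWeightCurr}\leq\delta_{t}/(1+\cWeightCons)$. Passing from $\mWeightCurr$- to $\fmWeight(\vSlackCurr)$-norm via $\fmWeight(\vSlackCurr)\specLeq e^{\epsilon}\mWeightCurr$ gives $\epsilon_{g}\leq e^{\epsilon/2}\delta_{t}/(1+\cWeightCons)$, and combined with $\delta_{t}\leq 1/(8\cWeightStab)$ and $\cWeightCons,\cWeightStab\geq 1$ this is small enough to meet the quantitative preconditions of Lemma \ref{lem:weighted_path:weight_step_helper}.

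Then I would bound each piece separately. For the $\infty$-norm, Lemma \ref{lem:weighted_path:weight_step_helper} gives the first bracket $\leq 3\epsilon_{\infty}$, so $\cWeightCons$ times this is $\leq 3\cWeightStab\delta_{t}$, and the Taylor residual contributes $O(\cWeightCons^{2}\normInf{\va}^{2})=O(\cWeightStab^{2}\delta_{t}^{2})$, which is dominated by $\cWeightStab\delta_{t}$ since $\delta_{t}\leq 1/(8\cWeightStab)$; summing gives the claimed $4\cWeightStab\delta_{t}$. For the $\mWeightNext$-norm, Lemma \ref{lem:weighted_path:weight_step_helper} gives the first bracket $\leq(1+6\cWeightCons\epsilon_{\infty})\epsilon_{g}$ in $\fmWeight(\vSlackCurr)$-norm; converting to $\mWeightNext$-norm using $\mWeightNext\specLeq e^{\epsilon}(1+O(\cWeightStab\delta_{t}))\fmWeight(\vSlackCurr)$ (which follows from $\epsilon$-closeness together with Lemma \ref{lem:weighted_path:stab_update_step}), multiplying by $\cWeightCons$, and substituting the bound on $\epsilon_{g}$ yields the leading term $\tfrac{e^{\epsilon}\cWeightCons}{1+\cWeightCons}\delta_{t}$; the Taylor residual in $\mWeightNext$-norm is at most $O(\cWeightCons^{2}\normInf{\va}\norm{\va}_{\mWeightNext})=O(\cWeightStab\delta_{t}^{2})$, absorbable into the $13\cWeightStab\delta_{t}^{2}$ lower-order term.

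The main obstacle will be producing the sharp contractive coefficient $e^{\epsilon}\cWeightCons/(1+\cWeightCons)<1$ in front of the leading $\delta_{t}$ term rather than merely a bound of the same order, since this strict contractivity is what ultimately lets us maintain the invariant $\normInf{\vWeightError}\leq\epsilon$ across many $r$-steps. Achieving the exact coefficient requires carefully tracking the several multiplicative corrections (weight error $\epsilon$, slack change, weight change across the step) that relate the $\mWeightCurr$, $\fmWeight(\vSlackCurr)$, $\fmWeight(\vSlackNext)$, and $\mWeightNext$ norms, so as to avoid double-counting any of them.
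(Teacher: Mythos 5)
Your proposal follows essentially the same route as the paper's proof: decompose $\vDelta$ by adding and subtracting $\cWeightCons\log(\vSlackNext/\vSlackCurr)$, bound the first bracket by Lemma~\ref{lem:weighted_path:weight_step_helper} after certifying its preconditions via Lemma~\ref{lem:weighted_path:stab_update_step} (with the slack sensitivity of $\{\vSlackCurr,\vWeightCurr\}$ inflated to $\approx 1.1\cWeightStab$ using $\epsilon\leq 1/8$), bound the second bracket coordinate-wise as a scalar Taylor residual $O(\cWeightCons^{2}\va_{i}^{2})$, and track the $e^{\epsilon}$-type factors from changing between $\mWeightCurr$-, $\fmWeight(\vSlackCurr)$-, and $\mWeightNext$-norms. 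The only cosmetic difference is that you convert $\fmWeight(\vSlackCurr)\to\mWeightNext$ directly while the paper goes $\fmWeight(\vSlackCurr)\to\mWeightCurr\to\mWeightNext$; both land on the same constants.
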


\begin{proof} Recall the following definition of slack sensitivity
\[
\energyStab(\vSlack,\vWeight)=\max_{i\in[m]}\norm{\mWeight^{-1/2}\indicVec i}_{\mProj_{\mSlack^{-1}\ma}\left(\vWeight\right)}=\max_{i\in[m]}\norm{\indicVec i}_{\mSlack^{-1}\ma\left(\ma^{T}\mSlack^{-1}\mWeight\mSlack^{-1}\ma\right)^{-1}\ma^{T}\mSlack^{-1}}.
\]
Since $\normInf{\log(\fvWeight(\vSlackCurr))-\log(\vWeightCurr)}\leq\frac{1}{8}$,
we have 
\begin{equation}
\energyStab(\vSlackCurr,\vWeightCurr)\leq\sqrt{\frac{8}{7}}\energyStab(\vSlackCurr,\fvWeight(\vSlackCurr))\leq1.1c_{\gamma}.\label{eq:weighted_path:w_progress_1}
\end{equation}
Therefore, since $\delta_{t}\leq\frac{1}{64\cWeightStab c_{r}}$,
by Lemma~\ref{lem:weighted_path:stab_update_step} and (\ref{eq:weighted_path:w_progress_1})
we have
\begin{equation}
\normInf{\mWeightCurr^{-1}(\vWeightNext-\vWeightCurr)}\leq\frac{1.1\cWeightCons c_{\gamma}\delta_{t}}{1+\cWeightCons}\leq\frac{1}{2}\enspace\text{and}\enspace\normInf{\mSlackCurr^{-1}(\vSlackNext-\vSlackCurr)}\leq\frac{1.1c_{\gamma}\delta_{t}}{1+\cWeightCons}\leq\frac{1}{2}.\label{eq:weighted_path_w_progress_1}
\end{equation}
Recalling that $\mWeightCurr^{-1}(\vWeightNext-\vWeightCurr)=-\cWeightCons\mSlackCurr^{-1}(\vSlackNext-\vSlackCurr)$
and using that $\cWeightCons\geq1$ and $\epsilon-\epsilon^{2}\leq\log(1+\epsilon)\leq\epsilon$
for $|\epsilon|<\frac{1}{2}$ we have that for all $i\in[m]$
\begin{align}
\left|\log\left(\frac{\new w_{i}}{\old w_{i}}\right)+\cWeightCons\log\left(\frac{\new s_{i}}{\old s_{i}}\right)\right| & =\left|\log\left(1-\cWeightCons\frac{\new s_{i}-\old s_{i}}{\old s{}_{i}}\right)+\cWeightCons\log\left(1+\frac{\new s_{i}-\old s_{i}}{\old s{}_{i}}\right)\right|\nonumber \\
 & \leq2\cWeightCons^{2}\left|\frac{\vSlackNext_{i}-\vSlackCurr_{i}}{\vSlackCurr_{i}}\right|^{2}\label{eq:weighted_path:w_progress_2}
\end{align}
Letting $\norm{\cdot}$ denote either $\norm{\cdot}_{\infty}$ or
$\norm{\cdot}_{\mWeightCurr}$, recalling that $\normInf{\mSlackCurr^{-1}(\vSlackNext-\vSlackCurr)}\leq\frac{1.1c_{\gamma}\delta_{t}}{1+\cWeightCons}\leq\frac{1.1c_{\gamma}\delta_{t}}{\cWeightCons}$,
and applying (\ref{eq:weighted_path:w_progress_2}) yields
\begin{eqnarray}
\norm{\vDelta} & \leq & \normFull{\cWeightCons\log\left(\frac{\vSlackNext}{\vSlackCurr}\right)+\log\left(\frac{\fvWeight(\vSlackNext)}{\fvWeight(\vSlackCurr)}\right)}+\normFull{\log\left(\frac{\vWeightNext}{\vWeightCurr}\right)+\cWeightCons\log\left(\frac{\vSlackNext}{\vSlackCurr}\right)}\nonumber \\
 & \leq & \cWeightCons\normFull{\log\left(\frac{\vSlackNext}{\vSlackCurr}\right)+\frac{1}{\cWeightCons}\log\left(\frac{\fvWeight(\vSlackNext)}{\fvWeight(\vSlackCurr)}\right)}+2.2\cWeightCons\cWeightStab\delta_{t}\norm{\mSlackCurr^{-1}(\vSlackNext-\vSlackCurr)}.\label{eq:weighted_path:w_progress_5}
\end{eqnarray}
By Lemma~\ref{lem:weighted_path:stab_update_step} and (\ref{eq:weighted_path_w_progress_1}),
$\vSlackCurr$ and $\vSlackNext$ meet the conditions of Lemma~\ref{lem:weighted_path:weight_step_helper}
with $\epsilon_{\infty}\leq\frac{1.1\cWeightStab\delta_{t}}{1+\cWeightCons}$
and $\epsilon_{g}\leq\frac{e^{\epsilon/2}\delta_{t}}{1+\cWeightCons}$.
Therefore, letting $\norm{\cdot}$ be $\normInf{\cdot}$ in (\ref{eq:weighted_path:w_progress_5}),
we have
\begin{eqnarray*}
\normInf{\vDelta} & \leq & 3\cWeightCons\epsilon_{\infty}+2.2\cWeightCons c_{\gamma}\delta_{t}\frac{1.1\cWeightStab\delta_{t}}{1+\cWeightCons}\leq4\cWeightStab\delta_{t}.
\end{eqnarray*}
Similarly, letting $\norm{\cdot}$ be $\norm{\cdot}_{\mWeightCurr}$
in (\ref{eq:weighted_path_w_progress_1}) and noting that by definition
of $\epsilon$ yields
\begin{align*}
\norm{\vDelta}_{\mWeightCurr} & \leq e^{\epsilon/2}\cWeightCons\epsilon_{g}(1+6\cWeightCons\epsilon_{\infty})+2.2\cWeightCons\cWeightStab\delta_{t}\frac{\delta_{t}}{1+c_{r}}\\
 & \leq e^{\epsilon}\frac{\cWeightCons}{1+\cWeightCons}\delta_{t}+10\cWeightStab\delta_{t}^{2}.
\end{align*}
Finally, noting that $\normInf{\mWeightCurr^{-1}(\vWeightNext-\vWeightCurr)}\leq1.1\cWeightStab\delta_{t}$
yields the result.

\end{proof}

\subsection{Centering Using Exact Weights \label{sec:weighted_path:exact_centering}}

Here we bound the rate of convergence rate of path following assuming
that we can compute the weight function $\fvWeight$ exactly. We start
by providing a basic lemma regarding how the Newton step size changes
as we change $\vWeight$.

\begin{lemma}[Effect of Weight Change]  \label{lem:weight_change}
Let $\vx\in\dInterior$ and let $\vWeightCurr,\vWeightNext\in\dWeights$
with
\begin{equation}
\epsilon_{\infty}\defeq\normInf{\log(\vWeightNext)-\log(\vWeightCurr)}\leq\frac{1}{2},\label{eq:lem:weight_change}
\end{equation}
it follows that
\[
\delta_{t}(\vx,\vWeightNext)\leq(1+\epsilon_{\infty})\left[\delta_{t}(\vx,\vWeightCurr)+\norm{\log(\vWeightNext)-\log(\vWeightCurr)}_{\mWeightCurr}\right]
\]
\end{lemma}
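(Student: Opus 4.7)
The plan is to apply the triangle inequality to the Newton residual after splitting along the change in weights, and then bound each piece separately. Recall from equation~\eqref{eq:weighted_path:energy_x} that
\[
\delta_{t}(\vx,\vWeight) = \norm{t\vc-\ma^{T}\ms^{-1}\vWeight}_{(\ma^{T}\ms^{-1}\mWeight\ms^{-1}\ma)^{-1}}.
\]
First, I would write
\[
t\vc - \ma^{T}\ms^{-1}\vWeightNext \;=\; \bigl(t\vc - \ma^{T}\ms^{-1}\vWeightCurr\bigr) \;+\; \ma^{T}\ms^{-1}(\vWeightCurr-\vWeightNext)
\]
and apply the triangle inequality in $\norm{\cdot}_{(\ma^{T}\ms^{-1}\mWeightNext\ms^{-1}\ma)^{-1}}$.

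For the first summand, I would use the hypothesis $\normInf{\log\vWeightNext-\log\vWeightCurr}\le\epsilon_{\infty}$, which gives the operator inequality $\mWeightNext \specGeq e^{-\epsilon_{\infty}}\mWeightCurr$, and hence
\[
(\ma^{T}\ms^{-1}\mWeightNext\ms^{-1}\ma)^{-1} \specLeq e^{\epsilon_{\infty}}(\ma^{T}\ms^{-1}\mWeightCurr\ms^{-1}\ma)^{-1}.
\]
Consequently the first summand is at most $e^{\epsilon_{\infty}/2}\delta_{t}(\vx,\vWeightCurr)$.

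For the second summand, I would apply exactly the projection-matrix manipulation used in the proof of Lemma~\ref{lem:weighted_path:t_step}: since $\mProj_{\ms^{-1}\ma}(\vWeightNext) \specLeq \iMatrix$, we have
\[
\norm{\ma^{T}\ms^{-1}(\vWeightNext-\vWeightCurr)}_{(\ma^{T}\ms^{-1}\mWeightNext\ms^{-1}\ma)^{-1}}
\;=\; \norm{\mWeightNext^{-1/2}(\vWeightNext-\vWeightCurr)}_{\mProj_{\ms^{-1}\ma}(\vWeightNext)}
\;\le\; \norm{\mWeightNext^{-1/2}(\vWeightNext-\vWeightCurr)}_{2}.
\]
The key remaining step is to convert this Euclidean norm into $\norm{\log(\vWeightNext)-\log(\vWeightCurr)}_{\mWeightCurr}$. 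Letting $y_{i}\defeq\log(\weightNext_{i}/\weightCurr_{i})$, a direct computation gives
\[
\frac{|\weightNext_{i}-\weightCurr_{i}|}{\sqrt{\weightNext_{i}}} \;=\; \sqrt{\weightCurr_{i}}\cdot 2\bigl|\sinh(y_{i}/2)\bigr| \;\le\; \sqrt{\weightCurr_{i}}\,|y_{i}|\,e^{|y_{i}|/2} \;\le\; e^{\epsilon_{\infty}/2}\sqrt{\weightCurr_{i}}\,|y_{i}|,
\]
where I use the bound $|2\sinh(y/2)|=\bigl|\int_{0}^{y}\cosh(t/2)\,dt\bigr|\le |y|\cosh(|y|/2)\le |y|e^{|y|/2}$. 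Squaring and summing over $i$ yields $\norm{\mWeightNext^{-1/2}(\vWeightNext-\vWeightCurr)}_{2}\le e^{\epsilon_{\infty}/2}\norm{\log\vWeightNext-\log\vWeightCurr}_{\mWeightCurr}$.

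Combining the two bounds gives $\delta_{t}(\vx,\vWeightNext) \le e^{\epsilon_{\infty}/2}\bigl[\delta_{t}(\vx,\vWeightCurr) + \norm{\log\vWeightNext-\log\vWeightCurr}_{\mWeightCurr}\bigr]$, and the result follows from the elementary inequality $e^{x/2}\le 1+x$ on $[0,1/2]$ (verified by checking the derivative, since both sides agree at $0$ and $1+x-e^{x/2}$ has positive derivative on this range). The only mildly delicate point is ensuring the constant is exactly $(1+\epsilon_{\infty})$ and not a looser $e^{\epsilon_{\infty}/2}$; this relies on the hypothesis $\epsilon_{\infty}\le 1/2$. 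Everything else is a routine combination of the triangle inequality, the standard PSD comparison of weighted Gram matrices, and the projection-matrix bound already used in Lemma~\ref{lem:weighted_path:t_step}.
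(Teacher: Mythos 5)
Your proposal is correct and follows essentially the same route as the paper: the same triangle-inequality split of the Newton residual, the same PSD comparison $\mHessNext^{-1} \specLeq e^{\epsilon_\infty}\mHessCurr^{-1}$ for the first term, and the same $\mProj_{\ms^{-1}\ma}(\vWeightNext)\specLeq\iMatrix$ bound for the second term. The only difference is cosmetic: the paper bounds $\normFull{\tfrac{\vWeightNext-\vWeightCurr}{\sqrt{\vWeightNext\vWeightCurr}}}_{\mWeightCurr}$ directly via $\tfrac{(e^x-1)^2}{e^x}\le(1+|x|)^2x^2$, whereas you note $\tfrac{(e^x-1)^2}{e^x}=4\sinh^2(x/2)$ and use $2|\sinh(y/2)|\le|y|e^{|y|/2}$, deferring the conversion to $(1+\epsilon_\infty)$ to a single application of $e^{x/2}\le 1+x$ at the very end — the two elementary inequalities are equivalent up to a sharper constant, so this is the same argument in slightly different packaging.
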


\begin{proof} Let $\mHessCurr\defeq\ma^{T}\ms^{-1}\mWeightCurr\ms^{-1}\ma$
and let $\mHessNext\defeq\ma^{T}\ms^{-1}\mWeightNext\ms^{-1}\ma$.
By the definition of $\delta_{t}$ and the triangle inequality we
have
\begin{align}
\delta_{t}(\vx,\vWeightNext) & =\norm{t\vc-\ma^{T}\ms^{-1}\vWeightNext}_{\mHessNext^{-1}}\nonumber \\
 & \leq\norm{t\vc-\ma^{T}\ms^{-1}\vWeightCurr}_{\mHessNext^{-1}}+\norm{\ma^{T}\ms^{-1}\vWeightNext-\ma^{T}\ms^{-1}\vWeightCurr}_{\mHessNext^{-1}}\label{eq:wchange:1}
\end{align}
By definition of $\epsilon_{\infty}$ and Lemma~\ref{lem:appendix:log_helper}
$\mHessNext^{-1}\specLeq(1+\epsilon_{\infty})^{2}\mHessCurr^{-1}$
and therefore
\begin{equation}
\norm{t\vc-\ma\ms^{-1}\vWeightCurr}_{\mHessNext^{-1}}\leq(1+\epsilon_{\infty})\delta_{t}(\vx,\vWeightCurr).\label{eq:wchange:2}
\end{equation}
Furthermore, since $\mProj_{\ma\ms^{-1}}(\vWeightNext)\specLeq\iMatrix$
we have
\begin{align}
\norm{\ma^{T}\ms^{-1}\vWeightNext-\ma^{T}\ms^{-1}\vWeightCurr}_{\mHessNext^{-1}} & =\norm{\mWeightNext^{-1/2}(\vWeightNext-\vWeightCurr)}_{\mProj_{\ma\ms^{-1}}(\vWeightNext)}\nonumber \\
 & \leq\normFull{\frac{\vWeightNext-\vWeightCurr}{\sqrt{\vWeightNext\vWeightCurr}}}_{\mWeightCurr}\label{eq:wchange:3}
\end{align}
Using that $\frac{(e^{x}-1)^{2}}{e^{x}}\leq(1+\left|x\right|)^{2}x^{2}$
for $|x|\leq\frac{1}{2}$ and letting $x=\left[\log(\vWeightNext)-\log(\vWeightCurr)\right]_{i}$
we have
\begin{equation}
\normFull{\frac{\vWeightNext-\vWeightCurr}{\sqrt{\vWeightNext\vWeightCurr}}}_{\mWeightCurr}\leq(1+\epsilon_{\infty})\norm{\log(\vWeightNext)-\log(\vWeightCurr)}_{\mWeightCurr}\label{eq:wchange:4}
\end{equation}
Combining (\ref{eq:wchange:1}), (\ref{eq:wchange:2}), (\ref{eq:wchange:3}),
and (\ref{eq:wchange:4}) completes the proof. \end{proof}

\begin{center}
\begin{tabular}{|l|}
\hline 
$\vx^{(new)}=\mathbf{\centeringExact}(\vxCurr)$\tabularnewline
\hline 
\hline 
1. $\vx^{(new)}=\vxCurr-\frac{1}{1+\cWeightCons}\vNewtonStep(\vxCurr,\fvWeight(\vSlackCurr)).$\tabularnewline
\hline 
\end{tabular}
\par\end{center}

With this lemma we can now show how much centering progress we make
by just updating $\vx$ and using the weight function. Note that in
this proof we are just using the $r$-step in the proof, not the algorithm
itself. We will need to use the $r$-step itself only later when we
drop the assumption that we can compute $\vg$ exactly.

\begin{theorem}[Centering with Exact Weights] \label{thm:centering_exact}
Fix a weight function $\fvWeight$, let $\vxCurr\in\dInterior$, and
let
\[
\vx^{(new)}=\mathbf{\centeringExact}(\vxCurr)
\]
If
\[
\delta_{t}\defeq\delta_{t}(\vxCurr,\fvWeightCurr)\leq\frac{1}{80\cWeightStab\cWeightCons}
\]
then
\[
\delta_{t}(\vxNext,\fvWeightNext)\leq\left(1-\frac{1}{4\cWeightCons}\right)\delta_{t}(\vxCurr,\fvWeightCurr).
\]
\end{theorem}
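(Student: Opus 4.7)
The plan is to view the \textbf{centeringExact} update as a $\cWeightCons$-step in the sense of Definition~\ref{def:weighted_path:r_step} applied to the weighted pair $(\vxCurr, \vWeightCurr)$ with $\vWeightCurr \defeq \fvWeightCurr$. Writing $\{\vxNext, \vWeightNext\} = \updateStep_t(\vxCurr, \vWeightCurr, \cWeightCons)$, the algorithm produces exactly this $\vxNext$ but then measures centrality against $\fvWeightNext$ rather than against the weight $\vWeightNext$ produced by the step. Thus the key decomposition is
\[
\delta_t(\vxNext, \fvWeightNext) \;\leq\; \big(1+\epsilon_\infty\big)\bigl[\,\delta_t(\vxNext, \vWeightNext) + \norm{\log \fvWeightNext - \log \vWeightNext}_{\mWeightNext}\,\bigr]
\]
via Lemma~\ref{lem:weight_change}, where I will take $\epsilon_\infty \defeq \normInf{\log \fvWeightNext - \log \vWeightNext}$.

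The first term on the right I control using the centrality improvement bound for $r$-steps (Lemma~\ref{lem:weighted_path:x_progress}): since $\energyStab(\vxCurr,\vWeightCurr) \leq \cWeightStab$ (because $\vWeightCurr = \fvWeightCurr$) and $\delta_t \leq \frac{1}{80\cWeightStab\cWeightCons} \leq \frac{1}{8\cWeightStab}$, we obtain $\delta_t(\vxNext, \vWeightNext) \leq \frac{2\cWeightStab}{1+\cWeightCons}\delta_t^2$, which is quadratic and hence negligible. For the second term I invoke Lemma~\ref{lem:weighted_path:weight_progress}, noting that by the choice $\vWeightCurr = \fvWeightCurr$ we have $\vWeightError(\vSlackCurr, \vWeightCurr) = 0$, so with $\epsilon = 0$ the lemma yields
\[
\norm{\log \fvWeightNext - \log \vWeightNext}_{\mWeightNext} \;\leq\; \frac{\cWeightCons}{1+\cWeightCons}\delta_t + 13\cWeightStab \delta_t^2,
\qquad
\epsilon_\infty \leq 4\cWeightStab \delta_t.
\]

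Combining these bounds gives
\[
\delta_t(\vxNext,\fvWeightNext) \;\leq\; (1+4\cWeightStab\delta_t)\Bigl[\tfrac{\cWeightCons}{1+\cWeightCons}\delta_t + O(\cWeightStab\delta_t^2)\Bigr] \;=\; \tfrac{\cWeightCons}{1+\cWeightCons}\delta_t + O(\cWeightStab \delta_t^2)\,.
\]
The main quantitative check is then to verify $\tfrac{\cWeightCons}{1+\cWeightCons} + O(\cWeightStab \delta_t) \leq 1 - \tfrac{1}{4\cWeightCons}$; since $\tfrac{\cWeightCons}{1+\cWeightCons} = 1 - \tfrac{1}{1+\cWeightCons}$ and $\cWeightCons \geq 1$, this reduces to showing $O(\cWeightStab \delta_t) \leq \tfrac{1}{1+\cWeightCons} - \tfrac{1}{4\cWeightCons} = \Theta(1/\cWeightCons)$, which is ensured by the hypothesis $\delta_t \leq \tfrac{1}{80\cWeightStab \cWeightCons}$ provided the implicit constants are tracked carefully.

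The main obstacle is bookkeeping of the constants in the final arithmetic step: the error term $(1+\epsilon_\infty)(\tfrac{2\cWeightStab}{1+\cWeightCons}\delta_t^2 + 13\cWeightStab\delta_t^2)$ plus the $\epsilon_\infty$-induced blowup of the dominant linear term must together be slack enough to turn the exact ratio $\tfrac{\cWeightCons}{1+\cWeightCons}$ into $1 - \tfrac{1}{4\cWeightCons}$. The bound $\delta_t \leq \tfrac{1}{80\cWeightStab\cWeightCons}$ was chosen precisely so that this accounting goes through, so the remaining work is really a careful Taylor-style estimate rather than a conceptual difficulty.
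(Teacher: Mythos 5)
Your proposal follows exactly the same route as the paper's own proof: interpret $\centeringExact$ as a $\cWeightCons$-step, bound the Newton improvement via Lemma~\ref{lem:weighted_path:x_progress}, bound the weight-function drift via Lemma~\ref{lem:weighted_path:weight_progress} with $\epsilon=0$ (since $\vWeightCurr=\fvWeightCurr$), and combine through Lemma~\ref{lem:weight_change}. The only thing left to you is tracking the explicit constants in the final arithmetic, which you correctly identify as the step that the hypothesis $\delta_t \leq \frac{1}{80\cWeightStab\cWeightCons}$ is designed to close.
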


\begin{proof} Let $\{\vxNext,\vWeightNext\}\in\dFull$ be the result
of a $\cWeightCons$ step from $\{\old{\vx},\vWeightCurr\}\in\dFull$.
Note that this $\vSlackNext$ is the same as the $\vSlackNext$ in
the theorem statement.

Now by Lemma~\ref{lem:weighted_path:x_progress} we have that
\begin{equation}
\delta_{t}(\vSlackNext,\vWeightNext)\leq\cWeightStab\delta_{t}^{2}\enspace.\label{eq:thm:centering_exact:1}
\end{equation}
Furthermore, defining $\vDelta$ as in Lemma~\ref{lem:weighted_path:weight_progress}
and noting that $\vWeightCurr=\fvWeight(\vSlackCurr)$ we have
\[
\vDelta\defeq\log\left(\frac{\fvWeight(\vSlackNext)}{\fvWeight(\vSlackCurr)}\right)-\log\left(\frac{\vWeightNext}{\vWeightCurr}\right)=\log\left(\frac{\fvWeight(\vSlackNext)}{\vWeightNext}\right).
\]
we see by Lemma~\ref{lem:weighted_path:weight_progress} that
\begin{equation}
\normInf{\log(\fvWeight(\vSlackNext)/\vWeightNext)}\leq4\cWeightStab\delta_{t}\leq1/2\label{eq:thm:centering_exact:2}
\end{equation}
and
\begin{equation}
\norm{\log(\fvWeight(\vSlackNext)/\vWeightNext)}_{\vWeightNext}\leq\frac{e^{\epsilon}\cWeightCons}{1+\cWeightCons}\delta_{t}+13\cWeightStab\delta_{t}^{2}\label{eq:thm:centering_exact:3}
\end{equation}
with $\epsilon=0$ because we are using exact weight computation.
Applying Lemma~\ref{lem:weight_change} to (\ref{eq:thm:centering_exact:1}),
(\ref{eq:thm:centering_exact:2}), and (\ref{eq:thm:centering_exact:3})
we have
\begin{align*}
\delta_{t}(\vxNext,\fvWeight(\next{\vs})) & \leq(1+4\cWeightStab\delta_{t})\left[\cWeightStab\delta_{t}^{2}+\frac{\cWeightCons}{1+\cWeightCons}\delta_{t}+13\cWeightStab\delta_{t}^{2}\right]\\
 & \leq\frac{\cWeightCons}{1+\cWeightCons}\delta_{t}+20\cWeightStab\cWeightCons\delta_{t}^{2}\\
 & \leq\left(1-\frac{1}{2\cWeightCons}+\frac{1}{4\cWeightCons}\right)\delta_{t}\leq\left(1-\frac{1}{4\cWeightCons}\right)\delta_{t}
\end{align*}
\end{proof}

From this lemma we have that if $\delta_{t}(\vx,\vg(\vs))$ is $O(\cWeightStab^{-1}\cWeightCons^{-1})$
then in $\Theta(\cWeightCons^{-1})$ iterations of $\code{CenteringExact}$
we can decrease $\delta_{t}(\vx,\vg(\vs))$ by a multiplicative constant.
Furthermore by Lemma~\ref{lem:weighted_path:t_step} we see that
we can increase $t$ by a multiplicative $(1+O(\cWeightStab^{-1}\cWeightCons^{-1}\cWeightSize^{-1/2}))$
and maintain $\delta_{t}(\vx,\vg(\vs))=O(\cWeightStab^{-1}\cWeightCons^{-1})$.
Thus we can double $t$ and maintain $\delta_{t}(\vx,\vg(\vs))=O(\cWeightStab^{-1}\cWeightCons^{-1})$
using $O(\cWeightStab^{-1}\cWeightCons^{-2}\cWeightSize^{-1/2})$
iterations of $\code{CenteringExact}$. In Section~\ref{sec:algorithm}
we make this argument rigorously in the more general setting. In the
following sections, we show how to relax this requirement that $\vg$
is computed exactly.

\section{A Weight Function for \texorpdfstring{$\otilde(\sqrt{\rank(\ma)}L)$}{Almost Optimal}
Convergence}

\label{sec:weights_full}

Here, we present the weight function $\vg:\dSlack\rightarrow\dWeights$
that when used in the framework proposed in Section~\ref{sec:weighted_path}
yields an $\otilde(\sqrt{\rank(\ma)}L)$ iteration interior point
method. In Section~\ref{sec:weight_function:weight_function} we
motivate and describe the weight function $\fvWeightFull$, in Section~\ref{sec:weights_full:properties}
we prove that $\fvWeightFull$ satisfies Definition~\ref{def:sec_weighted_path:weight_function}
with nearly optimal $\cWeightSize(\fvWeightFull)$, $\cWeightStab(\fvWeightFull)$,
and $\cWeightCons(\fvWeightFull)$, and in Section~\ref{sec:weights_full:computing}
we show how to compute and correct approximations to $\fvWeightFull$
efficiently.

\subsection{The Weight Function}

\label{sec:weight_function:weight_function}

Our weight function was inspired by the volumetric barrier methods
of \cite{vaidya1996new,anstreicher96}.%
\footnote{ See Section \ref{sec:our_approach} for further intuition.%
} These papers considered using the \emph{volumetric barrier,} $\phi(\vs)=-\log\det(\ma^{T}\ms^{-2}\ma)$
, in addition to the standard log barrier, $\phi(\vs)=-\sum_{i\in[m]}\log s_{i}$.
In some sense the standard log barrier has a good slack sensitivity,
$1$, but a large size, $m$, and the volumetric barrier has a worse
slack sensitivity, $\sqrt{m}$, but better total weight, $n$. By
carefully applying a weighted combination of these two barriers \cite{vaidya1996new}
and \cite{anstreicher96} achieved an $O((m\rank(\ma))^{1/4}L)$ iteration
interior point method at the cost more expensive linear algebra in
each iteration. 

Instead of using a fixed barrier, our weight function $\vg:\dSlack\rightarrow\dWeights$
is computed by solving a convex optimization problem whose optimality
conditions imply both good size and good slack sensitivity. We define
$\vg$ for all $\vSlack\in\dSlack$ by
\begin{equation}
\vg(\vSlack)\defeq\argmin_{\vWeight\in\rPos^{m}}\penalizedObjectiveWeight(\vSlack,\vWeight)\enspace\text{ where }\enspace\penalizedObjectiveWeight(\vSlack,\vWeight)\defeq\onesVec^{T}\vWeight-\frac{1}{\alpha}\log\det(\ma_{s}^{T}\mWeight^{\alpha}\ma_{s})-\beta\sum_{i\in[m]}\log\weight_{i}\label{eq:sec:weights_full:weight_function}
\end{equation}
where here and in the remainder of this section we let $\ma_{s}\defeq\mSlack^{-1}\ma$
and the parameters $\alpha,\beta\in\R$ are chosen later such that
the following hold
\begin{equation}
\alpha\in(0,1)\enspace\text{ , }\enspace\beta\in(0,1)\enspace\text{ , and }\enspace\beta^{1-\alpha}\geq\frac{1}{2}\enspace\text{ .}\label{eq:weights:constants_assumptions}
\end{equation}

To get a sense for why $\vg$ has the desired properties, , suppose
for illustration purposes that $\alpha=1$ and $\beta=0$ and fix
$\vs\in\dSlack$. Using Lemma~\ref{lem:deriv:log_det} and setting
the gradient of \eqref{eq:sec:weights_full:weight_function} to $\vzero$
we see that if $\vg$ exists then
\[
\vg(\vSlack)=\vsigma_{\ma_{s}}(\vg(\vSlack))\defeq\diag\left((\fmWeight(\vSlack))^{1/2}\ma_{s}(\ma_{s}^{T}\fmWeight(\vSlack)\ma_{s})^{-1}\ma_{s}^{T}(\fmWeight(\vSlack))^{1/2}\right)
\]
where we use the definition of $\vsigma_{\ma_{s}}$ from Section~\ref{sec:Notation}.
Consequently,
\[
\max_{i}\norm{\fmWeight{}^{-1/2}\indicVec i}_{\mProj_{\ma_{s}}\left(\vg\right)}=1\enspace\text{ and }\enspace\energyStab(\vSlack,\vg(\vSlack))=1\enspace.
\]
Furthermore, since $(\fmWeight(\vSlack))^{1/2}\ma_{s}(\ma_{s}^{T}\fmWeight(\vSlack)\ma_{s})^{-1}\ma_{s}^{T}(\fmWeight(\vSlack))^{1/2}$
is a projection matrix, $\normOne{\vsigma_{\ma_{s}}(\vg(\vSlack))}=\rank(\ma)$.
Therefore, this would yield a weight function with good $\cWeightStab$
and $\cWeightSize$. 

Unfortunately picking $\alpha=1$ and $\beta=0$ makes the optimization
problem for computing $\vg$ degenerate. In particular for this choice
of $\alpha$ and $\beta$, $\vg(\vSlack)$ could be undefined. In
the follow sections we will see that by picking better values for
$\alpha$ and $\beta$ we can trade off how well $\vg$ performs as
a weight function and how difficult it is to compute approximations
to $\vg$.

\subsection{Weight Function Properties}

\label{sec:weights_full:properties}

Here, we show that $\vg:\rNonNeg\rightarrow\rNonNeg$ as given by
\eqref{eq:sec:weights_full:weight_function} is a weight function
with respect to Definition~\ref{def:sec_weighted_path:weight_function}
and we bound the values of $\cWeightSize(\vg)$, $\cWeightStab(\vg)$,
and $\cWeightCons(\fvWeight)$. The goal of this section is to prove
the following.

\begin{theorem}[Properties of Weight Function] \label{thm:weights_full:weight_properties}
Let us define $\alpha$ and $\beta$ by
\[
\alpha=1-\frac{1}{\log_{2}\left(\frac{2m}{\rank(\ma)}\right)}\enspace\text{ and }\enspace\beta=\frac{\rank(\ma)}{2m}
\]
For this choice of parameters $\vg$ is a weight function meeting
the criterion of Definition~\ref{def:sec_weighted_path:weight_function}
with 
\begin{itemize}
\item Size : $\cWeightSize(\fvWeightFull)=2\rank(\ma)$.
\item Slack Sensitivity: $\cWeightStab(\fvWeightFull)=2$.
\item Step Consistency : $\cWeightCons(\fvWeight)=2\log_{2}\left(\frac{2m}{\rank(\ma)}\right)$.
\end{itemize}
\end{theorem}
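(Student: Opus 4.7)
The plan is to first extract explicit optimality conditions for $\vg(\vs)$, deduce the size, uniformity, and slack sensitivity bounds directly from them, and then use the implicit function theorem to compute $\fmWeight^{-1}\fmWeight'\mSlack$ explicitly and analyze its structure via the spectral theory of $\mLapProj$.

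\textbf{Optimality conditions.} I would begin by checking that for each fixed $\vs\in\dSlack$ the objective $\hat f(\vs,\vw)$ is strictly convex in $\vw$: indeed $\vw\mapsto\vw^{\alpha}$ is concave (as $\alpha\in(0,1)$), so $\ma_s^{T}\mW^{\alpha}\ma_s$ is matrix-concave, $\log\det$ is concave and monotone on $\succ 0$, the term $-\frac{1}{\alpha}\log\det(\cdot)$ is convex, and $-\beta\sum\log w_i$ is strictly convex. Using Lemma~\ref{lem:deriv:log_det} to differentiate, setting $\grad_{\vw}\hat f=\vzero$ yields the fixed-point relation
\[
g_i \;=\; \tau_i(\vg) \,+\,\beta,\quad \text{where}\quad \tau_i(\vw)\defeq w_i^{\alpha}\,\bigl[\ma_s(\ma_s^{T}\mW^{\alpha}\ma_s)^{-1}\ma_s^{T}\bigr]_{ii}
\]
are precisely the leverage scores of the rows of $\mW^{\alpha/2}\ma_s$.

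\textbf{Size, uniformity, slack sensitivity.} Because leverage scores satisfy $0\le \tau_i\le 1$ and sum to $\rank(\ma)$, the optimality relation gives $\normInf{\vg}\le 1+\beta\le 3/2\le 2$ (uniformity) and $\normOne{\vg}=\rank(\ma)+\beta m=\tfrac{3}{2}\rank(\ma)\le 2\rank(\ma)$ (size). For slack sensitivity, I need to bound $\gamma^{2}(\vs,\vg)=\max_i[\ma_s(\ma_s^{T}\mG\ma_s)^{-1}\ma_s^{T}]_{ii}$, which unfortunately uses $\mG$ rather than $\mG^{\alpha}$. The bridge is the two-sided bound $\beta^{1-\alpha}\mG^{\alpha}\specLeq\mG\specLeq\mG^{\alpha}\cdot\normInf{\vg}^{1-\alpha}$ that follows from $g_i\in[\beta,2]$; this gives $[\ma_s(\ma_s^{T}\mG\ma_s)^{-1}\ma_s^{T}]_{ii}\le \beta^{\alpha-1}[\ma_s(\ma_s^{T}\mG^{\alpha}\ma_s)^{-1}\ma_s^{T}]_{ii}\le \beta^{\alpha-1}g_i^{1-\alpha}\le(2/\beta)^{1-\alpha}$. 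Substituting the stated $\alpha$ and $\beta$ makes $(2/\beta)^{1-\alpha}=2\cdot 2^{1/\log_{2}(2m/\rank)}\le 4$, so $\gamma\le 2$.

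\textbf{Step consistency.} The main work is step consistency. Applying the implicit function theorem to $F(\vw,\vs)\defeq\vw-\vtau(\vw,\vs)-\beta\vones=\vzero$, together with the standard identity $\jacobian_{\vd}\vsigma=\mLapProj\mvar D^{-1}$ (for leverage scores of $\mvar D^{1/2}\ma$) applied to $\mvar D=\mW^{\alpha}\mS^{-2}$, yields $\partial_{\vw}\vtau=\alpha\mLapProj\mW^{-1}$ and $\partial_{\vs}\vtau=-2\mLapProj\mS^{-1}$, from which algebra gives
\[
\iMatrix+r^{-1}\fmWeight^{-1}\fmWeight'\mSlack \;=\; \iMatrix-\tfrac{2}{r}(\iMatrix-\alpha\mG^{-1}\mLapProj)^{-1}\mG^{-1}\mLapProj.
\]
Symmetrizing via $\mM\defeq\mG^{-1/2}\mLapProj\mG^{-1/2}$ conjugates this operator to $\iMatrix-\tfrac{2}{r}(\iMatrix-\alpha\mM)^{-1}\mM$. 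Crucially $\mM$ is PSD with $\mM\preceq\mG^{-1/2}\mSigma\mG^{-1/2}\preceq\iMatrix$ (since $\mLapProj\preceq\mSigma=\mvar T$ and $g_i=\tau_i+\beta\ge \sigma_i$), so its eigenvalues lie in $[0,1]$. A direct eigenvalue computation shows the operator's eigenvalues are $1-\tfrac{2}{r}\cdot\tfrac{\mu}{1-\alpha\mu}\in[1-\tfrac{2}{r(1-\alpha)},1]$, and the choice $c_{r}=2/(1-\alpha)=2\log_{2}(2m/\rank)$ forces these to lie in $[0,1]$. This proves the $\fmWeight$-norm part of step consistency.

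\textbf{Main obstacle.} The most delicate step is the mixed-norm inequality $\normInf{(\iMatrix+r^{-1}\fmWeight^{-1}\fmWeight'\mSlack)\vy}\le\normInf{\vy}+c_{r}\norm{\vy}_{\fmWeight}$. The naive approach of invoking $\mG$-norm bounds and converting to $\ell_\infty$ loses a factor of $\beta^{-1/2}$, which is too large. Instead I would exploit the combinatorial structure $[\mG^{-1}\mLapProj\vu]_i=g_i^{-1}\sum_j P_{ij}^{2}(u_i-u_j)$ together with $\sum_j P_{ij}^{2}=\sigma_i\le g_i$, so row-wise $[\mG^{-1}\mLapProj\vu]_i$ is a signed average of differences weighted by $P_{ij}^2/g_i$. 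Splitting $\vy=\bar y\vones+\tilde{\vy}$ (using $\mR\vones=\vzero$) and applying a Cauchy--Schwarz bound row-by-row, followed by using the resolvent identity in the eigenbasis of $\mM$ to propagate the bound through $(\iMatrix-\alpha\mR)^{-1}$, should yield the required inequality with the leading coefficient $c_r$ arising exactly from $1/(1-\alpha)$. This row-wise/spectral hybrid analysis is where the calibration of $\alpha$ and $\beta$ becomes tight.
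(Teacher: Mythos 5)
Your proposal tracks the paper's structure quite closely for most of the theorem: the optimality condition $g_i=\sigma_i+\beta$, the size and uniformity bounds, the slack-sensitivity bound via $\beta^{1-\alpha}\fmWeight^\alpha\preceq\fmWeight\preceq 2\fmWeight^\alpha$, the implicit-function-theorem computation of $\fmWeight'$, and the $\fmWeight$-norm half of step consistency (your eigenvalue computation $1-\tfrac{2}{r}\cdot\tfrac{\mu}{1-\alpha\mu}\in[0,1]$ is exactly equivalent to the paper's Loewner comparison of $\iMatrix-\alpha\mM$ and $\iMatrix-(\alpha+2r^{-1})\mM$). All of this is sound.

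The gap is in the $\ell_\infty$ half of step consistency, which you correctly identify as the delicate step but then handle with a sketch that does not clearly close. Your plan — split $\vy=\bar y\vones+\tilde\vy$ using $\mLapProj\vones=\vzero$, do a row-wise Cauchy--Schwarz, and then ``propagate the bound through $(\iMatrix-\alpha\mM)^{-1}$ via the resolvent identity in the eigenbasis of $\mM$'' — is not a workable route as stated. Row-wise ($\ell_\infty$-type) estimates do not pass through a spectral decomposition of $\mM$, because the eigenbasis of $\mM$ has no compatibility with the coordinate basis; once you diagonalize, you can only recover $\ell_2$-type ($\fmWeight$-norm) control, which is what you already proved. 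Also, $\mLapProj\vones=\vzero$ gives that $\mG^{1/2}\vones$ is a null vector of $\mM$, not $\vones$ itself, so the proposed split does not align with the conjugated operator.

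The paper's actual argument for the $\ell_\infty$ bound avoids inverting $\mG-\alpha\mLapProj$ entirely. Writing $\vz=\mb\vy$, it starts from the un-inverted identity $(\mG-\alpha\mLapProj)\vz=(\mG-(\alpha+2r^{-1})\mLapProj)\vy$, substitutes $\mLapProj=\mSigma-\mProj^{(2)}$ and $\mG=\mSigma+\beta\iMatrix$ to obtain
\[
\bigl((1-\alpha)\mSigma+\beta\iMatrix\bigr)\vz+\alpha\mProj^{(2)}\vz=\bigl((1-\alpha-2r^{-1})\mSigma+\beta\iMatrix\bigr)\vy+\bigl(\alpha+2r^{-1}\bigr)\mProj^{(2)}\vy,
\]
takes the $i$-th coordinate, and bounds both $\mProj^{(2)}$ terms via the projection-matrix inequality $\abs{\indicVec i^{T}\mProj^{(2)}\vx}\leq\sigma_{i}\norm{\vx}_{\mSigma}$ (part (4) of Lemma~\ref{lem:tool:projection_matrices}), feeding in the already-proved $\norm{\vz}_{\mg}\leq\norm{\vy}_{\mg}$. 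Dividing by $(1-\alpha)\sigma_i+\beta$ and using $\sigma_i\le(1-\alpha)\sigma_i+\beta$ after factoring gives the stated $\normInf{\vy}+\tfrac{2}{1-\alpha}\norm{\vy}_{\fmWeight}$ bound. This is the specific tool missing from your sketch: the $\ell_\infty$ analysis works at the level of the linear equation, not the resolvent, and relies on the precise projection-matrix fact that a single row of $\mProj^{(2)}$ has $\mSigma^{-1}$-weighted $\ell_2$ mass $\sigma_i^2$.
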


We break the proof into several parts. In Lemma~\ref{lem:weights_full:derivatives_of_objective},
we prove basic properties of $\penalizedObjectiveWeight$. In Lemma~\ref{lem:weights_full:existence_and_size}
we prove that $\vg$ is a weight function and bound its size. In Lemma~\ref{lem:weights_full:cond_number}
we bound the slack sensitivity of $\vg$ and in Lemma~\ref{lem:weights_full:consistency}
we show that $\vg$ is consistent. 

We start by computing the gradient and Hessian of $\penalizedObjectiveWeight(\vSlack,\vWeight)$
with respect to $\vWeight$.

\begin{lemma}\label{lem:weights_full:derivatives_of_objective} For
all $\vSlack\in\dSlack$ and $\vWeight\in\dWeights$, we have
\[
\grad_{\vw}\penalizedObjectiveWeight(\vSlack,\vWeight)=\left(\iMatrix-\mSigma\mWeight^{-1}-\beta\mw^{-1}\right)\onesVec\enspace\text{ and }\enspace\hessian_{\vw\vw}\penalizedObjectiveWeight(\vSlack,\vWeight)=\mw^{-1}\left(\mSigma+\beta\iMatrix-\alpha\mLambda\right)\mw^{-1}
\]
where $\mSigma\defeq\mSigma_{\ma_{s}}(\mw^{\alpha}\onesVec)$ and
$\mLapProj\defeq\mLapProj_{\ma_{s}}(\mw^{\alpha}\onesVec)$.

\end{lemma}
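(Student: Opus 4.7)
The plan is to compute the gradient and Hessian of $\penalizedObjectiveWeight(\vs, \vw)$ term by term, since it naturally decomposes into a linear piece $\onesVec^{T}\vw$, a log-determinant piece $-\frac{1}{\alpha}\log\det(\ma_{s}^{T}\mw^{\alpha}\ma_{s})$, and a logarithmic piece $-\beta\sum_{i}\log w_{i}$. The first and third terms are trivial: the linear term contributes $\onesVec$ to the gradient and the zero matrix to the Hessian, while $-\beta\sum_{i}\log w_{i}$ contributes $-\beta\mw^{-1}\onesVec$ and $\beta\mw^{-2}$ respectively. All the work is in the log-determinant term, for which I will invoke the standard matrix-calculus identity $\partial\log\det\mM = \mathrm{tr}(\mM^{-1}\partial\mM)$ already isolated in Lemma~\ref{lem:deriv:log_det}.

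For the gradient of the log-determinant term, I set $\mM(\vw)\defeq\ma_{s}^{T}\mw^{\alpha}\ma_{s}$ and compute $\partial\mM/\partial w_{i}=\alpha w_{i}^{\alpha-1}\ma_{s}^{T}\indicVec{i}\indicVec{i}^{T}\ma_{s}$. This yields
\[
\frac{\partial}{\partial w_{i}}\!\left[-\tfrac{1}{\alpha}\log\det\mM\right] = -w_{i}^{\alpha-1}\indicVec{i}^{T}\ma_{s}\mM^{-1}\ma_{s}^{T}\indicVec{i}.
\]
Recognizing that $\sigma_{i}\defeq[\mSigma_{\ma_{s}}(\mw^{\alpha}\onesVec)]_{ii}=w_{i}^{\alpha}\indicVec{i}^{T}\ma_{s}\mM^{-1}\ma_{s}^{T}\indicVec{i}$, this becomes $-\sigma_{i}/w_{i}$, so in vector form the contribution is $-\mSigma\mw^{-1}\onesVec$. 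Adding the three gradients then assembles into $(\iMatrix-\mSigma\mw^{-1}-\beta\mw^{-1})\onesVec$, which is the claimed expression.

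For the Hessian, I differentiate the $i$-th coordinate of the gradient of the log-determinant term with respect to $w_{j}$, using $\partial\mM^{-1}/\partial w_{j}=-\alpha w_{j}^{\alpha-1}\mM^{-1}\ma_{s}^{T}\indicVec{j}\indicVec{j}^{T}\ma_{s}\mM^{-1}$. Differentiating the explicit prefactor $w_{i}^{\alpha-1}$ produces a diagonal contribution $(1-\alpha)\sigma_{i}/w_{i}^{2}$, while differentiating $\mM^{-1}$ produces a dense contribution $\alpha P_{ij}^{2}/(w_{i}w_{j})$ where $P_{ij}\defeq[\mProj_{\ma_{s}}(\mw^{\alpha}\onesVec)]_{ij}$. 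Adding the diagonal contribution $\beta/w_{i}^{2}$ from the log term, I obtain
\[
[\hessian_{\vw\vw}\penalizedObjectiveWeight]_{ij} = \tfrac{1}{w_{i}w_{j}}\bigl(\alpha P_{ij}^{2}\bigr) + \tfrac{1}{w_{i}^{2}}\bigl((1-\alpha)\sigma_{i}+\beta\bigr)\delta_{ij}.
\]

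To match the target form $\mw^{-1}(\mSigma+\beta\iMatrix-\alpha\mLambda)\mw^{-1}$, I unpack $\mLambda=\mSigma-\shurSquared{\mProj_{\ma_{s}}(\mw^{\alpha}\onesVec)}$: on the diagonal $[\mLambda]_{ii}=\sigma_{i}-\sigma_{i}^{2}$ (using $P_{ii}=\sigma_{i}$), and off-diagonal $[\mLambda]_{ij}=-P_{ij}^{2}$. Off the diagonal, $\tfrac{1}{w_{i}w_{j}}[-\alpha\mLambda]_{ij}=\alpha P_{ij}^{2}/(w_{i}w_{j})$, matching. On the diagonal, $\tfrac{1}{w_{i}^{2}}[\sigma_{i}+\beta-\alpha(\sigma_{i}-\sigma_{i}^{2})]=\tfrac{1}{w_{i}^{2}}[(1-\alpha)\sigma_{i}+\alpha\sigma_{i}^{2}+\beta]$, which also matches. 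The only real obstacle is the bookkeeping around $\mLambda$: one must be careful that the diagonal of $\shurSquared{\mProj}$ is $\sigma_{i}^{2}$ (not $\sigma_{i}$), so that the on-diagonal $\alpha\sigma_{i}^{2}/w_{i}^{2}$ coming from the outer-product piece of the Hessian is correctly absorbed into $-\alpha\mLambda$ rather than being double-counted with $\mSigma$.
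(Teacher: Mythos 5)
Your proof is correct. You arrive at the same formulas via a slightly different route from the paper: for the Hessian, the paper appeals to Lemma~\ref{lem:deriv:lever} (which packages the Jacobian of the leverage scores as $\jacobian_{\vw}(\vsigma_{\ma_s}(\vw)) = \mLapProj_{\ma_s}(\vw)\mw^{-1}$) together with the chain rule, whereas you re-derive that content in place by differentiating $\mM^{-1}$ via the resolvent identity $\partial\mM^{-1}=-\mM^{-1}(\partial\mM)\mM^{-1}$ and then manually unpacking $\mLapProj = \mLever - \shurSquared{\mProj}$ to recognize the result. Both are sound; the paper's route is more modular since Lemma~\ref{lem:deriv:lever} gets reused elsewhere (e.g., for $\mg'(\vs)$ in Lemma~\ref{lem:weights_full:existence_and_size}), while yours has the virtue of making explicit the one place where a sign error is easy to make — the diagonal of $\shurSquared{\mProj}$ is $\sigma_i^2$ rather than $\sigma_i$ — which you correctly call out. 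One minor citation nit: you describe Lemma~\ref{lem:deriv:log_det} as stating the generic identity $\partial\log\det\mM=\mathrm{tr}(\mM^{-1}\partial\mM)$, but in the paper it is the specialized statement $\grad_{\vw}\log\det(\ma^T\mw\ma)=\mLever_{\ma}(\vw)\mw^{-1}\onesVec$; the generic identity is what you actually use, and you could state it without the reference.
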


\begin{proof} Using Lemma~\ref{lem:deriv:log_det} and the chain
rule we compute the gradient of $\grad_{w}\penalizedObjectiveWeight(\vSlack,\vWeight)$
as follows
\begin{eqnarray*}
\grad_{\vw}\penalizedObjectiveWeight(\vSlack,\vWeight) & = & \onesVec-\frac{1}{\alpha}\mSigma\mw^{-\alpha}\left(\alpha\mWeight^{\alpha-1}\right)-\beta\mWeight^{-1}\onesVec\\
 & = & \left(\iMatrix-\mSigma\mWeight^{-1}-\beta\mWeight^{-1}\right)\onesVec
\end{eqnarray*}
Next, using Lemma~\ref{lem:deriv:lever} and chain rule, we compute
the following for all $i,j\in[m]$,
\begin{align*}
\frac{\partial(\grad_{\vw}\penalizedObjectiveWeight(\vSlack,\vWeight))_{i}}{\partial w_{j}} & =-\frac{w_{i}\mLambda_{ij}\vWeight_{j}^{-\alpha}\left(\alpha\vWeight_{j}^{\alpha-1}\right)-\mSigma_{ij}\iMatrix_{ij}+\beta\iMatrix_{ij}}{\vWeight_{i}^{2}}\\
 & =\frac{\mSigma_{ij}}{w_{i}w_{j}}-\alpha\frac{\mLambda_{ij}}{w_{i}w_{j}}+\frac{\beta\iMatrix_{i=j}}{w_{i}w_{j}}\enspace.\tag{Using that \ensuremath{\mSigma}is diagonal}
\end{align*}
Consequently $\hessian_{\vw\vw}\penalizedObjectiveWeight(\vSlack,\vWeight)=\mw^{-1}\left(\mSigma+\beta\iMatrix-\alpha\mLambda\right)\mw^{-1}$
as desired. 

\end{proof}

Using this lemma, we prove that $\vg$ is a weight function with good
size.

\begin{lemma}\label{lem:weights_full:existence_and_size} The function
$\vg$ is a weight function meeting the criterion of Definition~\ref{def:sec_weighted_path:weight_function}.
For all $\vSlack\in\dSlack$ and $i\in[m]$ we have
\[
\beta\leq g_{i}(\vs)\leq1+\beta\enspace\text{ and }\enspace\norm{\vg(\vSlack)}_{1}=\rank(\ma)+\beta\cdot m.
\]
Furthermore, for all $\vSlack\in\dSlack,$ the weight function obeys
the following equations
\[
\mg(\vSlack)=\left(\mSigma_{g}+\beta\iMatrix\right)\onesVec\enspace\text{ and }\enspace\mg'(\vSlack)=-2\mg(\vSlack)\left(\mg(\vSlack)-\alpha\mLapProj_{g}\right)^{-1}\mLambda_{g}\mSlack^{-1}
\]
where $\mSigma_{g}\defeq\mSigma_{\ma_{s}}(\mg(\vSlack)^{\alpha}\onesVec)$,
$\mLapProj_{g}\defeq\mLapProj_{\ma_{s}}(\mg(\vSlack)^{\alpha}\onesVec)$,
and $\mg'(\vs)$ is the Jacobian matrix of $\vg$ at $\vs$.

\end{lemma}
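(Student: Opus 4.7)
The plan is to establish the four claims in order: (i) existence and uniqueness of $\vg(\vs)$ together with membership in Definition~\ref{def:sec_weighted_path:weight_function}; (ii) the fixed-point identity $\mg(\vs) = (\mSigma_g + \beta\iMatrix)\onesVec$; (iii) the pointwise/sum bounds; (iv) the Jacobian formula via implicit differentiation.

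\emph{Existence and uniqueness.} I would first show that $\penalizedObjectiveWeight(\vs,\cdot)$ is strictly convex and coercive on $\rPos^m$. By Lemma~\ref{lem:weights_full:derivatives_of_objective},
\[
\hess_{\vw\vw}\penalizedObjectiveWeight(\vs,\vw) = \mw^{-1}\bigl(\mSigma+\beta\iMatrix-\alpha\mLapProj\bigr)\mw^{-1},
\]
and the identity $\mLapProj = \mSigma-\shurSquared{\mProj}$ lets me rewrite the middle factor as $(1-\alpha)\mSigma+\alpha\shurSquared{\mProj}+\beta\iMatrix$. Since $\alpha,\beta\in(0,1)$ and $\shurSquared{\mProj}\succeq\mZero$ (Schur product of a PSD matrix with itself), this is PD, giving strict convexity. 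Coercivity follows because $-\beta\sum_i\log w_i\to\infty$ as any $w_i\to 0$, while $\onesVec^T\vw$ dominates the $\log\det$ term (which grows only logarithmically in $\vw$) as $\norm{\vw}\to\infty$. Hence a unique minimizer $\vg(\vs)$ exists, and by the implicit function theorem applied to $\grad_\vw\penalizedObjectiveWeight=\vzero$ (with the PD Hessian in $\vw$), $\vg$ is differentiable in $\vs$.

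\emph{Fixed point identity and bounds.} Setting the gradient from Lemma~\ref{lem:weights_full:derivatives_of_objective} to zero gives $\onesVec = \mSigma_g\mg^{-1}\onesVec+\beta\mg^{-1}\onesVec$, i.e.\ $\mg(\vs)\onesVec = (\mSigma_g+\beta\iMatrix)\onesVec$, proving the stated formula. Since $\vsigma_{\ma_s}(\vg^\alpha)_i\in[0,1]$ (diagonal of a projection matrix), the bound $\beta\leq g_i(\vs)\leq 1+\beta$ follows immediately; this simultaneously verifies the uniformity requirement $\normInf{\vg(\vs)}\leq 2$ in Definition~\ref{def:sec_weighted_path:weight_function} using $\beta\leq 1$. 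For the sum, $\normOne{\vg(\vs)}=\tr(\mProj_{\ma_s}(\vg^\alpha))+\beta m$, and since $\vg^{\alpha/2}\ms^{-1}\ma$ has the same column space rank as $\ma$ (as $\vg>0$ and $\vs>0$ and $\ma$ has full column rank), the trace equals $\rank(\ma)$, giving $\normOne{\vg(\vs)} = \rank(\ma)+\beta m$.

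\emph{Jacobian formula.} By implicit differentiation of $\grad_\vw\penalizedObjectiveWeight(\vs,\vg(\vs))=\vzero$,
\[
\mg'(\vs) = -\bigl(\hess_{\vw\vw}\penalizedObjectiveWeight\bigr)^{-1}\hess_{\vw\vs}\penalizedObjectiveWeight\,,
\]
evaluated at $\vw=\vg(\vs)$. The first factor is $\mg(\mg-\alpha\mLapProj_g)^{-1}\mg$ using part (ii). For the mixed Hessian, I use that $[\grad_\vw\penalizedObjectiveWeight]_i = 1-(\sigma_i+\beta)/w_i$ and $\sigma_{\ma_s,i}(\vw^\alpha) = \sigma_{\ma,i}(\vw^\alpha\vs^{-2})$ (from $\ma_s=\ms^{-1}\ma$). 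Together with the standard leverage-score derivative identity (Lemma~\ref{lem:deriv:lever}) $\partial\sigma_i/\partial u_j = [\mLapProj]_{ij}/u_j$ and the chain rule $\partial u_j/\partial s_j = -2u_j/s_j$, I obtain $\partial\sigma_i/\partial s_j = -2[\mLapProj_g]_{ij}/s_j$, so $\hess_{\vw\vs}\penalizedObjectiveWeight = 2\mg^{-1}\mLapProj_g\ms^{-1}$ at $\vw=\vg(\vs)$. Substituting gives the claimed $\mg'(\vs)=-2\mg(\mg-\alpha\mLapProj_g)^{-1}\mLapProj_g\ms^{-1}$.

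\emph{Main obstacle.} The only nontrivial bookkeeping is the derivative of the leverage scores with respect to the slacks; the cleanest route is the reweighting identity $\sigma_{\ma_s}(\vw^\alpha)=\sigma_{\ma}(\vw^\alpha/\vs^2)$, which reduces everything to the standard formula already available in the paper's appendix. Strict convexity is essentially automatic once one writes $\mSigma-\alpha\mLapProj=(1-\alpha)\mSigma+\alpha\shurSquared{\mProj}$, so I do not anticipate any real difficulty.
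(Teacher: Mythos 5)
Your proof is correct and follows essentially the same overall architecture as the paper's: strict convexity of $\penalizedObjectiveWeight(\vs,\cdot)$, the first-order condition yielding the fixed-point identity $\mg(\vs)\onesVec=(\mSigma_g+\beta\iMatrix)\onesVec$, the trace argument for $\normOne{\vg(\vs)}$, and implicit differentiation for the Jacobian. The one place you diverge is the existence step: you argue via coercivity of $\penalizedObjectiveWeight$ (barrier at zero, linear-beats-logarithmic at infinity) and then extract the pointwise bounds $\beta\leq g_i\leq 1+\beta$ from the fixed-point equation afterwards, whereas the paper reads the signs of $[\grad_{\vw}\penalizedObjectiveWeight(\vs,\vw)]_i=1-(\sigma_i+\beta)/w_i$ directly: strictly negative for $w_i<\beta$ and strictly positive for $w_i>1+\beta$, so any minimizer is confined to the compact box $[\beta,1+\beta]^m$ where strong convexity gives uniqueness. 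The box argument is a little cleaner because it avoids reasoning about mixed limits (some coordinates shrinking while others blow up) and delivers the pointwise bounds for free before the fixed-point identity is even invoked, but both routes are sound.
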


\begin{proof}

By Lemma~\ref{lem:tool:projection_matrices} and \eqref{eq:weights:constants_assumptions}
we have that for all $\vWeight,\vSlack\in\dSlack$ ,
\[
\mSigma_{\ma_{s}}(\vWeight)\specGeq\mLambda_{\ma_{s}}(\vWeight)\specGeq\alpha\mLambda_{\ma_{s}}(\vWeight).
\]
Therefore, by Lemma~\ref{lem:weights_full:derivatives_of_objective},
$\hessian_{\vw\vw}\penalizedObjectiveWeight(\vSlack,\vWeight)\specGeq\beta\mWeight^{-2}$
and $\penalizedObjectiveWeight(\vSlack,\vWeight)$ is convex for $\vWeight,\vSlack\in\dSlack$.
Using Lemma~\ref{lem:weights_full:derivatives_of_objective}, we
see that that for all $i\in[m]$ it is the case that
\[
\left[\gradient_{\vw}\penalizedObjectiveWeight(\vSlack,\vWeight)\right]_{i}=\frac{1}{w_{i}}\left(w_{i}-\sigma_{i}-\beta\right)
\]
Since $0\leq\sigma_{i}\leq1$ for all $i$ by Lemma~\ref{lem:tool:projection_matrices}
and $\beta\in(0,1)$ by \eqref{eq:weights:constants_assumptions},
we see that if $\vWeight_{i}\in(0,\beta)$ then $\left[\gradient_{\vw}\penalizedObjectiveWeight(\vSlack,\vWeight)\right]_{i}$
is strictly negative and if $\vWeight_{i}\in(1+\beta,\infty)$ then
$\left[\gradient_{\vw}\penalizedObjectiveWeight(\vSlack,\vWeight)\right]_{i}$
is strictly positive. Therefore, for any $\vSlack\in\dWeights$ ,
the $\vw$ that minimizes this convex function $\penalizedObjectiveWeight(\vSlack,\vWeight)$
lies in the box between $\beta$ and $1+\beta$. Since $\penalizedObjectiveWeight$
is strongly convex in this region, the minimizer is unique.

The formula for $\mg(\vSlack)$ follows by setting $\grad_{\vw}\penalizedObjectiveWeight(\vSlack,\vWeight)=\vzero$
and the size of $\vg$ follows from the fact that $\norm{\vsigma}_{1}=\mathrm{tr}\left(\mProj_{\ma_{s}}(\mg(\vSlack)^{\alpha})\right)$.
Since $\mProj_{\ma_{s}}(\mg(\vSlack)^{\alpha}\onesVec)$ is a projection
onto the image of $\mg(\vSlack)^{\alpha/2}\ma_{s}$ and since $\vg(\vSlack)>0$
and $\vSlack>0$, the dimension of the image of $\mg(\vSlack)^{\alpha/2}\ma_{s}$
is the rank of $\ma$. Hence, we have that
\[
\normOne{\fvWeightFull(\vSlack)}\leq\normOne{\vsigma}+\normOne{\beta\onesVec}=\rank(\ma)+\beta\cdot m.
\]

To compute $\mg'(\vSlack)$, we note that for $\vWeight\in\dWeights$
and $\mLambda_{w}\defeq\mLambda_{\mWeight^{\alpha}\ma}(\mSlack^{-2}\onesVec)$,
by Lemma~\ref{lem:deriv:lever} and chain rule, we get the following
for all $i,j\in[m]$,
\begin{align*}
\frac{\partial(\grad_{\vWeight}\penalizedObjectiveWeight(\vSlack,\vWeight))_{i}}{\partial s_{j}} & =-w_{i}^{-1}\mLambda_{ij}s_{j}^{2}\left(-2s_{j}^{-3}\right)=2w_{i}^{-1}\mLambda_{ij}s_{j}^{-1}\enspace.
\end{align*}
Consequently, $\jacobian_{\vSlack}(\grad_{\vWeight}\penalizedObjectiveWeight(\vSlack,\vWeight))=2\mw^{-1}\mLambda_{w}\mSlack^{-1}$
where $\jacobian_{\vSlack}$ denotes the Jacobian matrix of the function
$\grad_{\vw}\penalizedObjectiveWeight(\vs,\vw)$ with respect to $\vs$.
Since we have already shown that $\jacobian_{\vWeight}(\grad_{\vWeight}\penalizedObjectiveWeight(\vSlack,\vWeight))=\hessian_{\vw\vw}f_{t}(\vSlack,\vWeight)=\mw^{-1}\left(\mSigma_{w}+\beta\iMatrix-\alpha\mLambda_{w}\right)\mw^{-1}$
is positive definite (and hence invertible), by applying the implicit
function theorem to the specification of $\vg(\vSlack)$ as the solution
to $\grad_{\vw}\penalizedObjectiveWeight(\vSlack,\vWeight)=\vzero$,
we have
\[
\mg'(\vSlack)=-\left(\jacobian_{\vWeight}(\grad_{\vw}\penalizedObjectiveWeight(\vSlack,\vWeight))\right)^{-1}\left(\jacobian_{\vSlack}(\grad_{\vw}\penalizedObjectiveWeight(\vSlack,\vWeight))\right)=-2\mg(\vSlack)\left(\mg(\vSlack)-\alpha\mLambda_{g}\right)^{-1}\mLambda_{g}\mSlack^{-1}
\]
\end{proof}

Using Lemma~\ref{lem:weights_full:existence_and_size} we now show
that $\vg$ has a good slack sensitivity.

\begin{lemma}[Weight Function Slack Sensitivity]\label{lem:weights_full:cond_number}
For all $\vSlack\in\dSlack$, we have $\energyStab(\vSlack,\vg(\vSlack))\leq2$.
\end{lemma}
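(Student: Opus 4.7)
The strategy is to expand the definition of $\energyStab$, reduce the relevant quadratic form to one involving leverage scores of the weighting $\mg^\alpha$, and then apply the characterization of $\vg$ from Lemma~\ref{lem:weights_full:existence_and_size}.

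First I would rewrite the slack sensitivity squared. By direct computation with the definition of $\mProj_{\ma_s}(\vg)$,
\[
\energyStab(\vs,\vg)^{2} \;=\; \max_{i\in[m]} \norm{\mg^{-1/2}\indicVec i}_{\mProj_{\ma_s}(\vg)}^{2} \;=\; \max_{i\in[m]} \indicVec i^{T}\ma_s(\ma_s^{T}\mg\ma_s)^{-1}\ma_s^{T}\indicVec i.
\]
(Equivalently, this equals $\max_i \sigma_i(\mg)/g_i$.) Hence the task reduces to bounding the quadratic form $\indicVec i^{T}\ma_s(\ma_s^{T}\mg\ma_s)^{-1}\ma_s^{T}\indicVec i$ by $4/g_i$.

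Second, I would compare $\mg$ with $\mg^{\alpha}$. By Lemma~\ref{lem:weights_full:existence_and_size}, $g_i \geq \beta$, so $g_i^{1-\alpha} \geq \beta^{1-\alpha} \geq \tfrac{1}{2}$ by the standing assumption \eqref{eq:weights:constants_assumptions}. Writing $\mg = \mg^{\alpha}\,\diag(g_i^{1-\alpha})$, this gives the operator inequality $\mg \specGeq \tfrac{1}{2}\mg^{\alpha}$, hence $\ma_s^{T}\mg\ma_s \specGeq \tfrac{1}{2}\ma_s^{T}\mg^{\alpha}\ma_s$ and therefore $(\ma_s^{T}\mg\ma_s)^{-1} \specLeq 2(\ma_s^{T}\mg^{\alpha}\ma_s)^{-1}$.

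Third, I would identify the resulting quadratic form with the leverage scores of $\mg^{\alpha}$ and use the optimality characterization of $\vg$. Since $\mSigma_{g}=\mSigma_{\ma_s}(\mg^{\alpha}\onesVec)$ gives $\sigma_i(\mg^{\alpha}) = g_i^{\alpha}\cdot\indicVec i^{T}\ma_s(\ma_s^{T}\mg^{\alpha}\ma_s)^{-1}\ma_s^{T}\indicVec i$, while Lemma~\ref{lem:weights_full:existence_and_size} tells us $\sigma_i(\mg^{\alpha}) = g_i - \beta$, we get
\[
\indicVec i^{T}\ma_s(\ma_s^{T}\mg^{\alpha}\ma_s)^{-1}\ma_s^{T}\indicVec i \;=\; \frac{g_i-\beta}{g_i^{\alpha}}.
\]
Combining the last two paragraphs and using $g_i-\beta \leq g_i$ as well as $g_i \leq 1+\beta \leq 2$,
\[
\frac{1}{g_i}\indicVec i^{T}\ma_s(\ma_s^{T}\mg\ma_s)^{-1}\ma_s^{T}\indicVec i \;\leq\; \frac{2(g_i-\beta)}{g_i\,g_i^{\alpha}} \;\leq\; 2\,g_i^{1-\alpha} \;\leq\; 2\cdot 2^{1-\alpha} \;\leq\; 4,
\]
so $\energyStab(\vs,\vg) \leq 2$, as desired.

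The only nontrivial move is the passage from $\mg$ to $\mg^{\alpha}$, which is precisely why the constraints $g_i \geq \beta$ and $\beta^{1-\alpha} \geq \tfrac{1}{2}$ were built into the definition of $\vg$ and into \eqref{eq:weights:constants_assumptions}; the role of the $\beta\sum_i \log w_i$ term in \eqref{eq:sec:weights_full:weight_function} is exactly to ensure the lower bound $g_i \geq \beta$ that keeps this comparison tight. Everything else is bookkeeping with the two identities recorded in Lemma~\ref{lem:weights_full:existence_and_size}.
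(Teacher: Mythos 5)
Your strategy is exactly the paper's: compare $\mg$ with $\mg^{\alpha}$ via $\beta^{1-\alpha}\geq\tfrac12$ and $g_i\leq 2$, reduce the quadratic form to the leverage scores $\sigma_i(\mg^{\alpha})$, and use the optimality identity $\sigma_i(\mg^{\alpha})=g_i-\beta\leq g_i$. However, the execution has a bookkeeping error that breaks the final display. Since you correctly establish $\energyStab(\vs,\vg)^{2}=\max_i \indicVec i^{T}\ma_s(\ma_s^{T}\mg\ma_s)^{-1}\ma_s^{T}\indicVec i$, the target should be to bound this quadratic form by $4$, not by $4/g_i$ as you state. The extra factor $\tfrac{1}{g_i}$ on the left side of your last chain is spurious, and it forces the second inequality in the chain to be false: as written,
\[
\frac{2(g_i-\beta)}{g_i\,g_i^{\alpha}}\;\leq\;\frac{2g_i}{g_i\,g_i^{\alpha}}\;=\;2g_i^{-\alpha},
\]
not $2g_i^{1-\alpha}$, and $g_i^{-\alpha}$ can be as large as $\beta^{-\alpha}=(2m/\rank(\ma))^{\alpha}$, which is not $O(1)$. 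If you delete the stray $\tfrac{1}{g_i}$ (and the matching $g_i$ that it produces in the denominator of the second term), the rest of your chain
\[
\indicVec i^{T}\ma_s(\ma_s^{T}\mg\ma_s)^{-1}\ma_s^{T}\indicVec i\;\leq\;\frac{2(g_i-\beta)}{g_i^{\alpha}}\;\leq\;2g_i^{1-\alpha}\;\leq\;2\cdot 2^{1-\alpha}\;\leq\;4
\]
is correct and gives $\energyStab(\vs,\vg)\leq 2$, which is precisely the paper's argument.
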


\begin{proof} Fix an arbitrary $\vSlack\in\dSlack$ and let $\fvWeight\defeq\fvWeightFull(\vSlack)$,
and $\mSigma\defeq\mSigma_{\ma_{s}}(\vg^{\alpha})$. Recall that by
Lemma~\ref{lem:weights_full:existence_and_size} we know that $\fvWeight=\left(\mSigma+\beta\iMatrix\right)\onesVec$
and $\beta\leq g_{i}\leq1+\beta\leq2$ for all $i\in[m]$. Furthermore,
since $\beta^{1-\alpha}\geq\frac{1}{2}$ and $\alpha\in(0,1)$ by
\eqref{eq:sec:weights_full:weight_function} and clearly $\fmWeight=\fmWeight^{1-\alpha}\fmWeight^{\alpha}$
we have
\begin{equation}
\frac{1}{2}\fmWeight^{\alpha}\specLeq\beta^{1-\alpha}\fmWeight^{\alpha}\specLeq\fmWeight\specLeq(2)^{1-\alpha}\fmWeight^{\alpha}\specLeq2\fmWeight^{\alpha}\label{eq:weights_full:cond1}
\end{equation}
Applying this and using the definition of $\mProj_{\ma_{s}}(\fvWeight)$
yields
\begin{equation}
\ma_{s}(\ma_{s}^{T}\fmWeight\ma_{s})^{-1}\ma_{s}^{T}\specLeq2\ma_{s}(\ma_{s}^{T}\fmWeight^{\alpha}\ma_{s})^{-1}\ma_{s}^{T}=2\fmWeight^{-\alpha/2}\mProj_{\ma_{s}}(\vg^{\alpha})\fmWeight^{-\alpha/2}\enspace.\label{eq:weights_full:cond2}
\end{equation}
Hence, by definition of the weight slack sensitivity we have
\begin{eqnarray*}
\energyStab(\vSlack,\vg) & = & \max_{i}\norm{\fmWeight^{-1/2}\indicVec i}_{\mProj_{\ma_{s}}(\fvWeight)}\\
 & = & \max_{i}\sqrt{\indicVec i^{T}\ma_{s}(\ma_{s}^{T}\fmWeight\ma_{s})^{-1}\ma_{s}^{T}\indicVec i}\\
 & \leq & \max_{i}\sqrt{2\indicVec i^{T}\fmWeight^{-\alpha/2}\mProj_{\ma_{s}}(\fvWeight^{\alpha})\fmWeight^{-\alpha/2}\indicVec i}\\
 & = & \max_{i}\sqrt{2\frac{\sigma_{i}}{g_{i}^{\alpha}}}\leq2\max_{i}\sqrt{\frac{\sigma_{i}}{g_{i}}}\leq2
\end{eqnarray*}
where the last line due to the fact $g_{i}^{1-\alpha}\geq\beta^{1-\alpha}\geq\frac{1}{2}$
and $g_{i}\geq\sigma_{i}$. \end{proof}

Finally, we bound the step consistency of $\fvWeightFull$.

\begin{lemma}[Weight Function Step Consistency]\label{lem:weights_full:consistency}
For all $\vSlack\in\rPos^{m}$, $\vy\in\Rm$, $r\geq\frac{2}{1-\alpha}$,
and
\[
\mb\defeq\iMatrix+\frac{1}{r}\fmWeightFull(\vSlack)^{-1}\fmWeightFull'(\vSlack)\mSlack,
\]
we have
\[
\normFull{\mb\vy}_{\fmWeightFull(\vSlack)}\leq\norm{\vy}_{\fmWeightFull(\vSlack)}\enspace\text{ and }\enspace\normFullInf{\mb\vy}\leq\normInf{\vy}+\frac{2}{1-\alpha}\norm{\vy}_{\fmWeightFull(\vSlack)}.
\]
\end{lemma}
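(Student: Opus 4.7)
The plan is to reduce both inequalities to a spectral analysis of the symmetric positive-semidefinite matrix $\mm \defeq \mg^{-1/2}\mLambda_g\mg^{-1/2}$, where I abbreviate $\mg \defeq \fmWeightFull(\vSlack)$ and $\mLambda_g \defeq \mLambda_{\ma_s}(\vg^{\alpha}\onesVec)$. Using the Jacobian formula from Lemma~\ref{lem:weights_full:existence_and_size}, namely $\mg^{-1}\mg'\mSlack = -2(\mg - \alpha\mLambda_g)^{-1}\mLambda_g$, the operator in question is
\[
\mb = \iMatrix - \tfrac{2}{r}(\mg - \alpha\mLambda_g)^{-1}\mLambda_g,
\]
and conjugating by $\mg^{1/2}$ gives the symmetric matrix $\mc \defeq \mg^{1/2}\mb\mg^{-1/2} = \iMatrix - \tfrac{2}{r}(\iMatrix - \alpha\mm)^{-1}\mm$. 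Because $\mLambda_g \preceq \mSigma_g \preceq \mg$ by Lemma~\ref{lem:tool:projection_matrices}, all eigenvalues of $\mm$ lie in $[0,1]$; this spectral fact underlies everything that follows.

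For the first inequality, $\|\mb\vy\|_\mg = \|\mc\mg^{1/2}\vy\|_2 \leq \|\mc\|_{\mathrm{op}}\|\vy\|_\mg$, so it suffices to show $\|\mc\|_{\mathrm{op}} \leq 1$. As a scalar function of the symmetric matrix $\mm$, $\mc$ has eigenvalues $c_k = 1 - \tfrac{2\lambda_k}{r(1-\alpha\lambda_k)}$ for the eigenvalues $\lambda_k \in [0,1]$ of $\mm$, and $|c_k| \leq 1$ is equivalent to $\lambda_k \leq r/(1+\alpha r)$. For $r \geq 2/(1-\alpha)$ one checks $r/(1+\alpha r) \geq 2/(1+\alpha) \geq 1 \geq \lambda_k$, closing the first bound.

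For the second inequality I decompose $\mb\vy = \vy - \tfrac{2}{r}\vw$ with $\vw \defeq (\mg - \alpha\mLambda_g)^{-1}\mLambda_g\vy$; by the triangle inequality the claim reduces to the coordinate-wise estimate $|w_i| \leq \tfrac{r}{1-\alpha}\|\vy\|_\mg$. I rewrite $w_i = g_i^{-1/2}\sum_k \tfrac{\lambda_k}{1-\alpha\lambda_k}(\vu_k)_i(\vu_k^T\mg^{1/2}\vy)$ in an orthonormal eigenbasis $\{\vu_k\}$ of $\mm$ and apply Cauchy--Schwarz, pairing $\sqrt{\lambda_k}(\vu_k)_i$ against $\tfrac{\sqrt{\lambda_k}}{1-\alpha\lambda_k}(\vu_k^T\mg^{1/2}\vy)$. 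Combined with the elementary bound $\lambda_k^2/(1-\alpha\lambda_k)^2 \leq \lambda_k/(1-\alpha)^2$ (valid since $\lambda_k \leq 1$) and the identity $M_{ii} = \sigma_i(1-\sigma_i)/g_i$ from Lemma~\ref{lem:weights_full:existence_and_size}, this yields
\[
|w_i| \;\leq\; \frac{\sqrt{M_{ii}/g_i}}{1-\alpha}\|\vy\|_\mg \;=\; \frac{\sqrt{\sigma_i(1-\sigma_i)}}{g_i(1-\alpha)}\|\vy\|_\mg.
\]

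The main obstacle is the concluding quantitative check: reducing the displayed bound to the target $|w_i| \leq \tfrac{r}{1-\alpha}\|\vy\|_\mg$ requires the uniform estimate $\sqrt{\sigma(1-\sigma)}/(\sigma+\beta) \leq r$ for all $\sigma \in [0,1]$. A one-variable optimization with $g_i = \sigma_i + \beta$ locates the maximum at $\sigma^{*} = \beta/(1+2\beta)$, where the left side equals $1/(2\sqrt{\beta(1+\beta)})$, so the proof concludes by combining the hypothesis $r \geq 2/(1-\alpha)$ with the constraint $\beta^{1-\alpha} \geq 1/2$ from~\eqref{eq:weights:constants_assumptions} to secure $r \geq 1/(2\sqrt{\beta(1+\beta)})$. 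This delicate interplay between $\alpha$, $\beta$, and $r$ is exactly what the definition of the weight function has been engineered to ensure, and any residual slack in the Cauchy--Schwarz step can be absorbed by invoking the Laplacian identity $\mLambda_g\onesVec = 0$ to replace $\|\vy\|_\mg^2$ by the sharper Dirichlet energy $\vy^T\mLambda_g\vy$ in the intermediate estimate.
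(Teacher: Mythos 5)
Your argument for the $\mg$-norm bound is correct and is essentially the same spectral argument as the paper's, merely phrased through explicit eigenvalues $c_k = 1 - \tfrac{2\lambda_k}{r(1-\alpha\lambda_k)}$ rather than operator inequalities. The problem is the $\ellInf$ bound, where your Cauchy--Schwarz route has a quantitative gap that cannot be repaired.

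Concretely, your reduction asks for $\max_{\sigma\in[0,1]}\frac{\sqrt{\sigma(1-\sigma)}}{\sigma+\beta}\leq r$. You correctly locate this maximum at $\sigma^{*}=\beta/(1+2\beta)$ with value $\tfrac{1}{2\sqrt{\beta(1+\beta)}}$, but this is \emph{not} bounded by $r=2/(1-\alpha)$ under the constraints of the weight function. Plugging in the paper's choices $\beta=\tfrac{\rank(\ma)}{2m}$ and $1-\alpha=1/\log_2(2m/\rank(\ma))$, the left side is $\approx\sqrt{m/(2\rank(\ma))}$ while the right side is $\approx 2\log_2(2m/\rank(\ma))$; for $m/\rank(\ma)\gtrsim 10^3$ the polynomial term overtakes the logarithmic one and the inequality fails. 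Switching to the Dirichlet energy $\vy^T\mLambda_g\vy\leq\|\vy\|_\mg^2$ does not help: the obstruction is in the coefficient $\sqrt{\sigma_i(1-\sigma_i)}/g_i$, not in the norm of $\vy$. Global Cauchy--Schwarz pays a $\sqrt{M_{ii}}$ penalty that, after dividing by $g_i=\sigma_i+\beta$, peaks precisely at the small-$\sigma_i$ regime where $\beta$ dominates.

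The paper sidesteps this by never applying Cauchy--Schwarz to the whole operator. It writes the defining relation $(\mg-\alpha\mLambda)\vz=(\mg-(\alpha+2r^{-1})\mLambda)\vy$, substitutes $\mLambda=\mSigma-\mProj^{(2)}$ and $\mg=\mSigma+\beta\iMatrix$, and left-multiplies by $\indicVec_i^T$. The \emph{diagonal} part, $\bigl((1-\alpha-2r^{-1})\sigma_i+\beta\bigr)y_i$, is bounded by $\bigl((1-\alpha)\sigma_i+\beta\bigr)\normInf{\vy}$ — the $\normInf{\vy}$ term in the statement — exploiting the hypothesis $2r^{-1}\leq 1-\alpha$. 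Only the \emph{off-diagonal} part $\indicVec_i^T\mProj^{(2)}\vy$ goes through the estimate $|\indicVec_i^T\mProj^{(2)}\vx|\leq\sigma_i\norm{\vx}_{\mSigma}$ (Lemma~\ref{lem:tool:projection_matrices}) and lands on $\norm{\vy}_\mg$. After dividing by $(1-\alpha)\sigma_i+\beta$ the coefficient becomes $\frac{(1+\alpha)\sigma_i}{(1-\alpha)\sigma_i+\beta}$, which vanishes as $\sigma_i\to 0$ and is at most $\frac{1+\alpha}{1-\alpha}\leq\frac{2}{1-\alpha}$ uniformly — with no dependence on $\beta$ at all. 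You need this split of $\mb\vy$ into a part bounded by $\normInf{\vy}$ and a part bounded by $\norm{\vy}_\mg$; absorbing everything into $\norm{\vy}_\mg$ loses a factor of roughly $\sqrt{m/\rank(\ma)}$ in exactly the regime the lemma is designed to handle.
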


\begin{proof} 

Fix an arbitrary $\vSlack\in\rPos^{m}$ and let $\fvWeight\defeq\fvWeightFull(\vSlack)$,
$\vsigma\defeq\vLever_{\ma_{s}}(\vg^{\alpha})$, $\mSigma\defeq\mSigma_{\ma_{s}}(\vg^{\alpha})$,
$\mProj\defeq\mProj_{\ma_{s}}(\vg^{\alpha})$, $\mLapProj\defeq\mLapProj_{\ma_{s}}(\vg^{\alpha})$.
Also, fix an arbitrary $\vy\in\Rm$ and let $\vz\defeq\mb\vy$.

By Lemma \ref{lem:weights_full:existence_and_size}, $\fmWeight'=-2\fmWeight\left(\fmWeight-\alpha\mLambda\right)^{-1}\mLambda\mSlack^{-1}$
and therefore
\begin{align*}
\mb & =\iMatrix+r^{-1}\mg^{-1}\left(-2\fmWeight\left(\fmWeight-\alpha\mLambda\right)^{-1}\mLambda\mSlack^{-1}\right)\mSlack\\
 & =\left(\fmWeight-\alpha\mLambda\right)^{-1}\left(\fmWeight-\alpha\mLambda\right)-2r^{-1}\left(\fmWeight-\alpha\mLambda\right)^{-1}\mLambda\\
 & =\left(\fmWeight-\alpha\mLambda\right)^{-1}\left(\mg-(\alpha+2r^{-1})\mLambda\right)\enspace.
\end{align*}
By Lemma~\ref{lem:weights_full:existence_and_size}, we have $\mg\succeq\mSigma.$
By the definition of $\mLambda=\mSigma-\mProj^{(2)}$, we have $\mSigma\succeq\mLambda$
and Lemma~\ref{lem:tool:projection_matrices} shows that $\mLambda\succeq\mZero$.
Hence, we have
\[
\mZero\specLeq\mLambda\specLeq\mSigma\specLt\mg.
\]
Using this and $0<2r^{-1}\leq1-\alpha$, we have that
\[
\mZero\specLt\mg-(\alpha+2r^{-1})\mLambda\specLeq\fmWeight-\alpha\mLambda\enspace.
\]
Thus, $\fmWeight-\alpha\mLambda$ is positive definite and therefore
$\vz$ is the unique vector such that
\begin{equation}
\left(\fmWeight-\alpha\mLambda\right)\vz=\left(\mg-(\alpha+2r^{-1})\mLambda\right)\vy\label{eq:weights_full:equality}
\end{equation}

To bound $\norm{\vz}_{\mg}$, we note that since $\mg\specGt\mZero$
we have
\[
\left(\iMatrix-\alpha\fmWeight^{-1/2}\mLambda\fmWeight^{-1/2}\right)\fmWeight^{1/2}\vz=\left(\iMatrix-(\alpha+2r^{-1})\fmWeight^{-1/2}\mLambda\fmWeight^{-1/2}\right)\fmWeight^{1/2}\vy
\]
Furthermore, since $\mZero\specLeq\fmWeight^{-1/2}\mLambda\fmWeight^{-1/2}\specLeq\iMatrix$,
we have that
\[
\mZero\specLeq\iMatrix-(\alpha+2r^{-1})\fmWeight^{-1/2}\mLambda\fmWeight^{-1/2}\specLeq\iMatrix-\alpha\fmWeight^{-1/2}\mLambda\fmWeight^{-1/2}
\]
 and consequently
\begin{align}
\norm{\vz}_{\mg} & =\norm{\left(\iMatrix-\alpha\fmWeight^{-1/2}\mLambda\fmWeight^{-1/2}\right)\fmWeight^{1/2}\vz}_{\left(\iMatrix-\alpha\fmWeight^{-1/2}\mLambda\fmWeight^{-1/2}\right)^{-2}}\nonumber \\
 & \leq\norm{\left(\iMatrix-\alpha\fmWeight^{-1/2}\mLambda\fmWeight^{-1/2}\right)\fmWeight^{1/2}\vz}_{\left(\iMatrix-(\alpha+2k^{-1})\fmWeight^{-1/2}\mLambda\fmWeight^{-1/2}\right)^{-2}}\nonumber \\
 & =\norm{\mg^{1/2}\vy}=\norm{\vy}_{\mg}\enspace.\label{eq:zG_yG}
\end{align}
Therefore, $\norm{\mb\vy}_{\mg}\leq\norm{\vy}_{\mg}$ as desired.

Next, to bound $\norm{\vz}_{\infty}$, we use that $\mLapProj=\mLever-\shurSquared{\mProj}$
and $\vg=\vsigma+\beta\onesVec$ and \eqref{eq:weights_full:equality}
to derive
\begin{eqnarray*}
\left(1-\alpha\right)\mSigma\vz+\beta\vz+\alpha\mProj^{(2)}\vz & = & \mg\vz-\alpha\mLambda\vz\\
 & = & \left(\mg-(\alpha+2r^{-1})\mLambda\right)\vy\\
 & = & \left(1-\alpha-2r^{-1}\right)\mSigma\vy+\beta\vy+\left(\alpha+2r^{-1}\right)\mProj^{(2)}\vy\enspace.
\end{eqnarray*}
toLeft multiplying this equation by $\indicVec i^{T}$ for arbitrary
$i\in[m]$ and using that $\vsigma_{i}\geq0$ then yields that
\begin{align*}
\left((1-\alpha)\vsigma_{i}+\beta\right)\left|\vz_{i}\right| & \leq\left|\alpha\indicVec i^{T}\mProj^{(2)}\vz\right|+\left|\left((1-\alpha-2r^{-1})\vsigma_{i}+\beta\right)\vy_{i}+\left(\alpha+2r^{-1}\right)\indicVec i^{T}\mProj^{(2)}\vy\right|\\
 & \leq\alpha\left|[\mProj^{(2)}\vz]_{i}\right|+\left((1-\alpha)\vsigma_{i}+\beta\right)\normInf{\vy}+\left|[\mProj^{(2)}\vy]_{i}\right|\tag{\ensuremath{0<2r^{-1}\leq(1-\alpha)<1}}\\
 & \leq\alpha\vsigma_{i}\norm{\vz}_{\mSigma}+\left((1-\alpha)\vsigma_{i}+\beta\right)\normInf{\vy}+\vsigma_{i}\norm{\vy}_{\mSigma}\tag{Lemma\,\ref{lem:tool:projection_matrices}}\\
 & \leq\left((1-\alpha)\vsigma_{i}+\beta\right)\normInf{\vy}+(1+\alpha)\vsigma_{i}\norm{\vy}_{\mg}\tag{\ensuremath{\mSigma\specLeq\mg}and }\ref{eq:zG_yG}
\end{align*}
Consequently,
\begin{eqnarray*}
|\vz_{i}| & \leq & \normInf{\vy}+\frac{\left(1+\alpha\right)\vsigma_{i}}{\left((1-\alpha)\vsigma_{i}+\beta\right)}\norm{\vy}_{\mg}\\
 & \leq & \normInf{\vy}+\frac{2}{1-\alpha}\norm{\vy}_{\mg}
\end{eqnarray*}
and therefore $\normInf{\mb\vy}=\normInf{\vz}\leq\normInf{\vy}+\frac{2}{1-\alpha}\norm{\vy}_{\fmWeight}$.

\end{proof}

From Lemmas \ref{lem:weights_full:existence_and_size}, \ref{lem:weights_full:cond_number}
and \ref{lem:weights_full:consistency}, the proof of Theorem \ref{thm:weights_full:weight_properties}
is immediate. Since $m\geq\rank(\ma)$ we have $\log_{2}\left(2m/\rank(\ma)\right)\geq1$
and $\alpha\in(0,1)$. Furthermore $\beta\in(0,1)$ and
\[
\beta^{1-\alpha}=\left(\frac{\rank(A)}{2m}\right)^{\left(\frac{1}{\log_{2}\left(2m/\rank(\ma)\right)}\right)}=\frac{1}{2}
\]
and therefore \eqref{eq:weights:constants_assumptions} is satisfied.
Furthermore, for all $\vSlack\in\dSlack$ we have $\normOne{\fvWeightFull(\vSlack)}\leq2\cdot\rank(\ma)$
by Lemma~\ref{lem:weights_full:existence_and_size}. The bounds on
$\cWeightStab(\fvWeightFull)$ and $\cWeightCons(\fvWeightFull)$
then follow from Lemma~\ref{lem:weights_full:cond_number} and Lemma~\ref{lem:weights_full:consistency}
respectively.

\subsection{Computing and Correcting The Weights}

\label{sec:weights_full:computing} 

Here, we describe how to efficiently compute approximations to the
weight function $\vg:\rNonNeg^{m}\rightarrow\rNonNeg^{m}$ as given
by \eqref{eq:sec:weights_full:weight_function}. The two main technical
tools we use towards this end are the \emph{gradient descent method},
Theorem~\ref{thm:constrained_minimization}, a standard result in
convex optimization, and fast numerical methods for estimating leverage
scores using the Johnson-Lindenstrauss Lemma, Theorem~\ref{thm:weights_full:leverage-score_sampling},
a powerful tool in randomized numerical linear algebra.

Since the weight function, $\vg$, is defined as the minimizer of
a convex optimization problem \eqref{eq:sec:weights_full:weight_function},
we could use the gradient descent method directly to minimize $\penalizedObjectiveWeight$
and hence compute $\vg$. Indeed, in Lemma~\ref{lem:weight_iterative}
we show how applying the gradient descent method in a carefully scaled
space allows us to compute $\vg(\vs)$ to high accuracy in $\otilde(1)$
iterations. Unfortunately, this result makes two assumptions to compute
$\vg(\vs)$: (1) we are given a weight $\vw\in\rNonNeg$ that is not
too far from $\vg(\vs)$ and (2) we compute the gradient of $\penalizedObjectiveWeight$
exactly. 

Assumption (1) is not an issue as we always ensure that $\vg$ does
not change too much between calls to compute $\vg$ and therefore
can always use our previous weights as the approximation to $\vg(\vs)$.
However, naively computing the gradient of $\penalizedObjectiveWeight$
is computationally expensive and hence assumption (2) is problematic.
To deal with this issue we use the fact that by careful application
of Johnson-Lindenstrauss one can compute a multiplicative approximation
to the gradient efficiently and in Theorem \ref{thm:weights_full:approximate_weight}
we show that this suffices to compute an approximation to $\vg$ that
suffices to use in our weighted path following scheme.

First we prove the theorem regarding gradient descent method we use
in our analysis. This theorem shows that if we take repeated projected
gradient steps then we can achieve linear convergence up to bounds
on how much the hessian of the function changes over the domain of
interest. %
\footnote{Note that this theorem is usually stated with $\mh=\iMatrix$, i.e.
the standard Euclidean norm rather than the one induced by $\mh$.
However, Theorem~\ref{thm:constrained_minimization} can be proved
by these standard results just by a change of variables.%
}

\begin{theorem}[Simple Constrained Minimization for Twice Differentiable Function \cite{Nesterov2003}]
\label{thm:constrained_minimization} Let $\mh$ be a positive definite
matrix and $Q\subseteq\Rm$ be a convex set. Let $f(\vx):Q\rightarrow\R$
be a twice differentiable function such that there are constants $L\geq\mu\geq0$
such that for all $\vx\in Q$ we have $\mu\mh\preceq\nabla^{2}f(\vx)\specLeq L\mh$.
If for some $\vx^{(0)}\in Q$ and all $k\geq0$ we apply the update
rule
\[
\vx^{(k+1)}=\argmin_{\vx\in Q}\innerProduct{\grad f(\vx^{(k)})}{\vx-\vx^{(k)}}+\frac{L}{2}\norm{\vx-\vx^{(k)}}_{\mh}^{2}
\]
then for all $k\geq0$ we have
\[
\norm{\vx^{(k)}-\vx^{*}}_{\mh}^{2}\leq\left(1-\frac{\mu}{L}\right)^{k}\norm{\vx^{(0)}-\vx^{*}}_{\mh}^{2}.
\]
\end{theorem}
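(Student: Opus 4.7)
The plan is to reduce this statement to the standard Euclidean version of the projected gradient descent convergence theorem via a change of variables. Since $\mh$ is positive definite, $\mh^{1/2}$ exists and is invertible, so the substitution $\vy = \mh^{1/2}\vx$ puts everything into an isometric copy of the problem where the metric induced by $\mh$ becomes the standard Euclidean inner product.

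More precisely, I would define $\tilde f(\vy) \defeq f(\mh^{-1/2}\vy)$ on the convex set $\tilde Q \defeq \mh^{1/2} Q$. Then by the chain rule $\gradient \tilde f(\vy) = \mh^{-1/2}\gradient f(\mh^{-1/2}\vy)$ and $\hessian \tilde f(\vy) = \mh^{-1/2}\hessian f(\mh^{-1/2}\vy)\mh^{-1/2}$, so the hypothesis $\mu\mh \specLeq \hessian f(\vx) \specLeq L\mh$ becomes the standard condition $\mu \iMatrix \specLeq \hessian \tilde f(\vy) \specLeq L \iMatrix$. The norm in the hypothesis matches since $\norm{\vx-\vx^*}_{\mh}^2 = \norm{\vy-\vy^*}_2^2$, and expanding the proposed update rule in terms of $\vy$ shows it is exactly Euclidean projected gradient descent on $\tilde f$ over $\tilde Q$: the minimizer of $\innerProduct{\gradient \tilde f(\vy^{(k)})}{\vy-\vy^{(k)}} + \frac{L}{2}\norm{\vy-\vy^{(k)}}_2^2$ subject to $\vy \in \tilde Q$. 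Hence it suffices to prove the statement in the case $\mh=\iMatrix$.

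For the reduced Euclidean case, I would follow the textbook argument: by $L$-smoothness, $f(\vx^{(k+1)}) \leq f(\vx^{(k)}) + \innerProduct{\gradient f(\vx^{(k)})}{\vx^{(k+1)}-\vx^{(k)}} + \tfrac{L}{2}\norm{\vx^{(k+1)}-\vx^{(k)}}_2^2$, so $\vx^{(k+1)}$ does at least as well as the quadratic model predicts. The first-order optimality condition for the constrained minimization gives $\innerProduct{\gradient f(\vx^{(k)}) + L(\vx^{(k+1)}-\vx^{(k)})}{\vx-\vx^{(k+1)}} \geq 0$ for every $\vx \in Q$; applied with $\vx=\vx^*$ this lets one compare progress toward the optimum. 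Combining this with $\mu$-strong convexity $f(\vx^*) \geq f(\vx^{(k)}) + \innerProduct{\gradient f(\vx^{(k)})}{\vx^*-\vx^{(k)}} + \tfrac{\mu}{2}\norm{\vx^*-\vx^{(k)}}_2^2$ and rearranging yields the one-step contraction $\norm{\vx^{(k+1)}-\vx^*}_2^2 \leq (1-\mu/L)\norm{\vx^{(k)}-\vx^*}_2^2$, and iteration gives the claim.

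I do not expect any serious obstacle: both the change of variables and the standard projected gradient analysis are routine. The only place to be a little careful is verifying that the optimality condition for the constrained subproblem is used with $\vx^*$ itself, which requires $\vx^* \in Q$, and that the quadratic upper bound from smoothness is valid along the segment from $\vx^{(k)}$ to $\vx^{(k+1)}$, which follows from convexity of $Q$. Because this is Nesterov's result cited from \cite{Nesterov2003}, I would also consider simply invoking the reference after the change-of-variables reduction rather than re-deriving the Euclidean case in full.
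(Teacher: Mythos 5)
Your proposal is correct and takes essentially the same route as the paper: the paper simply cites \cite{Nesterov2003} and notes in a footnote that the general $\mh$ case follows from the Euclidean case "just by a change of variables," which is exactly the reduction you spell out (and your sketch of the Euclidean one-step contraction via the first-order optimality condition, $L$-smoothness, and $\mu$-strong convexity is the standard textbook argument it refers to).
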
 

To apply this Theorem~\ref{thm:constrained_minimization} to compute
$\vg(\vs)$ we first need to show that there is a region around the
optimal point $\vg(\vs)$ such that the Hessian of $\hat{f}$ does
not change too much. 

\begin{lemma}[Hessian Approximation]\label{lem:Hessian_of_weight}
For $\normInf{\mw^{-1}(\fvWeightFull(\vSlack)-\vw)}\leq\frac{1}{12}$
we have
\[
\frac{2(1-\alpha)}{3}\mw^{-1}\specLeq\hessWW\hat{f}(\vSlack,\vWeight)\specLeq\frac{3}{2}\mw^{-1}.
\]
\end{lemma}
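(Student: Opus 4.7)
\medskip

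\noindent\textbf{Proof plan for Lemma~\ref{lem:Hessian_of_weight}.} The plan is to leverage the explicit Hessian formula from Lemma~\ref{lem:weights_full:derivatives_of_objective},
\[
\hessWW\hat{f}(\vSlack,\vWeight)=\mw^{-1}\left(\mSigma+\beta\iMatrix-\alpha\mLambda\right)\mw^{-1},
\]
which reduces the claim (after conjugating by $\mw$) to the scalar-entry-wise bounds $\tfrac{2(1-\alpha)}{3}\mw \specLeq \mSigma+\beta\iMatrix-\alpha\mLambda \specLeq \tfrac{3}{2}\mw$, where $\mSigma = \mSigma_{\ma_s}(\vw^\alpha)$ and $\mLambda = \mLambda_{\ma_s}(\vw^\alpha)$. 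Since Lemma~\ref{lem:tool:projection_matrices} gives $\mZero\specLeq\mLambda\specLeq\mSigma$, I immediately obtain the sandwich $(1-\alpha)(\mSigma+\beta\iMatrix)\specLeq\mSigma+\beta\iMatrix-\alpha\mLambda\specLeq\mSigma+\beta\iMatrix$ (using also $\beta\geq 0$ and $\alpha\in(0,1)$ to absorb the $\beta\iMatrix$ into the $(1-\alpha)$ factor). Thus it suffices to prove the single diagonal estimate $\tfrac{2}{3}w_i \le \sigma_i(\vw^\alpha)+\beta \le \tfrac{3}{2}w_i$ for every $i$, which is where all the work lies.

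\medskip

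\noindent The hypothesis $\normInf{\mw^{-1}(\vg(\vSlack)-\vw)}\leq 1/12$ gives the multiplicative bound $\tfrac{11}{12}w_i\leq g_i\leq\tfrac{13}{12}w_i$, and raising to the power $\alpha\in(0,1)$ gives the spectral bounds $(11/12)^\alpha\mg^\alpha\specLeq\mw^\alpha\specLeq(13/12)^\alpha\mg^\alpha$. The key tool is the monotonicity of the matrix $\ma_s(\ma_s^T\mvar D\ma_s)^{-1}\ma_s^T$ in $\mvar D$: $\mvar D_1\specLeq\mvar D_2$ implies $\ma_s(\ma_s^T\mvar D_2\ma_s)^{-1}\ma_s^T\specLeq\ma_s(\ma_s^T\mvar D_1\ma_s)^{-1}\ma_s^T$, and hence the same for diagonal entries. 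Writing $\sigma_i(\vw^\alpha)=w_i^\alpha[\ma_s(\ma_s^T\mw^\alpha\ma_s)^{-1}\ma_s^T]_{ii}$ and using this monotonicity, I can pin the ratio $\sigma_i(\vw^\alpha)/\sigma_i(\vg^\alpha)$ in the interval $[(11/13)^\alpha,(13/11)^\alpha]\subseteq[11/13,13/11]$ (the last containment since $x^\alpha\geq x$ for $0<x<1$ and $x^\alpha\leq x$ for $x>1$ whenever $\alpha<1$).

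\medskip

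\noindent Now I exploit the fixed-point identity $g_i=\sigma_i(\vg^\alpha)+\beta$ established in Lemma~\ref{lem:weights_full:existence_and_size}: rewriting $\sigma_i(\vg^\alpha)=g_i-\beta$ and substituting into the interval above yields
\[
\tfrac{11}{13}g_i+\tfrac{2}{13}\beta \;\le\; \sigma_i(\vw^\alpha)+\beta \;\le\; \tfrac{13}{11}g_i-\tfrac{2}{11}\beta,
\]
so in particular $\tfrac{11}{13}g_i\le \sigma_i(\vw^\alpha)+\beta\le \tfrac{13}{11}g_i$ (using $\beta\geq 0$). Combining with $\tfrac{11}{12}w_i\leq g_i\leq \tfrac{13}{12}w_i$ yields $\tfrac{121}{156}w_i\le \sigma_i(\vw^\alpha)+\beta\le\tfrac{169}{132}w_i$, and one checks numerically that $121/156 > 2/3$ and $169/132 < 3/2$, completing the proof.

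\medskip

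\noindent The main obstacle is step three — translating multiplicative control of $\vw$ against $\vg$ into multiplicative control of the leverage scores $\vsigma(\vw^\alpha)$ against $\vsigma(\vg^\alpha)$. This requires a careful interplay between the pointwise bound $(w_i/g_i)^\alpha$ and the spectral monotonicity of the pseudoinverse factor, and the slack $1/12$ in the hypothesis is exactly what is needed to make both the $11/13$ lower bound and the $169/132$ upper bound fit inside $[2/3,3/2]$.
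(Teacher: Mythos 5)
Your proof is correct and follows essentially the same route as the paper: both start from the Hessian formula of Lemma~\ref{lem:weights_full:derivatives_of_objective}, use $\mZero\specLeq\mLambda\specLeq\mSigma$ to reduce to bounding $\mSigma_{\ma_s}(\vw^\alpha)+\beta\iMatrix$ against $\mw$, compare $\mSigma_{\ma_s}(\vw^\alpha)$ to $\mSigma_{\ma_s}(\vg^\alpha)$ via the multiplicative perturbation stability of leverage scores, and invoke the fixed-point identity $\vg(\vSlack)=\mSigma_{\ma_s}(\vg^\alpha)\onesVec+\beta\onesVec$. The only difference is that you derive the sharp ratio $(13/11)^\alpha\leq 13/11$ while the paper uses the slightly looser $(1-1/12)^{-2}$ for the leverage-score comparison; both fit comfortably inside $[2/3,3/2]$ after combining with $\tfrac{11}{12}w_i\leq g_i\leq\tfrac{13}{12}w_i$.
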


\begin{proof} From Lemma \ref{lem:weights_full:derivatives_of_objective},
we know that
\[
\hessian_{\vw\vw}\penalizedObjectiveWeight(\vSlack,\vWeight)=\mw^{-1}\left(\mSigma+\beta\iMatrix-\alpha\mLambda\right)\mw^{-1}
\]
where $\mSigma=\mSigma_{\ma_{s}}(\vWeight^{\alpha})$ and $\mLapProj=\mLapProj_{\ma_{s}}(\vWeight^{\alpha})$.
Using $\mZero\specLeq\mLambda\specLeq\mSigma$, we have
\[
(1-\alpha)\mw^{-1}\left(\mSigma+\beta\iMatrix\right)\mw^{-1}\specLeq\hessian_{ww}\penalizedObjectiveWeight(\vSlack,\vWeight)\specLeq\mw^{-1}\left(\mSigma+\beta\iMatrix\right)\mw^{-1}
\]
Using that $\normInf{\mw^{-1}(\fvWeightFull(\vSlack)-\vw)}\leq\frac{1}{12}$
and applying Lemma \ref{lem:weights_full:existence_and_size} we have
\[
\mSigma+\beta\iMatrix\specLeq\left(1-\frac{1}{12}\right)^{-2}\mSigma_{\ma_{s}}(\vg^{\alpha})+\beta\iMatrix\specLeq\left(1-\frac{1}{12}\right)^{-2}\fmWeight\specLeq\frac{3}{2}\mw
\]
and
\[
\mSigma+\beta\iMatrix\succeq\left(1-\frac{1}{12}\right)^{2}\mSigma_{\ma_{s}}(\vg^{\alpha})+\beta\iMatrix\succeq\left(1-\frac{1}{12}\right)^{2}\fmWeight\succeq\frac{2}{3}\mw.
\]
\end{proof}

Combining Theorem \ref{thm:constrained_minimization} and Lemma \ref{lem:Hessian_of_weight},
we get the following algorithm to compute the weight function using
the exact computation of the gradient of $\hat{f}$. Note that this
algorithm applies Theorem \ref{thm:constrained_minimization} multiple
times as in each iteration we are taking a gradient step with respect
to a different norm.

\begin{lemma}[Exact Weight Computation]\label{lem:weight_iterative}
Given $\vw^{(0)}\in\dWeights$ such that $\normFull{\mWeight_{(0)}^{-1}(\vg(\vs)-\vWeight^{(0)})}_{\infty}\leq\frac{1-\alpha}{24}$.
Let
\[
Q=\left\{ \vWeight\in\Rm~|~\normFull{\mWeight_{(0)}^{-1}(\vw-\vWeight^{(0)})}_{\infty}\leq\frac{1-\alpha}{24}\right\} .
\]
For all $k\geq0$ let
\[
\vWeight^{(k+1)}=\argmin_{\vWeight\in Q}\normFull{\vWeight-\frac{1}{2}\left(\vWeight^{(k)}+\vsigma_{\ma_{s}}\left(\left(\vWeight^{(k)}\right)^{\alpha}\right)+\beta\right)}_{\mw_{(k)}^{-1}}^{2}
\]
This implies that for all $k$ ,
\[
\normFull{\fmWeightFull(\vSlack)^{-1}(\vg(\vSlack)-\vWeight^{(k)})}_{\infty}^{2}\leq4m^{2}\left(1-\frac{1-\alpha}{12}\right)^{k}.
\]
\end{lemma}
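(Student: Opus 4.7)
The idea is to recognize the update as a single step of the projected gradient method of Theorem~\ref{thm:constrained_minimization} applied to the convex function $\hat{f}(\vs,\cdot)$ with local metric $\mh=\mw_{(k)}^{-1}$ and Lipschitz constant $L=2$, and then to chain these single-step contractions while accounting for the metric changing from one iteration to the next. Using the formula $\grad_{\vw}\hat{f}(\vs,\vw)=\mw^{-1}(\vw-\vsigma_{\ma_s}(\vw^{\alpha})-\beta\onesVec)$ from Lemma~\ref{lem:weights_full:derivatives_of_objective}, completing the square in $\innerProduct{\grad\hat{f}(\vs,\vw^{(k)})}{\vw-\vw^{(k)}}+\norm{\vw-\vw^{(k)}}_{\mw_{(k)}^{-1}}^{2}$ gives, up to an additive constant in $\vw$, the objective appearing in the lemma's update rule, so our iteration is exactly the one prescribed by Theorem~\ref{thm:constrained_minimization} with $\mh=\mw_{(k)}^{-1}$ and $L=2$.

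Next I would verify that all iterates and $\vg(\vs)$ lie in $Q$: $\vw^{(k)}\in Q$ for $k\geq 1$ is enforced by the $\argmin$, $\vw^{(0)}\in Q$ is trivial, and $\vg(\vs)\in Q$ is the hypothesis. Setting $\rho\defeq(1+(1-\alpha)/24)/(1-(1-\alpha)/24)$, any two $\vw,\vw'\in Q$ satisfy $\rho^{-1}\mw'^{-1}\specLeq\mw^{-1}\specLeq\rho\mw'^{-1}$. I would then apply Theorem~\ref{thm:constrained_minimization} with $f=\hat{f}(\vs,\cdot)$, $\mh=\mw_{(k)}^{-1}$, and $L=2$: Lemma~\ref{lem:Hessian_of_weight} guarantees $\tfrac{2(1-\alpha)}{3}\mw^{-1}\specLeq\hessWW\hat{f}(\vs,\vw)\specLeq\tfrac{3}{2}\mw^{-1}$ on $Q$, and combined with the $\rho$-equivalence of metrics this yields $\mu\mw_{(k)}^{-1}\specLeq\hessWW\hat{f}(\vs,\vw)\specLeq L\mw_{(k)}^{-1}$ with $\mu=\tfrac{2(1-\alpha)}{3\rho}$ and $\tfrac{3\rho}{2}\leq L=2$. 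Thus a single step of the theorem gives $\norm{\vw^{(k+1)}-\vg(\vs)}_{\mw_{(k)}^{-1}}^{2}\leq(1-\mu/L)\norm{\vw^{(k)}-\vg(\vs)}_{\mw_{(k)}^{-1}}^{2}$.

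To chain these per-step contractions, I would convert the left-hand side from the $\mw_{(k)}^{-1}$ metric to $\mw_{(k+1)}^{-1}$ at the cost of a factor $\rho$, and elementary calculation (using $(1-\alpha)/24\leq 1/24$) shows $\rho(1-\mu/L)\leq 1-(1-\alpha)/12$ for $\alpha\in[0,1)$. Unrolling yields $\norm{\vw^{(k)}-\vg(\vs)}_{\mw_{(k)}^{-1}}^{2}\leq(1-(1-\alpha)/12)^{k}\norm{\vw^{(0)}-\vg(\vs)}_{\mw_{(0)}^{-1}}^{2}$. For the final $L_{\infty}$ bound, I would use the pointwise inequality $\normInf{\fmWeight(\vs)^{-1}\vx}^{2}\leq(\max_i g_i(\vs)^{-1})\norm{\vx}_{\fmWeight(\vs)^{-1}}^{2}$, convert the $\mw_{(k)}^{-1}$ norm to $\fmWeight(\vs)^{-1}$ at a constant cost (using the norm equivalence above together with the hypothesis $\vw^{(0)}\in Q$ and $\vg(\vs)\in Q$), and bound the initial term by $\norm{\vw^{(0)}-\vg(\vs)}_{\mw_{(0)}^{-1}}^{2}\leq\normOne{\vw^{(0)}}\cdot\normInf{\mw_{(0)}^{-1}(\vw^{(0)}-\vg(\vs))}^{2}$. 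Applying the bounds $\beta\leq g_i\leq 1+\beta$ and $\max_i g_i^{-1}\leq 2m/\rank(\ma)$ from Lemma~\ref{lem:weights_full:existence_and_size}, together with the hypothesis on the initial error, these factors combine to the $4m^{2}$ prefactor.

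The main obstacle is handling the changing metric $\mw_{(k)}^{-1}$ across iterations: Theorem~\ref{thm:constrained_minimization} is designed for a fixed $\mh$, so getting a clean chained rate requires that the loss from metric conversion each iteration still leaves a $(1-\Theta(1-\alpha))$ contraction factor, which forces careful arithmetic with the radius $(1-\alpha)/24$ of $Q$ (this is precisely why $Q$ is defined with that particular radius rather than something larger). A secondary, more mundane difficulty is that passing from the weighted $L_{2}$ contraction to the weighted $L_{\infty}$ bound in the statement costs a factor of $\max_i g_i^{-1}$ and a factor of $\normOne{\vw^{(0)}}$, which together produce the $m^{2}$ blow-up in the prefactor.
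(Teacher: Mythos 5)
Your proposal is correct and follows essentially the same route as the paper's proof: recognize the update as a single step of Theorem~\ref{thm:constrained_minimization}, apply Lemma~\ref{lem:Hessian_of_weight} to get the Hessian bounds, exploit the small radius of $Q$ to control the drift of the local metric $\mw_{(k)}^{-1}$, and convert $\ell_2$ contraction to the $\ell_\infty$ bound using the $\beta$--$(1+\beta)$ bounds on $\vg$. The only cosmetic difference is that you chain by converting $\mw_{(k)}^{-1}\to\mw_{(k+1)}^{-1}$ at each step whereas the paper converts to the fixed metric $\fmWeight^{-1}(\vs)$; both yield the claimed $\left(1-\frac{1-\alpha}{12}\right)$ per-step rate.
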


\begin{proof} Note that iterations of Theorem~\ref{thm:constrained_minimization}
can be rewritten as
\begin{eqnarray*}
\vWeight^{(k+1)} & = & \argmin_{\vWeight\in Q}\innerProduct{\left(\iMatrix-\mSigma_{\ma_{s}}\left(\left(\vWeight^{(k)}\right)^{\alpha}\right)\mWeight_{(k)}^{-1}-\beta\mw_{(k)}^{-1}\right)\onesVec}{\vWeight-\vWeight^{(k)}}+\normFull{\vWeight-\vWeight^{(k)}}_{\mw_{(k)}^{-1}}^{2}\\
 & = & \argmin_{\vWeight\in Q}\normFull{\vWeight-\frac{1}{2}\left(\vWeight^{(k)}+\vsigma_{\ma_{s}}\left(\left(\vWeight^{(k)}\right)^{\alpha}\right)+\beta\right)}_{\mw_{(k)}^{-1}}^{2}
\end{eqnarray*}
where the last line simply comes for expanding the quadratic function
and ignoring the constant term. Hence, we see that the iteration on
$\vWeight^{(k+1)}$ is in fact a gradient descent step. To apply Theorem
\ref{thm:constrained_minimization} we note that for any $\vWeight\in Q$
the definition of $Q$ and the fact that $\alpha\in(0,1)$ implies
that $(1-\frac{1}{24})\mWeight_{(0)}\specLeq\mWeight\specLeq(1+\frac{1}{24})\mWeight_{(0)}$.
Therefore Lemma~\ref{lem:Hessian_of_weight} shows that for all $\vWeight^{(k)}\in Q$,
\begin{equation}
\frac{1-\alpha}{2}\mWeight_{(k)}^{-1}\preceq\frac{2(1-\alpha)}{3}\mWeight_{(0)}^{-1}\preceq\hessian_{\vw\vw}\penalizedObjectiveWeight(\vSlack,\vWeight)\preceq\frac{3}{2}\mWeight_{(0)}^{-1}\preceq2\mWeight_{(k)}^{-1}.\label{eq:lip_grad_vw}
\end{equation}
where the left most and right most inequality comes from the fact
they lies in a small region $Q$. Hence, Theorem \ref{thm:constrained_minimization}
and inequality \eqref{eq:lip_grad_vw} shows that
\[
\normFull{\vWeight^{(k+1)}-\fvWeight(\vs)}_{\mw_{(k)}^{-1}}^{2}\leq\left(1-\frac{1-\alpha}{4}\right)\normFull{\vWeight^{(k)}-\fvWeight(\vs)}_{\mw_{(k)}^{-1}}^{2}.
\]
Since $\normFull{\mWeight_{(0)}^{-1}(\vg(\vs)-\vWeight^{(0)})}_{\infty}\leq\frac{1-\alpha}{24}$
and $\vw^{(k)}\in Q$, we know that $\mg(\vs)\specGeq\left(1-\frac{1-\alpha}{24}\right)^{2}\mw_{(k)}$.
Hence, we have
\begin{eqnarray*}
\norm{\vw^{(k)}-\fvWeight(\vs)}_{\fmWeight^{-1}(\vs)}^{2} & \leq & \left(1-\frac{1-\alpha}{24}\right)^{-2}\left(1-\frac{1-\alpha}{4}\right)\norm{\vw^{(k-1)}-\fvWeight(\vs)}_{\fmWeight^{-1}(\vs)}^{2}\\
 & \leq & \left(1-\frac{1-\alpha}{12}\right)\norm{\vw^{(k-1)}-\fvWeight(\vs)}_{\fmWeight^{-1}(\vs)}^{2}\\
 & \leq & \left(1-\frac{1-\alpha}{12}\right)^{k}\norm{\vw^{(0)}-\fvWeight(\vs)}_{\fmWeight^{-1}(\vs)}^{2}
\end{eqnarray*}
The result follows from the facts that
\[
\norm{\vw^{(0)}-\fvWeight(\vs)}_{\fmWeight^{-1}(\vs)}^{2}\leq m\norm{\fmWeightFull(\vs)}_{\infty}\norm{\fmWeightFull^{-1}(\vs)(\vg(\vs)-\vWeight^{(0)})}_{\infty}^{2}\leq\frac{m(1+\beta)}{\left(1-\frac{1-\alpha}{24}\right)^{2}}\norm{\mw_{(0)}^{-1}(\vg(\vs)-\vWeight^{(0)})}_{\infty}^{2}
\]
and lemma \ref{lem:weights_full:existence_and_size} that $\norm{\fmWeightFull^{-1}(\vs)(\vWeight^{(k)}-\vg(\vs))}_{\infty}^{2}\leq\beta^{-1}\norm{\vw^{(k)}-\fvWeight(\vs)}_{\fmWeight^{-1}(\vs)}^{2}$
where $\beta=\frac{\rank(\ma)}{2m}$.

\end{proof}

Unfortunately, we cannot use the previous lemma directly as computing
$\vsigma_{\ma_{s}}$ exactly is too expensive for our purposes. However,
in \cite{spielmanS08sparsRes} they showed that we can compute leverage
scores, $\vsigma_{\ma_{s}}$, approximately by solving only polylogarithmically
many regression problems (See \cite{mahoney11survey} for more details).
These results use the fact that the leverage scores of the the $i^{th}$
constraint, i.e. $[\vsigma_{\ma_{s}}]_{i}$ is the $\ellTwo$ length
of vector $\mProj_{\ma}(\vx)\indicVec i$ and that by the Johnson-Lindenstrauss
lemma these lengths are persevered up to multiplicative error if we
project these vectors onto certain random low dimensional subspace.
Consequently, to approximate the $\vsigma_{\ma_{s}}$ we first compute
the projected vectors and then use it to approximate $\vsigma_{\ma_{s}}$
and hence only need to solve $\otilde(1)$ regression problems. For
completeness, we provide the algorithm and theorem here:

\begin{center}
\begin{tabular}{|l|}
\hline 
\textbf{$\vsigma^{(apx)}=\code{computeLeverageScores}(\ma,\vx,\epsilon)$}\tabularnewline
\hline 
\hline 
1. Let $k=\ceil{24\log(m)/\epsilon^{2}}$. \tabularnewline
\hline 
2. Let $\vec{q}^{(j)}$ be $k$ random $\pm1/\sqrt{k}$ vectors of
length $m$.\tabularnewline
\hline 
3. Compute $\vp^{(j)}=\mx^{1/2}\ma(\ma^{T}\mx\ma)^{-1}\ma^{T}\mx^{1/2}\vec{q}^{(j)}$.\tabularnewline
\hline 
4. Return $\vsigma_{i}^{(apx)}=\sum_{j=1}^{k}\left(\vp_{i}^{(j)}\right)^{2}$.\tabularnewline
\hline 
\end{tabular}
\par\end{center}

\begin{theorem}[\cite{spielmanS08sparsRes}]\label{thm:weights_full:leverage-score_sampling}
For $0<\epsilon<1$ with probability at least $1-\frac{1}{m}$ the
algorithm\textbf{ $\code{computeLeverageScores}$} returns $\vsigma^{(apx)}$
such that for all $i\in[m]$ ,
\[
\left(1-\epsilon\right)\vsigma_{\ma}(\vx)_{i}\leq\vsigma_{i}^{(apx)}\leq\left(1+\epsilon\right)\vsigma_{\ma}(\vx)_{i}.
\]
by solving only $O(\epsilon^{-2}\cdot\log m)$ linear systems.\end{theorem}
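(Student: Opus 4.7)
The plan is to exploit the fact that $\mProj \defeq \mProj_{\ma}(\vx) = \mx^{1/2}\ma(\ma^T\mx\ma)^{-1}\ma^T\mx^{1/2}$ is an orthogonal projection matrix, so $\mProj^T\mProj = \mProj$. Consequently the $i$th leverage score admits the identity
\[
\vsigma_{\ma}(\vx)_i = [\mProj]_{ii} = \indicVec i^T \mProj^T \mProj \indicVec i = \norm{\mProj \indicVec i}_2^2,
\]
so each leverage score is the squared Euclidean length of a fixed vector $\vv_i \defeq \mProj \indicVec i \in \R^m$. The approximation problem therefore reduces to simultaneously preserving the norms of the $m$ vectors $\vv_1,\ldots,\vv_m$.

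Next I would reinterpret what the algorithm actually outputs. Letting $\mq \in \R^{k \times m}$ be the matrix whose $j$th row is $(\vq^{(j)})^T$, the vector $\vp^{(j)} = \mProj \vq^{(j)}$ satisfies $\vp_i^{(j)} = \langle \vq^{(j)}, \mProj \indicVec i\rangle = \langle \vq^{(j)}, \vv_i\rangle$, using symmetry of $\mProj$. Hence
\[
\vsigma_i^{(apx)} = \sum_{j=1}^{k} (\vp_i^{(j)})^2 = \norm{\mq \vv_i}_2^2.
\]
So what we need is precisely that the random $\pm 1/\sqrt{k}$ sign matrix $\mq$ approximately preserves norms of $m$ fixed vectors.

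The main technical step is the Johnson--Lindenstrauss guarantee for the Achlioptas (Rademacher) construction: for any fixed $\vv \in \R^m$, if the entries of $\mq$ are i.i.d.\ uniform on $\{\pm 1/\sqrt{k}\}$, then for any $\epsilon \in (0,1)$,
\[
\Pr\!\left[\,\bigl|\norm{\mq \vv}_2^2 - \norm{\vv}_2^2\bigr| > \epsilon \norm{\vv}_2^2 \,\right] \leq 2 e^{-k\epsilon^2/8}.
\]
This follows from a Hanson--Wright / Hoeffding style bound on $\norm{\mq \vv}_2^2 = \frac{1}{k}\sum_{j} \bigl(\sum_{l} \chi_{jl} v_l\bigr)^2$, where each inner Rademacher sum is sub-Gaussian with parameter $\norm{\vv}_2^2$ and the $k$ independent squared sums concentrate around their mean $\norm{\vv}_2^2$. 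With the choice $k = \lceil 24 \log(m)/\epsilon^2 \rceil$ the failure probability per vector is at most $2 e^{-3\log m} = 2/m^3$.

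Finally, I would union-bound over the $m$ vectors $\vv_1, \ldots, \vv_m$: the probability that the multiplicative relation $(1-\epsilon)\norm{\vv_i}_2^2 \le \norm{\mq \vv_i}_2^2 \le (1+\epsilon)\norm{\vv_i}_2^2$ fails for any $i \in [m]$ is at most $m \cdot 2/m^3 = 2/m^2 \le 1/m$. On the complementary event, $(1-\epsilon)\vsigma_{\ma}(\vx)_i \le \vsigma_i^{(apx)} \le (1+\epsilon)\vsigma_{\ma}(\vx)_i$ for every $i$. The running time claim is immediate: computing each $\vp^{(j)} = \mx^{1/2}\ma(\ma^T\mx\ma)^{-1}\ma^T\mx^{1/2}\vq^{(j)}$ requires a single linear system solve in $\ma^T\mx\ma$ plus $O(\nnz(\ma))$ additional work, and we do this $k = O(\epsilon^{-2}\log m)$ times. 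The only mildly delicate point is choosing the right constant in $k$ so that the sub-Gaussian tail bound, combined with the union bound, yields the advertised $1/m$ failure probability; everything else is bookkeeping.
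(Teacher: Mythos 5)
Your proposal is correct and is exactly the approach of the cited reference \cite{spielmanS08sparsRes}: rewrite $\vsigma_{\ma}(\vx)_i = \norm{\mProj \indicVec i}_2^2$ using $\mProj^T\mProj = \mProj$, observe that the algorithm's output is $\norm{\mq \vv_i}_2^2$ with $\vv_i = \mProj\indicVec i$ and $\mq$ the Rademacher sign matrix, invoke the Achlioptas/JL norm-preservation bound per vector, and union bound over $i \in [m]$; the paper itself does not reprove this and simply cites it. The only thing to tighten is the exact exponent in the sub-Gaussian tail (your stated $2e^{-k\epsilon^2/8}$ is slightly optimistic for $\epsilon$ near $1$; the standard bound gives an exponent of the form $k(\epsilon^2/2-\epsilon^3/3)/2 \ge k\epsilon^2/12$), but with $k=\lceil 24\log(m)/\epsilon^2\rceil$ the union-bounded failure probability is still $O(1/m^2) \le 1/m$ for $m$ at least a small constant, so the argument goes through as you acknowledge.
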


Now, we show that we can modify Lemma \ref{lem:weight_iterative}
to use $\code{computeLeverageScores}$ and we prove that this still
provides adequate error guarantees. Our weight computation and the
analysis is as follows.

\begin{center}
\begin{tabular}{|l|}
\hline 
\textbf{$\ensuremath{\vWeight=\code{computeWeight}(\vs,\vw^{(0)},K)}$}\tabularnewline
\hline 
\hline 
1. Let $c_{r}=2\log_{2}\left(\frac{2m}{\rank(\ma)}\right)$, $\alpha=1-\frac{1}{\log_{2}\left(\frac{2m}{\rank(\ma)}\right)}$,
$\beta=\frac{\rank(\ma)}{2m},$ $\epsilon=\frac{K}{48c_{r}\log\left(\frac{2m}{K}\right)}$.\tabularnewline
\hline 
2. $Q=\left\{ \vWeight\in\Rm~|~\normFull{\mWeight_{(0)}^{-1}(\vw-\vWeight^{(0)})}_{\infty}\leq\frac{1}{12c_{r}}\right\} $\tabularnewline
\hline 
3. For $j=1$ to $k$ where $k=\lceil12c_{r}\log\left(\frac{4m}{K}\right)\rceil$\tabularnewline
\hline 
3a. \textbf{$\ $$\vsigma^{(j)}=\code{computeLeverageScores}\left(\mSlack^{-1}\ma,\left(\vWeight^{(j)}\right)^{\alpha},\epsilon\right)$}\tabularnewline
\hline 
3b. \textbf{$\ $}$\vw^{(j)}=\argmin_{\vWeight\in Q}\normFull{\vWeight-\frac{1}{2}\left(\vWeight^{(j-1)}+\vsigma^{(j)}+\beta\onesVec\right)}_{\mw_{(j-1)}^{-1}}^{2}$\tabularnewline
\hline 
4. Output $\vWeight^{(j)}$.\tabularnewline
\hline 
\end{tabular}
\par\end{center}

Note that the convex set $Q$ is aligned with standard basis and hence
the step 3b can be computed by explicit formula \eqref{eq:explicit_formula_w}.

\begin{theorem}[Approximate Weight Computation]\label{thm:weights_full:approximate_weight}
Let $\vs\in\dSlack$, $\normInf{\mw_{(0)}^{-1}(\vg(\vSlack)-\vWeight^{(0)})}\leq\frac{1}{12c_{r}}$%
\footnote{Recall that $c_{r}=\frac{2}{1-\alpha}=2\log\left(\frac{2m}{\rank\left(\ma\right)}\right)\geq2$.%
}, and $K\in(0,1)$. The algorithm\textbf{ $\code{computeWeight}(\vs,\vw^{(0)},K)$}
returns $\vWeight$ such that
\[
\normInf{\fmWeightFull(\vSlack)^{-1}(\vg(\vSlack)-\vWeight)}\leq K
\]
with probability $\left(1-\frac{1}{m}\right)^{\lceil12c_{r}\log\left(\frac{4m}{K}\right)\rceil}$. 

The running time is dominated by the time needed to solve $O(c_{r}^{3}\log^{3}(m/K)\log(m)/K^{2})$
linear systems.\end{theorem}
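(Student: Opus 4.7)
The plan is to mimic the exact-weight analysis of Lemma~\ref{lem:weight_iterative} while tracking the error introduced by replacing the exact leverage scores $\vsigma_{\ma_{s}}((\vw^{(j-1)})^{\alpha})$ with the Johnson--Lindenstrauss estimates $\vsigma^{(j)}$ returned by $\code{computeLeverageScores}$. The underlying iteration is still projected gradient descent on $\penalizedObjectiveWeight(\vs,\cdot)$ with respect to the scaled norm $\|\cdot\|_{\mw_{(j-1)}^{-1}}^{2}$, but with a noisy gradient. By Theorem~\ref{thm:weights_full:leverage-score_sampling} the event that \emph{every} call returns an $\epsilon$-multiplicative approximation has probability at least $(1-1/m)^{k}$ with $k=\lceil 12c_{r}\log(4m/K)\rceil$; I condition on this event throughout.

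First, I would verify that the analysis of Lemma~\ref{lem:weight_iterative} still applies on the feasible set $Q$ chosen in $\code{computeWeight}$. The containment radius $1/(12c_{r})$ (rather than $(1-\alpha)/24$) is still small enough that Lemma~\ref{lem:Hessian_of_weight} yields the same $\tfrac{2(1-\alpha)}{3}\mw^{-1}\preceq\hessWW\penalizedObjectiveWeight\preceq\tfrac{3}{2}\mw^{-1}$ bounds, so the exact-gradient contraction factor $1-\tfrac{1-\alpha}{12}=1-\tfrac{1}{6c_{r}}$ is preserved. The key step is then to bound the perturbation: the iterate $\vw^{(j)}$ is obtained from the ``noisy gradient'' $(\onesVec-\mw_{(j-1)}^{-1}\vsigma^{(j)}-\beta\mw_{(j-1)}^{-1}\onesVec)$ instead of the true gradient using $\vsigma_{\ma_{s}}((\vw^{(j-1)})^{\alpha})$, and the two differ by at most $\epsilon\cdot\vsigma_{\ma_{s}}((\vw^{(j-1)})^{\alpha})$ coordinatewise. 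Since leverage scores are bounded by $1$, this translates (under the Lipschitz-type bound \eqref{eq:lip_grad_vw}) to a per-step additive error of $O(\epsilon)$ in the $\|\cdot\|_{\fmWeight^{-1}(\vs)\infty}$-type measure. Unrolling the standard ``contraction plus bounded perturbation'' recursion
\[
E_{j}\le\left(1-\tfrac{1}{6c_{r}}\right)E_{j-1}+C\epsilon
\]
gives $E_{k}\le(1-\tfrac{1}{6c_{r}})^{k}E_{0}+6c_{r}C\epsilon$.

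Substituting the initial error $E_{0}\le 2m$ (following the same bound as in the last displayed inequalities in the proof of Lemma~\ref{lem:weight_iterative}, using $\beta^{-1}\le 2m/\rank(\ma)$ and the starting hypothesis $\|\mw_{(0)}^{-1}(\vg(\vs)-\vw^{(0)})\|_{\infty}\le 1/(12c_{r})$), the choice $k=\lceil 12c_{r}\log(4m/K)\rceil$ drives the geometric term below $K/2$, and the choice $\epsilon=K/(48c_{r}\log(2m/K))$ keeps the perturbation term below $K/2$, which yields the target bound $\|\fmWeight(\vs)^{-1}(\vg(\vs)-\vw)\|_{\infty}\le K$. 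For the running time, Theorem~\ref{thm:weights_full:leverage-score_sampling} shows that each outer iteration uses $O(\epsilon^{-2}\log m)$ linear system solves, and there are $k=O(c_{r}\log(m/K))$ outer iterations, giving $O(c_{r}^{3}\log^{3}(m/K)\log(m)/K^{2})$ linear solves in total; the projection onto $Q$ in step 3b is coordinate-separable and so costs $O(m)$ per step, which is absorbed.

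The main obstacle is the perturbation analysis: I need to be careful that the iterates remain inside $Q$ despite the noise (so that the Hessian bounds of Lemma~\ref{lem:Hessian_of_weight} keep applying throughout the run), and that the constant $C$ in the recursion above is genuinely $O(1)$ after accounting for the $\mw_{(j-1)}^{-1}$ scaling that changes from iteration to iteration. Both points are handled by the fact that any $\vw\in Q$ satisfies $(1-1/(12c_{r}))\mw_{(0)}\preceq\mw\preceq(1+1/(12c_{r}))\mw_{(0)}$, so all the scaled norms $\|\cdot\|_{\mw_{(j)}^{-1}}$ are equivalent up to universal multiplicative constants, reducing the analysis to the fixed-norm case treated in Theorem~\ref{thm:constrained_minimization}.
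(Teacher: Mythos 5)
Your plan runs into a genuine norm mismatch that the paper's argument is specifically designed to sidestep. The contraction factor $1-\tfrac{1-\alpha}{12}$ that you inherit from Theorem~\ref{thm:constrained_minimization} / Lemma~\ref{lem:weight_iterative} is a contraction in the weighted $\ell_2$ norm $\norm{\cdot}_{\fmWeight^{-1}(\vs)}$, whereas your per-step perturbation bound ``$O(\epsilon)$ since leverage scores are bounded by $1$'' is a bound in the weighted $\ell_\infty$ norm $\normInf{\fmWeight(\vs)^{-1}\,\cdot}$, and the target inequality is also in the weighted $\ell_\infty$ norm. Projected gradient descent does not contract in $\ell_\infty$, so your recursion $E_j\le(1-\tfrac{1}{6c_r})E_{j-1}+C\epsilon$ cannot be run coherently in a single norm. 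If you run it in $\norm{\cdot}_{\fmWeight^{-1}(\vs)}$, the per-step perturbation becomes $O(\epsilon\sqrt{\rank(\ma)})$ (since $\sum_i\sigma_i=\rank(\ma)$), and after converting the accumulated error $O(c_r\epsilon\sqrt{\rank(\ma)})$ back to $\normInf{\fmWeight(\vs)^{-1}\,\cdot}$ you pick up an extra $\beta^{-1/2}=\sqrt{2m/\rank(\ma)}$, giving a total of $O(c_r\epsilon\sqrt{m})$. Making this at most $K$ would force $\epsilon\lesssim K/(c_r\sqrt{m})$, i.e. $\tilde{\Theta}(m)$ linear solves per call to $\code{computeLeverageScores}$, not the $\tilde{O}(1/K^2)$ solves the theorem claims. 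The choice $\epsilon=K/(48c_r\log(2m/K))$ in the algorithm is simply not small enough for the generic ``contraction plus bounded perturbation'' framework to close.

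The paper avoids this entirely by \emph{coupling} the noisy trajectory to an idealized trajectory $\vv^{(j)}$ (the same algorithm run with exact leverage scores) and tracking a coordinate-wise \emph{multiplicative} error $(1-\epsilon)^j\vv_i^{(j)}\le\vw_i^{(j)}\le(1+\epsilon)^j\vv_i^{(j)}$. The explicit formula \eqref{eq:explicit_formula_w} shows the update is a coordinate-wise median (clipping), which preserves multiplicative closeness coordinate by coordinate with no dimensional loss. After $k$ steps with $k\epsilon\le 1/8$ the drift is $\normInf{\mv_{(k)}^{-1}(\vw^{(k)}-\vv^{(k)})}\le\tfrac{9}{8}\epsilon k$, and Lemma~\ref{lem:weight_iterative} is invoked only for the ideal sequence; a final triangle inequality gives the claim. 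The multiplicative coordinate-wise coupling is the key idea your proposal is missing. (A minor side note: $\tfrac{1}{12c_r}$ and $\tfrac{1-\alpha}{24}$ are the same quantity because $c_r=\tfrac{2}{1-\alpha}$, so the parenthetical contrast you draw there is vacuous.)
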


\begin{proof} Consider an execution of \textbf{$\code{computeWeight}(\vs,\vw^{(0)},K$)}
where each $\code{computeLeverageScores}$ computes $\vsigma_{\ma_{s}}\left((\vw^{(j)})^{\alpha}\right)$
exactly, i.e. $\vsigma^{(j)}=\vsigma_{\ma_{s}}\left((\vw^{(j)})^{\alpha}\right)$,
and let $\vv^{(j)}$denote the $\vw^{(j)}$ computed during this idealized
execution of $\code{computeWeight}$. 

Now suppose that for all $i\in[m]$ we have
\begin{equation}
(1-\epsilon)^{M}\vv_{i}^{(j)}\leq\vWeight_{i}^{(j)}\leq(1+\epsilon)^{M}\vv_{i}^{(j)}\label{eq:weights_full:apx_comp_1}
\end{equation}
 for some $M\geq0$ and $j\in[k-1]$. Since the objective function
and the constraints for step 3b. are axis-aligned we can compute $\vw^{(j)}$
coordinate-wise and we see that
\begin{equation}
\vWeight^{(j+1)}=\code{median}\left(\left(1-\frac{1}{12c_{r}}\right)\vWeight^{(0)},\vWeight^{(j)}+\frac{1}{2}\left(\vsigma_{\ma_{s}}\left(\left(\vWeight^{(j)}\right)^{\alpha}\right)+\beta\right),\left(1+\frac{1}{12c_{r}}\right)\vWeight^{(0)}\right)\label{eq:explicit_formula_w}
\end{equation}
where $[\code{median}\left(\vx,\vy,\vz\right)]_{i}$ is equal to the
median of $x_{i}$, $y_{i}$ and $z_{i}$ for all $i\in[m]$. By \eqref{eq:weights_full:apx_comp_1},
\eqref{eq:explicit_formula_w}, and the fact that $\left(1-\epsilon\right)\sigma_{\ma_{s}}\left(\left(\vw^{(j+1)}\right)^{\alpha}\right)_{i}\leq\vsigma_{i}^{(j+1)}\leq\left(1+\epsilon\right)\sigma_{\ma_{s}}\left(\left(\vw^{(j+1)}\right)^{\alpha}\right)_{i}$
for all $i\in[m]$, we have that
\[
(1-\epsilon)^{M+1}\vv_{i}^{(j+1)}\leq\vWeight_{i}^{(j+1)}\leq(1+\epsilon)^{M+1}\vv_{i}^{(j+1)}.
\]
Since $\vv^{(0)}=\vWeight^{(0)}$ and since $j\in[k-1]$ was arbitrary
we can apply induction and we have that for all $j\in[k]$
\[
(1-\epsilon)^{j}\vv_{i}^{(j)}\leq\vWeight_{i}^{(j)}\leq(1+\epsilon)^{j}\vv_{i}^{(j)}.
\]
Note that $k\epsilon\leq\frac{1}{8}$ and therefore by Taylor series
expansion we have $\norm{\mv_{(k)}^{-1}\left(\vw^{(k)}-\vv^{(k)}\right)}_{\infty}\leq\frac{9}{8}\epsilon k$.
Furthermore since $\vv^{(k)}\in Q$ we know that $\mg(\vs)\specGeq\left(1-\frac{1}{12\cWeightCons}\right)^{2}\mv_{(k)}$.
Putting these together, applying Lemma~\ref{lem:weight_iterative},
and recalling that $k=\lceil12c_{r}\log\left(\frac{4m}{K}\right)\rceil$
we have
\begin{align*}
\normInf{\fmWeightFull(\vSlack)^{-1}(\vg(\vSlack)-\vWeight^{(k)})} & \leq\normInf{\fmWeightFull(\vSlack)^{-1}(\vg(\vSlack)-\vv^{(k)})}+\normInf{\fmWeightFull(\vSlack)^{-1}\left(\vv^{(k)}-\vWeight^{(k)}\right)}\\
 & \leq2m\left(1-\frac{1}{6c_{r}}\right)^{\frac{k}{2}}+\left(1-\frac{1}{12\cWeightCons}\right)^{-2}\normInf{\mv_{(k)}^{-1}(\vv^{(k)}-\vWeight^{(k)})}\\
 & \leq2m\cdot\exp\left(-\frac{k}{12\cWeightCons}\right)+1.5k\epsilon\\
 & \leq\frac{K}{2}+1.5\epsilon\lceil12c_{r}\log\left(\frac{4m}{K}\right)\rceil\leq K
\end{align*}
\end{proof}

Finally, we show how to compute an initial weight without having an
approximate weight to help the computation. The algorithm $\code{computeInitialWeight}(\vs,K)$
computes an initial weight in $\tilde{O}\left(\sqrt{\rank\ma}\right)$
iterations of $\code{computeWeight}$ by computing $\vg$ for a large
enough value of $\beta$ and then decreasing $\beta$ gradually.

\begin{center}
\begin{tabular}{|l|}
\hline 
\textbf{$\ensuremath{\vWeight=\code{computeInitialWeight}(\vs,K)}$}\tabularnewline
\hline 
\hline 
1. Let $c_{r}=2\log_{2}\left(\frac{2m}{\rank(\ma)}\right)$, $\alpha=1-\frac{1}{\log_{2}\left(\frac{2m}{\rank(\ma)}\right)}$,
$\beta=12c_{r}$ and $\vWeight=\beta\onesVec$.\tabularnewline
\hline 
2. Loop until $\beta=\frac{\rank(\ma)}{2m}$\tabularnewline
\hline 
2a. \textbf{$\ $$\vWeight=\code{computeWeight}(\vs,\vWeight,\frac{1}{50c_{r}})$.}\tabularnewline
\hline 
2b. \textbf{$\ $}$\beta=\max\left\{ \left(1-\frac{(1-\alpha)^{3/2}}{1000c_{r}^{2}\sqrt{\rank\left(\ma\right)}}\right)\beta,\frac{\rank(\ma)}{2m}\right\} $.\tabularnewline
\hline 
3. Output $\code{computeWeight}(\vs,\vWeight,K)$.\tabularnewline
\hline 
\end{tabular}
\par\end{center}

\begin{theorem}[Computating Initial Weights]\label{thm:weights_full:initialapproximate_weight}
For $\vs\in\dSlack$ and $K>0$, with constant probability the algorithm
$\code{computeInitialWeight}(\vs,K)$ returns $\vw\in\dWeights$ such
that
\[
\normInf{\fmWeightFull(\vSlack)^{-1}(\vg(\vSlack)-\vWeight)}\leq K.
\]
The total running time of $\code{computeInitialWeight}(\vs,K)$ is
dominated by the time needed to solve $\tilde{O}(\sqrt{\rank\left(\ma\right)}\log(1/K)/K^{2})$
linear systems.\end{theorem}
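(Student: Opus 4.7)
The plan is to maintain, through every iteration of the outer loop, the invariant that the current weight $\vw$ satisfies the precondition of $\code{computeWeight}$ with respect to the current parameter $\beta$, namely $\normInf{\mw^{-1}(\vg_\beta(\vs)-\vw)} \leq 1/(12\,c_r)$. Here I write $\vg_\beta(\vs)$ for the weight function defined by \eqref{eq:sec:weights_full:weight_function} when $\beta$ is viewed as a varying parameter (with $\alpha$ and $\vs$ fixed). At $\beta_0 = 12\,c_r$ and $\vw = 12\,c_r\cdot\onesVec$, Lemma~\ref{lem:weights_full:existence_and_size} gives $[\vg_{\beta_0}(\vs)]_i = \sigma_i + \beta_0 \in [\beta_0,\beta_0+1]$, establishing the base case with relative error at most $1/\beta_0 = 1/(12\,c_r)$. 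For the inductive step, Theorem~\ref{thm:weights_full:approximate_weight} ensures that the refined $\vw$ returned by step 3a satisfies $\normInf{(\mg_\beta(\vs))^{-1}(\vg_\beta(\vs)-\vw)} \leq 1/(50\,c_r)$; after step 3b replaces $\beta$ by $\beta' = (1-\epsilon)\beta$, the triangle inequality together with Lemma~\ref{lem:appendix:log_helper} reduce the task of restoring the invariant to proving a stability bound $\normInf{\log(\vg_{\beta'}(\vs)/\vg_\beta(\vs))} = O(1/c_r^2)$, comfortably filling the slack between $1/(50\,c_r)$ and $1/(12\,c_r)$.

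The main technical obstacle is this stability bound. Applying the implicit function theorem to the optimality condition $\grad_{\vw}\penalizedObjectiveWeight(\vs,\vg_\beta(\vs)) = \vzero$ and using Lemma~\ref{lem:weights_full:derivatives_of_objective}, I obtain
\[
\partial_\beta \vg_\beta(\vs) = \mg_\beta(\vs)\cdot A^{-1}\onesVec,\qquad A \defeq \mSigma + \beta\iMatrix - \alpha\mLambda = \mg_\beta(\vs) - \alpha\mLambda,
\]
so the logarithmic derivative is $\partial_\beta\log[g_\beta(\vs)]_i = [A^{-1}\onesVec]_i$. Writing $\vz = A^{-1}\onesVec$, the manipulations from the proof of Lemma~\ref{lem:weights_full:consistency} apply almost verbatim: the decomposition $\mLambda = \mSigma - \shurSquared{\mProj}$ reduces $A\vz = \onesVec$ to $((1-\alpha)\sigma_i+\beta)z_i = 1 - \alpha(\shurSquared{\mProj}\,\vz)_i$, Lemma~\ref{lem:tool:projection_matrices} gives $|(\shurSquared{\mProj}\,\vz)_i|\leq\sigma_i\norm{\vz}_{\mSigma}$, and combining $\vz^T A\vz = \onesVec^T\vz$ with $A\succeq(1-\alpha)\mg_\beta$ and the identity $\sum_i\sigma_i = \rank(\ma)$ yields a uniform estimate $\beta\cdot|z_i|\lesssim\sqrt{\rank(\ma)}/(1-\alpha)^{3/2}$ valid over the entire schedule $\beta\in[\rank(\ma)/(2m),\,12\,c_r]$. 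Integrating this logarithmic derivative over a multiplicative step of length $\epsilon=(1-\alpha)^{3/2}/(1000\,c_r^2\sqrt{\rank(\ma)})$ then produces an accumulated shift of $O(1/c_r^2)$, exactly the bound required.

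Finally, I count iterations and work. The outer loop runs $O(\log(24\,c_r m/\rank(\ma))/\epsilon) = \otilde(\sqrt{\rank(\ma)})$ times once the $\mathrm{polylog}(m)$ factors in $c_r$ and $(1-\alpha)^{-1}$ are absorbed into $\otilde(\cdot)$. By Theorem~\ref{thm:weights_full:approximate_weight}, each interior call to $\code{computeWeight}(\vs,\cdot,1/(50\,c_r))$ uses $\otilde(1)$ linear-system solves, contributing $\otilde(\sqrt{\rank(\ma)})$ in total, while the final call at accuracy $K$ contributes another $\otilde(\log(1/K)/K^2)$ solves, matching the claimed aggregate $\otilde(\sqrt{\rank(\ma)}\log(1/K)/K^2)$. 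Each randomized call succeeds with probability at least $1 - \mathrm{polylog}(m)/m$, so a union bound over the $\otilde(\sqrt{\rank(\ma)})$ calls yields overall failure probability $\otilde(\sqrt{\rank(\ma)}/m) = o(1)$, giving the claimed constant (in fact high-) success probability.
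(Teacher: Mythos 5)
Your proposal follows the same strategy as the paper's own proof of Theorem~\ref{thm:weights_full:initialapproximate_weight}: maintain the invariant $\normInf{\mw^{-1}(\vg_\beta(\vs)-\vw)}\leq\frac{1}{12\cWeightCons}$, establish the base case at $\beta_0=12\cWeightCons$ via $g_i=\sigma_i+\beta$, use Theorem~\ref{thm:weights_full:approximate_weight} to tighten the error to $\frac{1}{50\cWeightCons}$, compute $\partial_\beta\vg_\beta=\mg_\beta(\mg_\beta-\alpha\mLambda_g)^{-1}\onesVec$ via the implicit function theorem, bound $\normInf{\vz}$ for $\vz=(\mg_\beta-\alpha\mLambda_g)^{-1}\onesVec$ by reusing the decomposition from Lemma~\ref{lem:weights_full:consistency} together with Lemma~\ref{lem:tool:projection_matrices}, and integrate over the multiplicative $\beta$-step. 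This is essentially the paper's argument; the only cosmetic difference is that you phrase the logarithmic-derivative bound uniformly as $\beta\abs{z_i}\lesssim\sqrt{\rank(\ma)}/(1-\alpha)^{3/2}$ while the paper keeps it in the form $\normInf{\vz}\leq\max\bigl(\sqrt{\rank(\ma)}/(1-\alpha)^{3/2},\,1/\beta\bigr)$; both rest on the same intermediate estimate of $\norm{\vz}_{\mSigma}$ and both yield a per-step shift small enough (indeed $O(1/\cWeightCons^2)$, comfortably below the available slack of about $1/(16\cWeightCons)$) to restore the invariant. Your iteration count, per-call system-solve budget, and probability union bound also match the paper.
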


\begin{proof} Fix $\vs\in\dSlack$ and let $\ma_{s}\defeq\ms^{-1}\ma$.
For all $\beta>0$ let $\vg:\rPos\rightarrow\Rm$ be defined by%
\footnote{Note that early we assumed that $\beta<1$ and here we use much larger
values of $\beta$. However, this bound on $\beta$ was primarily
to assist in bounding $c_{1}$ and does not affect this proof.%
} 
\[
\vg(\beta)\defeq\argmin_{\vWeight\in\rPos^{m}}\onesVec^{T}\vWeight-\frac{1}{\alpha}\log\det(\ma_{s}^{T}\mWeight^{\alpha}\ma_{s})-\beta\sum_{i\in[m]}\log\weight_{i}
\]
The algorithm $\code{computeInitialWeight}(\vs,K)$ maintains the
invariant that before step 2a
\begin{equation}
\normInf{\mw^{-1}(\vg(\beta)-\vWeight)}\leq\frac{1}{12c_{r}}.\label{eq:initialweightinvariant}
\end{equation}
Since $\vg(\beta)=\vsigma(\beta)+\beta$ where $\vsigma(\beta)\defeq\vLever_{\ma_{s}}(\vg^{\alpha}(\beta))$,
we have that for all $i\in[m]$
\[
\beta\leq g(\beta)_{i}\leq1+\beta.
\]
Therefore, in the step 1, the initial weight, $\vw=\beta\onesVec\in\dWeights$
satisfies the invariant \eqref{eq:initialweightinvariant}. After
step 2a, by Theorem~\ref{thm:weights_full:approximate_weight} we
have
\begin{equation}
\normInf{\mg(\beta)^{-1}(\vg(\beta)-\vWeight)}\leq\frac{1}{50c_{r}}.\label{eq:after_step3a}
\end{equation}
Therefore, it suffices to prove that $\vg(\beta)$ is close to $\vg(\beta-\theta)$
for small $\theta$. 

To bound how much $\vg(\beta)$ changes for small changes in $\beta$
we proceed similarly to Lemma~\ref{lem:weights_full:existence_and_size}.
First by the implicit function theorem and direct calculation we know
that
\begin{equation}
\frac{d\vg}{d\beta}=-\left(\jacobian_{\vWeight}(\grad_{\vw}\penalizedObjectiveWeight(\vSlack,\vWeight))\right)^{-1}\left(\jacobian_{\beta}(\grad_{\vw}\penalizedObjectiveWeight(\vSlack,\vWeight))\right)=\mg(\beta)\left(\mg(\beta)-\alpha\mLambda_{g}\right)^{-1}\onesVec\label{eq:weights_full:1}
\end{equation}
where $\mLapProj_{g}\defeq\mLapProj_{\ma_{s}}(\mg(\beta)^{\alpha}\onesVec)$.
Next to estimate how fast $\vg$ can change as a function of $\beta$
we estimate \eqref{eq:weights_full:1} in a similar manner to Lemma~\ref{lem:weights_full:consistency}.
Note that
\[
\mg(\beta)-\alpha\mLambda_{g}\succeq(1-\alpha)\mg(\beta)\succeq(1-\alpha)\mSigma(\beta)
\]
where $\mSigma(\beta)\defeq\mSigma_{\ma_{s}}(\vg^{\alpha}(\beta))$.
Consequently,
\begin{eqnarray}
\normFull{\mg(\beta)^{-1}\frac{d\vg}{d\beta}}_{\mSigma(\beta)}^{2} & \leq & \normFull{\left(\mg(\beta)-\alpha\mLambda_{g}\right)^{-1}\onesVec}_{\mSigma(\beta)}^{2}\nonumber \\
 & \leq & \frac{1}{1-\alpha}\normFull{\onesVec}_{\mSigma(\beta)}^{2}=\frac{\rank\left(\ma\right)}{1-\alpha}.\label{eq:g_beta_2_norm_est}
\end{eqnarray}
Using this estimate of how much $\vg$ changes in the $\mSigma(\beta)$
norm, we now estimate how much $\vg$ changes in the $\ellInf$ norm.
Let $\vz\defeq\left(\mg(\beta)-\alpha\mLambda_{g}\right)^{-1}\onesVec$.
Then, we have
\begin{eqnarray*}
\left(\left(1-\alpha\right)\vsigma_{i}(\beta)+\beta\right)\left|\vz_{i}\right| & \leq & \left|\alpha\indicVec i^{T}\mProj^{(2)}\vz\right|+1\\
 & \leq & \alpha\vsigma_{i}(\beta)\norm{\vz}_{\mSigma(\beta)}+1.
\end{eqnarray*}
Using \eqref{eq:g_beta_2_norm_est} and $\alpha<1$, we have
\[
\normFull{\frac{d\ln\vg}{d\beta}}_{\infty}=\norm{\vz}_{\infty}\leq\max\left(\frac{\alpha\norm{\vz}_{\mSigma(\beta)}}{1-\alpha},\frac{1}{\beta}\right)\leq\max\left(\frac{\sqrt{\rank\left(\ma\right)}}{\left(1-\alpha\right)^{3/2}},\frac{1}{\beta}\right).
\]
Using \eqref{eq:after_step3a}, integrating, and applying Lemma~\ref{lem:appendix:log_helper}
we have that
\[
\normInf{\mg(\beta-\theta)^{-1}(\vg(\beta-\theta)-\vWeight)}\leq\frac{1}{12c_{r}}
\]
for $\theta\leq\frac{(1-\alpha)^{3/2}\beta}{1000c_{r}^{2}\sqrt{\rank\left(\ma\right)}}$.
Hence, this proves that step 2a preserves the invariant \eqref{eq:initialweightinvariant}
at step 2a. Hence, the algorithm satisfies the assumptions needed
for Theorem~\ref{thm:weights_full:approximate_weight} throughout
and $\code{computeWeight}$ ins step 2a works as desired. Since each
iteration $\beta$ decreased by $\tilde{O}\left(1/\sqrt{\rank\left(\ma\right)}\right)$
portion and the initial $\beta$ is $\tilde{O}(1)$ we see that the
algorithm requires only $\tilde{O}\left(\sqrt{\rank\left(\ma\right)}\right)$
iterations. Using Theorem~\ref{thm:weights_full:approximate_weight}
to bound the total number of linear systems solved then yields the
result.

\end{proof}

\section{Approximate Weights Suffice}

\label{sec:approx_path}

In the previous sections, we analyzed a weighted path following strategy
assuming oracle access to a weight function we could compute exactly
and showed how to compute a weight function approximately. In this
section we show why it suffices to compute multiplicative approximations
to the weight function. Ultimately, having access to this ``noisy
oracle'' will only cause us to lose polylogarithmic factors in the
running time as compared to the ``exact oracle'' case.

This is a non-trivial statement as the weight function serves several
roles in our weighted path following scheme. First, it ensures a good
ratio between total weight $\cWeightSize$ and slack sensitivity $\cWeightStab$.
This allows us to take make large increases to the path parameter
$t$ after which we can improve centrality. Second, the weight function
is consistent and does not differ too much from the $\cWeightCons$-update
step direction. This allows us to change the weights between $\cWeightCons$-update
steps without moving too far away from the central path. Given a multiplicative
approximation to the weight function, this first property is preserved
up to an approximation constant however this second property is not. 

To effectively use multiplicative approximations to the weight function
we cannot simply use the weight function directly. Rather we need
to smooth out changes to the weights by using some slowly changing
approximation to the weight function. In this section we show how
this can be achieved in general. First, in Section \ref{sec:approx_path:chasing_zero},
we present the smoothing problem in a general form that we call the
\emph{chasing 0 game} and we provide an effective strategy for playing
this game. Then in Section~\ref{sec:approx_path:centering} we show
how to use this strategy to produce a weighted path following scheme
that uses multiplicative approximations to the weight function.

\subsection{The Chasing 0 Game}

\label{sec:approx_path:chasing_zero}

\global\long\def\trInit{\vx^{(0)}}
 \global\long\def\trCurr{\vx^{(k)}}
 \global\long\def\trNext{\vx^{(k + 1)}}
 \global\long\def\trAdve{\vy^{(k)}}
 \global\long\def\trAfterAdve{\vy}
 \global\long\def\trMeas{\vz^{(k)}}
 \global\long\def\trAfterMeas{\vz}
 \global\long\def\trGradCurr{\grad\Phi_{\mu}(\trCurr)}
 \global\long\def\trGradAdve{\grad\Phi_{\mu}(\trAdve)}
 \global\long\def\trGradMeas{\grad\Phi_{\mu}(\trMeas)}
 \global\long\def\trGradAfterAdve{\grad\Phi_{\mu}(\trAfterAdve)}
 \global\long\def\trGradAfterMeas{\grad\Phi_{\mu}(\trAfterMeas)}
 \global\long\def\trSetCurr{U^{(k)}}

The \emph{chasing 0 game} is as follows. There is a player, an adversary,
and a point $\vx\in\Rm$. The goal of the player is to keep the point
close to $\vzero$ in $\ellInf$ norm and the goal of the adversary
tries to move $\vx$ away from $\vzero\in\Rm.$ The game proceeds
for an infinite number of iterations where in each iteration the adversary
moves the current point $\trCurr\in\Rm$ to some new point $\trAdve\in\Rm$
and the player needs to respond. The player does not know $\trCurr$,
$\trAdve$, or the move of the adversary. All the player knows is
that the adversary moved the point within some convex set $\trSetCurr$
and the player knows some $\trMeas\in\Rn$ that is close to $\trAdve$
in $\ellInf$ norm.%
\footnote{To apply this result to weighted central path following we let the
current points $\trCurr$ denote the difference between $\log(\vWeight)$
and $\log(\vg\left(\vx\right))$. The sets $\trSetCurr$ are then
related to the $\cWeightCons$-update steps and the steps of the player
are related to the weights the path following strategy picks.%
} With this information the player is allowed to move the point a little
more than the adversary. Formally, the player is allowed to set the
next point to $\trNext\in\Rm$ such that $\vDelta^{(k)}\defeq\trNext-\trAdve\in(1+\epsilon)U$
for some fixed $\epsilon>0$. 

The question we would like to address is, how close the player can
keep $\trNext$ to $\vzero$ in $\ellInf$ norm? In particular, we
would like an efficient strategy for computing $\vDelta^{(k)}$ such
that $\normInf{\trCurr}$ is bounded for all $k\geq0$. 

\begin{center}
\begin{tabular}{|l|}
\hline 
Chasing 0 Game:\tabularnewline
\hline 
\hline 
1. Given $R>0,\epsilon>0,\trInit\in\Rm$.\tabularnewline
\hline 
2. For $k=1,2,\cdots$\tabularnewline
\hline 
2a. \textbf{$\ $}The adversary announces symmetric convex set $U^{(k)}\subseteq\Rn$
and $\vu^{(k)}\in U^{(k)}$.\tabularnewline
\hline 
2b. \textbf{$\ $}The adversary sets $\trAdve:=\trCurr+\vu^{(k)}$.\tabularnewline
\hline 
2c. \textbf{$\ $}The adversary announces $\trMeas$ such that $\norm{\trMeas-\trAdve}_{\infty}\leq R$.\tabularnewline
\hline 
2d. \textbf{$\ $}The player chooses $\vDelta^{(k)}\in\left(1+\epsilon\right)\trSetCurr$.\tabularnewline
\hline 
2e. \textbf{$\ $}The player sets $\trNext=\trAdve+\vDelta^{(k)}.$\tabularnewline
\hline 
\end{tabular}
\par\end{center}

We show that assuming that the $\trSetCurr$ are sufficiently bounded
then there is strategy that the player can follow to ensure that that
$\normInf{\trCurr}$ is never too large. Our strategy simply consists
of taking ``gradient steps'' using the following potential function.

\begin{definition}\label{def:potential_function_tracing_zero} For
any $\mu\geq0$ let $p_{\mu}:\R\rightarrow\R$ and $\Phi_{\mu}:\Rm\rightarrow\R$
be given by
\[
\forall x\in\R\enspace:\enspace p_{\mu}(x)\defeq e^{\mu x}+e^{-\mu x}\enspace\text{ and }\enspace\Phi_{\mu}(\vx)\defeq\sum_{i\in[m]}p_{\mu}(x_{i}).
\]
\end{definition} In other words, for all $k$ we simply set $\vDelta^{(k)}$
to be the vector in $(1+\epsilon)\trSetCurr$ that best minimizes
the potential function of the observed position, i.e. $\Phi_{\mu}(\trMeas)$
for an appropriate choice of $\mu$. In the following theorem we show
that this suffices to keep $\Phi_{\mu}(\trCurr)$ small and that small
$\Phi_{\mu}(\trCurr)$ implies small $\normInf{\trCurr}$ and hence
has the desired properties.

\begin{theorem} \label{thm:approx_path:chasing_zero} Suppose that
each $\trSetCurr$ is a symmetric convex set that contains an $\ellInf$
ball of radius $r_{k}$ and is contained in a $\ellInf$ ball of radius
$R_{k}\leq R$.%
\footnote{Formally we assume that if $\vx\in U^{(k)}$ then $\normInf{\vx}\leq R$
and we assume that if $\normInf{\vx}\leq r$ then $\vx\in\trSetCurr$.%
} Let $0<\epsilon<\frac{1}{5}$ and consider the strategy
\[
\vDelta^{(k)}=\left(1+\epsilon\right)\argmin_{\vDelta\in U^{(k)}}\left\langle \nabla\Phi_{\mu}(\vz^{(k)}),\vDelta\right\rangle \enspace\text{ where }\enspace\mu=\frac{\epsilon}{12R}.
\]
Let $\tau\defeq\max_{k}\frac{R_{k}}{r_{k}}$ and suppose $\Phi_{\mu}(\trInit)\leq\frac{12m\tau}{\epsilon}$
(or more specifically $\normInf{\trInit}\leq\frac{12R}{\epsilon}\log\left(\frac{6\tau}{\epsilon}\right)$
) then
\[
\forall k\geq0\enspace:\enspace\Phi_{\mu}(\trNext)\leq\left(1-\frac{\epsilon^{2}r_{k}}{24R}\right)\Phi_{\mu}(\trCurr)+\epsilon m\frac{R_{k}}{2R}\leq\frac{12m\tau}{\epsilon}.
\]
In particular, we have $\norm{\trCurr}_{\infty}\leq\frac{12R}{\epsilon}\log\left(\frac{12m\tau}{\epsilon}\right)$.
\end{theorem}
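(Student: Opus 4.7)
The plan is to establish the claimed one-step inequality via a second-order Taylor analysis of the separable potential $\Phi_\mu$, and then read off the invariant $\Phi_\mu(\trCurr) \leq 12m\tau/\epsilon$ and the $\ell_\infty$ bound by induction together with an elementary estimate on $p_\mu$.

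First I would use symmetry to obtain the first-order descent at $\trMeas$. Let $\vDelta^{\ast} = \argmin_{\vDelta \in \trSetCurr}\langle\trGradMeas, \vDelta\rangle$ and $M \defeq -\langle\trGradMeas, \vDelta^{\ast}\rangle \geq 0$, so $\vDelta^{(k)} = (1+\epsilon)\vDelta^{\ast}$. Since $\trSetCurr$ is symmetric and $\vu^{(k)} \in \trSetCurr$, we have $-\vu^{(k)} \in \trSetCurr$, and the minimality of $\vDelta^{\ast}$ gives $\langle \trGradMeas, \vDelta^{\ast}\rangle \leq \langle\trGradMeas, -\vu^{(k)}\rangle$, which rearranges to $\langle \trGradMeas, \vu^{(k)}+\vDelta^{(k)}\rangle \leq \epsilon\langle\trGradMeas, \vDelta^{\ast}\rangle = -\epsilon M$. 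Because $\trSetCurr$ contains the $\ell_\infty$ ball of radius $r_k$, testing against $-r_k\,\sign(\trGradMeas)$ yields $M \geq r_k \normOne{\trGradMeas}$, hence $\langle\trGradMeas, \vu^{(k)}+\vDelta^{(k)}\rangle \leq -\epsilon r_k \normOne{\trGradMeas}$.

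Next I would Taylor expand $\Phi_\mu$ coordinate-wise around $\trCurr$. Using $p_\mu''(t) = \mu^2 p_\mu(t)$ and $p_\mu(s) \leq e^{\mu\abs{s-t}}p_\mu(t)$, one obtains
\[
\Phi_\mu(\trNext) \leq \Phi_\mu(\trCurr) + \langle\nabla\Phi_\mu(\trCurr), \vu^{(k)}+\vDelta^{(k)}\rangle + \tfrac{\mu^2 e^{\mu D} D^2}{2}\Phi_\mu(\trCurr),
\]
where $D \defeq \normInf{\vu^{(k)}+\vDelta^{(k)}} \leq (2+\epsilon) R_k$. I would then swap $\nabla\Phi_\mu(\trCurr)$ for $\trGradMeas$ in the linear term, controlling the error by integrating the diagonal Hessian over $[\trCurr, \trMeas]$ (with $\normInf{\trCurr - \trMeas} \leq R + R_k \leq 2R$) to get $\normOne{\nabla\Phi_\mu(\trCurr) - \trGradMeas} \leq 2\mu^2 R\, e^{2\mu R}\,\Phi_\mu(\trCurr)$. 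The identity $\abs{p_\mu'(t)} = \mu(p_\mu(t) - 2e^{-\mu\abs{t}})$ then gives the $\Phi_\mu$-level lower bound $\normOne{\trGradMeas} \geq \mu(\Phi_\mu(\trMeas) - 2m) \geq \mu(e^{-2\mu R}\Phi_\mu(\trCurr) - 2m)$.

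With $\mu = \epsilon/(12R)$ every exponential factor is $1+O(\epsilon)$, and assembling the estimates above yields an inequality of the form
\[
\Phi_\mu(\trNext) \leq \bigl(1 - c_1\tfrac{\epsilon^2 r_k}{R} + c_2\tfrac{\epsilon^2 (R_k^2 + R R_k)}{R^2}\bigr)\Phi_\mu(\trCurr) + O\bigl(\tfrac{\epsilon^2 r_k m}{R}\bigr).
\]
The residual positive multiplicative term (the Taylor remainder and the gradient-swap error) is absorbed into the additive bucket via the inductive hypothesis $\Phi_\mu(\trCurr) \leq 12m\tau/\epsilon$: bounding $\tfrac{\epsilon^2 R_k^2}{R^2}\Phi_\mu(\trCurr) \leq \tfrac{12m\tau R_k}{\epsilon R}\cdot\tfrac{\epsilon^2 R_k}{R} = O(\tfrac{\epsilon m R_k}{R})$ (since $R_k \leq R$ and $R_k/r_k \leq \tau$), and similarly for the swap error. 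Weakening the raw contraction factor from $\tfrac{\epsilon^2 r_k}{12R}$ down to $\tfrac{\epsilon^2 r_k}{24R}$ creates the slack needed for this absorption and produces exactly $\Phi_\mu(\trNext) \leq (1-\tfrac{\epsilon^2 r_k}{24R})\Phi_\mu(\trCurr) + \tfrac{\epsilon m R_k}{2R}$.

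Finally, the invariant $\Phi_\mu(\trCurr) \leq 12m\tau/\epsilon$ is preserved by the recurrence because its fixed point is $\tfrac{\epsilon m R_k/(2R)}{\epsilon^2 r_k/(24R)} = \tfrac{12 m R_k}{\epsilon r_k} \leq \tfrac{12m\tau}{\epsilon}$; the base case uses $p_\mu(y) \leq 2e^{\mu\abs{y}}$ applied to $\normInf{\trInit} \leq \tfrac{12R}{\epsilon}\log(6\tau/\epsilon)$, which gives $\Phi_\mu(\trInit) \leq 2m\cdot(6\tau/\epsilon) = 12m\tau/\epsilon$. The $\ell_\infty$ bound then follows from $e^{\mu\normInf{\trCurr}} \leq \max_i p_\mu(x_i^{(k)}) \leq \Phi_\mu(\trCurr)$, so that $\normInf{\trCurr} \leq \tfrac{1}{\mu}\log\Phi_\mu(\trCurr) \leq \tfrac{12R}{\epsilon}\log(\tfrac{12m\tau}{\epsilon})$. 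The hard part is the absorption step just described: in the regime $r_k \ll R_k$ the first-order gain (proportional to $r_k$) alone cannot swallow the second-order error (proportional to $R_k^2$), which is why the inductive bound on $\Phi_\mu$ must be coupled with the descent inequality and both are proved in a single induction.
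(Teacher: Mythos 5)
Your plan to Taylor-expand $\Phi_\mu$ around $\trCurr$ and then swap $\nabla\Phi_\mu(\trCurr)$ for $\nabla\Phi_\mu(\trMeas)$ has a genuine gap, and the ``absorption'' step you invoke to patch it does not actually close. The crude bound you use for the gradient swap, $\normOne{\nabla\Phi_\mu(\trCurr)-\nabla\Phi_\mu(\trMeas)}\leq 2\mu^2 R\, e^{2\mu R}\Phi_\mu(\trCurr)$, is multiplied by $\normInf{\vDelta^{(k)}}\sim R_k$, producing an error term of order $\mu^2 R R_k\,\Phi_\mu(\trCurr)\sim \frac{\epsilon^2 R_k}{R}\Phi_\mu(\trCurr)$; the second-order Taylor remainder $\tfrac{\mu^2 e^{\mu D}D^2}{2}\Phi_\mu(\trCurr)$ with $D\lesssim R_k$ is of the same order. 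Both carry a $\Phi_\mu(\trCurr)$ factor and scale with $R_k$, while the linear descent, after lower-bounding $\normOne{\nabla\Phi_\mu}$ by $\mu(\Phi_\mu-2m)$, is only of order $\frac{\epsilon^2 r_k}{R}\Phi_\mu(\trCurr)$. When $r_k\ll R_k$ the bad terms dominate and you do not get a contraction. Your attempt to absorb the surplus via $\Phi_\mu(\trCurr)\leq 12m\tau/\epsilon$ gives $\frac{\epsilon^2 R_k}{R}\Phi_\mu(\trCurr)\leq\frac{12\epsilon m\tau R_k}{R}$, which exceeds the target additive bucket $\frac{\epsilon m R_k}{2R}$ by a factor of $24\tau$; the step where you claim $\frac{\epsilon^2 R_k^2}{R^2}\Phi_\mu(\trCurr)=O(\frac{\epsilon m R_k}{R})$ silently drops a $\tau$ (and $R_k\leq R$ only removes one power of $R_k/R$, not the $\tau$). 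So the one-step inequality you would obtain is strictly weaker than what is claimed, and the induction cannot be closed.

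The paper's proof avoids both sources of trouble. It applies the mean value theorem twice so there is no quadratic Taylor remainder at all, and then uses a \emph{coordinate-wise} comparison of gradients (Lemma~\ref{lem:smoothing:helper}): for $\normInf{\vx-\vy}\leq\delta$ one has $e^{-\mu\delta}|p_\mu'(y_i)|-\mu\leq|p_\mu'(x_i)|\leq e^{\mu\delta}|p_\mu'(y_i)|+\mu e^{\mu\delta}$. This turns the gradient-at-a-nearby-point error into a \emph{multiplicative} factor $e^{\pm\mu\delta}$ on $\norm{\nabla\Phi_\mu}_{U^{(k)}}$ plus an \emph{additive} $\mu m R_k$ that has no $\Phi_\mu$ factor. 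The crucial consequence is that the adversary's damage $\norm{\nabla\Phi_\mu(\trAfterAdve)}_{U^{(k)}}$ and the player's gain $(1+\epsilon)\langle\nabla\Phi_\mu(\trAfterMeas),\nabla\Phi_\mu(\trMeas)^{\flat_{(k)}}\rangle$ are both within $e^{\pm O(\mu R)}$ multiplicative factors of the same quantity $\norm{\nabla\Phi_\mu(\trCurr)}_{U^{(k)}}$, so the player's $(1+\epsilon)$ advantage yields a net drop of $\tfrac{\epsilon}{2}\norm{\nabla\Phi_\mu(\trCurr)}_{U^{(k)}}$ \emph{before} ever invoking the $r_k$-ball containment. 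Only at the very end is this quantity lower-bounded by $r_k\normOne{\nabla\Phi_\mu(\trCurr)}$, so the asymmetry $r_k\ll R_k$ never pits an $R_k$-scaled loss against an $r_k$-scaled gain. To fix your argument you would need to replace the Hessian-integration bound with the coordinate-wise comparison and eliminate the quadratic remainder (e.g.\ via MVT); as written the approach does not prove the theorem.
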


To prove Theorem~\ref{thm:approx_path:chasing_zero} we first provide
the following lemma regarding properties of the potential function
$\Phi_{\mu}$.

\begin{lemma}[Properties of the Potential Function] \label{lem:smoothing:helper}
For all $\vx\in\Rm$, we have
\begin{equation}
e^{\mu\|\vx\|_{\infty}}\leq\Phi_{\mu}(\vx)\leq2me^{\mu\|\vx\|_{\infty}}\enspace\text{ and }\enspace\mu\Phi_{\mu}(\vx)-2\mu m\leq\normOne{\grad\Phi_{\mu}(\vx)}\label{eq:smoothing:pot_prop_1}
\end{equation}
Furthermore, for any symmetric convex set $U\subseteq\Rm$ and any
$\vx\in\Rm$, let $\vx^{\flat}\defeq\argmax_{\vy\in U}\innerProduct{\vx}{\vy}$%
\footnote{This is a scaled version of $\#$ operator in \cite{Nesterov2012}
and hence we name it differently.%
} and $\norm{\vx}_{U}\defeq\max_{\vy\in U}\innerProduct{\vx}{\vy}$.
Then for all $\vx,\vy\in\Rm$ with $\normInf{\vx-\vy}\leq\delta\leq\frac{1}{5\mu}$
we have
\begin{equation}
e^{-\mu\delta}\norm{\grad\Phi_{\mu}(\vy)}_{U}-\mu\normOne{\grad\Phi_{\mu}(\vy)^{\flat}}\leq\innerProduct{\grad\Phi_{\mu}(\vx)}{\grad\Phi_{\mu}(\vy)^{\flat}}\leq e^{\mu\delta}\norm{\grad\Phi_{\mu}(\vy)}_{U}+\mu e^{\mu\delta}\normOne{\grad\Phi_{\mu}(\vy)^{\flat}}.\label{eq:smoothing:pot_prop_2}
\end{equation}
If additionally $U$ is contained in a $\ellInf$ ball of radius $R$
then
\begin{equation}
e^{-\mu\delta}\norm{\grad\Phi_{\mu}(\vy)}_{U}-\mu mR\leq\norm{\grad\Phi_{\mu}(\vx)}_{U}\leq e^{\mu\delta}\norm{\grad\Phi_{\mu}(\vy)}_{U}+\mu e^{\mu\delta}mR.\label{eq:smoothing:pot_prop_3}
\end{equation}
\end{lemma}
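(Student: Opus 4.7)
The plan is to prove the three inequalities in turn, with eq.~(1) being direct from definitions, eq.~(2) using a key componentwise identity, and eq.~(3) following from eq.~(2) by taking a supremum.

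For eq.~(1), I would observe that $p_\mu(x_i) = e^{\mu x_i} + e^{-\mu x_i}$ satisfies $e^{\mu|x_i|} \leq p_\mu(x_i) \leq 2 e^{\mu|x_i|}$. Summing over $i$ and noting that the term with $|x_i|=\|\vx\|_\infty$ dominates the lower bound while the upper bound holds termwise gives the first pair of inequalities. For the gradient bound, I would use $|p'_\mu(x_i)| = \mu|e^{\mu x_i} - e^{-\mu x_i}| = \mu e^{\mu|x_i|} - \mu e^{-\mu|x_i|} \geq \mu p_\mu(x_i) - 2\mu$, summing over $i$ to obtain $\|\grad\Phi_\mu(\vx)\|_1 \geq \mu\Phi_\mu(\vx) - 2\mu m$.

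For eq.~(2), the key step is to establish the identity
\[
p'_\mu(y + \epsilon) = \cosh(\mu\epsilon)\, p'_\mu(y) + \mu \sinh(\mu\epsilon)\, p_\mu(y),
\]
which follows by expanding $\mu(e^{\mu(y+\epsilon)} - e^{-\mu(y+\epsilon)})$ and using the identities $e^{\pm\mu\epsilon} = \cosh(\mu\epsilon)\pm\sinh(\mu\epsilon)$. Setting $\vv \defeq \grad\Phi_\mu(\vy)^\flat$ and writing $\epsilon_i = x_i - y_i$ with $|\epsilon_i|\leq\delta$, this identity yields
\[
\innerProduct{\grad\Phi_\mu(\vx)}{\vv} = \sum_i \cosh(\mu\epsilon_i) p'_\mu(y_i) v_i + \mu \sum_i \sinh(\mu\epsilon_i) p_\mu(y_i) v_i.
\]
I would bound the first sum by $\cosh(\mu\delta)\innerProduct{\grad\Phi_\mu(\vy)}{\vv} = \cosh(\mu\delta)\|\grad\Phi_\mu(\vy)\|_U$ using that $\vv$ is a maximizer (so this term is nonnegative and monotone in the $\cosh$ factor), and the second sum by $\sinh(\mu\delta)\sum_i \mu p_\mu(y_i)|v_i|$ using $|\sinh(\mu\epsilon_i)|\leq\sinh(\mu\delta)$. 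To combine, I would apply the pointwise estimate $\mu p_\mu(y_i) \leq |p'_\mu(y_i)| + 2\mu$ (which follows from the same computation as in eq.~(1)) and bundle the $|p'_\mu(y_i)||v_i|$ contribution into $\|\grad\Phi_\mu(\vy)\|_U$ via the maximizer property of $\vv$ in the symmetric set $U$. Finally, using $\cosh(\mu\delta) + 2\sinh(\mu\delta) \leq e^{\mu\delta}$ and $\cosh(\mu\delta) - 2\sinh(\mu\delta) \geq e^{-\mu\delta}$ (valid for $\mu\delta \leq 1/5$) would give the two-sided bound; the lower bound in eq.~(2) is obtained analogously by flipping signs.

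For eq.~(3), I would simply take the supremum of the bound from eq.~(2) over all $\vu \in U$ in place of $\vv$, and use that $\|\vu\|_\infty \leq R$ together with $\vu \in U \Rightarrow \|\vu\|_1 \leq m R$ to convert $\mu\|\vu\|_1$ into $\mu m R$. The main obstacle is the second step: controlling the cross-term $\sinh(\mu\epsilon_i) p_\mu(y_i) v_i$ requires careful use of the definition $\vv = \argmax_{\vu \in U}\innerProduct{\grad\Phi_\mu(\vy)}{\vu}$ together with the convexity and symmetry of $U$, since the sign pattern of $\vv$ relative to $\grad\Phi_\mu(\vy)$ is what allows the $\mu p_\mu(y_i)$ factors to be absorbed into $\|\grad\Phi_\mu(\vy)\|_U$ plus the additive $\mu\|\vv\|_1$ term; the restriction $\mu\delta \leq 1/5$ ensures the $\cosh$ and $\sinh$ factors are close enough to their $e^{\pm\mu\delta}$ envelopes that the final inequalities close up cleanly.
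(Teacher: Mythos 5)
Your approach via the hyperbolic identity $p'_\mu(y+\epsilon)=\cosh(\mu\epsilon)\,p'_\mu(y)+\mu\sinh(\mu\epsilon)\,p_\mu(y)$ is a genuinely different route from the paper's: the paper instead proves pointwise bounds $|p'_\mu(x)|\le e^{\mu\delta}|p'_\mu(y)|+\mu e^{\mu\delta}$ and $|p'_\mu(x)|\ge e^{-\mu\delta}|p'_\mu(y)|-\mu$ by direct monotonicity of $p'_\mu(|\cdot|)$, then sums against $\grad\Phi_\mu(\vy)^\flat$, handling the coordinates where $\sign(x_i)\ne\sign(y_i)$ by a separate argument ($|x_i|\le\delta$ forces $|p'_\mu(x_i)|\le\mu/2$). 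Your treatment of \eqref{eq:smoothing:pot_prop_1} and \eqref{eq:smoothing:pot_prop_3} is fine.

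However, the bounding of \eqref{eq:smoothing:pot_prop_2} as written does not close. Using $\mu p_\mu(y_i)\le|p'_\mu(y_i)|+2\mu$ to absorb the cross term gives an upper bound $(\cosh(\mu\delta)+\sinh(\mu\delta))\|\grad\Phi_\mu(\vy)\|_U+2\mu\sinh(\mu\delta)\normOne{\vv}=e^{\mu\delta}\|\grad\Phi_\mu(\vy)\|_U+2\mu\sinh(\mu\delta)\normOne{\vv}$, which is fine. But the corresponding lower bound is $(1-\sinh(\mu\delta))\|\grad\Phi_\mu(\vy)\|_U-2\mu\sinh(\mu\delta)\normOne{\vv}$, and $1-\sinh(t)<e^{-t}$ for $t>0$ (their difference is $\cosh(t)-1$), so the claimed $e^{-\mu\delta}$ coefficient is not reached; since $\|\grad\Phi_\mu(\vy)\|_U$ can be arbitrarily large relative to $\normOne{\vv}$, the slack in the $\normOne{\vv}$ term cannot compensate. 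Relatedly, the two inequalities you invoke, $\cosh(\mu\delta)+2\sinh(\mu\delta)\le e^{\mu\delta}$ and $\cosh(\mu\delta)-2\sinh(\mu\delta)\ge e^{-\mu\delta}$, are both false for $\mu\delta>0$ (the former reduces to $\sinh(\mu\delta)\le0$, the latter to $e^{-\mu\delta}\ge e^{\mu\delta}$). The fix is to keep the sign-aligned part of $\mu p_\mu$ with the $p'_\mu$ term: writing $\mu p_\mu(y_i)=\sign(y_i)p'_\mu(y_i)+2\mu e^{-\mu|y_i|}$ turns the identity into $p'_\mu(x_i)=e^{\sign(y_i)\mu\epsilon_i}\,p'_\mu(y_i)+2\mu\sinh(\mu\epsilon_i)e^{-\mu|y_i|}$, from which $e^{\pm\mu\delta}\|\grad\Phi_\mu(\vy)\|_U\pm 2\mu\sinh(\mu\delta)\normOne{\vv}$ follows directly and $2\sinh(\mu\delta)\le 1\le e^{\mu\delta}$ for $\mu\delta\le 1/5$ closes the argument.

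Finally, note that both your argument and the paper's hinge on the claim $\sign([\grad\Phi_\mu(\vy)^\flat]_i)=\sign(y_i)$, which the paper asserts from symmetry of $U$ (and then case-splits on $\sign(x_i)$ versus $\sign(y_i)$). You gesture at this via ``the maximizer property of $\vv$ in the symmetric set $U$,'' but this needs to be made explicit; it is the same sign bookkeeping the paper handles in its final case split.
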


\begin{proof} First we note that for all $x\in\R$ we have
\[
e^{\mu|x|}\leq p_{\mu}(x)\leq2e^{\mu|x|}\enspace\text{and}\enspace p'_{\mu}(x)=\mu\sign(x)\left(e^{\mu|x|}-e^{-\mu|x|}\right)
\]
and therefore we have \eqref{eq:smoothing:pot_prop_1}. 

Next let $x,y\in\R$ such that $|x-y|\leq\delta$. Note that $\left|p'_{\mu}(x)\right|=p'_{\mu}(\left|x\right|)=\mu\left(e^{\mu|x|}-e^{-\mu|x|}\right)$
and since $\left|x-y\right|\leq\delta$ we have that $|x|=|y|+z$
for some $z\in(-\delta,\delta)$. Using that $p'(|x|)$ is monotonic
in $|x|$ we then have
\begin{align}
|p'_{\mu}(x)| & =p'_{\mu}(|x|)=p'_{\mu}(|y|+z)\leq p'_{\mu}(|y|+\delta)\nonumber \\
 & =\mu\left(e^{\mu|y|+\mu\delta}-e^{-\mu|y|-\mu\delta}\right)=e^{\mu\delta}p'(|y|)+\mu\left(e^{\mu\delta-\mu|y|}-e^{-\mu|y|-\mu\delta}\right)\nonumber \\
 & \leq e^{\mu\delta}\left|p'(y)\right|+\mu e^{\mu\delta}.\label{eq:smoothingp2}
\end{align}
By symmetry (i.e. replacing $x$ and $y$) this implies that
\begin{equation}
|p'_{\mu}(x)|\geq e^{-\mu\delta}|p'(y)|-\mu\label{eq:smoothing:p3}
\end{equation}

Since $U$ is symmetric this implies that for all $i\in[m]$ we have
$\sign(\grad\Phi_{\mu}(\vy)^{\flat})_{i}=\grad\Phi_{\mu}(\vy)_{i}=\sign(y_{i})$
. Therefore, if for all $i\in[n]$ we have $\sign(x_{i})=\sign(y_{i})$,
by \eqref{eq:smoothingp2}, we see that 
\begin{eqnarray*}
\innerProduct{\grad\Phi_{\mu}(\vx)}{\grad\Phi_{\mu}(\vy)^{\flat}} & = & \sum_{i}p'_{\mu}(x_{i})\grad\Phi_{\mu}(\vy)_{i}^{\flat}\\
 & \leq & \sum_{i}\left(e^{\mu\delta}p'_{\mu}(y_{i})+\mu e^{\mu\delta}\right)\grad\Phi_{\mu}(\vy)_{i}^{\flat}\\
 & \leq & e^{\mu\delta}\left\langle \grad\Phi_{\mu}(\vy),\grad\Phi_{\mu}(\vy)^{\flat}\right\rangle +\mu e^{\mu\delta}\norm{\grad\Phi_{\mu}(\vy)^{\flat}}_{1}\\
 & = & e^{\mu\delta}\norm{\grad\Phi_{\mu}(\vy)}_{U}+\mu e^{\mu\delta}\normOne{\grad\Phi_{\mu}(\vy)^{\flat}}.
\end{eqnarray*}
Similarly, using \eqref{eq:smoothing:p3}, we have $e^{-\mu\delta}\norm{\grad\Phi_{\mu}(\vy)}_{U}-\mu\normOne{\grad\Phi_{\mu}(\vy)^{\flat}}\leq\innerProduct{\grad\Phi_{\mu}(\vx)}{\grad\Phi_{\mu}(\vy)^{\flat}}$
and hence \eqref{eq:smoothing:pot_prop_2} holds. On the other hand
if $\sign(x_{i})\neq\sign(y_{i})$ then we know that $|x_{i}|\leq\delta$
and consequently $|p'_{\mu}(x_{i})|\leq\mu(e^{\mu\delta}-e^{-\mu\delta})\leq\frac{\mu}{2}$
since $\delta\leq\frac{1}{5\mu}$. Thus, we have
\[
e^{-\mu\delta}\left|p'_{\mu}(y_{i})\right|-\mu\leq-\frac{\mu}{2}\leq\sign\left(y_{i}\right)p'_{\mu}(x_{i})\leq0\leq e^{\mu\delta}\left|p'_{\mu}(y_{i})\right|+\mu e^{\mu\delta}.
\]
Taking inner product on both sides with $\grad\Phi_{\mu}(\vy)_{i}^{\flat}$
and using definition of $\norm{\cdot}_{U}$ and $\cdot^{\flat}$,
we get \eqref{eq:smoothing:pot_prop_2}. Thus, \eqref{eq:smoothing:pot_prop_2}
holds in general.

Finally we note that since $U$ is contained in a $\ellInf$ ball
of radius $R$, we have $\normOne{\vy^{\flat}}\leq mR$ for all $\vy$.
Using this fact, \eqref{eq:smoothing:pot_prop_2}, and the definition
of $\norm{\cdot}_{U}$, we obtain
\[
e^{-\mu\delta}\norm{\grad\Phi_{\mu}(\vy)}_{U}-\mu mR\leq\innerProduct{\grad\Phi_{\mu}(\vx)}{\grad\Phi_{\mu}(\vy)^{\flat}}\leq\norm{\grad\Phi_{\mu}(\vx)}_{U}
\]
where the last line comes from the fact $\grad\Phi_{\mu}(\vy)^{\flat}\in U$
and the definition of $\norm{\cdot}_{U}$. By symmetry \eqref{eq:smoothing:pot_prop_3}
follows. \end{proof}

Using Lemma~\ref{lem:smoothing:helper} we prove Theorem~\ref{thm:approx_path:chasing_zero}.

\begin{proof}{[}Theorem~\ref{thm:approx_path:chasing_zero}{]} For
the remainder of the proof, let $\norm{\vx}_{U^{(k)}}=\max_{\vy\in U^{(k)}}\left\langle \vx,\vy\right\rangle $
and $\vx^{\flat_{(k)}}=\argmax_{\vy\in U^{(k)}}\left\langle \vx,\vy\right\rangle $.
Since $U^{(k)}$ is symmetric, we know that $\vDelta^{(k)}=-\left(1+\epsilon\right)\left(\grad\Phi_{\mu}(\vz^{(k)})\right)^{\flat_{(k)}}$
and therefore by applying the mean value theorem twice we have that
\begin{eqnarray*}
\Phi_{\mu}(\trNext) & = & \Phi_{\mu}(\trAdve)+\left\langle \grad\Phi_{\mu}(\trAfterMeas),\trNext-\trAdve\right\rangle \\
 & = & \Phi_{\mu}(\trCurr)+\left\langle \grad\Phi_{\mu}(\trAfterAdve),\trAdve-\trCurr\right\rangle +\left\langle \grad\Phi_{\mu}(\trAfterMeas),\trNext-\trAdve\right\rangle 
\end{eqnarray*}
for some $\trAfterAdve$ between $\trAdve$ and $\trCurr$ and some
$\trAfterMeas$ between $\vx^{(k+1)}$ and $\vy^{(k)}$. Now, using
that $\trAdve-\trCurr\in U^{(k)}$ and that $\trNext-\trAdve=\vDelta^{(k)}$
we have
\begin{eqnarray}
\Phi_{\mu}(\trNext) & \leq & \Phi_{\mu}(\trCurr)+\norm{\trGradAfterAdve}_{\trSetCurr}-\left(1+\epsilon\right)\left\langle \trGradAfterMeas,\left(\trGradMeas\right)^{\flat_{(k)}}\right\rangle .\label{eq:Phi_est_1}
\end{eqnarray}
Since $U^{k}$ is contained within the $\ellInf$ ball of radius $R_{k}$
Lemma~\ref{lem:smoothing:helper} shows that
\begin{equation}
\norm{\trGradAfterAdve}_{\trSetCurr}\leq e^{\mu R_{k}}\norm{\trGradCurr}_{\trSetCurr}+m\mu R_{k}e^{\mu R_{k}}.\label{eq:Phi_est_2}
\end{equation}
Furthermore, since $\epsilon<\frac{1}{5}$ and $R_{k}\leq R$, by
triangle inequality we have $\normInf{\trAfterMeas-\trMeas}\leq(1+\epsilon)R_{k}+R\leq3R$
and $\normInf{\trMeas-\trCurr}\leq2R$. Therefore, applying Lemma~\ref{lem:smoothing:helper}
twice yields that
\begin{align}
\innerProduct{\trGradAfterMeas}{\trGradMeas^{\flat_{(k)}}} & \geq e^{-3\mu R}\norm{\trGradMeas}_{\trSetCurr}-\mu mR_{k}\nonumber \\
 & \geq e^{-5\mu R}\norm{\trGradCurr}_{\trSetCurr}-2\mu mR_{k}.\label{eq:Phi_est_3}
\end{align}
Combining \eqref{eq:Phi_est_1}, \eqref{eq:Phi_est_2}, and \eqref{eq:Phi_est_3}
then yields that
\[
\Phi_{\mu}(\trNext)\leq\Phi_{\mu}(\trCurr)-\left((1+\epsilon)e^{-5\mu R}-e^{\mu R}\right)\norm{\trGradCurr}_{\trSetCurr}+m\mu R_{k}e^{\mu R}+2(1+\epsilon)m\mu R_{k}.
\]
Since we chose $\mu=\frac{\epsilon}{12R}$, we have
\[
1+\epsilon\leq\frac{\epsilon}{2}+\left(1+6\mu R\right)\leq\frac{\epsilon}{2}e^{5\mu R}+e^{6\mu R}.
\]
Hence, we have $(1+\epsilon)e^{-5\mu R}-e^{\mu R}\leq\frac{\epsilon}{2}$.
Also, since $0<\epsilon<\frac{1}{5}$ we have
\[
m\mu R_{k}e^{\mu R}+2(1+\epsilon)m\mu R_{k}\leq\left(e^{\mu R}+2(1+\epsilon)\right)m\mu R_{k}\leq\epsilon m\frac{7R_{k}}{24R}.
\]
Thus, we have
\[
\Phi_{\mu}(\trNext)\leq\Phi_{\mu}(\trCurr)-\frac{\epsilon}{2}\norm{\trGradCurr}_{\trSetCurr}+\epsilon m\frac{7R_{k}}{24R}.
\]
Using Lemma~\ref{lem:smoothing:helper} and the fact that $U_{k}$
contains a $\ellInf$ ball of radius $r_{k}$, we have
\[
\norm{\trGradCurr}_{\trSetCurr}\geq r_{k}\norm{\trGradCurr}_{1}\geq\frac{\epsilon r_{k}}{12R}\left(\Phi_{\mu}(\trCurr)-2m\right).
\]
Therefore, we have that
\begin{eqnarray*}
\Phi_{\mu}(\trNext) & \leq & \left(1-\frac{\epsilon^{2}r_{k}}{24R}\right)\Phi_{\mu}(\trCurr)+\frac{\epsilon r_{k}}{12R}m+\epsilon m\frac{7R_{k}}{24R}\\
 & \leq & \left(1-\frac{\epsilon^{2}r_{k}}{24R}\right)\Phi_{\mu}(\trCurr)+\epsilon m\frac{R_{k}}{2R}.
\end{eqnarray*}
Hence, if $\Phi_{\mu}(\trCurr)\leq\frac{12m\tau}{\epsilon}$, we have
$\Phi_{\mu}(\trNext)\leq\frac{12m\tau}{\epsilon}$. Since$\Phi_{\mu}(\trInit)\leq\frac{12m\tau}{\epsilon}$
by assumption we have by induction that $\Phi_{\mu}(\trCurr)\leq\frac{12m\tau}{\epsilon}$
for all $k$. The necessary bound on $\normInf{\trCurr}$ then follows
immediately from Lemma~\ref{lem:smoothing:helper}. \end{proof}

\subsection{Centering Step With Noisy Weight}

\label{sec:approx_path:centering}

Here we show how to use the results of the previous section to perform
weighted path following given access only to a multiplicative approximation
of the weight function. In particular, we show how to use Theorem
\ref{thm:approx_path:chasing_zero} to improve the centrality of $\vx$
while maintaining the invariant that $\vw$ is close to $\vg(\vx)$
multiplicatively.

As in Section~\ref{sec:weighted_path} given a feasible point, $\{\vx,\vw\}\in\dFull$,
we measure the distance between the current weights, $\vw\in\dWeights$,
and the weight function, $\vg(\vs)\in\dWeights$, in log scale $\vWeightError(\vs,\vw)\defeq\log(\fvWeight(\vs))-\log(\vw)$.
Our goal is to keep $\normInf{\vWeightError(\vs,\vw)}\leq K$ for
some error threshold $K$. We choose $K$ to be just small enough
that we can still decrease $\delta_{t}(\vx,\vw)$ linearly and still
approximate $\vg(\vs)$, as in general it may be difficult to compute
$\vg(\vs)$ when $\vw$ is far from $\vg(\vs)$. Furthermore, we ensure
that $\vWeightError$ doesn't change too much in either $\|\cdot\|_{\infty}$
or $\|\cdot\|_{\mWeightNext}$ and thereby ensure that the centrality
does not increase too much as we move $\vw$ towards $\vg(\vs)$. 

We meet these goals by playing the chasing 0 game where the vector
we wish to keep near $\vzero$ is $\vWeightError(\vs,\vw)$, the adversaries
moves are $\cWeightCons$-steps, and our moves change $\log(\vWeight)$.
The $\cWeightCons$-step decreases $\delta_{t}$ and since we are
playing the chasing 0 game we keep $\vWeightError(\vs,\vw)$ small.
Finally, since by the rules of the chasing 0 game we do not move $\vw$
much more than $\vg(\vs)$ has moved, we have by similar reasoning
to the exact weight computation case, Theorem~\ref{thm:centering_exact}
that changing $\vw$ does not increase $\delta_{t}$ too much. This
\emph{inexact centering} operation and the analysis are formally defined
and analyzed below.

Most of the parameter balancing involved in this paper lies in the
theorem below. Due to the step consistency, we know know that after
a $\cWeightCons$-steps, the weight does not move too far away that
we can move it back without hurting centrality too much if we can
compute the weight exactly. The Chasing $0$ game shows that we can
mimic this if we compute the weight accurate enough. Therefore, the
balancing is simply about how accurate we need to do.

\begin{center}
\begin{tabular}{|l|}
\hline 
$(\vxNext,\vWeightFin)=\centeringInexact(\vxCurr,\vWeightCurr,K,\code{approxWeight})$\tabularnewline
\hline 
\hline 
1. $R=\frac{K}{60c_{r}\log\left(960\cWeightCons\cWeightStab m^{3/2}\right)}$,
$\delta_{t}=\delta_{t}(\vxCurr,\vWeightCurr)$, $\epsilon=\frac{1}{5\cWeightCons}$
and $\mu=\frac{\epsilon}{12R}.$\tabularnewline
\hline 
2. $\{\vxNext,\vWeightNext\}=\updateStep_{t}(\vxCurr,\vWeightCurr,\cWeightCons)$
as in Definition \ref{def:weighted_path:r_step}. \tabularnewline
\hline 
3. Let $U=\{\vy\in\Rm~|~\norm{\vy}_{\mWeightNext}\leq\frac{\cWeightCons+0.14}{\cWeightCons+1}\delta_{t}\text{ and }\norm{\vy}_{\infty}\leq4\cWeightStab\delta_{t}\}$ \tabularnewline
\hline 
4. Compute $\vz=\code{approxWeight}(\vs,\vWeightNext,R)$.\tabularnewline
\hline 
5. $\vWeightFin:=\exp\left(\log(\vWeightNext)+\left(1+\epsilon\right)\argmin_{\vu\in U}\left\langle \nabla\Phi_{\mu}\left(\log(\vz)-\log\left(\vWeightNext\right)\right),\vu\right\rangle \right)$ \tabularnewline
\hline 
\end{tabular}
\par\end{center}

\begin{flushleft}
Note that in step 5 in $\centeringInexact$, we need to project a
certain vector onto the intersection of ball, $\norm{\cdot}_{\mWeightNext}$,
and box, $\normInf{\cdot}$. In Section~\ref{sec:app:Project_ball_box}
we show that this can be computed in parallel in depth $\tilde{O}(1)$
and work $\tilde{O}(m)$ and therefore this step is not a bottleneck
in the computational cost of our weighted path following schemes.
\par\end{flushleft}

\begin{theorem}[Centering with Inexact Weights] \label{thm:smoothing:centering_inexact_weight}
Given current point $\{\vxCurr,\vWeightCurr\}\in\dFull$, error parameter
$K\leq\frac{1}{8\cWeightCons}$, and approximate weight computation
oracle, $\code{approxWeight}$, such that $\norm{\log(\mbox{\ensuremath{\code{approxWeight}}}(\vs,\vw,R))-\log\left(\vg(\vs)\right)}_{\infty}\leq R$
for $\vs,\vw\in\dWeights$ with $\norm{\log(\vw)-\log\left(\vg(\vs)\right)}_{\infty}\leq2K$,
assume that
\[
\delta_{t}\defeq\delta_{t}(\vxCurr,\vWeightCurr)\leq\frac{K}{240c_{r}c_{\gamma}\log\left(960\cWeightCons\cWeightStab m^{3/2}\right)}\enspace\text{ and }\enspace\Phi_{\mu}\defeq\Phi_{\mu}(\vWeightError(\vxCurr,\vWeightCurr))\leq960\cWeightCons\cWeightStab m^{3/2}
\]
where $\mu=\frac{\epsilon}{12R}$. Let $(\vxNext,\vWeightFin)=\mathbf{\centeringInexact}(\vxCurr,\vWeightCurr,K)$,
then
\[
\delta_{t}(\vxNext,\vWeightFin)\leq\left(1-\frac{0.5}{1+\cWeightCons}\right)\delta_{t}
\]
and
\[
\Phi_{\mu}(\vWeightError(\vxNext,\vWeightFin))\leq\left(1-\frac{\delta_{t}}{600c_{r}^{2}R\sqrt{m}}\right)\Phi_{\mu}(\trCurr)+\frac{2m\cWeightStab\delta_{t}}{5R}\leq960\cWeightCons\cWeightStab m^{3/2}.
\]
Also, we have $\norm{\log(\vg(\next{\vs}))-\log(\vWeightFin)}_{\infty}\leq K$.\end{theorem}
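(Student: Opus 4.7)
The plan is to view one call to $\centeringInexact$ as a single round of the chasing 0 game of Theorem \ref{thm:approx_path:chasing_zero}, played on the weight error $\vWeightError(\vs,\vw)=\log\vg(\vs)-\log\vw$: the $\cWeightCons$-step at line 2 is the adversary's move, and the projection at line 5 is the player's response. The argument then has four pieces: extracting the pointwise hypotheses, bounding the adversary's move, invoking the game theorem, and combining with Lemma \ref{lem:weight_change} to track centrality.

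First, from $\Phi_\mu(\vWeightError(\vxCurr,\vWeightCurr))\leq 960\cWeightCons\cWeightStab m^{3/2}$ and $\Phi_\mu(\vx)\geq e^{\mu\|\vx\|_\infty}$ (Lemma \ref{lem:smoothing:helper}), the choices $\mu=\epsilon/(12R)$, $\epsilon=1/(5\cWeightCons)$ and $R=K/(60\cWeightCons\log(960\cWeightCons\cWeightStab m^{3/2}))$ force $\|\vWeightError(\vxCurr,\vWeightCurr)\|_\infty\leq K\leq 1/(8\cWeightCons)$. Combined with the $\delta_t$ hypothesis (which gives $\delta_t\leq 1/(8\cWeightStab)$ and $4\cWeightStab\delta_t\leq K$), Lemma \ref{lem:weighted_path:weight_progress} applies with its internal error parameter equal to $K$, bounding the adversary's move $\vu:=\vWeightError(\vxNext,\vWeightNext)-\vWeightError(\vxCurr,\vWeightCurr)$ by $\|\vu\|_\infty\leq 4\cWeightStab\delta_t$ and $\|\vu\|_{\mWeightNext}\leq\tfrac{e^K\cWeightCons}{1+\cWeightCons}\delta_t+13\cWeightStab\delta_t^2$. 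The Taylor estimate $\cWeightCons(e^K-1)\leq\cWeightCons K/(1-K)\leq 0.14$ together with the $\delta_t$ bound absorbing the quadratic term shows $\vu\in U$ as defined in line 3. The triangle inequality $\|\vWeightError(\vxNext,\vWeightNext)\|_\infty\leq K+4\cWeightStab\delta_t\leq 2K$ also lets us invoke \code{approxWeight}, so $\vz$ computed in line 4 satisfies $\bigl\|(\log\vz-\log\vWeightNext)-\vWeightError(\vxNext,\vWeightNext)\bigr\|_\infty\leq R$.

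Next, $U$ lies in an $\ellInf$-ball of radius $R_k:=4\cWeightStab\delta_t\leq R$, and because $\mWeightNext\specLeq 2\iMatrix$ by uniformity of $\fvWeight$, it contains an $\ellInf$-ball of radius $r_k=\Theta(\delta_t/((1+\cWeightCons)\sqrt{m}))$; hence $\tau=R_k/r_k=O(\cWeightCons\cWeightStab\sqrt{m})$, matching the initial-potential hypothesis $12m\tau/\epsilon$ of Theorem \ref{thm:approx_path:chasing_zero} up to absolute constants. Since line 5 is exactly the player strategy $\vDelta=(1+\epsilon)\argmin_{\vu\in U}\langle\nabla\Phi_\mu(\log\vz-\log\vWeightNext),\vu\rangle$, the theorem delivers the claimed potential decay, and Lemma \ref{lem:smoothing:helper} converts this into $\|\vWeightError(\vxNext,\vWeightFin)\|_\infty\leq K$. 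For centrality, Lemma \ref{lem:weighted_path:x_progress} gives $\delta_t(\vxNext,\vWeightNext)\leq 2\cWeightStab\delta_t^2/(1+\cWeightCons)$; since $\log\vWeightFin-\log\vWeightNext\in(1+\epsilon)U$, its $\mWeightNext$-norm is at most $(1+\epsilon)(\cWeightCons+0.14)\delta_t/(\cWeightCons+1)$ and its $\ellInf$-norm is $O(\cWeightStab\delta_t)\leq 1/2$, so Lemma \ref{lem:weight_change} yields
\[
\delta_t(\vxNext,\vWeightFin)\leq\bigl(1+O(\cWeightStab\delta_t)\bigr)\Bigl[(1+\epsilon)\tfrac{\cWeightCons+0.14}{\cWeightCons+1}+O(\cWeightStab\delta_t)\Bigr]\delta_t.
\]
With $\epsilon=1/(5\cWeightCons)$ the leading factor is at most $1-0.63/(1+\cWeightCons)$, leaving enough slack to absorb the $O(\delta_t)$ corrections under the $\delta_t$ hypothesis and reach the target $1-0.5/(1+\cWeightCons)$.

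The hard part, and the reason for the specific numerical constants in the algorithm (the $0.14$ in the definition of $U$, the $1/(5\cWeightCons)$ in $\epsilon$, the $1/(8\cWeightCons)$ cap on $K$, and the factors $60$, $240$, $960$ in the hypotheses), is precisely the bookkeeping in the last paragraph: the multiplicative losses from (i) the Taylor bound $\cWeightCons(e^K-1)\leq 0.14$, (ii) the $(1+\epsilon)$ slack in the player's move, (iii) the base-change factor in Lemma \ref{lem:weight_change}, and (iv) the potential-to-$\ellInf$ extraction must compose to leave a net contraction of at least $0.5/(1+\cWeightCons)$. Once this balancing is verified, the rest of the proof is a direct translation into the chasing 0 framework.
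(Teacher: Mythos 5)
Your proposal follows the paper's proof almost step for step: identify the $\cWeightCons$-step as the adversary move in the chasing-0 game, bound it inside $U$ via Lemma~\ref{lem:weighted_path:weight_progress} and the Taylor estimate absorbing $c_r(e^K-1)$ and the quadratic term into the $0.14$, invoke Theorem~\ref{thm:approx_path:chasing_zero}, extract the $\normInf{\cdot}$ bound from $\Phi_\mu$, and combine Lemma~\ref{lem:weighted_path:x_progress} with Lemma~\ref{lem:weight_change} for centrality. The overall structure is correct and the account of where the constants $0.14$, $1/(5\cWeightCons)$, $1/(8\cWeightCons)$ come from is accurate.

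One numerical slip: you claim $U$ contains an $\ellInf$-ball of radius $r_k=\Theta(\delta_t/((1+\cWeightCons)\sqrt{m}))$, and hence $\tau=O(\cWeightCons\cWeightStab\sqrt{m})$. But the $\mWeightNext$-ball constraint has radius $\tfrac{\cWeightCons+0.14}{\cWeightCons+1}\delta_t$, which is bounded below by an absolute constant (it is $\geq 1/2$ for $\cWeightCons\geq1$), so there is no extra $1/(1+\cWeightCons)$ loss; combined with $\normInf{\vWeightNext}\leq 4$ this gives $r_k\geq\delta_t/(4\sqrt{m})$ and hence $\tau\leq 16\cWeightStab\sqrt{m}$, as the paper needs. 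With your version of $\tau$ the chasing-0 invariant you would prove is $\Phi_\mu\leq O(\cWeightCons^2\cWeightStab m^{3/2})$, which does not match the stated bound $960\cWeightCons\cWeightStab m^{3/2}$ and would also push $\tfrac{12R}{\epsilon}\log(12m\tau/\epsilon)$ slightly above $K$. The fix is just to drop the spurious $(1+\cWeightCons)$ factor; everything else goes through.
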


\begin{proof}By Lemma~\ref{lem:weighted_path:weight_progress},
we know that for a $\cWeightCons$-update step, we have $\vWeightError(\vxNext,\vWeightNext)-\vWeightError(\vxCurr,\vWeightCurr)\in\overline{U}$
where $\overline{U}$ is the symmetric convex set given by
\[
\overline{U}\defeq\{\vy\in\R^{m}~|~\norm{\vy}_{\mWeightNext}\leq C_{w}\enspace\text{ and }\enspace\norm{\vy}_{\infty}\leq C_{\infty}\}
\]
where
\[
C_{\infty}=4\cWeightStab\delta_{t}\enspace\text{ and }\enspace C_{w}=\frac{\cWeightCons+1/8}{\cWeightCons+1}\delta_{t}+13\cWeightStab\delta_{t}^{2}.
\]
Note that since $\delta_{t}\leq K\left(240c_{r}c_{\gamma}\log\left(960\cWeightCons\cWeightStab m^{3/2}\right)\right)^{-1}$
we have
\[
C_{\infty}\leq4c_{\gamma}\left(\frac{K}{240c_{r}c_{\gamma}\log\left(960\cWeightCons\cWeightStab m^{3/2}\right)}\right)\leq\frac{K}{60c_{r}\log\left(960\cWeightCons\cWeightStab m^{3/2}\right)}=R
\]
Therefore $\overline{U}$ is contained in a $\ellInf$ ball of radius
$R$. Again using the bound on $\delta_{t}$ we have
\begin{eqnarray}
C_{\weight} & = & \frac{\cWeightCons+\frac{1}{8}}{\cWeightCons+1}\delta_{t}+13\cWeightStab\delta_{t}^{2}\leq\frac{\cWeightCons+\frac{1}{8}}{\cWeightCons+1}\delta_{t}+\frac{0.008}{c_{r}}\delta_{t}\nonumber \\
 & \leq & \frac{\cWeightCons+0.14}{\cWeightCons+1}\delta_{t}.\label{eq:C_w_est}
\end{eqnarray}
Consequently, $\overline{U}\subseteq U$ where we recall that $U$
is the symmetric convex set defined by
\[
U=\{\vy\in\R^{m}~|~\norm{\vy}_{\mWeightNext}\leq\frac{\cWeightCons+0.14}{\cWeightCons+1}\delta_{t}\enspace\text{ and }\enspace\norm{\vy}_{\infty}\leq4\cWeightStab\delta_{t}\}.
\]
Therefore, we can play the chasing 0 game on $\vWeightError(\curr{\vs},\vWeightCurr)$
attempting to maintain the invariant that $\normInf{\vWeightError(\curr{\vs},\vWeightCurr)}\leq K\leq\frac{1}{8\cWeightCons}$
without taking steps that are more than $1+\epsilon$ times the size
of $U$. We pick $\epsilon=\frac{1}{5\cWeightCons}$ so to not interfere
with our ability to decrease $\delta_{t}$ linearly. 

To use the chasing 0 game to maintain $\normInf{\vWeightError(\curr{\vs},\vWeightCurr)}\leq K$
we need to ensure that $R$ satisfies the following
\[
\frac{12R}{\epsilon}\log\left(\frac{12m\tau}{\epsilon}\right)\leq K
\]
where here $\tau$ is as defined in Theorem~\ref{thm:approx_path:chasing_zero}.
To bound $\tau$ we need to lower bound the radius of the $\ellInf$
ball that $U$ contains. Since $\normInf{\fvWeight(\vSlackCurr)}\leq2$
by Definition~\ref{def:sec_weighted_path:weight_function} and since
$\normInf{\vWeightError(\vxCurr,\vWeightCurr)}\leq\frac{1}{8}$ by
assumption we have that $\normInf{\curr{\vWeight}}\leq3$. By Lemma
\ref{lem:weighted_path:stab_update_step} we know that $\normInf{\next{\vWeight}}\leq4$
if $\delta_{t}c_{\gamma}\leq\frac{1}{8}$ and consequently
\[
\forall u\in\Rm\enspace:\enspace\norm{\vu}_{\infty}^{2}\geq\frac{1}{4m}\norm{\vu}_{\mWeightNext}^{2}.
\]
Consequently, if $\norm{\vu}_{\infty}\leq\frac{\delta_{t}}{4\sqrt{m}}$,
then $\vu\in U$. Thus, $U$ contains a a box of radius $\frac{\delta_{t}}{4\sqrt{m}}$
and since $U$ is contained in a box of radius $4\cWeightStab\delta_{t}$,
we have that $\tau\leq16\cWeightStab\sqrt{m}$ and consequently
\[
\frac{12R}{\epsilon}\log\left(\frac{12m\tau}{\epsilon}\right)\leq60c_{r}R\log\left(960\cWeightCons\cWeightStab m^{3/2}\right)\leq K.
\]
This proves that we meet the conditions of Theorem \ref{thm:approx_path:chasing_zero}.
Therefore, we have
\begin{eqnarray*}
\Phi_{\mu}(\vWeightError(\vxNext,\vWeightFin)) & \leq & \left(1-\frac{\epsilon^{2}}{24R}\left(\frac{\delta_{t}}{4\sqrt{m}}\right)\right)\Phi_{\mu}(\trCurr)+\epsilon m\frac{1}{2R}\left(4\cWeightStab\delta_{t}\right)\\
 & = & \left(1-\frac{\delta_{t}}{600c_{r}^{2}R\sqrt{m}}\right)\Phi_{\mu}(\trCurr)+\frac{2m\cWeightStab\delta_{t}}{5R}\\
 & \leq & 960\cWeightCons\cWeightStab m^{3/2}.
\end{eqnarray*}
where we do not need to re-derive the last line because it follows
from Theorem \ref{thm:approx_path:chasing_zero}. 

Consequently, $\normInf{\vWeightError(\vxCurr,\vWeightCurr)}\leq K$
and $\Phi_{\mu}(\vWeightError(\vxNext,\vWeightFin))\leq960\cWeightCons\cWeightStab m^{3/2}$.
Since $K\leq\frac{1}{8}$, we have $\normInf{\fmWeight(\vSlackCurr)^{-1}(\vWeightCurr-\fvWeight(\vSlackCurr))}\leq1.2$
and $\energyStab(\vSlackCurr,\vWeightCurr)\leq2\cWeightStab.$ Consequently,
by Lemma~\ref{lem:weighted_path:x_progress} we have
\[
\delta_{t}(\vxNext,\vWeightNext)\leq\energyStab(\vxCurr,\vWeightCurr)\cdot\delta_{t}^{2}\leq2\cdot\cWeightStab\cdot\delta_{t}^{2}
\]
Let
\[
\epsilon_{\infty}\defeq\normInf{\log(\vWeightFin)-\log(\vWeightNext)}\enspace\text{ and }\enspace\epsilon_{w}\defeq\norm{\log(\vWeightFin)-\log(\vWeightNext)}_{\mWeightNext}.
\]
By our bounds on $U$, we have
\[
\epsilon_{\infty}\leq(1+\epsilon)R\leq\frac{1}{100c_{r}}\text{ and }\epsilon_{w}=(1+\epsilon)\left[\frac{\cWeightCons+0.14}{\cWeightCons+1}\delta_{t}\right]\leq\frac{c_{r}+0.37}{c_{r}+1}\delta_{t}.
\]
Using Lemma~\ref{lem:weight_change}, we have that
\begin{align*}
\delta_{t}(\vxNext,\vWeightFin) & \leq(1+\epsilon_{\infty})\left[\delta_{t}(\vxNext,\vWeightNext)+\epsilon_{w}\right]\leq3\cWeightStab\delta_{t}^{2}+(1+\epsilon_{\infty})\epsilon_{w}\\
 & \leq\left(1+\frac{1}{100c_{r}}\right)\left(\frac{c_{r}+0.34}{c_{r}+1}\right)\delta_{t}+3\cWeightStab\delta_{t}^{2}\leq\left(\frac{c_{r}+0.5}{c_{r}+1}\right)\delta_{t}
\end{align*}
\end{proof} 

\section{The Algorithm}

\label{sec:algorithm}

In this section we show how to put together the results of the previous
sections to solve a linear program. First, in Section~\ref{sub:path_following}
we provide a path following routine that allows us to move quickly
from one approximate central path point to another. Using this subroutine,
in Section~\ref{sec:algorithm:solving_lp} we show how to obtain
an algorithm for solving a linear program in $\otilde(\sqrt{\rank(\ma)}L)$
iterations that consist of solving linear systems in the original
constraint matrix. In the Appendix we provide additional proof details
such as how these algorithm only require approximate linear system
solvers (Appendix~\ref{sec:app:Inexact-Linear-Algebra}) and how
to initialize our interior point technique and round approximate solutions
to optimal ones (Appendix~\ref{sec:app:bit_complexity}).

\subsection{Path Following}

\label{sub:path_following}

We start by analyzing the running time of $\code{pathFollowing}$
a subroutine for following the weighted central path. 

\begin{center}
\begin{tabular}{|l|}
\hline 
\textbf{$\ensuremath{(\vxNext,\next{\vWeight})=\code{pathFollowing}(\vxCurr,\vWeightCurr,t_{\text{start}},t_{\text{end}})}$}\tabularnewline
\hline 
\hline 
1. $\cWeightCons=2\log_{2}\left(\frac{2m}{\rank(\ma)}\right),t=t_{\text{start}},K=\frac{1}{24c_{r}}.$\tabularnewline
\hline 
2. While $t<t_{\text{end}}$\tabularnewline
\hline 
2a. $\ $$(\vxNext,\vWeightFin)=\centeringInexact(\vxCurr,\vWeightCurr,K,\ensuremath{\code{computeWeight}})$\tabularnewline
\hline 
2b. $\ $$t^{(new)}:=t\left(1+\frac{1}{10^{10}c_{r}^{3}\log\left(\cWeightCons m\right)\sqrt{\rank(\ma)}}\right)$.\tabularnewline
\hline 
2c. $\ $$\vxCurr:=\vxNext$, $\vWeightCurr:=\vWeightFin$, $t:=t^{(new)}$\tabularnewline
\hline 
2d. $\ $For every $\frac{m}{100c_{r}\log(c_{r}m)}$ steps, check
if the current $\vx$, $\vWeight$ satisfies the $\delta$ and $\Phi$
invariants.\tabularnewline
\hphantom{2d.} If it does not satisfies, roll back to the last time
the invariants were met.\tabularnewline
\hline 
3. Output $(\vx^{(old)},\vw^{(old)})$.\tabularnewline
\hline 
\end{tabular}
\par\end{center}

\begin{theorem}[Main Result] \label{thm:path_following}Given $\{\vxCurr,\vWeightCurr\}\in\dFull$
and $t_{\text{start}}\leq t_{\text{end}}$. Suppose that
\[
\delta_{t_{\text{start}}}(\vxCurr,\vWeightCurr)\leq\frac{1}{11520c_{r}^{2}\log\left(1920\cWeightCons m^{3/2}\right)}\enspace\text{ and }\enspace\Phi_{\mu}(\vWeightError(\vxCurr,\vWeightCurr))\leq1920\cWeightCons m^{3/2}
\]
where $\mu=2\log\left(52\cWeightCons m\right)/K$. Let $\ensuremath{(\vxNext,\vWeightFin)=\code{pathFollowing}(\vxCurr,\vWeightCurr,t_{\text{start}},t_{\text{end}})}$,
then
\[
\delta_{t_{\text{end}}}(\vxNext,\next{\vWeight})\leq\frac{1}{11520c_{r}^{2}\log\left(1920\cWeightCons m^{3/2}\right)}\enspace\text{ and }\enspace\Phi_{\mu}(\vWeightError(\vxNext,\next{\vWeight}))\leq1920\cWeightCons m^{3/2}.
\]
Furthermore, computing $(\vxNext,\next{\vWeight})$ takes $\tilde{O}\left(\sqrt{\rank(\ma)}\log\left(\frac{t_{\text{end}}}{t_{\text{start}}}\right)\right)$
iterations in expectation where the cost of each iteration is dominated
by the time need to solve $\tilde{O}(1)$ linear system solves.

\end{theorem}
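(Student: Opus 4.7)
The plan is to prove the theorem by induction on the iterations of the outer while loop, maintaining both invariants (the centrality bound on $\delta_t$ and the potential bound on $\Phi_\mu(\vWeightError)$) throughout. Each iteration consists of (a) a $\centeringInexact$ call and (b) a multiplicative update to $t$, so I will bound the effect of each on the two invariants separately and show the combined effect is a net contraction.

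\textbf{Inductive step.} Assume the invariants hold at the start of an iteration. By Theorem~\ref{thm:smoothing:centering_inexact_weight} applied with $K = 1/(24 c_r)$ and the approximate weight oracle $\code{computeWeight}$ (whose accuracy guarantee from Theorem~\ref{thm:weights_full:approximate_weight} matches the hypothesis of Theorem~\ref{thm:smoothing:centering_inexact_weight}), the call to $\centeringInexact$ contracts centrality by a factor $(1 - 0.5/(1+c_r))$ and preserves the $\Phi_\mu$ invariant. Next, for the $t$-step, Lemma~\ref{lem:weighted_path:t_step} gives
\[
\delta_{t^{\text{new}}}(\vx,\vw) \leq (1+\alpha)\delta_t(\vx,\vw) + \alpha\sqrt{\|\vw\|_1}
\]
with $\alpha = 1/(10^{10} c_r^3 \log(c_r m)\sqrt{\rank(\ma)})$. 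Since the $\Phi_\mu$ invariant implies $\normInf{\vWeightError} = O(\log(c_r m))$, Lemma~\ref{lem:appendix:log_helper} gives $\|\vw\|_1 \leq O(\|\vg(\vs)\|_1) = O(\rank(\ma))$ using Theorem~\ref{thm:weights_full:weight_properties}. Plugging this in, the additive term from the $t$-step is $\alpha\cdot O(\sqrt{\rank(\ma)}) = O\left(1/(10^{10}c_r^3 \log(c_r m))\right)$, which is much smaller than the contraction $\Omega(\delta_t/c_r)$ achieved by $\centeringInexact$. Thus the centrality invariant is reestablished, and since the $t$-step does not change $\vx$ or $\vw$, the $\Phi_\mu$ invariant is untouched by step (b).

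\textbf{Iteration count and cost per iteration.} Each iteration multiplies $t$ by $(1 + \alpha)$, so to go from $t_{\text{start}}$ to $t_{\text{end}}$ requires at most $O(\alpha^{-1}\log(t_{\text{end}}/t_{\text{start}})) = \tilde{O}(\sqrt{\rank(\ma)}\log(t_{\text{end}}/t_{\text{start}}))$ successful iterations. The per-iteration cost is dominated by $\centeringInexact$, which calls $\code{computeWeight}(\vs,\vw,R)$ with $R = \Theta(K/(c_r \log(c_r m)))$; by Theorem~\ref{thm:weights_full:approximate_weight} this solves $\tilde{O}(1)$ linear systems. The ball-box projection in step 5 of $\centeringInexact$ is handled by the result of Section~\ref{sec:app:Project_ball_box} and is negligible.

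\textbf{The main obstacle: randomization failure and the rollback.} Since $\code{computeWeight}$ is randomized (each call succeeds with probability at least $1 - 1/m$ per invocation of $\code{computeLeverageScores}$), over the course of $\tilde{O}(\sqrt{\rank(\ma)})$ iterations the probability that some call produces a weight estimate outside its $R$-error guarantee is non-negligible. This is why the algorithm performs a periodic check (step 2d) and rolls back whenever either invariant is violated, verifying $\delta_t$ and $\Phi_\mu$ every $\Theta(m/(c_r\log(c_r m)))$ steps. The expected total work is bounded because in each checkpoint block of length $\ell$, the probability of any failure is at most $\ell \cdot \tilde{O}(1/\poly(m))$, which is $o(1)$ for the chosen block length, so the expected number of rollbacks blows up the iteration count by only a constant factor. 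This is where the ``in expectation'' qualifier in the bound comes from; the rest of the bookkeeping (checking that the rollback point itself satisfies the invariants, ensuring $\Phi_\mu \leq 1920 c_r m^{3/2}$ persists through checkpoints) follows by simply using the invariants at the last successful checkpoint as the induction hypothesis for the next block.
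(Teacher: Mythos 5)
Your proposal follows essentially the same route as the paper: invoke Theorem~\ref{thm:smoothing:centering_inexact_weight} for the centering contraction and $\Phi_\mu$ preservation, then Lemma~\ref{lem:weighted_path:t_step} for the cost of increasing $t$, balance the two, and handle the randomization of $\code{computeWeight}$ via periodic invariant checks and rollback. The iteration count and the per-iteration cost accounting also match. So the overall structure is correct.

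There is, however, a numerical bug in your derivation that, taken at face value, would invalidate the step where you bound $\norm{\vw}_1$. You claim that the $\Phi_\mu$ invariant implies $\normInf{\vWeightError}=O(\log(c_r m))$. If that were the best bound available, the conclusion you draw from it, $\norm{\vw}_1\le O(\norm{\vg(\vs)}_1)$, would be \emph{false}: a coordinate-wise gap of order $\log(c_r m)$ in log-scale would let $w_i/g_i(\vs)$ range over a $\mathrm{poly}(m)$ multiplicative factor, and Lemma~\ref{lem:appendix:log_helper} (which requires the $\ell_\infty$ log-distance to be at most $1/2$) would not even apply. What actually saves the argument is that $\mu$ is large. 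Indeed $\mu=2\log(52\cWeightCons m)/K$ with $K=1/(24c_r)$, so using $e^{\mu\normInf{\vWeightError}}\le\Phi_\mu(\vWeightError)\le 1920\cWeightCons m^{3/2}$ gives
\[
\normInf{\vWeightError}\;\le\;\frac{\log(1920\cWeightCons m^{3/2})}{\mu}\;=\;\frac{K\log(1920\cWeightCons m^{3/2})}{2\log(52\cWeightCons m)}\;\le\;K=\frac{1}{24c_r}\,,
\]
which is well below $1/2$, so Lemma~\ref{lem:appendix:log_helper} applies and $\norm{\vw}_1=O(\rank\ma)$ follows from $\cWeightSize(\vg)\le 2\rank\ma$. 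Replacing your $O(\log(c_r m))$ bound with this one closes the gap.

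Two smaller remarks. First, in the checkpoint analysis you wave at ``the rest of the bookkeeping''; the paper's point there is that verifying the $\Phi_\mu$ invariant requires computing $\vg$ exactly (via Lemma~\ref{lem:weight_iterative}, which costs $m$ linear system solves), and it is precisely the choice of checkpoint spacing $\Theta(m/(c_r\log(c_r m)))$ that amortizes this to $\otilde(1)$ per iteration; you should say this explicitly since otherwise the claimed per-iteration cost of $\otilde(1)$ linear systems is not justified. Second, your assertion that the failure probability per call is $\otilde(1/\mathrm{poly}(m))$ should be quantified: Theorem~\ref{thm:weights_full:approximate_weight} gives success probability $(1-1/m)^{\lceil 12c_r\log(4m/K)\rceil}$, so the per-call failure probability is $O(c_r\log(m)/m)$, which over a block of length $m/(100c_r\log(c_r m))$ gives a constant failure probability per block, matching the paper's ``constant probability of success per block''. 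This is what justifies ``in expectation'' with only a constant-factor slowdown.
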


\begin{proof} This algorithm maintains the invariant that
\[
\delta_{t}(\vxCurr,\vWeightCurr)\leq\frac{1}{11520c_{r}^{2}\log\left(1920\cWeightCons m^{3/2}\right)}\text{ and }\Phi_{\mu}(\vWeightError(\vxCurr,\vWeightCurr))\leq1920\cWeightCons m^{3/2}
\]
in each iteration in the beginning of the step (2a). Note that our
oracle $\code{computeWeight}$ satisfies the assumption of Theorem~\ref{thm:smoothing:centering_inexact_weight}
since $2K\leq\frac{1}{12c_{r}}.$ Hence, $\centeringInexact$ can
use $\code{computeWeight}$ to find the approximations of $\vg(\vSlackNext)$.
Hence, Theorem \ref{thm:smoothing:centering_inexact_weight} shows
that we have
\[
\delta_{t}(\vxNext,\vWeightFin)\leq\left(1-\frac{0.5}{1+\cWeightCons}\right)\delta_{t}\enspace\text{ and }\enspace\Phi_{\mu}(\vWeightError(\vxNext,\vWeightFin))\leq1920\cWeightCons m^{3/2}.
\]
Using the fact $\cWeightSize(\vg)\leq2\rank(\ma)$ and that $\vWeightNext$
is within a multiplicative factor of two of $\vg(\vSlackNext)$ by
Lemma \ref{lem:weighted_path:t_step} we have
\begin{align*}
 & \delta_{t^{(new)}}(\vxNext,\vWeightFin)\\
 & \leq\left(1+\frac{1}{10^{10}c_{r}^{3}\log\left(\cWeightCons m\right)\sqrt{\rank(\ma)}}\right)\left(1-\frac{0.5}{1+\cWeightCons}\right)\delta_{t}+\frac{\sqrt{\normOne{\vw^{(new)}}}}{10^{10}c_{r}^{3}\log\left(\cWeightCons m\right)\sqrt{\rank(\ma)}}\\
 & \leq\frac{1}{11520c_{r}^{2}\log\left(1920\cWeightCons m^{3/2}\right)}
\end{align*}

Theorem \ref{thm:weights_full:approximate_weight} shows that with
probability $\left(1-\frac{1}{m}\right)^{\left\lceil 12c_{r}\log\left(\frac{4m}{K}\right)\right\rceil }$,
$\code{computeWeight}$ outputs a correct answer. Therefore, for each
$\frac{m}{100c_{r}\log(c_{r}m)}$ iterations there is constant probability
that the whole procedure runs correctly. Hence, we only need to know
how long it takes to check the current state satisfies $\delta_{t}$
and $\Phi_{\mu}$ invariants. We can check the $\delta_{t}$ easily
using only $1$ linear system solve. To check $\Phi_{\mu}$, we need
to compute the weight function exactly. To do this, we use lemma \ref{lem:weight_iterative}
and note that computing the leverage scores exactly takes $m$ linear
system solve. Therefore, the averaged cost of step 2d is just $\tilde{O}(1)$
linear system solves and this justified the total running time.

\end{proof}

\subsection{Solving a Linear Program}

\label{sec:algorithm:solving_lp}

Here we show how to use the properties of $\code{pathFollowing}$
proved in Theorem~\ref{thm:path_following} to obtain a linear program
solver. Given the previous theorem all that remains is to show how
to get the initial central point and round the optimal point to a
vertex. We defer much of the proof of how to obtain an initial point,
deal with unbounded solutions, and round to an optimal vertex to Lemma~\ref{lem:the_modified_LP}
proved in Appendix~\ref{sec:app:bit_complexity}.

\begin{theorem}\label{thm:LPsolve}Consider a linear programming
problem of the form
\begin{equation}
\min_{\vx\in\Rn~:~\ma\vx\geq\vb}\vc^{T}\vx\label{eq:algo:problem}
\end{equation}
where $\ma\in\Rmn,$ $\vb\in\Rm$, and $\vc\in\Rn$ have integer coefficients.
Let $L$ denote the bit complexity of \eqref{eq:algo:problem} and
suppose that for any positive definite diagonal matrix $\md\in\Rmm$
with condition number $2^{\tilde{O}(L)}$ there is an algorithm $\code{solve}(\ma,\vb,\md,\epsilon)$
such that
\begin{equation}
\norm{\code{solve}(\ma,\vb,\md,\epsilon)-\left(\md\ma\right)^{+}\vb}_{\ma^{T}\md^{2}\ma}\leq\epsilon\norm{\left(\md\ma\right)^{+}\vb}_{\ma^{T}\md^{2}\ma}\label{eq:solver_assumption-1}
\end{equation}
in time $O\left(\mathcal{T}\log(1/\epsilon)\right)$ for any $\epsilon>0$
with success probability greater than $1-\frac{1}{m}$. Then, there
is an algorithm to solve \eqref{eq:algo:problem} in expected time
$\tilde{O}\left(\sqrt{\rank(\ma)}\left(\mathcal{T}+\nnz(\ma)\right)L\right)$,
i.e, find the active constraints of an optimal solution or prove that
the program is unfeasible or unbounded. 

Using \cite{nelson2012osnap} as the $\code{Solve}$ algorithm, we
obtain an algorithm that solves \eqref{eq:algo:problem} in time
\[
\tilde{O}\left(\sqrt{\rank(\ma)}\left(\nnz(\ma)+\left(\rank(\ma)\right)^{\omega}\right)L\right).
\]
where $\omega<2.3729$ \cite{williams2012matrixmult} is the matrix
multiplication constant.

\end{theorem}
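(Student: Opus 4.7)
The plan is to reduce the proof to the black-box invocation of $\code{pathFollowing}$ (Theorem~\ref{thm:path_following}) together with the preprocessing and postprocessing machinery encapsulated in Lemma~\ref{lem:the_modified_LP}. First, I would apply Lemma~\ref{lem:the_modified_LP} to transform \eqref{eq:algo:problem} into a modified linear program of essentially the same size whose constraint matrix has the same rank (up to $\otilde(1)$ extra rows/columns), for which an explicit initial point $\{\vx^{(0)},\vWeight^{(0)}\}\in\dFull$ is known to satisfy both the centrality invariant $\delta_{t_0}(\vx^{(0)},\vWeight^{(0)})\leq(11520c_r^2\log(1920c_r m^{3/2}))^{-1}$ and the potential invariant $\Phi_\mu(\vWeightError(\vx^{(0)},\vWeight^{(0)}))\leq 1920c_r m^{3/2}$ at some initial path parameter $t_0 = 2^{-\tilde O(L)}$. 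The reduction further specifies a threshold $t_\star = 2^{\tilde O(L)}$ such that from any approximately centered point $\{\vx,\vWeight\}$ at parameter $t\geq t_\star$ one can read off (in $\otilde(\nnz(\ma))$ additional work) either the active constraints of an optimal vertex of \eqref{eq:algo:problem} or a certificate of infeasibility/unboundedness.

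Next I would invoke $\code{pathFollowing}(\vx^{(0)},\vWeight^{(0)},t_0,t_\star)$ from Theorem~\ref{thm:path_following}. By that theorem, the output satisfies the same two invariants at $t_\star$, and the number of outer iterations is
\[
\otilde\Bigl(\sqrt{\rank(\ma)}\cdot\log(t_\star/t_0)\Bigr) = \otilde\bigl(\sqrt{\rank(\ma)}\,L\bigr).
\]
Each such iteration requires $\otilde(1)$ linear-system solves in a matrix of the form $\ma^T\ms^{-1}\mWeight\ms^{-1}\ma$ (where $\ms$, $\mWeight$ are positive diagonal with condition number $2^{\otilde(L)}$, since we stay in a bounded region of the central path throughout), plus $\otilde(\nnz(\ma))$ time for computing gradients, slacks, weight updates, and the projection onto the ball-box intersection from step~5 of $\centeringInexact$ (handled in depth $\otilde(1)$ work $\otilde(m)$ by the appendix). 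Substituting the assumed linear-system solver gives per-iteration cost $\otilde(\mathcal{T} + \nnz(\ma))$, hence total expected time $\otilde(\sqrt{\rank(\ma)}(\mathcal{T}+\nnz(\ma))L)$ for the path-following phase, and after rounding the same bound for the overall algorithm.

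The main technical subtlety — and what I would be most careful about — is the fact that the solver in \eqref{eq:solver_assumption-1} is only a relative-error approximate solver in a particular matrix norm, whereas the analyses of Sections~\ref{sec:weighted_path}--\ref{sec:approx_path} were carried out as if $\vh_t(\vx,\vw)$ and the leverage-score regressions inside $\code{computeLeverageScores}$ were computed exactly. I would discharge this by appealing to Appendix~\ref{sec:app:Inexact-Linear-Algebra}, which shows that choosing $\epsilon=1/\poly(m,L)$ inside $\code{solve}$ perturbs the Newton step, the approximate leverage scores, and hence the quantities $\delta_t$ and $\Phi_\mu$ only by amounts that can be absorbed into the slack in the invariants used by Theorems~\ref{thm:weights_full:approximate_weight} and~\ref{thm:smoothing:centering_inexact_weight}; this only costs a $\log(1/\epsilon)=\otilde(L)$ factor that is already hidden in the $\otilde(\mathcal{T})$ notation.

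Finally, for the concrete running-time bound I would instantiate $\code{solve}$ with the subspace-embedding-based overdetermined regression solver of \cite{nelson2012osnap}, which solves systems in $\ma^T\md^2\ma$ in time $\mathcal{T} = \otilde(\nnz(\ma) + \rank(\ma)^\omega)$ to any $\polylog$ precision with high probability. Plugging this into the bound above yields
\[
\otilde\Bigl(\sqrt{\rank(\ma)}\bigl(\nnz(\ma) + \rank(\ma)^\omega\bigr)L\Bigr),
\]
as stated. The hard part is not any single estimate but the bookkeeping: making sure the failure probabilities of $\code{computeWeight}$ and of the randomized solver are small enough that the rollback mechanism in step~2d of $\code{pathFollowing}$ only blows up the expected iteration count by a constant, and that the initial-point construction in Lemma~\ref{lem:the_modified_LP} produces a weight $\vw^{(0)}$ satisfying the $\Phi_\mu$ bound (not just the $\delta_t$ bound); for this one uses $\code{computeInitialWeight}$ from Theorem~\ref{thm:weights_full:initialapproximate_weight}, whose $\otilde(\sqrt{\rank(\ma)})$ cost is absorbed into the overall budget.
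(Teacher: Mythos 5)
There is a genuine gap in the initialization step. You attribute to Lemma~\ref{lem:the_modified_LP} an initial pair $\{\vx^{(0)},\vWeight^{(0)}\}$ that already satisfies the centrality invariant $\delta_{t_0}(\vx^{(0)},\vWeight^{(0)})\leq(11520c_r^2\log(\cdot))^{-1}$ at some tiny $t_0$. The lemma does not deliver this: it only produces a strictly feasible interior point $\vx_0$ of the modified program. Even after running $\code{computeInitialWeight}$ to set $\vWeight^{(0)}\approx\vg(\vSlack(\vx_0))$ (which handles the $\Phi_\mu$ invariant, as you note), the Newton decrement $\delta_t(\vx_0,\vWeight^{(0)})$ for the \emph{actual} cost vector $\vc$ can be arbitrarily large — the interior point is feasible but nowhere near the weighted central path for $t\vc$. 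Since $\code{pathFollowing}$ requires the centrality invariant to hold at the start, you cannot invoke it directly.

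The paper bridges this with a step your plan omits: temporarily replace the cost $\vc$ by $\vc_{\text{modified}}=\ma^{T}\mSlack(\vx_0)^{-1}\vWeight^{(0)}$, for which $\delta_t(\vx_0,\vWeight^{(0)})=0$ identically; run $\code{pathFollowing}$ \emph{downward} in $t$ until $t=2^{-\tilde\Theta(L)}$, where the cost term is so small that the same point is also approximately centered for $2^{-\tilde\Theta(L)}\vc$; and only then switch back to $\vc$ and run $\code{pathFollowing}$ upward to $t=2^{\tilde\Theta(L)}$. Both passes contribute $\tilde O(\sqrt{\rank(\ma)}\,L)$ iterations, so the final bound is unaffected, but without this cost-modification-and-descent phase there is no valid starting point for the ascent. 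Everything else in your proposal — the use of Lemma~\ref{lem:the_modified_LP} for both pre- and post-processing, the invocation of Theorem~\ref{thm:path_following}, the appeal to Appendix~\ref{sec:app:Inexact-Linear-Algebra} for the approximate solver, the per-iteration cost accounting and the instantiation with the OSNAP solver — matches the paper's argument.
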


\begin{proof} Applying the Lemma~\ref{lem:the_modified_LP} we obtain
a modified linear program
\begin{equation}
\min\left\langle \next{\vc},\vx\right\rangle \text{ given }\ma_{(new)}\vx\geq\next{\vb}\label{eq:modified_LP}
\end{equation}
which is bounded and feasible with $O(n)$ variables, $O(m)$ constraints,
$O(\rank(\ma))$ rank and $\tilde{O}(L)$ bit complexity. Also, we
are given an explicit interior point $\vx_{0}$. 

To obtain an initial weighted central path point, we can use Theorem~\ref{thm:weights_full:initialapproximate_weight}.
However, $\vx$ may not be close to central path, i.e. $\delta_{t}$
could be large. To fix this, we can temporarily change the cost function
such that $\delta_{t}=0$. In particular, we can set $\vc_{\text{modified}}=\ma^{T}\mSlackX^{-1}\vWeight$
and get $\delta_{t}=0$ for this modified cost function. One can think
of Theorem~\ref{thm:path_following} as showing that we can get the
central path point from a certain cost function $t_{\text{start}}\vc$
to another cost function $t_{\text{end}}\vc$ in time that depends
only logarithmically on the multiplicative difference between these
two vectors. Clearly, instead of increasing $t$ we can decrease $t$
similarly. Hence, we can decrease $t$ such that we get the central
path point $\vx_{\text{center}}$ for the cost function $2^{-\tilde{\Theta}\left(L\right)}\vc_{\text{modified}}$.
Since $2^{-\tilde{\Theta}\left(L\right)}$ is close enough to zero,
it can be shown that $\delta_{t}$ is small also for the cost function
$2^{-\tilde{\Theta}\left(L\right)}\vc$. Then, we could use Theorem
\ref{thm:path_following} to increase $t$ and obtain the central
path point for $t=2^{\tilde{\Theta}\left(L\right)}$.

Then, we can use $\centeringInexact$ to make $\delta_{t}$ becomes
and hence $\vc^{T}\vx_{t}$ close to $\vc^{T}\vx$. By a standard
duality gap theorem,%
\footnote{See \cite{lsMaxflow} or \cite{Nesterov2003} for a more detailed
treatment of this fact in a more general regime.%
} we know that the duality gap of $\vx_{t}$ is less than $\norm{\vWeight}_{1}/t$
and in this case it is less than $2^{-\tilde{\Theta}\left(L\right)}$
because $\norm{\vWeight}_{1}\leq2\rank\left(\ma\right)$. Now, we
can use the conclusion of the Lemma~\ref{lem:the_modified_LP} to
find the active constraints of an optimal solution of the original
linear program or prove that it is infeasible or unbounded.

During the algorithm, we only called the function $\centeringInexact$
$\tilde{O}(L)$ times and hence the algorithm only executes $\tilde{O}(L)$
linear system solves. In Section~\ref{sec:app:Inexact-Linear-Algebra},
we show that these linear systems do not need to be solved exactly
and that inexact linear algebra suffices. Using this observation and
letting using \cite{nelson2012osnap} as the $\code{solve}$ routine
yields the total running time of 
\[
\tilde{O}\left(\sqrt{\rank(\ma)}\left(\nnz(\ma)+\left(\rank(\ma)\right)^{\omega}\right)L\right).
\]
\end{proof} 

In Section \ref{sec:app:Project_ball_box}, we show that the projection
problem in $\centeringInexact$ can be computed in $\otilde(1)$ depth
and $\tilde{O}(m)$ work and other operations are standard parallelizable
linear algebra operations. Therefore, we achieve the first $\otilde(\sqrt{\rank\left(\ma\right)}L)$
depth polynomial work method for solving linear programs.

\begin{theorem}\label{thm:LPsolve_p} There is an $\otilde(\sqrt{\rank\left(\ma\right)}L)$
depth polynomial work algorithm to solve linear program of the form
\[
\min_{\vx\in\Rn~:~\ma\vx\geq\vb}\vc^{T}\vx
\]
where $L$ denote the bit complexity of the linear program. \end{theorem}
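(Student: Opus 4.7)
The plan is to take the sequential algorithm behind Theorem~\ref{thm:LPsolve} and show that every iteration can be implemented in polylogarithmic depth with polynomial total work, so the iteration count of $\otilde(\sqrt{\rank(\ma)}L)$ established in Theorem~\ref{thm:path_following} directly yields the claimed depth bound. First, I would factor the cost of a single iteration into three classes of primitive operations: (i) linear system solves in matrices of the form $\ma^{T}\md\ma$ arising from Newton steps, approximate leverage score computation, and the inner gradient-descent iterations of $\code{computeWeight}$; (ii) Johnson--Lindenstrauss dimension reduction and associated dense linear algebra used in $\code{computeLeverageScores}$; and (iii) the argmin step in $\centeringInexact$ that projects a vector onto the intersection of an $\mWeightNext$-weighted $\ell_2$ ball and an $\ell_\infty$ box.

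For (i), I would invoke a standard parallel dense linear system solver (for example Gaussian elimination or the matrix-inversion reduction to matrix multiplication) to get $\polylog$ depth and $\poly$ work per solve; since a single iteration only performs $\tilde O(1)$ such solves by construction of $\code{pathFollowing}$, $\centeringInexact$, and $\code{computeWeight}$, this contributes only $\otilde(1)$ depth per iteration. For (ii), the Johnson--Lindenstrauss vectors $\vec q^{(j)}$ can be sampled independently in parallel, the matrix-vector products reduce to (i), and the final squaring and summation to form $\vsigma^{(apx)}$ is an embarrassingly parallel $O(\log m)$-depth reduction. For (iii), I would appeal to the construction deferred to Section~\ref{sec:app:Project_ball_box}, which the theorem statement explicitly cites as giving a $\otilde(1)$-depth, $\tilde O(m)$-work routine for projection onto the intersection of a weighted $\ell_2$ ball and an $\ell_\infty$ box (this is a one-dimensional Lagrangian search over the ball radius together with coordinate-wise box truncation, which parallelizes straightforwardly).

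With these three pieces in hand, the body of the proof is assembly: each call of $\code{pathFollowing}$ has $\otilde(\sqrt{\rank(\ma)}\log(t_{\text{end}}/t_{\text{start}}))$ iterations, each of $\polylog$ depth, and the initialization/rounding routine invoked from Lemma~\ref{lem:the_modified_LP} and Theorem~\ref{thm:weights_full:initialapproximate_weight} similarly decomposes into $\otilde(L)\cdot \otilde(\sqrt{\rank(\ma)})$ calls to the same primitives. Multiplying depth per iteration by the iteration count gives overall depth $\otilde(\sqrt{\rank(\ma)}L)$, while the work of any single iteration is bounded by $\poly(m,n)$ and so the total work remains polynomial.

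The main obstacle I expect is (iii): verifying that the weighted ball/box projection actually admits a polylogarithmic-depth implementation, because the interaction of the anisotropic $\norm{\cdot}_{\mWeightNext}$ constraint with the $\ell_\infty$ constraint makes it more delicate than a projection onto either set alone. The cleanest route is to parameterize by the Lagrange multiplier $\lambda\ge 0$ on the ball constraint, write the optimal coordinate as an explicit one-variable piecewise-rational function of $\lambda$ clipped to the box, and perform a parallel binary search (or parametric sort over the $O(m)$ breakpoints) to locate the correct $\lambda$; all remaining arithmetic is coordinate-separable and therefore parallelizes trivially. Everything else in the proof is a bookkeeping exercise composing known parallel primitives with the already-established iteration bound.
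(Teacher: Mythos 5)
Your proposal is correct and takes essentially the same route as the paper: combine the $\otilde(\sqrt{\rank(\ma)}L)$ iteration bound with the observation that each iteration decomposes into $\otilde(1)$ linear system solves (standard $\polylog$-depth parallel dense linear algebra), leverage-score estimation via Johnson--Lindenstrauss (a handful of such solves plus a $\log$-depth reduction), and the ball/box projection handled by the $\code{projectOntoBallBoxParallel}$ routine of Appendix~\ref{sec:app:Project_ball_box}. Your sketch of the projection step as a parametric sort followed by a monotone threshold search matches that routine's sort--prefix-sum--find structure exactly.
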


\subsection{Accelerating the Solver}

\label{sec:algorithm:accelerated}

In this section, we show that how we can apply acceleration methods
for decreasing the iterations of interior point techniques can be
applied to our algorithm to yield a faster method. In particular we
show how to adapt techniques of Vaidya \cite{vaidya1989speeding}
for using fast matrix multiplication to obtain a faster running time.
Our goal here is to provide a simple exposition of how the iteration
costs of our method can be decrease. We make no attempt to explore
the running time of our algorithm in all regimes and we note that
since our algorithm only needs to solve linear systems in scalings
of the original constraint matrix there may be techniques to improve
our algorithm further in specific regimes by exploiting structure
in $\ma$.

To accelerate our path following method, we note that we solve systems
of two forms: we solve systems in $\ma^{T}\ms^{-1}\mWeight\ms^{-1}\ma$
to update $\vx$ and we solve systems in $\ma^{T}\ms^{-1}\mWeight^{\alpha}\ms^{-1}\ma$
to update $\vWeight$. Since we have proven in Lemma \ref{lem:weights_full:cond_number}
that two system are spectrally similar, we only need to know how to
solve system of the form $\ma^{T}\ms^{-1}\mWeight\ms^{-1}\ma$ and
then we can use preconditioning to solve either system. Furthermore,
we note similarly to Vaidya \cite{vaidya1989speeding} that the $\mSlack$
and $\mw$ matrices do not change too much from iteration to iteration
and therefore a sequence of the necessary linear system can be solved
faster than considering them individually. Below we state and Appendix~\ref{sec:app:acceleration}
we prove a slight improvement of a result in \cite{vaidya1989speeding}
formally analyzing one way of solving these systems faster.

\begin{theorem}\label{thm:alg:accel} Let $\vd^{(i)}\in\dSlack$
be a sequence of $r$ positive vectors. Suppose that the number of
times that $d_{j}^{(i)}\neq d_{j}^{(i+1)}$ for any $i\in[r]$ and
$j\in[m]$ is bounded by $Cr^{2}$ for some $C\geq1$. Then if we
are given the $\vd^{(i)}$ in a sequence, in each iteration $i$ we
can compute $\left(\ma^{T}\md_{i}\ma\right)^{-1}\vx_{i}$ for $\md_{i}=\mDiag(\vd_{i})$
and arbitrary $\vx_{i}\in\Rn$ with average cost per iteration
\[
\otilde\left(\frac{mn^{\omega-1}}{r}+n^{2}+C^{\omega}r^{2\omega}+C^{\omega-1}nr^{\omega}\right)
\]
where $\omega<2.3729$ \cite{williams2012matrixmult} is the matrix
multiplication constant.

\end{theorem}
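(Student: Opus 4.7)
The plan is to maintain a representation of $(\ma^{T}\md_i\ma)^{-1}$ through Sherman--Morrison--Woodbury updates, refreshing the base factorization at the start and handling all subsequent changes as low-rank perturbations, exactly in the spirit of Vaidya's scheme \cite{vaidya1989speeding}. First, compute $\mN_{0}\defeq(\ma^{T}\md_{(1)}\ma)^{-1}$ once, which can be done by forming $\ma^{T}\md_{(1)}\ma$ in time $\otilde(mn^{\omega-1})$ via rectangular fast matrix multiplication and then inverting in time $\otilde(n^{\omega})$. Since we perform $r$ solves, this amortizes to the $\otilde(mn^{\omega-1}/r)$ term in the stated bound (using that $m\ge n$ so $n^{\omega}/r$ is absorbed).

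Next, at iteration $i$, let $S_{i}\subseteq[m]$ be the set of coordinates on which $\vd^{(i)}$ differs from $\vd^{(1)}$, and let $k_{i}\defeq|S_{i}|$, $\hat{\ma}_{i}\in\R^{k_{i}\times n}$ be the corresponding row submatrix of $\ma$, and $\mD_{i}\in\R^{k_{i}\times k_{i}}$ be the diagonal matrix whose entries are the changes in the corresponding $d$-values. Then
\[
\ma^{T}\md_{i}\ma=\ma^{T}\md_{(1)}\ma+\hat{\ma}_{i}^{T}\mD_{i}\hat{\ma}_{i},
\]
so the Woodbury identity gives
\[
(\ma^{T}\md_{i}\ma)^{-1}=\mN_{0}-\mN_{0}\hat{\ma}_{i}^{T}\big(\mD_{i}^{-1}+\hat{\ma}_{i}\mN_{0}\hat{\ma}_{i}^{T}\big)^{-1}\hat{\ma}_{i}\mN_{0}.
\]
Applying this to the given vector $\vx_{i}$ requires one multiplication $\mN_{0}\vx_{i}$ at cost $\otilde(n^{2})$, a multiplication by $\hat{\ma}_{i}$, the formation of the $k_{i}\times k_{i}$ matrix $\mD_{i}^{-1}+\hat{\ma}_{i}\mN_{0}\hat{\ma}_{i}^{T}$, and a solve in that matrix. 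The hypothesis $\sum_{i}(\text{new changes at step }i)\le Cr^{2}$ bounds $k_{i}\le Cr^{2}$ uniformly, which controls the cost of the $k_{i}\times k_{i}$ operations.

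To get the stated amortized bound we do not recompute $\hat{\ma}_{i}\mN_{0}$ and $\hat{\ma}_{i}\mN_{0}\hat{\ma}_{i}^{T}$ from scratch each iteration; instead we maintain them incrementally, adding only the rows/columns indexed by the coordinates whose status changed since the last iteration, and applying rectangular fast matrix multiplication to batch these updates. Because the total number of coordinate changes over all $r$ iterations is at most $Cr^{2}$, the cumulative cost of building the rows $\hat{\ma}_{i}\mN_{0}$ over the entire run is at most $\otilde(Cr^{2}\cdot n^{\omega-1})$ naively; organizing the updates in a Vaidya-style hierarchy of $O(\log r)$ levels so that each level refreshes its cached factor at an appropriate rate converts this into the claimed terms $\otilde(C^{\omega-1}nr^{\omega})$ for the rectangular multiplications and $\otilde(C^{\omega}r^{2\omega})$ for the inversions of the at most $Cr^{2}$-dimensional correction matrices. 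Dividing each cumulative cost by $r$ produces the per-iteration amortized bound stated in the theorem.

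The main obstacle is the bookkeeping: $k_{i}$ can grow as large as $\Theta(Cr^{2})$, so a naive per-iteration Woodbury update is too expensive, and one must verify that the hierarchical scheme correctly composes low-rank updates across levels and that the rectangular matrix multiplications fit within the stated budget. Everything else reduces to standard linear algebra: we only ever solve systems of the form $\ma^{T}\md\ma$, and as argued in the paragraph preceding the theorem it suffices to handle this single form, since by Lemma~\ref{lem:weights_full:cond_number} the matrices $\ma^{T}\ms^{-1}\mw\ms^{-1}\ma$ and $\ma^{T}\ms^{-1}\mw^{\alpha}\ms^{-1}\ma$ are spectrally equivalent and one can be used as a preconditioner for the other.
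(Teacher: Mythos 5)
Your overall plan is the same as the paper's: fix a base factorization at the initial scaling $\md_{1}$, and account for subsequent diagonal changes through a Sherman--Morrison--Woodbury correction whose size is controlled by the total number of coordinate changes. The Woodbury identity you write down matches the one in the paper. However, there is a concrete gap in how you propose to pay for maintaining the correction.

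You precompute only the base inverse $(\ma^{T}\md_{1}\ma)^{-1}$. The paper additionally precomputes the $n\times m$ matrix $(\ma^{T}\md_{1}\ma)^{-1}\ma^{T}$ in the same initialization phase, at the same $\otilde(mn^{\omega-1})$ cost you already amortize into the $\otilde(mn^{\omega-1}/r)$ term. This extra precomputation is essential: once it is available, the rows of $\hat{\ma}_{i}(\ma^{T}\md_{1}\ma)^{-1}$ are read off as columns of the precomputed matrix, so the only nontrivial incremental work is extending the small Schur complement $\hat{\ma}_{i}(\ma^{T}\md_{1}\ma)^{-1}\hat{\ma}_{i}^{T}$. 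Without it, forming each newly exposed row costs $\otilde(n^{2})$, and your claimed cumulative cost of $\otilde(Cr^{2}n^{\omega-1})$ for this step does not follow (and even if it did, dividing by $r$ gives $\otilde(Crn^{\omega-1})$, which is not dominated by the stated per-iteration bound when $n$ is large relative to $Cr$). Your proposed remedy---a ``Vaidya-style hierarchy of $O(\log r)$ levels'' of caches refreshed at varying rates---is not what the paper does, is left unspecified in a way that cannot be verified, and is unnecessary once $(\ma^{T}\md_{1}\ma)^{-1}\ma^{T}$ is precomputed. The paper's argument is then simple: at step $k$, append a block of $u_{k}$ new rows and columns to the Schur complement via an $r_{k}\times n$ by $n\times u_{k}$ rectangular multiplication costing $O(r_{k}nu_{k}^{\omega-2})$; summing under $\sum_{k}u_{k}\le Cr^{2}$ and using that the concave sum $\sum_{k}u_{k}^{\omega-2}$ is extremized at $u_{k}=Cr$ gives the $\otilde(C^{\omega-1}nr^{\omega})$ term; and simply inverting the at-most-$(Cr^{2})\times(Cr^{2})$ Schur complement from scratch each iteration yields $\otilde(C^{\omega}r^{2\omega})$. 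No multi-level bookkeeping is required.
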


Using Theorem~\ref{thm:alg:accel} we simply need to estimate how
much the diagonal entries $\ms^{-1}\mWeight\ms^{-1}$ to obtain a
faster linear program solver. We prove the following. 

\begin{theorem}For any $\frac{n}{m}\leq\beta\leq1$ and $r>1$, there
is an
\begin{equation}
\tilde{O}\left(\sqrt{m\beta}\left(\nnz(\ma)+n^{2}+\frac{mn^{\omega-1}}{r}+\beta^{-\omega}r^{2\omega}+\beta^{-(\omega-1)}nr^{\omega}\right)L\right)\label{eq:accel:1}
\end{equation}
 time algorithm for solving linear programming problems of the form
\[
\min\vc^{T}\vx\text{ given }\ma\vx\geq\vb
\]
where $\ma\in\R^{m\times n}$.

\end{theorem}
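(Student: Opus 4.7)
The plan is to combine the $\tilde{O}(\sqrt{\rank(\ma)}L)$ iteration weighted path-following algorithm of Theorem~\ref{thm:LPsolve} with the batched linear-algebra routine of Theorem~\ref{thm:alg:accel}, using $\beta$ as a tunable parameter that interpolates between a ``fully weighted'' path (small $\beta$, few iterations, rapidly changing systems) and a ``nearly standard log-barrier'' path (larger $\beta$, more iterations, slowly changing systems). Setting $\beta=n/m$ recovers $\sqrt{n}$ iterations, while $\beta=1$ gives $\sqrt{m}$ iterations; the right choice of $\beta$ and $r$ minimizes the total cost.

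First, I would re-run the analysis of Section~\ref{sec:weights_full} with the parameter $\beta$ of~\eqref{eq:sec:weights_full:weight_function} set to the user-supplied value rather than the canonical $\rank(\ma)/(2m)$, while keeping $\alpha$ chosen so that $\beta^{1-\alpha}\geq 1/2$. The formulas of Lemma~\ref{lem:weights_full:existence_and_size} still give $\|\vg(\vs)\|_1 = \rank(\ma)+\beta m = O(\beta m)$ since $\beta\geq n/m$, and the slack-sensitivity and step-consistency bounds of Lemmas~\ref{lem:weights_full:cond_number} and~\ref{lem:weights_full:consistency} degrade only by $\polylog$ factors. Plugging this into Theorem~\ref{thm:path_following} yields a $\tilde{O}(\sqrt{\beta m}\,L)$ iteration path-following algorithm.

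Second, I would show that over any window of $r$ consecutive iterations, the diagonal matrix $\md = \mSlack^{-1}\mWeight\mSlack^{-1}$ has at most $O(r^{2}/\beta)$ entries that change by a constant multiplicative factor. From Lemma~\ref{lem:weighted_path:stab_update_step} each step moves slacks and weights by an $\ell_\infty$-amount at most $O(\delta_t\cWeightStab)$ and by a $\mg$-weighted $\ell_2$-amount at most $O(\delta_t)$, and the centrality $\delta_t$ is maintained at a constant. Summing $r$ such steps gives a total $\mg$-norm change of $O(r)$; since the weight floor guarantees $g_i\geq\beta$, this converts to a squared $\ell_2$ bound of $O(r^2/\beta)$ on $\log\md$, so the number of coordinates crossing a constant multiplicative threshold is $O(r^2/\beta)$, i.e. $C=O(1/\beta)$ in Theorem~\ref{thm:alg:accel}. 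I then invoke Theorem~\ref{thm:alg:accel} with this value of $C$, which gives amortized per-iteration cost
\[
\tilde{O}\!\left(\nnz(\ma)+n^{2}+\frac{mn^{\omega-1}}{r}+\beta^{-\omega}r^{2\omega}+\beta^{-(\omega-1)}nr^{\omega}\right),
\]
where the $\nnz(\ma)$ term absorbs right-hand-side formation and the solve in $\ma^T\mSlack^{-1}\mWeight^{\alpha}\mSlack^{-1}\ma$ (needed inside $\code{computeWeight}$) is reduced to a solve in $\ma^T\mSlack^{-1}\mWeight\mSlack^{-1}\ma$ by preconditioning, justified by the spectral equivalence~\eqref{eq:weights_full:cond1}. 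Multiplying by the $\tilde{O}(\sqrt{m\beta}\,L)$ iteration count yields~\eqref{eq:accel:1}.

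The main obstacle is the second step: converting the per-iteration stability bounds into a $C=O(1/\beta)$ count on the number of coordinates that change substantially over $r$ iterations. The weighted $\ell_2$ bound per step is clean, but coordinates can cross the multiplicative threshold while drifting slowly in $\mg$-norm, so a straightforward telescoping loses factors; one must instead use an energy/potential argument along the lines of Lemma~\ref{lem:weighted_path:weight_step_helper} to account for cumulative drift. A secondary issue is ensuring that the $\tilde{O}(1)$ auxiliary linear solves performed inside each call to $\code{computeWeight}$ and $\centeringInexact$ share the same system $\ma^T\md\ma$ (up to preconditioning) so that they too benefit from the amortization of Theorem~\ref{thm:alg:accel}, rather than being charged at the worst-case rate.
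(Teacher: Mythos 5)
Your plan matches the paper's proof: generalize the parameter $\beta$ in the weight function to interpolate the size $c_1$ and hence the iteration count to $\tilde{O}(\sqrt{m\beta}L)$, bound the diagonal drift over each window of $r$ steps by combining the per-step $\mWeight$-weighted $\ell_2$ stability bound of Lemma~\ref{lem:weighted_path:stab_update_step} with the floor $w_i\geq\beta$ to obtain $C=\tilde{O}(1/\beta)$, and then invoke Theorem~\ref{thm:alg:accel}. The oscillation concern you flag is a real terseness in the paper's exposition but does not change the bound: each refresh of coordinate $i$ of $\vd$ forces a constant cumulative log-drift for that coordinate since its last refresh, so applying Cauchy--Schwarz over the window and summing against the per-step $\ell_2^2$ bound of $\tilde{O}(1/\beta)$ shows that if $n_i$ is the number of refreshes of coordinate $i$ then $\sum_i n_i\leq\sum_i n_i^2=\tilde{O}(r^2/\beta)$, which is exactly the coordinate-change count Theorem~\ref{thm:alg:accel} needs.
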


\begin{proof} Instead of using $\beta=\frac{n}{m}$ in the weight
function we let $\beta\in[\frac{n}{m},1]$ be arbitrary as in the
theorem statement. Looking at the analysis in Section~\ref{sec:weights_full}
we see that this yields a weight function with $c_{1}=O\left(\beta m\right)$,
$c_{\gamma}=O(1)$ and $c_{r}=\tilde{O}(1)$. Consequently, it takes
$\tilde{O}\left(\sqrt{\beta m}L\right)$ iterations to solve the linear
program.

We separate the sequence of the linear systems involved into groups
of size $r$. To use the previous theorem to compute $\left(\ma^{T}\md_{j}\ma\right)^{-1}\vx$
for each group of operations, we need to estimate the change of the
diagonal entries $\ms^{-1}\mWeight\ms^{-1}$. For the change of $\ms$,
Lemma \ref{lem:weighted_path:stab_update_step} shows that
\[
\norm{\log\left(\vs_{j}\right)-\log\left(\vs_{j+1}\right)}_{\mWeight_{j}}=O(1).
\]
Since we have added $\beta$ in the weight function, we have $\vWeight_{i}\geq\beta$
and 
\[
\norm{\log\left(\vs_{j}\right)-\log\left(\vs_{j+1}\right)}_{2}=O(\beta^{-1/2}).
\]
Therefore, in a period of $r$ operations, at most $O\left(\beta^{-1}r^{2}\right)$
coordinates can change multiplicatively by a constant factor. Similarly,
we can use inequality \eqref{eq:C_w_est} to analyze the change of
$\mWeight$. 

Therefore, we can maintain a vector $\vd$ such that $\md$ is spectrally
similar to $\ms^{-1}\mWeight\ms^{-1}$ while only changing $\vd$
a total of $O\left(\beta^{-1}r^{2}\right)$ over a sequence of $r$
operations. Using Theorem~\ref{thm:alg:accel} and using $\ma\md\ma$
as pre-conditioner for the necessary linear system solves, we can
solve the linear system with average cost
\[
\tilde{O}\left(\nnz(\ma)+n^{2}+\frac{mn^{\omega-1}}{r}+\beta^{-\omega}r^{2\omega}+\beta^{-(\omega-1)}nr^{\omega}\right).
\]
Using that the total number of iterations is $\tilde{O}\left(\sqrt{\beta m}L\right)$
then yields \eqref{eq:accel:1}.

\end{proof}

\section{Acknowledgments}

We thank Yan Kit Chim, Andreea Gane, Jonathan A. Kelner, Lap Chi Lau,
Aleksander Mądry, Cameron Musco, Christopher Musco, Lorenzo Orecchia,
Ka Yu Tam and Nisheeth Vishnoi for many helpful conversations. This
work was partially supported by NSF awards 0843915 and 1111109, NSF
Graduate Research Fellowship (grant no. 1122374) and Hong Kong RGC
grant 2150701. Finally, we thank the referees for extraordinary efforts
and many helpful suggestions.

\bibliographystyle{plain}
\bibliography{main}

\appendix

\section{Glossary}

\label{sec:glossary}

Here we summarize problem specific notation we use throughout the
paper. For many quantities we included the typical order of magnitude
as they appear during our algorithms.
\begin{itemize}
\item Linear program related: constraint matrix $\ma\in\R^{m\times n}$
, cost vector $\vc\in\Rn$, constraint vector $\vb\in\Rm$, solution
$\vx\in\mathbb{R}^{n}$, weights of constraints $\vWeight\in\mathbb{R}^{m}$
where $m$ is the number of constraints and $n$ is the number of
variables.
\item Bit complexity: $L=\log(m)+\log(1+d_{max})+\log(1+\max\{\normInf{\vc},\normInf{\vb}\})$
where $d_{max}$ is the largest absolute value of the determinant
of a square sub-matrix of $\ma$.
\item Slacks: $\vs(\vx)=\ma\vx-\vb$.
\item Matrix version of variables: $\ms$ is the diagonal matrix corresponds
to $\vs$, $\mw$ corresponds to $\vWeight$, $\mg$ corresponds to
$\vg$.
\item Penalized objective function (\ref{eq:penalized_object_function}):
$\penalizedObjective(\vx,\vWeight)=t\cdot\vc^{T}\vx-\sum_{i\in[m]}\weight_{i}\log\slackXi.$
\item Newton step (\ref{eq:weighted_path:newton_step}): $\vNewtonStep_{t}(\vx,\vWeight)=(\hessXX\penalizedObjective(\vx,\vWeight))^{-1}\gradX\penalizedObjective(\vx,\vWeight)=\left(\ma^{T}\ms^{-1}\mw\ms^{-1}\ma\right)^{-1}\left(t\vc-\ma^{T}\ms^{-1}\vWeight\right).$
\item Centrality (\ref{eq:weighted_path:energy_x}): $\delta_{t}(\vx,\vw)=\norm{\vNewtonStep_{t}(\vx,\vWeight)}_{\hessXX\penalizedObjective(\vx,\vWeight)}\approx\frac{1}{\polylog(m)}.$
\item Slack Sensitivity(\ref{eq:weighted_path:energy_x}): $\energyStab(\vSlack,\vWeight)=\max_{i\in[m]}\norm{\mWeight^{-1/2}\indicVec i}_{\mProj_{\mSlack^{-1}\ma}\left(\vWeight\right)}\approx1.$
\item Properties of weight function (Def \ref{def:sec_weighted_path:weight_function}):
\textbf{size} $\cWeightSize(\fvWeight)=\normOne{\fvWeight(\vSlack)}\approx\rank\left(\ma\right)$,
\textbf{slack sensitivity} $\cWeightStab(\fvWeight)=\sup_{\vs}\energyStab(\vSlack,\fvWeight(\vSlack))\approx1$,
\textbf{step consistency} $\cWeightCons(\fvWeight)\approx\log\left(\frac{m}{\rank\ma}\right)$.
\item Difference between $\vg$ and $\vWeight$ (\ref{eq:distance_of_sw}):
$\vWeightError(\vs,\vw)=\log(\fvWeight(\vs))-\log(\vw).$
\item Potential function for tracing $0$ (Def \ref{def:potential_function_tracing_zero}):
$\Phi_{\mu}(\vx)=e^{\mu x}+e^{-\mu x}\approx\poly(m)$.
\item The weight function proposed (\ref{eq:sec:weights_full:weight_function}):
\[
\vg(\vSlack)=\argmin_{\vWeight\in\rPos^{m}}\penalizedObjectiveWeight(\vSlack,\vWeight)\enspace\text{ where }\enspace\penalizedObjectiveWeight(\vSlack,\vWeight)=\onesVec^{T}\vWeight-\frac{1}{\alpha}\log\det(\ma_{s}^{T}\mWeight^{\alpha}\ma_{s})-\beta\sum_{i}\log\weight_{i}
\]
where $\ma_{s}=\mSlack^{-1}\ma$, $\alpha\approx1-1/\log_{2}\left(\frac{m}{\rank(\ma)}\right)$,
$\beta\approx\rank(\ma)/m$.
\end{itemize}

\section{Technical Tools}

\label{sec:app:lemmas}

In this section, we provide and prove various mathematical facts that
we use throughout the paper.

\subsection{Matrix Properties}

First, we prove various properties regarding projection matrices that
we use throughout the paper.

\begin{lemma}[Projection Matrices] \label{lem:tool:projection_matrices}
Let $\mProj\in\Rnn$ be an arbitrary projection matrix and let $\mSigma=\mDiag(\mProj)$.
For all $i,j\in[n]$ and $\vx\in\Rn$ we have the following
\begin{description}
\item [{(1)}] $\mSigma_{ii}=\sum_{j\in[n]}\mProj_{ij}^{(2)},$ 
\item [{(2)}] $\mZero\specLeq\mProj^{(2)}\specLeq\mSigma\specLeq\iMatrix$, 
\item [{(3)}] $\mProj_{ij}^{(2)}\leq\mSigma_{ii}\mSigma_{jj}$, 
\item [{(4)}] $|\indicVec i^{T}\mProj^{(2)}\vx|\leq\mSigma_{ii}\norm{\vx}_{\mSigma}$. 
\end{description}
\end{lemma}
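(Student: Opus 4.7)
\medskip

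The plan is to prove each of the four claims in sequence, exploiting the two defining properties of the projection matrix $\mProj$: it is symmetric ($\mProj^T = \mProj$) and idempotent ($\mProj^2 = \mProj$).

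For (1), I will simply expand $\mProj_{ii} = [\mProj^2]_{ii} = \sum_j \mProj_{ij}\mProj_{ji}$ and use symmetry to rewrite this as $\sum_j \mProj_{ij}^2 = \sum_j \mProj^{(2)}_{ij}$. The upper bound $\mSigma \specLeq \iMatrix$ in (2) then follows immediately: (1) gives $\mProj_{ii} = \sum_j \mProj_{ij}^2 \geq \mProj_{ii}^2$, forcing $\mSigma_{ii} = \mProj_{ii} \in [0,1]$. For the lower bound $\mZero \specLeq \mProj^{(2)}$, I will invoke the Schur product theorem, which states that the entrywise product of two PSD matrices is PSD; since $\mProj$ itself is PSD (being a projection), so is $\mProj \shurProd \mProj = \mProj^{(2)}$. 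The middle inequality $\mProj^{(2)} \specLeq \mSigma$ is the main small calculation: for any $\vx\in\Rn$, combining (1) with a standard ``Laplacian'' manipulation yields
\[
\vx^T(\mSigma - \mProj^{(2)})\vx = \sum_{i,j} \mProj_{ij}^2 x_i^2 - \sum_{i,j} \mProj_{ij}^2 x_i x_j = \tfrac{1}{2}\sum_{i,j}\mProj_{ij}^2 (x_i - x_j)^2 \geq 0.
\]

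For (3), since $\mProj$ is a symmetric PSD matrix, I will write $\mProj = \mU\mU^T$ for some matrix $\mU$ (e.g.\ via the spectral decomposition restricted to the range of $\mProj$). Then $\mProj_{ii} = \norm{\mU_{i,\cdot}}_2^2$ and $\mProj_{ij} = \innerProduct{\mU_{i,\cdot}}{\mU_{j,\cdot}}$, so Cauchy--Schwarz gives $\mProj_{ij}^2 \leq \mProj_{ii}\mProj_{jj} = \mSigma_{ii}\mSigma_{jj}$.

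Finally, for (4), I will apply Cauchy--Schwarz to the factorization $\mProj_{ij}^2 = \mProj_{ij}\cdot \mProj_{ij}$:
\[
\left|\indicVec i^T\mProj^{(2)}\vx\right| = \left|\sum_j \mProj_{ij}^2 x_j\right| \leq \sqrt{\sum_j \mProj_{ij}^2}\cdot\sqrt{\sum_j \mProj_{ij}^2 x_j^2}.
\]
The first factor equals $\sqrt{\mSigma_{ii}}$ by (1), and using (3) the second factor is bounded by $\sqrt{\mSigma_{ii}\sum_j \mSigma_{jj} x_j^2} = \sqrt{\mSigma_{ii}}\cdot\norm{\vx}_{\mSigma}$, yielding the claimed bound. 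None of the steps should present a real obstacle; the only mildly clever point is the symmetrization trick used for $\mProj^{(2)} \specLeq \mSigma$, and the choice to split $\mProj_{ij}^2$ asymmetrically before applying Cauchy--Schwarz in (4).
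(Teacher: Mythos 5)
Your proof is correct and follows essentially the same structure as the paper's: (1) from idempotence and symmetry; (2) via the Schur product theorem for the lower bound and a ``row-sum'' argument for the middle inequality; (3) via Cauchy--Schwarz; and (4) via Cauchy--Schwarz combined with (1) and (3). There are a few minor cosmetic differences worth noting. For the upper bound $\mSigma\specLeq\iMatrix$ in (2) you use the elementary inequality $\mProj_{ii}=\sum_j\mProj_{ij}^2\geq\mProj_{ii}^2$, whereas the paper simply invokes that a projection's eigenvalues are $0$ or $1$; both are fine. For the middle inequality your Laplacian-style symmetrization $\tfrac{1}{2}\sum_{i,j}\mProj_{ij}^2(x_i-x_j)^2\geq0$ is the quadratic-form version of the paper's observation that $\mSigma-\mProj^{(2)}$ is symmetric diagonally dominant; same content. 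In (4), you split Cauchy--Schwarz as $\mProj_{ij}\cdot(\mProj_{ij}x_j)$ while the paper splits as $(\sqrt{\mSigma_{jj}}x_j)\cdot(\mProj_{ij}^2/\sqrt{\mSigma_{jj}})$; your split is marginally cleaner as it avoids the formal $0/0$ that appears in the paper's expression when $\mSigma_{jj}=0$ (which is harmless since then $\mProj_{ij}=0$, but still needs a comment). The remaining choice in (3) of writing $\mProj=\mU\mU^T$ rather than using $\mProj=\mProj\mProj$ directly is a trivial variation.
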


\begin{proof} To prove (1), we simply note that by definition of
a projection matrix $\mProj=\mProj\mProj$ and therefore
\[
\mSigma_{ii}=\mProj_{ii}=\indicVec i^{T}\mProj\indicVec i=\indicVec i^{T}\mProj\mProj\indicVec i=\sum_{j\in[n]}\mProj_{ij}^{2}=\sum_{j\in[n]}\mProj_{ij}^{(2)}
\]

To prove (2), we observe that since $\mProj$ is a projection matrix,
all its eigenvectors are either 0 or 1. Therefore, $\mSigma\specLeq\iMatrix$
and by (1) $\mSigma-\mProj^{(2)}$ is diagonally dominant. Consequently,
$\mSigma-\mProj^{(2)}\specGeq0$. Rearranging terms and using the
well known fact that the shur product of two positive semi-definite
matrices is positive semi-definite yields (2).

To prove (3), we use $\mProj=\mProj\mProj$, Cauchy-Schwarz, and (1)
to derive
\[
\mProj_{ij}=\sum_{k\in[n]}\mProj_{ik}\mProj_{kj}\leq\sqrt{\left(\sum_{k\in[n]}\mProj_{ik}^{2}\right)\left(\sum_{k\in[n]}\mProj_{kj}^{2}\right)}=\sqrt{\mSigma_{ii}\mSigma_{jj}}\enspace.
\]
Squaring then yields (3).

To prove (4), we note that by the definition of $\mProj^{(2)}$ and
Cauchy-Schwarz, we have
\begin{equation}
\abs{\indicVec i^{T}\mProj^{(2)}\vx}=\left|\sum_{j\in[n]}\mProj_{ij}^{(2)}\vec{x}_{j}\right|\leq\sqrt{\left(\sum_{j\in[n]}\mSigma_{jj}\vec{x}_{j}^{2}\right)\cdot\sum_{j\in[n]}\frac{\mProj_{ij}^{(4)}}{\mSigma_{jj}}}\label{eq:lem:proj1}
\end{equation}
Now, by (1) and (3), we know that
\begin{equation}
\sum_{j\in[n]}\frac{\mProj_{ij}^{4}}{\mSigma_{jj}}\leq\sum_{j\in[n]}\frac{\mProj_{ij}^{2}\mSigma_{ii}\mSigma_{jj}}{\mSigma_{jj}}=\mSigma_{ii}\sum_{j\in[n]}\mProj_{ij}^{2}=\mSigma_{ii}^{2}\label{eq:lem:proj2}
\end{equation}
Since $\norm{\vx}_{\mSigma}\defeq\sqrt{\sum_{j\in[n]}\mSigma_{jj}\vx_{j}^{2}}$,
combining \eqref{eq:lem:proj1} and \eqref{eq:lem:proj2} yields $\abs{\indicVec i^{T}\mProj^{(2)}\vx}\leq\mSigma_{ii}\norm{\vx}_{\mSigma}$
as desired. \end{proof}

\subsection{Taylor Expansions and Multiplicative Approximations}

\label{sec:app:lemmas:log_mult}

Throughout this paper we use $\log(\va)-\log(\vb)$ as a convenient
way of working with $\mb^{-1}(\va-\vb)$ or $\ma^{-1}(\vb-\va)$.
In this section we make this connection rigorous by providing several
helper lemmas used throughout the paper.

\begin{lemma}[Log Notation] \label{lem:appendix:log_helper} Suppose
$\normInf{\log(\va)-\log(\vb)}=\epsilon\leq1/2$ then
\begin{align*}
\norm{\mb^{-1}(\va-\vb)}_{\infty} & \leq\epsilon+\epsilon^{2}.
\end{align*}
If $\normInf{\mb^{-1}(\va-\vb)}=\epsilon\leq1/2$, then
\[
\normInf{\log(\va)-\log(\vb)}\leq\epsilon+\epsilon^{2}.
\]
\end{lemma}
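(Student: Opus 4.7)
The plan is to reduce both inequalities to entry-wise scalar statements and then establish them via Taylor series expansions of $e^x-1$ and $\log(1+x)$ around the origin, using the hypothesis $\epsilon \leq 1/2$ to control tails.

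For the first implication, I would fix an index $i$ and set $x_i := \log(a_i) - \log(b_i)$, so that $|x_i| \leq \epsilon \leq 1/2$. Then $(a_i - b_i)/b_i = e^{x_i} - 1$, and the goal reduces to showing that $|e^{x_i} - 1| \leq |x_i| + |x_i|^2 \leq \epsilon + \epsilon^2$. Expanding $e^{x_i} - 1 = x_i + \sum_{k \geq 2} x_i^k/k!$ and using $|x_i| \leq 1/2$, the tail is bounded by $|x_i|^2 \sum_{k \geq 2} (1/2)^{k-2}/k!$, a convergent numerical series which is easily seen to be at most $1$, giving the desired $|e^{x_i}-1| \leq |x_i| + |x_i|^2$. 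Taking the maximum over $i$ then yields the $\normInf{\cdot}$ bound.

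For the second implication, set $y_i := (a_i - b_i)/b_i$, so that $|y_i| \leq \epsilon \leq 1/2$ and $\log(a_i) - \log(b_i) = \log(1 + y_i)$. Expanding $\log(1+y_i) = y_i + \sum_{k \geq 2} (-1)^{k+1} y_i^k/k$, I would bound the tail via $|y_i|^2 \sum_{k \geq 2} (1/2)^{k-2}/k$, again a convergent numerical series bounded by $1$, which gives $|\log(1+y_i) - y_i| \leq |y_i|^2$ and therefore $|\log(a_i)-\log(b_i)| \leq |y_i| + |y_i|^2 \leq \epsilon + \epsilon^2$. Taking the maximum over $i$ completes the argument.

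There is no real obstacle here: the only thing to be careful about is the constant in front of the $\epsilon^2$ term. The cleanest route is to verify that for $|x| \leq 1/2$ both $\sum_{k \geq 2} (1/2)^{k-2}/k!$ and $\sum_{k \geq 2} (1/2)^{k-2}/k$ are at most $1$, so that the quadratic remainder in each Taylor expansion is absorbed by a single $|x|^2$ (respectively $|y|^2$) factor without an additional constant. This keeps the bound $\epsilon + \epsilon^2$ rather than something like $\epsilon + c\epsilon^2$.
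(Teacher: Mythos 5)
Your proposal is correct and follows essentially the same route as the paper: both reduce to the coordinate-wise scalar inequalities obtained from the Taylor expansions of $e^{x}$ and $\log(1+x)$ on $|x|\le 1/2$, i.e.\ $1+x\le e^{x}\le 1+x+x^{2}$ and $x-x^{2}\le\log(1+x)\le x$, and you merely spell out the tail-series estimates that the paper leaves implicit.
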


\begin{proof} Using the Taylor expansion of $e^{x}$ and $\log(1+x)$,
we get the following two inequalities which prove the claim
\begin{eqnarray*}
1+x & \leq & e^{x}\leq1+x+x^{2}\text{ for }\left|x\right|\leq\frac{1}{2},\\
x-x^{2} & \leq & \log(1+x)\leq x\text{ for }\left|x\right|\leq\frac{1}{2}.
\end{eqnarray*}
 \end{proof}

\subsection{Matrix Calculus}

Here, we derive various matrix calculus formulas used in Section \ref{sec:weights_full}.
These are now somewhat standard and also discussed in \cite{vaidya1996new,anstreicher96}
but we derive them here for completeness. In this section, we define
\[
\mr_{\ma}(\vWeight)_{ij}\defeq\va_{i}^{T}(\ma^{T}\mw\ma)^{-1}\va_{j}.
\]
We start by computing the derivative of the \emph{volumetric barrier
function}, $f(\vw)\defeq\log\det(\ma^{T}\mw\ma)$.

\begin{lemma}[Derivative of Volumetric Barrier] \label{lem:deriv:log_det}
For $\ma\in\Rnm$, let $f:\rPos^{m}\rightarrow\R$ be given by $f(\vw)\defeq\log\det(\ma^{T}\mw\ma)$.
Then the following holds
\[
\forall\vw\in\rPos^{m}\enspace:\enspace\grad f(\vw)=\diag(\mr_{\ma}(\vWeight))\defeq\mLever_{\ma}(\vWeight)\mWeight^{-1}\onesVec.
\]
\end{lemma}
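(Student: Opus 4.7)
The plan is to apply the standard matrix-calculus identity $\nabla_X \log\det(X) = X^{-T}$ together with the chain rule, and then unpack the definitions of $\mLever_\ma$ and $\mProj_\ma$ to convert the resulting expression into the stated form.

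First I would write $X(\vw) \defeq \ma^T \mw \ma$ and observe that $f(\vw) = \log\det(X(\vw))$. Differentiating with respect to the coordinate $w_k$, the chain rule combined with $\frac{\partial}{\partial M_{ij}}\log\det(M) = (M^{-1})_{ji}$ gives
\[
\frac{\partial f}{\partial w_k}(\vw) \;=\; \mathrm{tr}\!\left( X(\vw)^{-1} \cdot \frac{\partial X(\vw)}{\partial w_k} \right).
\]
Since $X(\vw) = \sum_{i\in[m]} w_i \va_i \va_i^T$, we have $\frac{\partial X(\vw)}{\partial w_k} = \va_k \va_k^T$, so the trace collapses by the cyclic property to
\[
\frac{\partial f}{\partial w_k}(\vw) \;=\; \va_k^T (\ma^T \mw \ma)^{-1} \va_k \;=\; \mr_\ma(\vw)_{kk}.
\]
This immediately yields $\grad f(\vw) = \diag(\mr_\ma(\vw))$, which is the first equality.

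For the second equality, I would use the definition of $\mProj_\ma(\vw) = \mw^{1/2} \ma (\ma^T \mw \ma)^{-1} \ma^T \mw^{1/2}$ from Section~\ref{sec:Notation}. Reading off the $(i,i)$ entry gives $[\mProj_\ma(\vw)]_{ii} = w_i \cdot \va_i^T (\ma^T \mw \ma)^{-1} \va_i = w_i \cdot \mr_\ma(\vw)_{ii}$, and since $\mLever_\ma(\vw) = \mDiag(\mProj_\ma(\vw))$, this rearranges to $\mr_\ma(\vw)_{ii} = w_i^{-1} [\mLever_\ma(\vw)]_{ii}$. Stacking the coordinates into a vector then gives $\diag(\mr_\ma(\vw)) = \mLever_\ma(\vw) \mw^{-1} \onesVec$, completing the proof.

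There is essentially no obstacle here: the only technical care is in correctly citing the identity $\nabla_X \log\det(X) = X^{-T}$ (valid on positive definite matrices, which is where $X(\vw)$ lives since $\ma$ is full rank and $\vw > 0$) and in keeping the bookkeeping straight between $\mr_\ma$, which uses the unweighted rows $\va_i$, and $\mLever_\ma$, which incorporates the $w_i$ factors from the $\mw^{1/2}$ scalings.
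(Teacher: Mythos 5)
Your proof is correct and arrives at the same formula, but by a slightly different route than the paper: you invoke the standard Jacobi-type identity $\nabla_X\log\det(X)=X^{-T}$ together with the chain rule and the cyclic property of the trace, whereas the paper derives the derivative from first principles by forming the difference quotient $\frac{1}{\alpha}\left[\log\det(\ma^{T}\mw\ma+\alpha\va_{i}\va_{i}^{T})-\log\det(\ma^{T}\mw\ma)\right]$ and applying the matrix determinant lemma to the rank-one update, then taking the limit $\alpha\to 0$. The two calculations are of course equivalent --- the matrix determinant lemma is exactly what makes the rank-one directional derivative of $\log\det$ come out to $\va_i^T(\ma^T\mw\ma)^{-1}\va_i$, and the Jacobi formula packages that computation. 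Your version is a bit more compact at the cost of citing a known identity rather than being self-contained; the paper's version is more elementary. Your unpacking of $\mLever_\ma(\vw)\mw^{-1}\onesVec$ into $\diag(\mr_\ma(\vw))$ is also correct, and you are right that $\mr_\ma$ is built from the unweighted rows while $\mLever_\ma$ absorbs the $\mw^{1/2}$ scalings, so the $w_i^{-1}$ factor is exactly what reconciles the two --- the paper treats this as a definitional identity ($\defeq$) and does not spell it out, so your verification is harmless extra care.
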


\begin{proof} For all $i\in[m]$ and $\vw\in\Rm$, we know that
\[
\frac{\partial}{\partial\vw_{i}}f(\vw)=\lim_{\alpha\rightarrow0}\frac{1}{\alpha}\left[f(\vw+\alpha\indicVec i)-f(\vw)\right]=\lim_{\alpha\rightarrow0}\frac{1}{\alpha}\left[\log\det(\ma^{T}\mw\ma+\alpha\va_{i}\va_{i}^{T})-\log\det(\ma^{T}\mw\ma)\right].
\]
Applying the matrix determinant lemma then yields that
\[
\frac{\partial}{\partial\vw_{i}}f(\vw)=\lim_{\alpha\rightarrow0}\frac{1}{\alpha}\left[\log\left(\det(\ma^{T}\mw\ma)\cdot(1+\alpha\va_{i}^{T}(\ma^{T}\mw\ma)^{-1}\va_{i})\right)-\log\left(\det(\ma^{T}\mw\ma)\right)\right].
\]
Therefore,
\[
\frac{\partial}{\partial\vw_{i}}f(\vw)=\lim_{\alpha\rightarrow0}\frac{\log(1+\alpha\mr(\vw)_{ii})}{\alpha}=\mr(\vw)_{ii}.
\]
\end{proof}

Next we bound the rate of change of entries of the resistance matrix.

\begin{lemma}[Derivative of Effective Resistance] \label{lem:deriv:effres}
For all $\ma\in\Rmn$, $\vWeight\in\rPos^{m}$, and $i,j,k\in[m]$
we have
\[
\frac{\partial}{\partial\vw_{k}}\left[\mResist_{\ma}(\vWeight)\right]_{ij}=-\mResist_{\ma}(\vw)_{ik}\mResist_{\ma}(\vw)_{kj}
\]
where $\diag(\mr_{\ma}(\vWeight))\defeq\mLever_{\ma}(\vWeight)\mWeight^{-1}\onesVec$.\end{lemma}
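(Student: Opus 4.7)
The plan is to differentiate the expression $[\mr_\ma(\vw)]_{ij} = \va_i^T (\ma^T \mw \ma)^{-1} \va_j$ directly, treating the row vectors $\va_i^T$ and $\va_j$ as constants with respect to $\vw$, so that only the inverse matrix factor contributes to the derivative. The entire argument reduces to applying the standard matrix-inverse derivative identity $\frac{\partial}{\partial t} M(t)^{-1} = -M(t)^{-1}\left(\frac{\partial M}{\partial t}\right) M(t)^{-1}$.

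First I would observe that $\ma^T \mw \ma = \sum_{\ell \in [m]} w_\ell \, \va_\ell \va_\ell^T$, so $\frac{\partial}{\partial w_k}(\ma^T \mw \ma) = \va_k \va_k^T$, a rank-one matrix. Combining this with the matrix-inverse derivative identity gives
\[
\frac{\partial}{\partial w_k}(\ma^T \mw \ma)^{-1} = -(\ma^T \mw \ma)^{-1} \va_k \va_k^T (\ma^T \mw \ma)^{-1}.
\]
Inserting this between $\va_i^T$ and $\va_j$ yields
\[
\frac{\partial}{\partial w_k}[\mr_\ma(\vw)]_{ij} = -\va_i^T (\ma^T \mw \ma)^{-1} \va_k \cdot \va_k^T (\ma^T \mw \ma)^{-1} \va_j = -[\mr_\ma(\vw)]_{ik}\,[\mr_\ma(\vw)]_{kj},
\]
which is exactly the claim.

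There is essentially no obstacle here; the only sanity check worth mentioning is that $\ma^T \mw \ma$ is invertible for $\vw \in \rPos^m$ (which holds since $\ma$ is assumed full rank and $\vw$ is strictly positive, making $\ma^T \mw \ma$ positive definite), so the inverse and its derivative are well-defined. The matrix-inverse derivative identity itself follows from differentiating $M(t) M(t)^{-1} = \iMatrix$ and solving for $\frac{d}{dt} M(t)^{-1}$, and may be cited or derived in one line. No special structural properties of $\ma$ or $\vw$ beyond invertibility are needed, so the lemma follows immediately from this short calculation.
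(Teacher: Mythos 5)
Your proof is correct. The paper establishes the same identity by writing out the limit definition of the partial derivative and applying the Sherman--Morrison formula to the rank-one update $\left(\ma^T\mw\ma + \alpha\,\va_k\va_k^T\right)^{-1}$, then sending $\alpha \to 0$. You instead apply the general matrix-inverse derivative identity $\frac{d}{dt}M(t)^{-1} = -M(t)^{-1}\left(\frac{dM}{dt}\right)M(t)^{-1}$ directly, after observing that $\frac{\partial}{\partial w_k}(\ma^T\mw\ma) = \va_k\va_k^T$. The two routes are closely related --- Sherman--Morrison applied to an infinitesimal rank-one perturbation recovers precisely the rank-one case of the inverse-derivative identity --- but your version is the more streamlined of the two, since it skips the explicit limit computation and the bookkeeping of the Sherman--Morrison denominator term, which in the paper's argument vanishes in the limit anyway. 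Both arguments require the same invertibility hypothesis on $\ma^T\mw\ma$, which you correctly flag; neither buys more generality than the other, so the choice is essentially one of taste.
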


\begin{proof} By definition, we have that
\begin{equation}
\frac{\partial}{\partial\vWeight_{k}}\mResist_{\ma}(\vWeight)_{ij}=\lim_{\alpha\rightarrow0}\frac{1}{\alpha}\left[\mr(\vw+\alpha\indicVec k)_{ij}-\mr(\vw)_{ij}\right]\label{eq:deriv_eff_res_1}
\end{equation}
and
\begin{equation}
\mr(\vw+\alpha\indicVec k)_{ij}=\indicVec i^{T}\ma(\ma^{T}\mw\ma+\alpha\ma^{T}\indicVec k\indicVec k^{T}\ma)^{-1}\ma^{T}\indicVec j\enspace.\label{eq:deriv_eff_res_2}
\end{equation}
Furthermore, by applying the Sherman-Morrison formula, we know that
\begin{equation}
(\ma^{T}\mw\ma+\alpha\ma^{T}\indicVec k\indicVec k^{T}\ma)^{-1}=(\ma^{T}\mw\ma)^{+}-\frac{\alpha(\ma^{T}\mw\ma)^{-1}\ma^{T}\indicVec k\indicVec k^{T}\ma(\ma^{T}\mw\ma)^{-1}}{1+\alpha\indicVec k^{T}\ma(\ma^{T}\mw\ma)^{-1}\ma^{T}\indicVec k}.\label{eq:deriv_eff_res_3}
\end{equation}
Combining \eqref{eq:deriv_eff_res_1}, \eqref{eq:deriv_eff_res_2},
and \eqref{eq:deriv_eff_res_3} yields the result. \end{proof}

Finally, we use this to derive the Jacobian of leverage scores.

\begin{lemma}[Derivative of Leverage Scores] \label{lem:deriv:lever}
For all $\ma\in\Rmn$, $\vWeight\in\rPos^{m}$ we have the following
\[
\jacobian_{\vw}(\vLever_{\ma}(\vw))=\mLapProj_{\ma}(\vw)\mw^{-1}.
\]
\end{lemma}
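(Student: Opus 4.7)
The plan is to reduce this to the derivative of effective resistances (Lemma \ref{lem:deriv:effres}) via a direct product-rule calculation, then match the result entry-by-entry with $\mLapProj_{\ma}(\vw)\mw^{-1}$.

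First I would rewrite the leverage scores in terms of the resistance matrix $\mr_{\ma}(\vw)$. Since $\mProj_{\ma}(\vw) = \mw^{1/2}\ma(\ma^T\mw\ma)^{-1}\ma^T\mw^{1/2}$, its $i$-th diagonal entry is
\[
[\vLever_{\ma}(\vw)]_i = w_i \cdot \va_i^T(\ma^T\mw\ma)^{-1}\va_i = w_i \cdot \mr_{\ma}(\vw)_{ii}.
\]
Then I differentiate with respect to $w_k$ using the product rule:
\[
\frac{\partial [\vLever_{\ma}(\vw)]_i}{\partial w_k} = \delta_{ik}\,\mr_{\ma}(\vw)_{ii} + w_i \cdot \frac{\partial \mr_{\ma}(\vw)_{ii}}{\partial w_k},
\]
and apply Lemma \ref{lem:deriv:effres} with $i=j$ to get $\partial \mr_{\ma}(\vw)_{ii}/\partial w_k = -\mr_{\ma}(\vw)_{ik}^2$. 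This yields
\[
[\jacobian_{\vw}(\vLever_{\ma}(\vw))]_{ik} = \delta_{ik}\,\mr_{\ma}(\vw)_{ii} - w_i\,\mr_{\ma}(\vw)_{ik}^2.
\]

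Next I would compute the $(i,k)$ entry of the right-hand side. By definition $\mLapProj_{\ma}(\vw) = \mLever_{\ma}(\vw) - \shurSquared{\mProj_{\ma}(\vw)}$, so
\[
[\mLapProj_{\ma}(\vw)\mw^{-1}]_{ik} = \frac{1}{w_k}\left([\mLever_{\ma}(\vw)]_{ik} - [\mProj_{\ma}(\vw)]_{ik}^2\right) = \delta_{ik}\frac{[\vLever_{\ma}(\vw)]_i}{w_i} - \frac{[\mProj_{\ma}(\vw)]_{ik}^2}{w_k}.
\]
Using $[\vLever_{\ma}(\vw)]_i = w_i\mr_{\ma}(\vw)_{ii}$ for the first term and $[\mProj_{\ma}(\vw)]_{ik} = \sqrt{w_iw_k}\,\mr_{\ma}(\vw)_{ik}$ (so that $[\mProj_{\ma}(\vw)]_{ik}^2/w_k = w_i\mr_{\ma}(\vw)_{ik}^2$) for the second term gives $\delta_{ik}\mr_{\ma}(\vw)_{ii} - w_i\mr_{\ma}(\vw)_{ik}^2$, matching the Jacobian computed above.

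There is no real obstacle here — the lemma is essentially just the chain rule applied to Lemma \ref{lem:deriv:effres}, and the only thing to be careful about is properly tracking the $\mw^{1/2}$ symmetrization factors that distinguish $\mProj_{\ma}(\vw)$ from the resistance matrix $\mr_{\ma}(\vw)$, along with the trailing $\mw^{-1}$ on the right-hand side. Once both sides are expressed in terms of $\mr_{\ma}(\vw)_{ik}$ and $w_i$, the equality is immediate.
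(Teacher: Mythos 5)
Your proof is correct and follows the same route as the paper: write $[\vLever_\ma(\vw)]_i = w_i\,\mr_\ma(\vw)_{ii}$, differentiate by the product rule together with Lemma~\ref{lem:deriv:effres}, and match with $\mLapProj_\ma(\vw)\mw^{-1}$. The only cosmetic difference is that you verify the final matrix identity entry-by-entry, while the paper assembles the Jacobian into matrix form as $\mDiag(\mResist_\ma(\vw)) - \mw\mResist_\ma(\vw)^{(2)}$ and right-multiplies by $\mw\mw^{-1}$ to read off $\mLapProj_\ma(\vw)\mw^{-1}$ directly.
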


\begin{proof} Since by definition $\vsigma_{\ma}(\vw)_{i}=\vw_{i}\mResist_{\ma}(\vw)_{ii}$
by the previous lemma, we have that
\[
\frac{\partial}{\partial\vw_{j}}\vsigma_{\ma}(\vw)_{i}=\indicVec{i=j}\mResist(\vw)_{ii}-\vw_{i}\mResist(\vw)_{ij}^{(2)}.
\]
Writing this in matrix form and recalling the definition of the Jacobian
then yields
\[
\jacobian_{\vWeight}(\vLever_{\ma}(\vw))=\mDiag(\mResist_{\ma}(\vw))-\mw\mResist_{\ma}(\vw)^{(2)}.
\]
Right multiplying by $\iMatrix=\mw\mw^{-1}$ and recalling the definition
of $\mLambda_{\ma}$ then yields the result. \end{proof}

\section{Projecting Onto Ball Intersect Box}

\label{sec:app:Project_ball_box}In the algorithm $\centeringInexact$,
we need to compute
\begin{equation}
\argmin_{\vu\in U}\left\langle \va,\vu\right\rangle \label{sec:app:proj:1}
\end{equation}
where $U=\{\vx\in\Rm~|~\norm{\vx}_{\mWeight}\leq b\text{ and }\norm{\vx}_{\infty}\leq c\}$
for some $\vw\geq\vzero$, i.e. we need to project $\va$ onto the
intersection of the ball, $\left\{ \vx\in\Rm\,|\,\norm{\vx}_{\mw}\leq b\right\} $,
and the box $\left\{ \vx\in\Rm\,|\,\norm{\vx}_{\infty}\leq c\right\} $.
In this section we show how this can be computed in nearly linear
time and in particular it can be computed in parallel in depth $\otilde(1)$
and work $\otilde(m)$.

Note that by rescaling we can rewrite \eqref{sec:app:proj:1} as
\begin{equation}
\argmax_{\norm{\vx}_{2}\leq1,-l_{i}\leq x_{i}\leq l_{i}}\left\langle \va,\vx\right\rangle \label{eq:projection_problem}
\end{equation}
for some $l_{i}$. Let us consider a simple algorithm which first
ignore the box constraint and find the best vector $\va$. If $\va$
does not violate any box constraint, then it is the solution. Otherwise,
we pick a most violated constraint $i$, i.e. the coordinate with
highest $\left|a_{i}\right|/l_{i}$. Then, we threshold this coordinates
and repeat the procedure on the remaining coordinate.

\begin{center}
\begin{tabular}{|l|}
\hline 
$\vx=\code{projectOntoBallBox}(\va)$\tabularnewline
\hline 
\hline 
1. Set $\va=\va/\norm{\va}_{2}$. \tabularnewline
\hline 
2. Sort the coordinate such that $\left|a_{i}\right|/l_{i}$ is in
descending order.\tabularnewline
\hline 
3. For $i=0,\cdots,m$\tabularnewline
\hline 
3a. Set $\vx=\begin{cases}
\text{sign}\left(\va_{j}\right)l_{j} & \text{if }j\in\{1,2,\cdots,i\}\\
\sqrt{\frac{1-\sum_{k=0}^{i}l_{k}^{2}}{1-\sum_{k=0}^{i}a_{k}^{2}}}\va_{j} & \text{otherwise}
\end{cases}.$\tabularnewline
\hline 
3b. If $\vx$ is a feasible solution, output $\vx$.\tabularnewline
\hline 
\end{tabular}
\par\end{center}

\begin{lemma}The algorithm $\code{projectOntoBallBox}$ outputs a
solution of the problem \eqref{eq:projection_problem}.\end{lemma}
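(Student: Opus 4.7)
After normalization the task is to solve
\[
\max_{\|\vx\|_2 \le 1,\ |x_j|\le l_j}\ \langle \va,\vx\rangle,
\]
which is a concave (in fact linear) maximization over a compact convex set, so Slater and KKT characterize the optimum completely. My plan is to use KKT to show that the optimal box-saturation pattern is always a \emph{prefix} in the sort-order $r_j=|a_j|/l_j$ used by the algorithm, and then to verify that the first candidate produced by the loop satisfies KKT.

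First I would write out KKT. Let $\lambda\ge 0$ be the multiplier for the ball and $\mu_j^\pm\ge 0$ for the two box inequalities. Stationarity gives $a_j = 2\lambda x_j + \mu_j^+ - \mu_j^-$. At any unsaturated coordinate ($|x_j|<l_j$) both $\mu_j^\pm=0$, so $x_j=a_j/(2\lambda)$ and $r_j\le 2\lambda$; at a saturated coordinate $x_j=\mathrm{sign}(a_j)\,l_j$ with $r_j\ge 2\lambda$. Moreover since $\va\ne\vzero$ the optimum hits the ball boundary, so $\lambda>0$. Thus, sorting in descending order of $r_j$ as the algorithm does, the saturated set at the optimum is exactly $\{1,\dots,i^*\}$ for some $i^*$, with $2\lambda=1/c_{i^*}$ where $c_{i^*}$ is precisely the quantity computed in step 3a.

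Next I would match this to the algorithm. At iteration $i$, the candidate $\vx^{(i)}$ sets $x_j=\mathrm{sign}(a_j)l_j$ for $j\le i$ and $x_j=c_i a_j$ for $j>i$; it is the unique vector solving stationarity together with ball-tightness \emph{conditional} on the saturation pattern $\{1,\dots,i\}$. Feasibility of $\vx^{(i)}$ amounts to $c_i |a_j|\le l_j$ for $j>i$, equivalently $r_{i+1}\le 1/c_i$, which is exactly the KKT unsaturation inequality at multiplier $\lambda=1/(2c_i)$. The main remaining thing is to verify the KKT \emph{saturation} inequality $r_i\ge 1/c_i$ at the first feasible $i$, and to verify that this first feasible $i$ exists.

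To close the proof I would establish two monotonicity facts by direct algebra on $c_i^2=(1-L_i)/(1-A_i)$ with $L_i=\sum_{k\le i}l_k^2,\ A_i=\sum_{k\le i}a_k^2$: (a) $c_{i+1}\le c_i$ iff $r_{i+1}\le 1/c_i$ (a one-line cross-multiplication), which shows feasibility is preserved once attained, so a first feasible index exists no later than $i=m$; and (b) if iteration $i-1$ was infeasible, i.e.\ $r_i>1/c_{i-1}$, then by (a) $c_i>c_{i-1}$, hence $r_i c_i > r_i c_{i-1}>1$, giving the saturated KKT inequality $r_i>1/c_i$. Combining (a) and (b), the first feasible candidate satisfies all KKT conditions with $\lambda=1/(2c_i)>0$, and by convexity this certifies global optimality. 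The main technical hurdle will be the careful monotonicity computation in (a) and handling edge cases where $c_i$ is undefined (e.g., $\sum_{k>i}a_k^2=0$ or the whole box already forces $\|x\|_2\le 1$), which can be addressed by either terminating the loop early or observing that those degenerate prefixes trivially yield the optimum.
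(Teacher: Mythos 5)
Your KKT-based argument is correct and takes a genuinely different route from the paper. The paper proves by induction and an exchange/perturbation argument that $\max_{\vx\in\Omega}\langle\va,\vx\rangle=\max_{\vx\in\Omega_k}\langle\va,\vx\rangle$ for every prefix-saturation set $\Omega_k$ up to the terminating iteration: if the maximizer over $\Omega_k$ has $|y_{k+1}|<l_{k+1}$ and some later coordinate $j$ saturated, swapping mass from coordinate $j$ into coordinate $k+1$ strictly improves the objective while staying feasible, so the optimal saturation pattern is a prefix in the $|a_j|/l_j$ order. You instead write down the KKT system, observe that the multiplier $\lambda$ for the ball cleanly separates saturated ($r_j\ge 2\lambda$) from unsaturated ($r_j\le 2\lambda$) coordinates into a prefix/suffix in the same sort order, and then prove via the algebraic identity $c_{i+1}\le c_i\Leftrightarrow r_{i+1}\le 1/c_i$ both that the first feasible iteration exists and that it produces a KKT point with $\lambda=1/(2c_{i^*})$; convexity then certifies global optimality. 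The paper's exchange argument is more elementary (no Lagrangian machinery), while your approach is more systematic and makes the monotone structure of $c_i$ explicit, which also directly yields termination. Your claim (a) checks out by cross-multiplying $(1-L_{i+1})(1-A_i)\le(1-L_i)(1-A_{i+1})$, and the worry about $c_i$ becoming undefined (e.g.\ $L_i>1$) resolves itself: if $L_{i-1}\le 1<L_i$ then $l_i^2>1-L_{i-1}\ge a_i^2(1-L_{i-1})/(1-A_{i-1})$, which is exactly the feasibility condition $r_i\le 1/c_{i-1}$ at iteration $i-1$, so the algorithm always terminates while $c_i$ is still well-defined.
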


\begin{proof}We claim that for all $k\leq i$ where $i$ is the last
step in the algorithm, we have
\[
\max_{\vx\in\Omega}\left\langle \va,\vx\right\rangle =\max_{\vx\in\Omega_{k}}\left\langle \va,\vx\right\rangle 
\]
where $\Omega=\{x\ :\ \norm{\vx}_{2}\leq1,-l_{i}\leq x_{i}\leq l_{i}\}$
and $\Omega_{k}=\Omega\cap\left\{ x\ :\ \left|x_{i}\right|=l_{i}\text{ for }i\in\{1,2,\cdots,k\}\right\} $.
Since $\vx$ is feasible at the last step, we have
\begin{eqnarray*}
\vx_{\text{last}} & = & \argmax_{\vx\in\Omega_{k}}\left\langle \va,\vx\right\rangle \\
 & = & \argmax_{\vx\in\Omega}\left\langle \va,\vx\right\rangle .
\end{eqnarray*}
Therefore, the correctness of the algorithm follows from the claim.

Now, we prove the claim by induction. The base case is trivial because
$\Omega=\Omega_{0}$. Now proceed by contradiction and suppose that
\begin{equation}
\max_{\vx\in\Omega_{k}}\left\langle \va,\vx\right\rangle >\max_{\vx\in\Omega_{k+1}}\left\langle \va,\vx\right\rangle .\label{eq:contradiction_proj}
\end{equation}
Let $\vy=\argmax_{\vx\in\Omega_{k}}\left\langle \va,\vx\right\rangle $.
If for all $j>k$, we have $\left|y_{j}\right|<l_{j}$. Then, the
$\vx$ found in the $\left(k+1\right)^{th}$ iteration is exactly
$\vy$ and it is feasible and hence the algorithm outputs $\vy$.
Otherwise, there is $j$ such that $\left|y_{j}\right|=l_{j}$. Since
$\vy\notin\Omega_{k+1}$, we have $\left|y_{k+1}\right|<l_{k+1}$
and hence $j>k+1$.

Consider
\[
\vz(t)=\vy+\frac{\text{sign}\left(y_{k+1}\right)t}{\left|y_{k+1}\right|+\epsilon}\onesVec_{k+1}-\frac{\text{sign}\left(y_{j}\right)t}{l_{j}}\onesVec_{j}
\]
where $\epsilon$ is a very small positive number. Note that $\left.\frac{d}{dt}\norm{\vz(t)}^{2}\right|_{t=0}=2\frac{\left|y_{k+1}\right|}{\left|y_{k+1}\right|+\epsilon}-2<0$
and hence $\norm{\vz(t)}_{2}\leq1$ for $t>0$ but close to $0$.
Also, we have
\[
\frac{d}{dt}\left\langle \va,\vz\right\rangle =\frac{\left|a_{k+1}\right|}{\left|y_{k+1}\right|+\epsilon}-\frac{\left|a_{j}\right|}{l_{j}}.
\]
Take $\epsilon=l_{k+1}-\left|y_{k+1}\right|$, then we have
\[
\frac{d}{dt}\left\langle \va,\vz\right\rangle =\frac{\left|a_{k+1}\right|}{l_{k+1}}-\frac{\left|a_{j}\right|}{l_{j}}>0
\]
because $j>k+1$ and $\left|a_{i}\right|/l_{i}$ is in descending
order. Therefore, $\vz(t)$ is a feasible and better solution for
small positive $t$. Hence, it proves $\vy$ is not the optimal solution
of $\max_{\vx\in\Omega_{k}}\left\langle \va,\vx\right\rangle $ that
contradicts to the definition of $\vy$.

Hence, $\max_{\vx\in\Omega}\left\langle \va,\vx\right\rangle =\max_{\vx\in\Omega_{k}}\left\langle \va,\vx\right\rangle $
and the algorithm outputs an optimal solution.\end{proof}

\begin{center}
\begin{tabular}{|l|}
\hline 
$\vx=\code{projectOntoBallBoxParallel}(\va)$\tabularnewline
\hline 
\hline 
1. Set $\va=\va/\norm{\va}_{2}$. \tabularnewline
\hline 
2. Sort the coordinate such that $\left|a_{i}\right|/l_{i}$ is in
descending order.\tabularnewline
\hline 
3. Precompute $\sum_{k=0}^{i}l_{k}^{2}$ and $\sum_{k=0}^{i}a_{k}^{2}$
for all $i$. \tabularnewline
\hline 
4. Find the first $i$ such that $\frac{1-\sum_{k=0}^{i}l_{k}^{2}}{1-\sum_{k=0}^{i}a_{k}^{2}}\leq\frac{l_{i+1}^{2}}{a_{i+1}^{2}}$.\tabularnewline
\hline 
5. Output $\vx=\begin{cases}
\text{sign}\left(\va_{j}\right)l_{j} & \text{if }j\in\{1,2,\cdots,i\}\\
\sqrt{\frac{1-\sum_{k=0}^{i}l_{k}^{2}}{1-\sum_{k=0}^{i}a_{k}^{2}}}\va_{j} & \text{otherwise}
\end{cases}.$\tabularnewline
\hline 
\end{tabular}
\par\end{center}

The algorithm $\code{projectOntoBallBoxParallel}$ is a parallel and
more efficient version $\code{projectOntoBallBox}$. All other operations
in our algorithm are standard linear algebra and hence the following
theorem shows that our linear programming solver is indeed parallelizable. 

\begin{lemma} The algorithm $\code{projectOntoBallBoxParallel}$
outputs an solution of the optimization problem \eqref{eq:projection_problem}
in depth $\tilde{O}(1)$ and work $\tilde{O}(m)$.

\end{lemma}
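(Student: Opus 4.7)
The plan is to reduce correctness of $\code{projectOntoBallBoxParallel}$ to the already-established correctness of $\code{projectOntoBallBox}$ by identifying the index $i$ found in step 4 with the first iteration of step 3 in $\code{projectOntoBallBox}$ at which the candidate vector $\vx$ becomes feasible. Once this identification is made, the parallel complexity bound follows from standard parallel primitives (sort, prefix sum, parallel reduction).

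First I would analyze feasibility of the candidate vector in step 3a of $\code{projectOntoBallBox}$ at iteration $i$. By construction, the first $i$ coordinates (in sorted order) are set to $\pm l_j$, so they satisfy the box constraint with equality. The remaining coordinates have value $\sqrt{\frac{1-\sum_{k=0}^{i}l_{k}^{2}}{1-\sum_{k=0}^{i}a_{k}^{2}}}\, a_j$, and I can verify by direct summation that $\norm{\vx}_2^2 = 1$, so the $\ell_2$ constraint holds. Thus feasibility reduces to the box constraint for $j > i$, i.e.
\[
\frac{1-\sum_{k=0}^{i}l_{k}^{2}}{1-\sum_{k=0}^{i}a_{k}^{2}} \leq \frac{l_{j}^{2}}{a_{j}^{2}} \quad\text{for all } j>i.
\]
Since the coordinates were sorted so that $|a_k|/l_k$ is decreasing, $l_j^2/a_j^2$ is nondecreasing in $j$, so the tightest constraint is at $j=i+1$. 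This is exactly the condition tested in step 4 of $\code{projectOntoBallBoxParallel}$. So the first $i$ found in step 4 equals the first iteration of step 3 in $\code{projectOntoBallBox}$ at which feasibility holds, and the two algorithms return the same vector. Correctness then follows from the previous lemma.

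For the complexity bound I would invoke standard parallel algorithms: sorting $m$ scalars by $|a_i|/l_i$ takes depth $\tilde{O}(1)$ and work $\tilde{O}(m)$ (e.g.\ parallel mergesort or sorting networks); the prefix sums $\sum_{k=0}^{i}l_k^2$ and $\sum_{k=0}^{i}a_k^2$ for all $i$ can be computed in depth $O(\log m)$ and work $O(m)$ by a standard scan; testing the inequality in step 4 at every index can be done independently in parallel in $O(1)$ depth and $O(m)$ work; and locating the minimum satisfying index via a parallel reduction takes $O(\log m)$ depth and $O(m)$ work. The output vector in step 5 is assembled in $O(1)$ depth and $O(m)$ work. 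Summing these contributions gives depth $\tilde{O}(1)$ and work $\tilde{O}(m)$ as claimed.

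The only subtle point is the correctness identification above; once it is in place, the complexity analysis is routine. I do not expect any genuine obstacle, since no monotonicity of $f(i) - g(i)$ is required—we simply evaluate the feasibility condition at every index in parallel and take the first $i$ at which it holds, which by the reduction equals the first feasible iteration of $\code{projectOntoBallBox}$.
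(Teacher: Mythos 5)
Your proof is correct and follows essentially the same route as the paper's. The key observation in both is that after the sort the quantities $l_j^2/a_j^2$ are nondecreasing, so among the box constraints for $j>i$ the tightest one is at $j=i+1$, and since the $\ell_2$ constraint holds with equality for every candidate, feasibility of the $i$-th candidate in $\code{projectOntoBallBox}$ is equivalent to the single inequality tested in step 4 of the parallel variant. You are also right that the paper's auxiliary claim that $(1-\sum_{k\le i}l_{k}^{2})/(1-\sum_{k\le i}a_{k}^{2})$ is nondecreasing across infeasible iterations is not actually needed for the argument: it is a consequence of the same tightest-constraint observation, and one can bypass it, as you do, by arguing the equivalence of the two termination conditions directly; both algorithms then return the smallest qualifying index and hence the same vector. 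The complexity accounting via sorting, prefix sums, a pointwise test, and a parallel reduction matches the paper.
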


\begin{proof}Note that in the algorithm $\code{projectOntoBallBox}$,
the value
\[
\frac{1-\sum_{k=0}^{i}l_{k}^{2}}{1-\sum_{k=0}^{i}a_{k}^{2}}
\]
is increasing through the algorithm. To see this, note that in step
3b, if $\vx$ is not feasible, that means there is $j$ such that
\[
\frac{1-\sum_{k=0}^{i}l_{k}^{2}}{1-\sum_{k=0}^{i}a_{k}^{2}}>\frac{l_{j}^{2}}{a_{j}^{2}}.
\]
Since $a_{i}/l_{i}$ is in descending order, $j=i+1$. Therefore,
we have
\[
\frac{1-\sum_{k=0}^{i}l_{k}^{2}}{1-\sum_{k=0}^{i}a_{k}^{2}}>\frac{l_{i+1}^{2}}{a_{i+1}^{2}}.
\]
Hence, we have
\[
\frac{1-\sum_{k=0}^{i+1}l_{k}^{2}}{1-\sum_{k=0}^{i+1}a_{k}^{2}}>\frac{1-\sum_{k=0}^{i}l_{k}^{2}}{1-\sum_{k=0}^{i}a_{k}^{2}}.
\]

Using this fact, it is easy to see the algorithm $\code{projectOntoBallBoxParallel}$
and the algorithm $\code{projectOntoBallBox}$ outputs the same vector.
Obviously, all steps can be computed in depth $\tilde{O}(1)$ and
work $\tilde{O}(m)$. 

\end{proof}

\section{Inexact Linear Algebra\label{sec:app:Inexact-Linear-Algebra}}

Throughout much of our analysis of weighted path following we assumed
that linear systems in $\ma$ could be solved exactly. In this section
we relax this assumption and discuss the effect of using inexact linear
algebra in our linear programming algorithms. We show that rather
than computing $\left(\ma^{T}\md\ma\right)^{-1}\vx$ precisely for
positive diagonal matrix $\md$ it suffices to solve these systems
approximately. 

Throughout this section we assume that for any matrix $\ma\in\Rnm$
and vector $\vb\in\Rm$ there is an algorithm $\code{solve}(\ma,\vb)$
which outputs an vector $\vx$ such that
\begin{equation}
\norm{\vx-\ma^{+}\vb}_{\ma^{T}\ma}\leq\epsilon\norm{\ma^{+}\vb}_{\ma^{T}\ma}.\label{eq:solver_assumption}
\end{equation}
Since $\ma$ is full rank, we can write $\vc=\ma^{T}\vd$ for some
$\vd$. From equation \eqref{eq:weighted_path:newton_step}, the Newton
step is
\begin{align*}
\vNewtonStep_{t}(\vx,\vWeight) & =(\ma^{T}\ms^{-1}\mWeight\ms^{-1}\ma)^{-1}\ma^{T}\ms^{-1}\sqrt{\mWeight}\left(t\frac{\vs\vd}{\sqrt{\vWeight}}-\sqrt{\vWeight}\right)\\
 & =\left(\sqrt{\mWeight}\ms^{-1}\ma\right)^{+}\left(t\frac{\vs\vd}{\sqrt{\vWeight}}-\sqrt{\vWeight}\right).
\end{align*}
Suppose that we compute $\vh_{t}$ by the algorithm $\code{solve}$
above, then we have
\begin{eqnarray*}
\normFull{\code{solve}\left(\sqrt{\mWeight}\ms^{-1}\ma,t\frac{\vs\vd}{\sqrt{\vWeight}}-\sqrt{\vWeight}\right)-\vh_{t}}_{\ma^{T}\ms^{-1}\mWeight\ms^{-1}\ma} & \leq & \epsilon\norm{\vh_{t}}_{\ma^{T}\ms^{-1}\mWeight\ms^{-1}\ma}\\
 & = & \epsilon\delta_{t}\left(\vx,\vWeight\right).
\end{eqnarray*}
Hence, the outcome of $\code{solve}$ differs from the Newton step
$\vh_{t}$ by a relative small amount in $\normFull{\cdot}_{\ma^{T}\ms^{-1}\mWeight\ms^{-1}\ma}$.
Hence, it suffices to prove that $\delta_{t}$ is stable under this
small amount in $\normFull{\cdot}_{\ma^{T}\ms^{-1}\mWeight\ms^{-1}\ma}$
and hence is the algorithm $\code{solve}$ will only increase $\delta$
by a little compared with using exact linear algebra.

\begin{lemma}Let $\energyStab\defeq\energyStab(\vx,\vWeight)$ and
$\next{\vx}=\vx+\vDelta$. Let $\eta=\normFull{\vDelta}_{\ma^{T}\ms^{-1}\mWeight\ms^{-1}\ma}\leq\frac{1}{8\energyStab}$.
Then, we have
\[
\delta_{t}\left(\next{\vx},\vWeight\right)\leq\left(1-\gamma\eta\right)^{-1}\left(\delta_{t}\left(\vx,\vWeight\right)+\eta\right).
\]
\end{lemma}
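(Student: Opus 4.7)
\medskip
\textbf{Proof proposal.} The plan is to bound the new gradient $t\vc - \ma^{T}\mSlackNext^{-1}\vWeight$ in the norm induced by the new Hessian $\mh' \defeq \ma^{T}\mSlackNext^{-1}\mw\mSlackNext^{-1}\ma$, by decomposing it against the old gradient and then controlling each piece using that the slacks have changed only by a small multiplicative factor. Throughout, let $\mh \defeq \ma^{T}\ms^{-1}\mw\ms^{-1}\ma$ so that $\delta_t(\vx,\vw) = \|t\vc - \ma^T\ms^{-1}\vw\|_{\mh^{-1}}$, and let $\vz \defeq \mSlack^{-1}\ma\vDelta$.

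First I would invoke Lemma~\ref{lem:weighted_path:relative_change_slacks} with the step $\vDelta$ to get $\|\vz\|_\infty \leq \gamma \cdot \|\vDelta\|_\mh = \gamma\eta \leq 1/8$. Consequently $\vSlackNext = (\iMatrix + \mDiag(\vz))\vSlack$ lies entrywise in $[(1-\gamma\eta),(1+\gamma\eta)]\vSlack$, which, since the diagonal factors commute, gives the spectral bound
\[
(1+\gamma\eta)^{-2}\mh \specLeq \mh' \specLeq (1-\gamma\eta)^{-2}\mh,
\]
and in particular $\mh^{-1} \preceq (1+\gamma\eta)^{2}(\mh')^{-1}$, so $\|\vy\|_{(\mh')^{-1}} \leq (1+\gamma\eta)\|\vy\|_{\mh^{-1}} \leq (1-\gamma\eta)^{-1}\|\vy\|_{\mh^{-1}}$ for every $\vy$.

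Second, I would split
\[
t\vc - \ma^{T}\mSlackNext^{-1}\vw \;=\; \bigl(t\vc - \ma^{T}\ms^{-1}\vw\bigr) + \ma^{T}\bigl(\ms^{-1} - \mSlackNext^{-1}\bigr)\vw
\]
and handle each term separately in the $(\mh')^{-1}$ norm. The first summand is bounded by $(1-\gamma\eta)^{-1}\delta_t(\vx,\vw)$ by the spectral bound above. For the second summand, observe that $\ms^{-1} - \mSlackNext^{-1} = \ms^{-1}\mDiag(\vz/(\vones+\vz))$, so the second summand equals $\ma^{T}\ms^{-1}\vu$ where $\vu_i = w_i z_i/(1+z_i)$. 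The main computation is then
\[
\|\ma^{T}\ms^{-1}\vu\|_{(\mh')^{-1}}^{2} \;=\; \vu^{T}\ms^{-1}\ma(\mh')^{-1}\ma^{T}\ms^{-1}\vu \;\leq\; \bigl\|\mw^{-1/2}\mSlackNext\ms^{-1}\vu\bigr\|_{2}^{2}
\]
where the inequality uses that $\mw^{1/2}\mSlackNext^{-1}\ma(\mh')^{-1}\ma^{T}\mSlackNext^{-1}\mw^{1/2}$ is a projection. Now the key cancellation: entrywise $(\mSlackNext\ms^{-1}\vu)_i = (1+z_i)\cdot w_iz_i/(1+z_i) = w_iz_i$, so $\mw^{-1/2}\mSlackNext\ms^{-1}\vu = \mw^{1/2}\vz$, whose $\ell_2$ norm is $\|\vz\|_{\mw} = \|\vDelta\|_{\mh} = \eta$.

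Combining the two bounds via the triangle inequality gives $\delta_t(\vxNext,\vw) \leq (1-\gamma\eta)^{-1}\delta_t(\vx,\vw) + \eta \leq (1-\gamma\eta)^{-1}(\delta_t(\vx,\vw)+\eta)$, as desired. The only potential obstacle is getting the constant on $\eta$ clean: a naive bound would give $\eta/(1-\gamma\eta)$, but the cancellation $\mSlackNext \cdot (\vz/(\vones+\vz)) = \vz\cdot\vSlack$ removes the extra factor of $(1-\gamma\eta)^{-1}$ exactly, which is why I would be careful to expand in $\mSlackNext^{-1}$ inside the projection identity rather than in $\mSlack^{-1}$.
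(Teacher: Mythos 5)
Your proof is correct and takes essentially the same route as the paper's: decompose the new gradient into the old gradient plus the correction $\ma^{T}(\ms^{-1}-\mSlackNext^{-1})\vw$, use the slack-change bound $\normInf{\vz}\le\gamma\eta$ from Lemma~\ref{lem:weighted_path:relative_change_slacks} to relate the two Hessian norms spectrally, and bound the correction via a projection-matrix identity. Your choice to project against the \emph{new} Hessian $\ma^{T}\mSlackNext^{-1}\mw\mSlackNext^{-1}\ma$ gives the exact entrywise cancellation $\mSlackNext\ms^{-1}\vu=\mw\vz$ and hence a clean bound of $\eta$ on the correction term, whereas the paper projects against the old Hessian and so pays an extra $(1-\gamma\eta)^{-1}$ on that piece while saving it on the $\delta_{t}$ piece; both rearrangements give the identical final inequality. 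One slip worth fixing: the displayed spectral inequality $\mh^{-1}\preceq(1+\gamma\eta)^{2}(\mh')^{-1}$ is written in the wrong direction and is in fact false in general; the statement you actually use and need is $(\mh')^{-1}\preceq(1+\gamma\eta)^{2}\mh^{-1}$, which follows correctly from $\mh'\succeq(1+\gamma\eta)^{-2}\mh$ and yields the bound $\norm{\vy}_{(\mh')^{-1}}\le(1+\gamma\eta)\norm{\vy}_{\mh^{-1}}$ that the rest of the argument relies on.
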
 

\begin{proof}By the same proof in Lemma \ref{lem:weighted_path:stab_update_step},
we have that
\[
\normInf{\ms^{-1}(\vSlackNext-\vs)}\leq\energyStab\eta.
\]
Therefore, we have
\begin{eqnarray*}
\delta_{t}\left(\next{\vx},\vWeight\right) & = & \normFull{t\vc-\ma^{T}\mSlackNext^{-1}\vWeight}_{\left(\ma^{T}\mSlackNext^{-1}\mWeight\mSlackNext^{-1}\ma\right)^{-1}}\\
 & \leq & \left(1+\gamma\eta\right)\normFull{t\vc-\ma^{T}\mSlackNext^{-1}\vWeight}_{\left(\ma^{T}\mSlackNext^{-1}\mWeight\mSlackNext^{-1}\ma\right)^{-1}}\\
 & \leq & \left(1+\gamma\eta\right)\normFull{t\vc-\ma^{T}\ms^{-1}\vWeight}_{\left(\ma^{T}\ms^{-1}\mWeight\ms^{-1}\ma\right)^{-1}}+\normFull{\ma^{T}\left(\frac{\vWeight}{\vs}-\frac{\vWeight}{\next{\vs}}\right)}_{\left(\ma^{T}\ms^{-1}\mWeight\ms^{-1}\ma\right)^{-1}}\\
 & = & \left(1+\gamma\eta\right)\delta_{t}\left(\vx,\vWeight\right)+\normFull{\frac{\next{\vs}-\vs}{\next{\vs}}}_{\mWeight}\\
 & \leq & \left(1+\gamma\eta\right)\delta_{t}\left(\vx,\vWeight\right)+(1-\gamma\eta)^{-1}\normFull{\frac{\next{\vs}-\vs}{\vs}}_{\mWeight}.
\end{eqnarray*}
By the same proof in Lemma \ref{lem:weighted_path:stab_update_step},
we have that
\[
\norm{\ms^{-1}(\vSlackNext-\vs)}_{\mWeight}\leq\eta.
\]
Thus, we have the result.\end{proof}

Therefore, as long as we choose $\epsilon$ small enough, the algorithm
$\code{solve}$ gives an accurate enough $\next{\vx}$ for the centering
step. Similarly, it is easy to see that it also gives accurate enough
$\next{\vWeight}$ because the error of $\next{\vWeight}$ due to
$\code{solve}$ is small in $\norm{\cdot}_{\mWeight}$ norm and the
tracing 0 game can afford for this error.

At last, we need to check $\code{solve}$ gives us a way to compute
weight function. Since the weight function computation relies on the
function $\code{computeLeverageScores}$, we only need to know if
we can compute $\vl$ in the $\code{computeLeverageScores}$ with
high enough accuracy. Now, we use the notation is the $\code{computeLeverageScores}$.
Without loss of generality, we can assume $\mx=\iMatrix$. Let $\vl^{(apx)}$
and $\vp^{(apx)}$ be the approximate $\vl$ and $\vp$ computed by
the algorithm $\code{Solve}$. Then, we have
\begin{eqnarray*}
\normFull{\left(\vl^{(j)}\right)^{(apx)}-(\ma^{T}\ma)^{+}\ma^{T}\vec{q}^{(j)}}_{\ma^{T}\ma} & = & \normFull{\left(\vl^{(j)}\right)^{(apx)}-\ma^{+}\vec{q}^{(j)}}_{\ma^{T}\ma}\\
 & \leq & \epsilon\norm{\ma^{+}\vec{q}^{(j)}}_{\ma^{T}\ma}\\
 & = & \epsilon\norm{\ma^{T}\vec{q}^{(j)}}_{\left(\ma^{T}\ma\right)^{-1}}\\
 & \leq & \epsilon\norm{\vec{q}^{(j)}}_{2}\leq\epsilon\sqrt{\frac{n}{k}}.
\end{eqnarray*}
Hence, for any $i,j$, we have
\begin{eqnarray*}
\normFull{\vp_{i}^{(j)}-\left(\vp_{i}^{(apx)}\right)^{(j)}}_{\infty} & \leq & \normFull{\vp^{(j)}-\left(\vp^{(apx)}\right)^{(j)}}_{2}\\
 & = & \normFull{\ma\left(\left(\vl^{(j)}\right)^{(apx)}-\vl^{(j)}\right)}_{2}\\
 & \leq & \epsilon\sqrt{\frac{n}{k}}.
\end{eqnarray*}
Therefore, we have
\begin{eqnarray*}
\sqrt{\sum_{j=1}^{k}\left(\vp_{i}^{(j)}\right)^{2}}-\sqrt{\sum_{j=1}^{k}\left(\left(\vp_{i}^{(apx)}\right)^{(j)}\right)^{2}} & \leq & \sqrt{\sum_{j=1}^{k}\left(\vp_{i}^{(j)}-\left(\vp_{i}^{(apx)}\right)^{(j)}\right)^{2}}\\
 & \leq & \epsilon\sqrt{nk}.
\end{eqnarray*}
Therefore, if $\epsilon\leq\sqrt{\frac{1}{m\polylog(m)}}$, the error
is small enough for $\code{computeLeverageScores}$.

\section{Bit Complexity and Linear Program Reductions\label{sec:app:bit_complexity}}

In this section, we show how to reduce solving an arbitrary linear
program to finding a low cost solution in a bounded linear program
for which we have an explicit interior point. Throughout this section
let$\ma\in\R^{m\times n}$, $\vb\in\Rm$, $\vc\in\Rn$, and consider
the following general linear program
\begin{equation}
\min_{\vx\in\Rn~:~\ma\vx\geq\vb}\vc^{T}\vx\label{eq:bit_complexity_LP1}
\end{equation}
We assume that the entries of $\ma$, $\vb$, and $\vc$ are integers
and we let $OPT$ denote the optimal value of \eqref{eq:bit_complexity_LP1}
and we let $L$ denote the bit complexity of \eqref{eq:bit_complexity_LP1}
where
\[
L\defeq\log(m)+\log(1+d_{max}(\ma))+\log(1+\max\{\normInf{\vc},\normInf{\vb}\})
\]
 and $d_{max}(\ma)$ denotes the largest absolute value of the determinant
of a square sub-matrix of $\ma$. Our goal is to efficiently transform
\eqref{eq:bit_complexity_LP1} to a linear program of the same form
\begin{equation}
\min_{\vx\in\R^{n'}~:~\ma'\vx\geq\vb'}\vc'^{T}\vx\label{eq:app:problem_mod}
\end{equation}
where $\ma'\in\R^{m'\times n'},$ $\vb'\in\R^{m'}$, and $\vc'\in\R^{n'}$
are integer, and $\nnz(\ma'$), $n'$, $m'$, and the bit complexity
of \eqref{eq:app:problem_mod} denoted, $L'$, are comparable to $\nnz(\ma)$,
$n$, $m$, and $L$. Furthermore, we require that \eqref{eq:app:problem_mod}
is bounded, has an explicit efficiently computable interior point,
and that we can convert any low cost feasible solution to a solution
of \eqref{eq:bit_complexity_LP1} in linear time.

While there are standard tools to perform reductions to ensure that
\eqref{eq:bit_complexity_LP1} is bounded and has an explicit initial
feasible point or to ensure that the optimal integral solution can
be easily computed explicitly, we need to particularly careful when
using these reductions to ensure that $\nnz(\ma)$, $n$, and $m$
are not increased significantly. As the running times of our path
following techniques in Section~\eqref{sec:algorithm} depend crucially
on these parameters in this section we prove the following Lemma claiming
that such an efficient reduction is possible.

\begin{lemma}\label{lem:the_modified_LP} In $O(\nnz(\ma)+n+m)$
time we can compute integer $\ma'\in\R^{m'\times n'},$ $\vb'\in\R^{m'}$,
$\vc'\in\R^{n'}$, $\vx'\in\R^{m'}$. Such that $\nnz(\ma')=O(\nnz(\ma)+n+m)$,
$n'=O(n)$, $m'=O(m)$, $\ma'\vx'\geq\vb'$, and \eqref{eq:app:problem_mod}
is bounded and has bit complexity at most $12L_{1}+7\log(20n)$. Furthermore,
if we can find a feasible point in \eqref{eq:app:problem_mod} such
that the cost of that point is at most the $OPT+2^{-12\left(L+\log(20n)\right)}$
where $OPT$ is the value of \eqref{eq:app:problem_mod} then we can
either 
\begin{enumerate}
\item Find the active constraints of a basic feasible optimal solution \eqref{eq:bit_complexity_LP1}
using only one matrix vector multiplication by $\ma$; or
\item Prove that \eqref{eq:bit_complexity_LP1} is infeasible or unbounded.
\end{enumerate}
\end{lemma}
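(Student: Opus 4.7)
The plan is to construct the modified LP in three stages: first add box constraints to ensure boundedness, then introduce an auxiliary "Phase I" variable to obtain an explicit strictly feasible interior point, and finally penalize this auxiliary variable in the objective so that its optimal value is zero exactly when the original LP is feasible. By a standard Cramér's rule argument, every basic feasible solution $\vx^\ast$ of \eqref{eq:bit_complexity_LP1} satisfies $\normInf{\vx^\ast}\leq d_{\max}(\ma)(1+\normInf{\vb})\leq 2^{O(L)}$, so adding constraints $-R\cdot\onesVec\leq \vx\leq R\cdot\onesVec$ for a suitable $R=2^{\Theta(L)}$ leaves the optimum unchanged while making the feasible region bounded.

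Concretely, I would introduce an auxiliary scalar variable $\tau$ and form the LP on variables $(\vx,\tau)\in\R^{n+1}$ with constraints $\ma\vx + \tau\onesVec \geq \vb$, $-R\cdot\onesVec\leq\vx\leq R\cdot\onesVec$, and $0\leq \tau\leq R$, and with cost vector $\vc'=(\vc,M)$ for a penalty $M=2^{\Theta(L)}$ chosen larger than the maximum possible $|\vc^T\vx|$ over the box. The point $\vx^{(0)}=\vzero$, $\tau^{(0)}=1+\normInf{\vb}$ lies strictly in the interior of every constraint, giving the required explicit $\vx'$. The new constraint matrix has $\nnz(\ma')=\nnz(\ma)+O(n+m)$ since the box adds $2n$ single-entry rows, the $\tau$-column adds $m$ entries to existing rows, and the bounds on $\tau$ add $O(1)$; all new coefficients are integers bounded by $2^{O(L)}$, so the bit complexity stays $O(L+\log n)$.

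For the rounding step, I would use the standard separation property of LPs with integer data: two distinct basic feasible solutions of an LP with bit complexity $L'$ have objective values differing by at least $2^{-O(L')}$. Hence solving the modified LP to precision $\epsilon=2^{-12(L+\log(20n))}$ determines the unique optimal basis. Given an $\epsilon$-approximate optimum $(\hat\vx,\hat\tau)$, I would perform one matrix-vector multiply $\ma\hat\vx$ to identify the set of nearly-tight constraints (those within $O(\epsilon)$ of equality), which by the separation bound coincides exactly with the active basis of a true vertex. Three cases arise: if $\hat\tau$ is within $\epsilon$ of $0$ and no box constraint is active at the vertex, we recover the active set of an optimal vertex of \eqref{eq:bit_complexity_LP1}; if $\hat\tau$ is bounded below by the separation threshold, then no feasible $\vx$ achieves $\tau=0$, so the original LP is infeasible; and if a box constraint $x_i=\pm R$ is active at the recovered vertex while $\hat\tau\approx0$, then the original optimum is achieved only at the artificial boundary, which by the choice of $R$ certifies unboundedness.

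The main obstacle is a careful accounting of constants: $R$ must be large enough to contain every true vertex of \eqref{eq:bit_complexity_LP1} (forcing $R\gtrsim 2^L$), the penalty $M$ must strictly dominate any swing in $\vc^T\vx$ over the box (forcing $M\gtrsim nR\normInf{\vc}$), while the target precision $\epsilon$ must be small enough that Phase-I detection and vertex identification are unambiguous given the $\Theta(1/(d_{\max}^{O(1)}R^{O(1)}))$ separation between distinct bases. Tracking these through the inequality $L'\leq 12L+7\log(20n)$ stated in the lemma is essentially bookkeeping, but must be done so that the separation gap exceeds $\epsilon=2^{-12(L+\log(20n))}$ by a comfortable margin; this is where the specific constants $12$ and $7$ in the lemma come from.
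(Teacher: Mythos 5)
Your construction of the modified LP — introducing a Phase-I slack variable $\tau$ penalized heavily in the objective, together with box constraints at scale $2^{\Theta(L)}$, and an explicit strictly interior point $\vx=\vzero$, $\tau=1+\normInf{\vb}$ — is essentially identical to the paper's Lemma~\ref{lem:app:feasible_and_bounded}. The feasibility/boundedness analysis by cases and the bit-complexity bookkeeping also match.

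However, the rounding step has a genuine gap. You assert that ``two distinct basic feasible solutions of an LP with bit complexity $L'$ have objective values differing by at least $2^{-O(L')}$.'' This is false: distinct vertices on the same optimal face have exactly the same objective value. The correct separation statement only says that two \emph{distinct} objective values at vertices must differ by $2^{-O(L')}$; it gives no separation when the optimum is non-unique. Your subsequent step — take the set of constraints within $O(\epsilon)$ of tight at the approximate solution and declare that this ``coincides exactly with the active basis of a true vertex'' — can then fail: when the optimal face has positive dimension, the $\epsilon$-approximate point may lie near the relative interior of the face, and the set of nearly-tight constraints is then the intersection of the active sets of all optimal vertices, which is not a basis of any single vertex. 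The paper handles this by an additional reduction, Lemma~\ref{lem:app:sol}, which randomly perturbs the integral cost vector and invokes the Isolation Lemma (Lemma~\ref{lem:isolation_lemma}) so that the perturbed LP has a \emph{unique} optimal vertex $\vx^*$; only then does the geometric shrinkage argument $P_t-\vx^*=t(P_1-\vx^*)$ hold and the nearly-tight constraints at an $\epsilon$-optimal point unambiguously recover the active set of $\vx^*$. Your proof omits this perturbation, and without it the active-set recovery is unsound in the degenerate case.
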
 

We break this proof into two parts. First in Lemma~\ref{lem:app:feasible_and_bounded}
we show how to transform \eqref{eq:bit_complexity_LP1} so that the
linear program is bounded and has an explicit feasible point. Then
in Lemma~\ref{lem:app:sol} we follow the approach of \cite{daitch2008faster}
and show that we can perturb the cost of a linear program to make
the optimal solution unique and thereby make it easy to compute an
exact integral solution. 

\begin{lemma}\label{lem:app:feasible_and_bounded} Consider the following
modified linear program
\begin{equation}
\min\vc^{T}\vx+n2^{3L+4}z\text{ such that }\ma\vx+z\onesVec\geq\vb,2^{L+1}\geq z\geq0,2^{L+1}\onesVec\geq\vx\geq-2^{L+1}\onesVec\label{eq:bit_complexity_LP2}
\end{equation}
where $\ma$, $\vb$, and $\vc$ are as in \eqref{eq:bit_complexity_LP1}
and $L$ is the bit complexity of \eqref{eq:bit_complexity_LP1}.
\eqref{eq:bit_complexity_LP2} is bounded with an explicit interior
point $\vx=0,z=2^{L}+1$. Furthermore, \eqref{eq:bit_complexity_LP1}
is bounded and feasible with an optimal solution $\vx$ if and only
if $(\vx,0)$ is an optimal solution of \eqref{eq:bit_complexity_LP2}
with $2^{L}\geq x_{i}\geq-2^{L}$, \eqref{eq:bit_complexity_LP1}
is unbounded if and only if there is a basic feasible solution, $(\vx,z)$,
of \eqref{eq:bit_complexity_LP2} with $|x_{i}|>2^{L}$ for some $i$,
and \eqref{eq:bit_complexity_LP1} is infeasible if and only if there
is basic feasible solution, $(\vx,z)$, of \eqref{eq:bit_complexity_LP2}
with $\vz\neq0$. Furthermore, \eqref{eq:bit_complexity_LP2} can
be written in the form \eqref{eq:app:problem_mod} such that all these
properties hold with $\nnz(\ma')=O(\nnz(\ma)+n+m)$, $n'=O(n)$, $m'=O(m)$,
and $L'\leq$ $4L+2\log(16n)$.

\end{lemma}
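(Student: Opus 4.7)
The plan is to verify the properties of the modified linear program \eqref{eq:bit_complexity_LP2} in stages, moving from the concrete to the structural. First, I would check that $\vx = \vzero$, $z = 2^{L}+1$ is a strictly feasible interior point: every $b_i$ satisfies $|b_i| \leq 2^L$ by the definition of $L$, so $z\onesVec > \vb$, and the box constraints on $\vx$ and $z$ hold with slack. Boundedness follows immediately from the explicit box constraints $\normInf{\vx} \leq 2^{L+1}$ and $0 \leq z \leq 2^{L+1}$.

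For the three equivalences I would invoke the standard bound (a consequence of Cramer's rule and the definition of $L$) that every basic feasible solution of a linear program with bit complexity $L$ has entries of magnitude at most $2^L$. Then, if the original LP is feasible and bounded with a basic optimum $\vx^{*}$, then $|x_i^{*}| \leq 2^L$ and $(\vx^{*},0)$ lies strictly inside the $\vx$-box and is feasible for \eqref{eq:bit_complexity_LP2}. The penalty $n\cdot 2^{3L+4}$ is larger than the maximum possible swing of $\vc^T\vx$ over the box, bounded by $n\cdot\normInf{\vc}\cdot\normInf{\vx}\leq n\cdot 2^L\cdot 2^{L+1}$, so any candidate with $z>0$ has strictly higher cost than $(\vx^{*},0)$; this gives the first equivalence. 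The unbounded case is the contrapositive: if \eqref{eq:bit_complexity_LP1} is unbounded, then minimizing $\vc^T\vx$ over the truncated box must push some $|x_i|$ all the way to the boundary $2^{L+1}$, and so every basic optimum of \eqref{eq:bit_complexity_LP2} has $|x_i|>2^L$ for some $i$; conversely, a basic optimum of \eqref{eq:bit_complexity_LP2} with some $|x_i|>2^L$ cannot be a basic feasible point of the untruncated original LP unless that LP is unbounded along the direction toward that boundary face. For infeasibility, observe that $z=0$ in any feasible point of \eqref{eq:bit_complexity_LP2} certifies feasibility of \eqref{eq:bit_complexity_LP1}; thus infeasibility of \eqref{eq:bit_complexity_LP1} is equivalent to $z\neq 0$ at every (and hence every basic) feasible point of \eqref{eq:bit_complexity_LP2}.

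Finally, I would handle the size accounting. Rewriting \eqref{eq:bit_complexity_LP2} in the form \eqref{eq:app:problem_mod} requires $m + 2n + 2$ inequality constraints (the original $m$ constraints augmented by the column $\onesVec$ for $z$, a pair of one-sided bounds on $z$, and $2n$ box bounds on $\vx$) and $n + 1$ variables, so $m' = O(m+n)$, $n' = O(n)$, and $\nnz(\ma') = \nnz(\ma) + O(m + n)$. For the bit complexity, $\normInf{\vb'} \leq 2^{L+1}$ and $\normInf{\vc'} \leq n\cdot 2^{3L+4}$ are immediate. The subtle step is bounding $d_{\max}(\ma')$: the new rows added beyond the $\ma$-block are $\pm$ standard basis vectors (possibly with a component from the $\onesVec$ column for $z$), so cofactor expansion along these simple rows reduces any subdeterminant of $\ma'$ to a subdeterminant of the augmented matrix $[\ma\mid \onesVec]$, which expands (again by cofactor along the $\onesVec$ column) to a sum of at most $n$ subdeterminants of $\ma$, yielding $d_{\max}(\ma')\leq n\cdot d_{\max}(\ma)$. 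Combining $\log m'$, $\log(1+d_{\max}(\ma'))$, and $\log(1+\max\{\normInf{\vc'},\normInf{\vb'}\})$, and using that each of $\log m$, $\log(1+d_{\max}(\ma))$, $\log(1+\normInf{\vc})$, $\log(1+\normInf{\vb})$ is at most $L$, yields $L' \leq 4L + 2\log(16n)$ after mildly generous accounting.

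The main obstacle I anticipate is the bit-complexity bookkeeping, especially bounding $d_{\max}(\ma')$ tightly enough to hit the stated constant $4L + 2\log(16n)$; the equivalence arguments themselves are straightforward consequences of the magnitude bound on basic feasible solutions together with the explicit choice of the penalty coefficient $n\cdot 2^{3L+4}$.
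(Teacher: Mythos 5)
Your proposal follows the paper's own strategy closely: check the interior point against $\normInf{\vb}\leq 2^L$, get boundedness from the explicit box, handle the three cases by comparing costs via the penalty coefficient and the Cramer's-rule bound on basic feasible solutions, and assemble $\ma'$ explicitly to bound $d_{\max}(\ma')$ by $n\cdot d_{\max}(\ma)$ via cofactor expansion. That part is all in line with the paper.

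There is one genuine gap in the equivalence argument. You write that ``any candidate with $z>0$ has strictly higher cost than $(\vx^*,0)$'' because the penalty coefficient $n\cdot 2^{3L+4}$ exceeds the swing of $\vc^T\vx$. This is false as stated: the penalty contribution is $n\cdot 2^{3L+4}\cdot z$, which tends to $0$ as $z\to 0$, while for any fixed $z>0$ there may be an $\vx'$ satisfying the relaxed constraint $\ma\vx'+z\onesVec\geq\vb$ with $\vc^T\vx'<\vc^T\vx^*$. So a feasible $(\vx',z')$ with $z'$ tiny but positive can have cost below $\vc^T\vx^*$. The paper closes this hole by restricting to \emph{basic} feasible solutions: the optimum of the bounded modified LP is attained at a vertex, and at a vertex $z$ is a ratio of subdeterminants of the (integer) constraint matrix, so Cramer's rule gives $z=0$ or $z\geq 2^{-O(L)}$. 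Only with this lower bound does $n\cdot 2^{3L+4}\cdot z$ dominate the $\vc^T\vx$ swing. You invoke Cramer's rule to bound vertices from above (entries $\leq 2^L$), but the step you actually need here is the complementary lower bound on the nonzero entry $z$. Once you add that observation, your argument matches the paper's Case 1; the same device is implicitly needed again in your unboundedness case to ensure the basic optimum has $z=0$. The rest (infeasibility equivalence, size/bit-complexity bookkeeping, the $d_{\max}(\ma')\leq n\,d_{\max}(\ma)$ bound) is correct and agrees with the paper.
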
 

\begin{proof}

Case 1: Suppose \eqref{eq:bit_complexity_LP1} is bounded and feasible.
It is known that any basic feasible solution of \eqref{eq:bit_complexity_LP1}
is a vector of rational numbers with both absolute value of numerator
and denominator are bounded by $2^{L}$ \cite{papadimitriou1998combinatorial}.
Therefore, $-n2^{2L}\leq OPT\leq n2^{2L}$. Given any feasible solution
$\vx$ of \eqref{eq:bit_complexity_LP1}, the point $\left(\vx,z=0\right)$
is a feasible solution of \eqref{eq:bit_complexity_LP2} with same
cost value. Hence, the linear program \eqref{eq:bit_complexity_LP2}
is feasible and the optimal value of \eqref{eq:bit_complexity_LP2}
is at most $n2^{2L}$.

On the other hand, clearly \eqref{eq:bit_complexity_LP2} is feasible
because $\vx=\vzero,z=2^{L}+1$ is an interior point. Furthermore,
\eqref{eq:bit_complexity_LP2} is bounded and therefore has some optimal
value. Consider any optimal basic feasible solution $\left(\vx,z\right)$
of \eqref{eq:bit_complexity_LP2}, we have $\vc^{T}\vx$ is between
$-n2^{2L+1}$ and $n2^{2L+1}$. Also, $z$ is a rational number with
the absolute value of denominator are bounded by $2^{L}$ using Cramer's
rule. Therefore, we have $z\geq2^{-L_{1}}$ or $z=0$. If $z\geq2^{-L}$,
then the total cost is at least $n2^{3L+4}2^{-L}-n2^{2L+1}>n2^{2L}$.
However, as we argued above, the optimal value of \eqref{eq:bit_complexity_LP2}
is at most $n2^{2L}$. Therefore, optimal solution has $z=0$ and
$2^{L}\geq x_{i}\geq-2^{L}$ for all $i$. 

Case 2: Suppose \eqref{eq:bit_complexity_LP1} is not feasible. In
this case, any feasible point $\left(\vx,z\right)$ in \eqref{eq:bit_complexity_LP2}
has $z\neq0$ and by the reasoning in the previous section any basic
feasible solution has cost greater than $n2^{2L}$.

Case 3: Suppose \eqref{eq:bit_complexity_LP1} is not bounded. Let
$\text{OPT}_{k}=\min\vc^{T}\vx\text{ such that }\ma\vx\geq\vb,k+2^{L}\geq x_{i}\geq-2^{L}-k$.
Thus, we have $\text{OPT}_{1}<\text{OPT}_{0}$ and any optimal point
of the case $k=1$ has some coordinate larger than $2^{L}$ or smaller
$-2^{L}$. By similar argument as above, we have that the optimal
point of \eqref{eq:bit_complexity_LP2} is of the form $(\vx,0)$
and some coordinate of $\vx$ is larger than $2^{L}$ or smaller $-2^{L}$.

To compute the bit complexity of \eqref{eq:bit_complexity_LP2} note
that we can write \eqref{eq:bit_complexity_LP2} in the form of \eqref{eq:app:problem_mod}
by choosing
\begin{equation}
\ma'=\left[\begin{array}{cc}
\ma & \onesVec\\
\iMatrix & \vzero\\
-\iMatrix & \vzero\\
\vzero^{T} & 1\\
\vzero^{T} & -1
\end{array}\right]\,,\,\vb'=\left(\begin{array}{c}
\vb\\
-2^{L_{1}+1}\\
2^{L_{1}+1}\\
0\\
2^{L_{1}+1}
\end{array}\right)\,,\,\vc'=\left(\begin{array}{c}
\vc\\
n2^{3L+4}
\end{array}\right)\,\text{ where }\,\mbox{\ensuremath{\iMatrix\in\Rmm}\,\text{and}\,\ensuremath{\vzero\in\Rm}}\label{eq:app_lemmas:big_constraint}
\end{equation}
Thus $n'=n+1$, $m'=3m+2$, and it is easy to see that
\begin{eqnarray*}
d_{max}(\ma') & = & d_{max}\left(\left[\begin{array}{cc}
\ma & \onesVec\end{array}\right]\right)\leq n\cdot d_{max}\left(\ma\right).
\end{eqnarray*}
Therefore, the bit complexity of \eqref{eq:bit_complexity_LP2} is
at most $\log(1+nd_{max}(\ma))+\log(1+n2^{3L+4})\leq4L+2\log(16n)$
as desired.

\end{proof}

Following the approach in \cite{daitch2008faster} to use the following
isolation lemma, we show that it is possible to transform the linear
program into one with unique optimal solution by randomly perturbing
the cost function.

\begin{lemma}[\cite{klivans2001randomness}]\label{lem:isolation_lemma}
Given any collection of linear functions on $n$ variables $c_{1},c_{2},\cdots,c_{n}$
with coefficients in the range $\{-K,-K-1,\cdots,K-1,K\}$. If $c_{1},\cdots,c_{n}$
are independently chosen uniformly at random in $\{-2Kn,\cdots,2Kn\}$.
Then, with probability greater than $\frac{1}{2}$, there is a unique
linear function of minimum value at $c_{1},c_{2},\cdots,c_{n}$.

\end{lemma}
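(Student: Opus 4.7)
}
The plan is to compose the two preparatory lemmas in the Appendix. First I would apply Lemma~\ref{lem:app:feasible_and_bounded} to replace \eqref{eq:bit_complexity_LP1} by the bounded program \eqref{eq:bit_complexity_LP2}, written in standard form via \eqref{eq:app_lemmas:big_constraint}. This is a linear-time transformation that produces an LP with $n_{1}=n+1$ variables, $m_{1}=3m+2$ constraints, $\nnz(\ma_{1})=O(\nnz(\ma)+n+m)$, bit complexity $L_{1}\le 4L+2\log(16n)$, and an explicit interior point $(\vzero,2^{L}+1)$. The lemma already supplies the three dichotomies (original LP is bounded/feasible, infeasible, or unbounded) expressed in terms of optimal basic feasible solutions of \eqref{eq:bit_complexity_LP2}, so it suffices to show how, given a near-optimal feasible point of the new LP, we can produce an optimal basic feasible solution of the new LP.

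Next I would apply the isolation strategy of \cite{daitch2008faster}. I would set $\ma'=\ma_{1}$, $\vb'=\vb_{1}$, and perturb the cost: $\vc':=N\vc_{1}+\vxi$, where the entries of $\vxi\in\Z^{n_{1}}$ are drawn uniformly at random from $\{-2Kn_{1},\dots,2Kn_{1}\}$ with $K=n_{1}2^{2L_{1}}$, and $N=2^{cL_{1}}$ for a constant $c$ to be chosen below. Each basic feasible solution of \eqref{eq:app:problem_mod} is of the form $\vx^{(i)}=\vec p^{(i)}/q^{(i)}$ with $|\vec p^{(i)}_{j}|,|q^{(i)}|\le 2^{L_{1}}$, so after clearing a common denominator the maps $\vxi\mapsto\vxi^{T}\vx^{(i)}$ become integer-coefficient linear functions in $\vxi$ with coefficients bounded by $K$. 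Lemma~\ref{lem:isolation_lemma} then gives probability $\ge\tfrac12$ that exactly one BFS minimizes $\vxi^{T}\vx$. Because differences $\vc_{1}^{T}\vx^{(i)}-\vc_{1}^{T}\vx^{(j)}$ are either $0$ or at least $2^{-2L_{1}}$ in magnitude, while $|\vxi^{T}(\vx^{(i)}-\vx^{(j)})|\le n_{1}\cdot 2Kn_{1}\cdot 2^{L_{1}}=2^{O(L_{1})}$, choosing $c$ a small constant (say $c=3$) forces every minimizer of $\vc'^{T}\vx$ over BFSs to also minimize $\vc_{1}^{T}\vx$; combined with isolation, the optimal BFS of \eqref{eq:app:problem_mod} is unique and is optimal for \eqref{eq:bit_complexity_LP2}. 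This keeps $\nnz(\ma')$, $n'$, $m'$ unchanged and pushes the bit complexity up to $L'\le 3L_{1}+O(\log n)\le 12L+7\log(20n)$.

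The remaining step is the rounding. Let $\vx^{*}$ be the unique optimal BFS and let $I^{*}\subseteq[m']$ denote the indices of the $n'$ constraints tight at $\vx^{*}$. The gap $\kappa:=\min\{\vc'^{T}\vx^{(i)}-\vc'^{T}\vx^{*}:\vx^{(i)}\ne\vx^{*}\}$ between $\vx^{*}$ and the second-best BFS is at least $2^{-O(L')}$ by Cramer's rule. Given a feasible $\tilde{\vx}$ with $\vc'^{T}\tilde{\vx}\le\mathrm{OPT}+2^{-12(L+\log 20n)}$, writing $\tilde{\vx}$ as a convex combination of BFSs forces all BFSs appearing with non-negligible weight to be $\vx^{*}$, so $\tilde{\vx}$ lies within distance $2^{-\Omega(L')}$ of $\vx^{*}$ (measured in the slacks). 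Computing the slack vector $\vs=\ma'\tilde{\vx}-\vb'$ by one matrix-vector multiplication and taking the $n'$ smallest coordinates isolates $I^{*}$ because, at $\vx^{*}$, indices in $I^{*}$ have slack $0$ while indices outside have slack at least $2^{-O(L')}$, and our error is chosen strictly smaller than half that gap. Solving the square system $[\ma']_{I^{*}}\vx=[\vb']_{I^{*}}$ then recovers $\vx^{*}$ exactly, after which the case analysis from Lemma~\ref{lem:app:feasible_and_bounded} ($z=0$ vs.\ $z>0$ vs.\ $|x_{i}|>2^{L}$) delivers the desired verdict on \eqref{eq:bit_complexity_LP1}.

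The main obstacle is the parameter balancing in the second paragraph: one must simultaneously keep $N$ large enough that $\vc'$-optimality implies $\vc_{1}$-optimality, keep $\|\vxi\|_{\infty}$ large enough for the isolation lemma to apply to the rescaled integer linear functions, and keep $\log(N)+\log\|\vxi\|_{\infty}$ small enough that the final bit complexity fits within $12L+7\log(20n)$ and the tolerance $2^{-12(L+\log 20n)}$ resolves the BFS gap. Everything else (Lemma~\ref{lem:app:feasible_and_bounded}, the isolation lemma, and the slack-based rounding) is essentially bookkeeping once these constants are pinned down.
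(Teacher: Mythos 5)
Your text does not prove the statement in question. The statement is Lemma~\ref{lem:isolation_lemma} itself, the isolation lemma of Klivans and Spielman: a purely probabilistic claim that if the costs $c_{1},\dots,c_{n}$ are drawn independently and uniformly from $\{-2Kn,\dots,2Kn\}$, then with probability greater than $\tfrac12$ a collection of linear forms with integer coefficients in $\{-K,\dots,K\}$ has a unique minimizer. What you wrote is a proof sketch of Lemma~\ref{lem:the_modified_LP} (the LP reduction), and in it you invoke Lemma~\ref{lem:isolation_lemma} as a black box ("Lemma~\ref{lem:isolation_lemma} then gives probability $\ge\tfrac12$ that exactly one BFS minimizes $\vxi^{T}\vx$"). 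Nothing in the proposal argues why random costs isolate a unique minimizer, so as a proof of the stated lemma it is vacuous. Note also that the paper itself supplies no proof of this lemma --- it is cited from \cite{klivans2001randomness} --- so the only way to satisfy the task was to actually prove the probabilistic statement.

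For the record, the missing argument is short: the only way uniqueness can fail is that two minimizing forms differ in their coefficient of some variable $i$; call this bad event $B_{i}$. Fix $i$ and condition on the values $c_{j}$, $j\neq i$. Each form then becomes an affine function $a\,c_{i}+b$ of $c_{i}$ with integer slope $a\in\{-K,\dots,K\}$, and two forms with distinct slopes can be simultaneously minimal only if $c_{i}$ lands on a breakpoint of the lower envelope of these at most $2K+1$ slopes, of which there are at most $2K$. Hence $\Pr[B_{i}]\leq 2K/(4Kn+1)<1/(2n)$, and a union bound over $i\in[n]$ gives failure probability strictly less than $\tfrac12$. Separately, even as a proof of Lemma~\ref{lem:the_modified_LP} your sketch leaves the decisive parameter balancing (the choice of $N$ and the final bit-complexity bound $12L+7\log(20n)$) explicitly unresolved, which you acknowledge; but the primary defect is that the lemma you were asked to prove is never addressed.
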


Note that for we can think every vertex $\vx$ is a linear function
$\vc^{T}\vx$ on the cost variables $\vc$. Although there are exponentially
many vertices, the above lemma shows that the minimizer is attained
at a unique vertex (linear function).

\begin{lemma}\label{lem:app:sol} Suppose that \eqref{eq:bit_complexity_LP1}
is feasible and bounded and consider the following modified linear
program
\begin{equation}
\min\left(2^{2L+3}n\vc+\vr\right)^{T}\vx\text{ given }\ma\vx\geq\vb.\label{eq:bit_complexity_LP4}
\end{equation}
where each coordinate in $\vr\in\Rm$ is chosen uniformly at random
from the integers $\{-2^{L+1}n,\cdots,2^{L+1}n\}$. 

Let $OPT$' denote the optimal value of the linear program \eqref{eq:bit_complexity_LP4}.
Given any feasible solution for the linear program \eqref{eq:bit_complexity_LP4}
with cost less than $OPT+n^{-1}2^{-3L-2}$, we can find the active
constraints of a basic feasible optimal solution of \eqref{eq:bit_complexity_LP1}
by using only one matrix vector multiplication with $\ma$. Furthermore,
the bit complexity of \eqref{eq:bit_complexity_LP4} is at most $3L+\log(8n)$.

\end{lemma}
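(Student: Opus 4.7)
\textbf{Proof Plan for Lemma \ref{lem:app:sol}.} The plan has four components: (i) verify the bit complexity bound, (ii) use the isolation lemma to extract a unique optimal vertex $\vx^*$ of the modified LP that is also optimal for the original LP, (iii) lower bound the gap between $\vx^*$ and the next best basic feasible solution of \eqref{eq:bit_complexity_LP4}, and (iv) describe how to recover the active set of $\vx^*$ from the approximate solution using a single matrix-vector multiplication.

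First, the bit-complexity bound is routine: the constraint matrix $\ma$ and vector $\vb$ are unchanged, so $\log m + \log(1 + d_{\max}(\ma)) \leq 2L$, while the new cost vector $2^{2L+3}n\vc + \vr$ has entries bounded in absolute value by $2^{2L+3}n \cdot \normInf{\vc} + \normInf{\vr} \leq 2^{2L+3}n\cdot 2^L + 2^{L+1}n \leq 2^{3L+4}n$. Summing these contributions yields the claimed bound $L' \leq 3L + \log(8n)$.

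Next, for isolation and optimality, I would apply Lemma \ref{lem:isolation_lemma}. Every basic feasible solution $\vx_B = \ma_B^{-1}\vb_B$ of \eqref{eq:bit_complexity_LP1} has rational coordinates that, after clearing the determinant denominator $\det(\ma_B)$, become integers bounded by $O(n d_{\max}(\ma)\normInf{\vb}) = O(n 2^{2L})$, so viewing each vertex as a linear function of $\vr$ the isolation lemma guarantees that with probability $>1/2$ there is a unique basic feasible solution $\vx^*$ minimizing $(2^{2L+3}n\vc + \vr)^T \vx$. To see that this $\vx^*$ is also optimal for the original LP, observe that two vertices $\vx_1, \vx_2$ with $\vc^T\vx_1 \neq \vc^T\vx_2$ have original costs differing by at least $1/(d_{\max}(\ma))^2 \geq 2^{-2L}$, while the perturbation $\vr^T(\vx_1-\vx_2)$ is dominated by the scaling factor $2^{2L+3}n$ applied to this gap; hence the modified-cost ordering refines the original-cost ordering and $\vx^*$ is optimal for \eqref{eq:bit_complexity_LP1}. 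Similarly, the minimum positive gap between two distinct modified-cost values at vertices is at least $1/((d_{\max}(\ma))^2) \cdot O(1) \geq n^{-1}2^{-3L-1}$ after carefully accounting for the magnitude of $2^{2L+3}n\vc + \vr$, which gives the gap threshold stated in the lemma.

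Finally, to recover the active set, note that every feasible $\vx$ of \eqref{eq:bit_complexity_LP4} is a convex combination of basic feasible solutions $\vx = \sum_j \lambda_j \vx_j$, so if its modified cost is at most $OPT' + n^{-1}2^{-3L-2}$ then $(1-\lambda_{\vx^*})$ times the gap to the second-best vertex must be at most the approximation error, forcing $\lambda_{\vx^*}$ to be close to $1$ and hence $\vx$ to be close to $\vx^*$ in $\ellInf$ norm. Since the loose constraints at $\vx^*$ satisfy $[\ma\vx^* - \vb]_i \geq 2^{-L}$ (as a positive integer over a determinant of magnitude at most $2^L$) while the active ones satisfy $[\ma\vx^* - \vb]_i = 0$, computing the single matrix-vector product $\ma\vx - \vb$ lets us threshold entries and identify exactly the active set of $\vx^*$. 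The main obstacle in this argument is step (ii)–(iii): carefully balancing the scaling factor $2^{2L+3}n$ against the magnitude of the random perturbation so that both (a) the modified-cost ordering respects the original-cost ordering and (b) the inter-vertex gap remains large enough to tolerate an approximation error of $n^{-1}2^{-3L-2}$, and then threading this through to the rounding step with enough slack to tell tight and loose constraints apart robustly.
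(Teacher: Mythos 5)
Your proposal is correct in broad outline and closes the same gap as the paper, but the key geometric step is executed by a genuinely different mechanism. The paper argues \emph{locally}: it defines the sublevel polytopes $P_t=\{\vx:\ma\vx\geq\vb,\ (2^{2L+3}n\vc+\vr)^T\vx\leq OPT'+t2^{-2L-1}\}$, observes that uniqueness of the optimal vertex (from the isolation lemma) together with the $2^{-2L-1}$ gap to the second-best vertex forces $P_1$ to contain only $\vx^*$, so $P_t-\vx^*=t(P_1-\vx^*)$, and then uses the homothety to bound $\normInf{\vx-\vx^*}$ linearly in the cost excess. You argue \emph{globally}: you express the near-optimal feasible point as a convex combination of vertices of the polytope, use the cost excess and the inter-vertex gap to bound the weight off $\vx^*$, and convert this into an $\ellInf$ bound. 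Both arguments need the feasible polytope to be bounded (yours to invoke the vertex decomposition, the paper's to conclude $\normInf{\vx}\leq2^L$ from feasibility alone), which holds here because Lemma~\ref{lem:app:feasible_and_bounded} installs explicit box constraints; you should state this reliance. Two further remarks: (1) you supply an explicit justification that the optimum of the perturbed LP is also optimal for the original, which the paper merely asserts, and (2) your quantitative gap estimate ``$\geq n^{-1}2^{-3L-1}$'' is asserted rather than derived, and when plugged in it does not actually leave room to separate active from inactive constraints; this is a genuine loose end, though one that mirrors a constant-level slack in the paper's own parameter chase (the stated threshold $n^{-1}2^{-3L-2}$ appears to need an extra factor of roughly $2^{-2L}$ to make $nt2^{2L+1}<2^{-L-1}$ go through). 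None of this changes the structure or asymptotics, but if you pursued the convex-combination route you would need to actually track the constants rather than asserting the gap bound.
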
 

\begin{proof}

Since the set of basic solutions to \eqref{eq:bit_complexity_LP4}
and \eqref{eq:bit_complexity_LP1} are the same, we know that any
basic feasible solution of \eqref{eq:bit_complexity_LP4} is a vector
of rational numbers with absolute value of numerator and denominator
both bounded by $2^{L}$. Consequently our perturbation of the cost
function maintains that an optimum solution to \eqref{eq:bit_complexity_LP4}
is an optimal solution to \eqref{eq:bit_complexity_LP1}. Hence, the
Isolation Lemma shows that with probability greater than $\frac{1}{2}$,
the linear program \eqref{eq:bit_complexity_LP4} has a unique solution
$\vx^{*}$.

Now consider the polytope $P_{t}=\{\vx\text{ such that }\ma\vx\geq\vb\text{ and }\left(2^{2L+3}n\vc+\vr\right)^{T}\vx\leq OPT+t2^{-2L-1}\}$
for $t>0$. Since \eqref{eq:bit_complexity_LP4} has a unique solution,
by a similar argument as before, $P_{1}$ contains only one basic
feasible solution of \eqref{eq:bit_complexity_LP4} and hence $P_{t}-\vx^{*}=t\left(P_{1}-\vx^{*}\right)$
for any $t\leq1$. Also, for any $\vx\in P_{1}$, $\vx$ is in the
polytope of $\{\ma\vx\geq\vb\}$ and hence $\norm{\vx}_{\infty}\leq2^{L}.$
Therefore, for any $\vx\in P_{t}$, we have $\norm{\vx-\vx^{*}}_{\infty}\leq t\cdot2^{L+1}$
for any $t\leq1$. Therefore, for any $\vx\in P_{t}$, $\norm{\ma\vx-\ma\vx^{*}}_{\infty}\leq nt2^{2L+1}.$
Since $\ma\vx^{*}$ is a vector of rational numbers with the absolute
value of denominator are bounded by $2^{L}$, we can distinguish if
a constraint is satisfied or not when $nt2^{2L+1}<2^{-L-1}.$

\end{proof}

Combining Lemma~\ref{lem:app:feasible_and_bounded} and Lemma~\ref{lem:app:sol}
proves Lemma~\ref{lem:the_modified_LP}.

\section{Numerical Linear Algebra for Acceleration}

\label{sec:app:acceleration}

Here we prove Theorem~\ref{thm:alg:accel} needed for the accelerated
linear program solver. Below we restate the theorem for convenience.

\begin{theorem} Let $\vd^{(i)}\in\dSlack$ be a sequence of $r$
positive vectors. Suppose that the number of times that $d_{j}^{(i)}\neq d_{j}^{(i+1)}$
for any $i\in[r]$ and $j\in[m]$ is bounded by $Cr^{2}$ for some
$C\geq1$. Then if we are given the $\vd^{(i)}$ in a sequence, in
each iteration $i$ we can compute $\left(\ma^{T}\md_{i}\ma\right)^{-1}\vx_{i}$
for $\md_{i}=\mDiag(\vd_{i})$ and arbitrary $\vx_{i}\in\Rn$ with
the average cost per iteration
\[
\otilde\left(\frac{mn^{\omega-1}}{r}+n^{2}+C^{\omega}r^{2\omega}+C^{\omega-1}nr^{\omega}\right)
\]
where $\omega<2.3729$ \cite{williams2012matrixmult} is the matrix
multiplication constant.

\end{theorem}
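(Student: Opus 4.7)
The plan is a Vaidya-style rank-one update scheme combined with fast rectangular matrix multiplication. The starting observation is that $\ma^T\md^{(i)}\ma - \ma^T\md^{(1)}\ma = \sum_{j\in K_i}(d_j^{(i)}-d_j^{(1)})\va_j\va_j^T = \mU_i\mDelta_i\mU_i^T$, where $K_i$ is the set of indices that have ever been modified up through step $i$, $\mU_i\in\R^{n\times k_i}$ stacks the relevant rows $\va_j$ of $\ma$, and $\mDelta_i\in\R^{k_i\times k_i}$ is diagonal. By the hypothesis on the sequence, $k_i\leq Cr^2$ for all $i$.

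First I would preprocess by forming $\mm_0\defeq\ma^T\md^{(1)}\ma$ via rectangular matrix multiplication in time $\otilde(mn^{\omega-1})$ (scale the rows of $\ma$ by $\sqrt{d_j^{(1)}}$, then multiply $n\times m$ by $m\times n$) and inverting the resulting $n\times n$ matrix in $O(n^\omega)$ time; amortized over the $r$ iterations this gives the $mn^{\omega-1}/r$ contribution. At iteration $i$ I would apply Woodbury,
\[
\mm_i^{-1}\vx_i = \mm_0^{-1}\vx_i - \mV_i\,\mQ_i^{-1}\,\mU_i^T\mm_0^{-1}\vx_i,\qquad \mV_i\defeq\mm_0^{-1}\mU_i,\qquad \mQ_i\defeq\mDelta_i^{-1}+\mU_i^T\mV_i.
\]
The initial application $\mm_0^{-1}\vx_i$ costs $O(n^2)$, producing the $n^2$ per-iteration term. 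The remaining work is to maintain $\mV_i$ and $\mQ_i$ (and an inverse or factorization of $\mQ_i$), and then to perform the correction step, which given $\mV_i$, $\mU_i$ and $\mQ_i^{-1}$ costs only $O(nk_i)+O(k_i^2)$.

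The delicate part is the amortized cost of maintaining $\mV_i$ and $\mQ_i^{-1}$. Naively every new column of $\mV_i$ costs one $\mm_0^{-1}\vu$ solve at $O(n^2)$, and every rank-one update to $\mQ_i$ via Sherman--Morrison costs $O(k_i^2)$; summed over $\le Cr^2$ rank-one updates and $k_i\le Cr^2$ this is way too much. The fix is to batch: collect changes into blocks and, once a block has accumulated $b$ new columns, extend $\mV_i$ by multiplying $\mm_0^{-1}$ (an $n\times n$ matrix) by the new $n\times b$ slab in time $\otilde(n^2 b^{\omega-2})$ via rectangular FMM, and simultaneously extend $\mQ_i^{-1}$ by a rank-$b$ Woodbury update whose bottleneck is inverting a $b\times b$ Schur complement ($O(b^\omega)$) and multiplying rectangular factors of shape $k_i\times b$ and $b\times k_i$. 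Taking the final block sizes of the two nested Woodbury layers and summing the geometric series in $b$ against the total update budget $Cr^2$ produces the $C^\omega r^{2\omega}$ term (dominated by the last full factorization of $\mQ_r$, of size $Cr^2$) and the $C^{\omega-1}nr^\omega$ term (dominated by the rectangular products $\mm_0^{-1}\cdot(\text{slab})$ and the $k_i\times b$ versus $b\times k_i$ products, which collapse via FMM identities to $n\cdot (Cr^2)^{\omega-1}$ type costs that, once distributed correctly over the block schedule, become the bound stated).

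The main obstacle I expect is this amortization bookkeeping: one must choose block sizes so that the cost of extending $\mV_i$, the cost of extending $\mQ_i^{-1}$, and the cost of the final solves all balance against the worst-case $k_i=Cr^2$; and since the theorem permits the $Cr^2$ updates to be distributed adversarially among the $r$ iterations, the block schedule cannot rely on uniformity and has to be driven by the running count of accumulated changes. Everything else (correctness of Woodbury, correctness of the nested low-rank representation, the output $(\ma^T\md_i\ma)^{-1}\vx_i$) is routine once the data structure is set up.
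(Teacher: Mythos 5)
Your high-level architecture---Woodbury against a reference matrix $\mm_0^{-1}$, batching with rectangular fast matrix multiplication, and amortizing the preprocessing over $r$ iterations---is the right shape, and indeed matches the paper. But you are missing the single observation that makes the bookkeeping close, and that omission creates a genuine gap.

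The paper's preprocessing is not just $\mb_1^{-1}$; it also computes the full $n\times m$ matrix $\mb_1^{-1}\ma^{T}$ in $O(mn^{\omega-1})$ time. This is the crucial step you skip. Once $\mb_1^{-1}\ma^{T}$ is stored, your $\mV_i=\mm_0^{-1}\mU_i$ is literally a column selection from a precomputed array; it costs nothing to ``extend.'' The only genuine multiplication left for maintaining the inner Gram matrix $\mU_i^{T}\mm_0^{-1}\mU_i$ is the $r_k\times n$ by $n\times u_k$ product $\ma_k\cdot(\mb_1^{-1}\mU_k^{T})$, where $u_k=r_k-r_{k-1}$ is the number of newly activated indices; this costs $O(r_k n u_k^{\omega-2})$, and summing with $\sum_k u_k\le Cr^2$, $r_k\le Cr^2$, and concavity of $x\mapsto x^{\omega-2}$ gives exactly the $C^{\omega-1}nr^{\omega}$ term. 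In your version, you propose to form $\mV_i$ by batched multiplications of $\mm_0^{-1}$ against $n\times b$ slabs, paying $n^2 b^{\omega-2}$ per slab. But you cannot defer a slab past the iteration where its columns are first needed, so in the worst case a slab has size $u_i$ and the amortized cost of maintaining $\mV_i$ is $n^2(Cr)^{\omega-2}$, which exceeds the target $C^{\omega-1}nr^{\omega}$ whenever $n > Cr^2$. The theorem makes no assumption ruling that regime out, and in the downstream application $C=\beta^{-1}$ with $\beta$ a free parameter, so the regime is live. The precomputation of $\mb_1^{-1}\ma^{T}$ removes this cost entirely.

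A secondary, milder point: your nested rank-$b$ Woodbury maintenance of $\mQ_i^{-1}$ is unnecessary complexity. The paper simply reinverts $\ms_k^{-1}+\ma_k\mb_1^{-1}\ma_k^{T}$ from scratch each iteration at $O(r_k^{\omega})\le O\bigl((Cr^2)^{\omega}\bigr)=O(C^{\omega}r^{2\omega})$, which is already within budget as a worst-case per-iteration bound (no amortization needed for that term). Your incremental scheme would likely also work, but it is solving a problem that does not need solving, and the ``FMM identities'' you invoke to collapse those costs to $n(Cr^2)^{\omega-1}$ are hand-waved and, as stated, do not match any correct rectangular multiplication bound for the shapes involved.
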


\begin{proof} For all $i\in[r]$ let $\mb_{i}=\ma^{T}\md_{i}\ma$.
Since $\md_{1}\in\R^{m\times m}$ is diagonal and $\ma\in\Rnm$ we
can compute $\md_{1}\ma$ trivially in $O(mn)$ time. Furthermore
from this we can compute $\mb_{1}=\ma^{T}\md_{1}\ma$ in $O(mn^{\omega-1})$
time using fast matrix multiplication by splitting $\ma$ into $\frac{m}{n}$
blocks of size $n$ and using that $m>n$. Furthermore, using fast
matrix multiplication we can then compute $\mb_{1}^{-1}$ in $O(n^{\omega})$
time and similarly we can compute $\mb_{1}^{-1}\ma^{T}$ in $O(mn^{\omega-1})$
time. Now, we show how to use this computation of $\mb_{1}^{-1}$
and $\mb_{1}^{-1}\ma^{T}$ in $O(mn^{\omega-1})$ time to decrease
the running time of future iterations.

For all $k>1$, let $\md_{k}=\md_{1}+\mDelta_{k}$ for some diagonal
$\mDelta_{k}\in\Rmm$ and let $r_{k}\defeq\nnz(\mDelta_{k})$. Let
$\mProj_{k}\in\R^{r_{k}\times n}$ be the $1-0$ matrix that selects
the rows of $\ma$ for which the diagonal entry in $\mDelta_{k}$
is nonzero, let $\ms_{k}\in\R^{r_{k}\times r_{k}}$ be the diagonal
matrix whose diagonal entries are the non-zero diagonal entries of
$\mDelta_{k}$ and $\ma_{k}\defeq\mProj_{k}\ma$.

Note that $\mDelta_{k}=\mProj_{k}^{T}\ms_{k}\mProj_{k}$ and hence
by the Woodbury matrix identity, we have
\begin{align}
\mb_{i}^{-1} & =\left(\ma^{T}\md_{1}\ma+\ma^{T}\mProj_{k}^{T}\ms_{k}\mProj_{k}\ma\right)^{-1}\nonumber \\
 & =\mb_{1}^{-1}-\mb_{1}^{-1}\ma_{k}^{T}\left(\ms_{k}^{-1}+\ma_{k}\mb_{1}^{-1}\ma_{k}^{T}\right)^{-1}\ma_{k}\mb_{1}^{-1}\label{eq:accel}
\end{align}
Assume we have computed $\ma_{k}\mb_{k}^{-1}\ma_{k}^{T}\in\R^{r_{k}\times r_{k}}$
explicitly, we can use fast matrix multiplication to compute $\left(\ms_{k}^{-1}+\ma_{k}\mb_{k}^{-1}\ma_{k}^{T}\right)^{-1}$
in time $O(r_{k}^{\omega})$. Then, we can use \eqref{eq:accel} to
compute $\mb_{i}^{-1}\vx_{i}$ in just
\[
O\left(\nnz\left(\mb_{1}^{-1}\right)+\nnz(\ma_{k})+\nnz\left(\left(\ms_{k}^{-1}+\ma_{k}\mb_{1}^{-1}\ma_{k}^{T}\right)^{-1}\right)\right)=O(nr_{k}+n^{2})
\]
time. Consequently, not counting the time to compute $\ma_{k}\mb_{k}^{-1}\ma_{k}^{T}\in\R^{r_{k}\times r_{k}}$,
we have that the average cost of computing $\mb_{i}^{-1}\vx_{i}$
is
\begin{equation}
\otilde\left(\frac{mn^{\omega-1}}{r}+n^{2}+nr_{k}+r_{k}^{\omega}\right)=\otilde\left(\frac{mn^{\omega-1}}{r}+n^{2}+C^{\omega}r^{2\omega}\right)\label{eq:accel:form}
\end{equation}
because $r_{k}\leq Cr^{2}$ and $nr_{k}\leq2n^{2}+2r_{k}^{2}.$

All that remains is to estimate the cost of computing $\ma_{k}\mb_{k}^{-1}\ma_{k}^{T}$.
For notational simplicity, we order the rows of $\ma$ such that $\ma_{k}^{T}=[\ma_{k-1}^{T}\,\mr_{k}^{T}]$
where $\mr_{k}\in\R^{u_{k}\times n}$ where $u_{k}=r_{k}-r_{k-1}$.
From this, to compute $\ma_{k}\mb_{k}^{-1}\ma_{k}^{T}$ we see that
it suffices to compute
\[
\left(\begin{array}{cc}
\ma_{k}\mb_{1}^{-1}\ma_{k}^{T} & \ma_{k}\mb_{1}^{-1}\mU_{k}^{T}\\
\mU_{k}\mb_{1}^{-1}\ma_{k} & \mU_{k}\mb_{1}^{-1}\mU_{k}^{T}
\end{array}\right)
\]
Now, since we precomputed $\mb_{1}^{-1}\ma^{T}$ and $\mU$ is just
a subset of the rows of $\ma$, we see that we can compute $\mb_{1}^{-1}\mU_{k}^{T}$
by extracting columns from $\mb_{1}^{-1}\ma^{T}$. Thus, we see that
the time to compute $\ma_{k}^{T}\mb_{k}^{-1}\ma_{k}$ is dominated
by the time to multiply a matrix of size at most $r_{k}\times n$
and $n\times u_{k}$. We can do this by multiplying $O\left(\frac{r_{k}}{u_{k}}\cdot\frac{n}{u_{k}}\right)$
matrices of size $u_{k}\times u_{k}$ which can be done in $O(r_{k}nu_{k}^{\omega-2})$
time. Thus the average cost of computing $\ma_{k}^{T}\mb_{k}^{-1}\ma_{k}$
is
\[
O\left(\sum_{1\leq k<r}\left(\frac{1}{r}\right)\cdot\left(r_{k}nu_{k}^{\omega-2}\right)\right)\leq O(Crn\cdot r\cdot\left(Cr\right)^{\omega-2})=O(C^{\omega-1}nr^{\omega})
\]
where we used the fact that since $\sum_{k}u_{k}=r_{k}$, $r_{k}\leq Cr^{2}$
and the minimum value of $\sum_{k}u_{k}^{\omega-2}$ is achieve when
each $u_{k}=Cr$.

\end{proof}

\end{document}